\documentclass[pdflatex,sn-mathphys-num]{sn-jnl}

\usepackage{graphicx}%
\usepackage{multirow}%
\usepackage{amsmath,amssymb,amsfonts}%
\usepackage{amsthm}%
\usepackage{mathrsfs}%
\usepackage[title]{appendix}%
\usepackage{xcolor}%
\usepackage{textcomp}%
\usepackage{manyfoot}%
\usepackage{booktabs}%
\usepackage{algorithm}%
\usepackage{algorithmicx}%
\usepackage{algpseudocode}%
\usepackage{listings}%

\usepackage{graphicx,amssymb}
\usepackage{amsmath,fleqn}
\usepackage{proof}
\usepackage{colortbl}  
\usepackage{soul} 
\usepackage{comment} 
\usepackage{makecell}

\usepackage{centernot}

\makeatletter
\let\MYcaption\@makecaption
\makeatother

\usepackage{subcaption}
\makeatletter
\let\@makecaption\MYcaption
\makeatother

\makeatletter
\newcommand{\customlabel}[2]{%
  \protected@write \@auxout {}{\string \newlabel {#1}{{#2}{}}}}
\makeatother

\usepackage{etoolbox}
\newcounter{proofid}
\newcommand{\eqprefix}{\thesection}
\AtBeginEnvironment{proof}{%
  \stepcounter{proofid}%
  \setcounter{equation}{0}%
  \edef\eqprefix{\thetheorem}%
}
\renewcommand{\theequation}{E:\eqprefix.\arabic{equation}}
\makeatletter
\@addtoreset{equation}{section}
\makeatother

\usepackage{tikz}
\usetikzlibrary{matrix}
\usetikzlibrary{arrows}
\usetikzlibrary{calc}
\usepackage{keyval}
\usetikzlibrary{positioning}
\usetikzlibrary{shapes.geometric}
\usetikzlibrary{patterns}  

\colorlet{darkgreen}{black!60!green!100}

\newcommand{\commentout}[1]{}
\newcommand{\suppress}[1]{} 

\renewcommand{\labelenumi}{(\arabic{enumi})}

\newcounter{trs}
\newcommand{\settrs}[1]{%
  \refstepcounter{trs}\label{#1}\ref{#1}%
}

\newlength{\ulength}
\newlength{\llength}
\newlength{\tempheight}
\newlength{\tempdepth}

\newcommand{\Farrow}[6]{%
  \settoheight{\tempheight}{$\scriptstyle #2$}%
  \settodepth{\tempdepth}{$\scriptstyle #3$}%
  \mathrel{%
    \rule[-\dimexpr \llength \ifx\relax#3\relax\else + \tempdepth + 0.5ex \fi \relax]{0pt}{\dimexpr \ulength + \llength \ifx\relax#2\relax\else + \tempheight + 0.5ex \fi \ifx\relax#2\relax\else + \tempdepth + 0.5ex \fi \relax}%
    {\smash{%
      \ooalign{%
        \hfil$\textstyle #1$\hfil\cr%
        \ifx\relax#2\relax\else \hidewidth\raisebox{1.3ex}{$\scriptstyle #2$}\hidewidth\cr \fi%
        \ifx\relax#3\relax\else \hidewidth\raisebox{-1.1ex}{$\scriptstyle #3$}\hidewidth\cr \fi%
      }%
    }}#6^{#4}_{#5}%
  }%
}
\newcommand{\Rarrow}[4]{\Farrow{\rightarrow}{#1}{#2}{#3}{#4}{}}
\newcommand{\Larrow}[4]{\Farrow{\leftarrow}{#1}{#2}{#3}{#4}{}}
\newcommand{\Barrow}[4]{\Farrow{\leftrightarrow}{#1}{#2}{#3}{#4}{}}
\newcommand{\Barrowd}[4]{\Farrow{\leftrightarrow}{#1}{#2}{#3}{#4}{'}}

\newcommand{\rarrow}[2]{\rightarrow^{#1}_{#2}}
\newcommand{\larrow}[2]{\leftarrow^{#1}_{#2}}
\newcommand{\barrow}[2]{\leftrightarrow^{#1}_{#2}}
\newcommand{\pnf}{\mathrm{pnf}}

\newcommand{\class}[1]{[#1]}
\newcommand{\classB}[1]{\text{\bf [}#1\text{\bf ]}}
\newcommand{\classD}[1]{[\![#1]\!]}
\newcommand{\Sub}{\mathrm{sub}}
\newcommand{\Root}{\mathrm{root}}
\newcommand{\restRoot}[1]{|\!|_{#1}}

\newcommand{\Dom}{\mathrm{Dom}}
\newcommand{\T}{\mathrm{T}}
\newcommand{\Pos}{\mathrm{Pos}}
\newcommand{\tpp}{\mathrm{\varepsilon}}
\newcommand{\ntpp}{{\tpp\neq}}
\newcommand{\inn}{\mathrm{\varepsilon<}}
\newcommand{\Ind}{\mathit{Ind}}
\newcommand{\Vrt}{\mathit{Vrt}}
\newcommand{\Hol}{\mathit{Hol}}
\newcommand{\Var}{\mathrm{Var}}

\newcommand{\conserveq}{\mathrel{\mathchoice%
    {\subseteq\mkern-15mu{\raise0.17em\hbox{\scriptsize$\mathrm{c}$}\mkern5mu}}%
    {\subseteq\mkern-15mu{\raise0.17em\hbox{\scriptsize$\mathrm{c}$}\mkern5mu}}%
    {\subseteq\mkern-11mu{\raise0.12em\hbox{\tiny$\mathrm{c}$}\mkern4mu}}%
    {\subseteq\mkern-11mu{\raise0.12em\hbox{\tiny$\mathrm{c}$}\mkern4mu}}%
}}
\newcommand{\supconserv}{\mathrel{\mathchoice%
    {\supseteq\mkern-15mu{\raise0.17em\hbox{\scriptsize$\mathrm{c}$}\mkern5mu}}%
    {\supseteq\mkern-15mu{\raise0.17em\hbox{\scriptsize$\mathrm{c}$}\mkern5mu}}%
    {\supseteq\mkern-11mu{\raise0.12em\hbox{\tiny$\mathrm{c}$}\mkern4mu}}%
    {\supseteq\mkern-11mu{\raise0.12em\hbox{\tiny$\mathrm{c}$}\mkern4mu}}%
}}
\newcommand{\NF}{\mathrm{NF}} 
\newcommand{\ddiv}{/\!/}

\newcommand{\stcongstar}{%
  \mathrel{\mathchoice%
    {\mbox{\stcongstartext}}%
    {\mbox{\stcongstartext}}%
    {\mbox{\stcongstarscript}}%
    {\mbox{\stcongstarscriptscript}}%
  }%
}
\newcommand{\stcongstartext}{%
  \raise0.4ex\vtop{\baselineskip0ex \lineskip.25ex%
    \ialign{##\crcr {$\smash{\gets^\ast}\rule[-0ex]{0pt}{1.6ex}$}\crcr%
      {$\smash{\to^\ast}\rule[-0.2ex]{0pt}{1ex}$}\crcr }%
  }%
}
\newcommand{\stcongstarscript}{%
  \raise0.25ex\vtop{\baselineskip0ex \lineskip.2ex%
    \ialign{##\crcr {$\scriptstyle\smash{\gets^\ast}\rule[-0ex]{0pt}{1.2ex}$}\crcr 
      {$\scriptstyle\smash{\to^\ast}\rule[-0.15ex]{0pt}{.7ex}$}\crcr} 
  }%
}
\newcommand{\stcongstarscriptscript}{%
  \raise0.2ex\vtop{\baselineskip0ex \lineskip.15ex%
    \ialign{##\crcr {$\scriptscriptstyle\smash{\gets^\ast}\rule[-0ex]{0pt}{.8ex}$}\crcr 
      {$\scriptscriptstyle\smash{\to^\ast}\rule[-0.1ex]{0pt}{.5ex}$}\crcr}%
  }%
}

\newcommand{\stcongstarss}[1]{\stcongstar_{\hspace{-0.8ex}{#1}}}
\newcommand{\<}{\langle}
\renewcommand{\>}{\rangle}

\newcommand{\lb}{(}
\newcommand{\rb}{)}

\tikzset{
  NODE/.style={circle, minimum width=35pt, draw=black, semithick},  
  every text node part/.style={font=\scriptsize},
  LINE/.style={semithick}
}
\tikzset{
  pics/simplecircle/.style n args={0}{
    code={
      \node[circle,draw=gray, very thin, inner sep=0pt, minimum size=40] (A0) at (0,0) {};
}}}
\tikzset{
  pics/confclass/.style n args={1}{
    code={
      \node[circle,draw=gray, very thin, inner sep=0pt, minimum size=35] (A0) at (0,0) {};
      \node[circle, inner sep=0pt, minimum size=1] (A1) at (0,2) {};
      \node[circle, inner sep=0pt, minimum size=1] (A2) at (-2,0) {};
      \node[circle, inner sep=0pt, minimum size=1] (A3) at (2,0) {};
      \node[circle, inner sep=0pt, minimum size=1] (A4) at (0,-2) {};
      \node at (0,4) {#1};

      \path[->]
      (A2) edge (A4)
      (A3) edge (A4)
      ;
}}}

\theoremstyle{thmstyleone}%
\newtheorem{theorem}{Theorem}
\newtheorem{proposition}[theorem]{Proposition}%
\newtheorem{lemma}[theorem]{Lemma}%
\newtheorem{corollary}[theorem]{Corollary}%

\theoremstyle{thmstyletwo}%
\newtheorem{example}[theorem]{Example}%
\newtheorem{remark}[theorem]{Remark}%

\theoremstyle{thmstylethree}%
\newtheorem{definition}[theorem]{Definition}%

\begin{document}

\title[Proving confluence]{A new method for proving confluence on abstract reduction systems\\
--- Confluence of non-$E$-overlapping weakly-shallow TRSs ---}


\author*[1]{\fnm{Masahiko} \sur{Sakai}}\email{sakai@rewriting.jp}

\author*[1,2]{\fnm{Mizuhito} \sur{Ogawa}}\email{mizuhito@otn.or.jp}

\author[1]{\fnm{Michio} \sur{Oyamaguchi}}\email{michiooyamaguchi@gmail.com}

\affil*[1]{\orgdiv{Grad.\ School of Informatics}, \orgname{Nagoya Univ.},
\orgaddress{\street{Furo-cho, Chikusa-ku}, \city{Nagoya}, \postcode{4648601}, \state{Aichi}, \country{Japan}}}

\affil[2]{\orgdiv{Old Teachers Network}, \orgname{OTN},
\orgaddress{\street{127-5 Ryusuke}, \city{Komatsu}, \postcode{9230926}, \state{Ishikawa}, \country{Japan}}}


\abstract{This paper proposes a new method for proving the confluence of
  an abstract reduction system (ARS)
  by clarifying the sufficient conditions, called
  \emph{compatibility} and \emph{edge commutativity},
  for expanding a given finite sub-ARS into a confluent one
  by adding rewrite edges.
  This method can be regarded as an extension of our earlier work,
  which showed that
  a weakly non-overlapping, shallow, and non-collapsing term rewriting system (TRS) is confluent.

  Furthermore, we apply our method to demonstrate that a non-$E$-overlapping and
  weakly shallow TRS is confluent.
  Here, a term is \emph{weakly shallow} if each defined function symbol occurs
  either at the root or in the ground subterms, and a TRS is weakly shallow
  if both sides of all its rewrite rules are weakly shallow.
  This drops the non-collapsing condition assumed in our previous work
  on weakly shallow TRSs.
  Moreover, since a weakly shallow TRS is non-$E$-overlapping whenever it is
  non-$\omega$-overlapping, and the latter property is decidable,
  we also obtain a decidable sufficient condition for confluence:
  non-$\omega$-overlapping and weakly shallow TRSs are confluent.
}

\keywords{Confluence, Non-linear rewriting system, Abstract rewrite system}

\maketitle

\section{Introduction}\label{sec1}

Confluence of term rewriting systems (TRSs) is undecidable,
even for flat TRSs~\cite{Mitsuhashi06} or
length-two string rewrite systems~\cite{Sakai08}.
For some subclasses, confluence is decidable, e.g., for shallow right-linear
TRSs by tree automata techniques~\cite{Godoy05}, and for terminating TRSs by
reducing to a finite search~\cite{KB70}.
Outside such subclasses, one examines sufficient conditions, which
guarantee confluence rather than decide it.
Many such conditions have been proposed, and they fall
into three broad approaches.

\begin{itemize}
\item Local confluence is equivalent to confluence in terminating TRSs~\cite{Newman1942}; this
  result has been extended to TRSs with relative termination~\cite{HM11,KH12}.
  A further criterion decomposes a TRS into a linear part and a terminating
  non-linear part~\cite{Liu14}, and requires establishing the existence of
  a suitable well-founded \emph{ranking}.

\item A second approach defines an explicit well-founded measure that decreases
  when peaks are eliminated~\cite{Oostrom94}.
  Many works study left-linear TRSs under the non-overlapping condition and
  its extensions~\cite{Rosen73,Huet80,Toyama87,Oostrom95,Oyama97,Okui98},
  measuring progress by the number of parallel reduction steps and its
  variations.
  For non-left-linear TRSs, only a few results are known~\cite{Toyama95,Gomi98},
  which are based on the depth-preserving property for non-$E$-overlapping
  TRSs.

\item A third approach sets constraints on left-nonlinear variables, as in
  the classical membership conditional TRSs~\cite{Toyama87b}
  (recently extended to Generalized TRSs~\cite{Lucas25}),
  and sorted higher-order rewrite systems~\cite{FJ25}.
  The idea underlying the membership condition, for instance, is to restrict
  the reduction strategy to a call-by-value discipline on left-nonlinear
  variables: a rule such as $d(x, x) \to b$ is applied only when the term
  substituted for the nonlinear variable $x$ is already a normal form
  (i.e., fully evaluated).
\end{itemize}

  The TRS $R_{\settrs{trs:nonCR}}$ in Example~\ref{ex:nonCR} below is a
  well-known example of a non-$E$-overlapping TRS that is not confluent.

\begin{example}\label{ex:nonCR}
  Let
  $R_{\ref{trs:nonCR}} =
  \{f(x)\to d(x,f(x)),\, d(x,x)\to b,\, a\to f(a) \}$~\cite{Klop80}.
  This TRS has two divergent sequences $f(a)\to^\ast b$ and $f(a)\to^\ast f(b)$.
  Since $b$ is a normal form and every reduct of $f(b)$ has the form
  $d(b,\dots,d(b,f(b))\dots)$, the terms $b$ and $f(b)$ are not joinable.
\end{example}

Beyond this example, the following problem remains open.
``\emph{A non-$E$-overlapping right-linear TRS is confluent}'' is
conjectured as Problem 58 of
\emph{Open Problems in Rewriting}~\cite{RTAopen1991}.\footnote{%
Originally, Problem 58 assumes the stronger condition of
\emph{strongly non-overlapping} in place of \emph{non-E-overlapping}.}
To the best of our knowledge, the strongest partial solution is
``\emph{a non-$E$-overlapping simple right-linear TRS is confluent}'',
where a right-linear TRS is simple right-linear if no left-nonlinear
variable occurs
in the right-hand side of any rewrite rule~\cite{OyamaguchiOhtaToyama1995}.
This result, like those in the second approach above, is obtained by
designing a suitable well-founded measure on rewrite sequences;
despite further attempts, however, no such measure has yet been found
for the general case of Problem 58.

Our ultimate goal is to settle this conjecture.
The difficulty, compared with the left-linear case, lies in how outer and inner
reductions can be reordered within a rewrite sequence.
In a right-linear TRS, one can always move an outer step after the inner steps
that follow it (as in \emph{basic reductions}), but not conversely:
an inner step that precedes an outer step cannot, in general, be moved after it.
A left-linear TRS has exactly the opposite property, so its inner reductions
may be postponed, that is, performed lazily; this laziness is what keeps the
joining procedure terminating.
A right-linear TRS instead forces eager inner reductions, which may continue
indefinitely, thereby ruling out any well-founded measure argument.
Controlling these inner reductions is therefore the key to the joining procedure.

After these attempts to construct a suitable well-founded measure failed, we proposed
a different methodology using \emph{reduction graphs}~\cite{Sakai10};
starting from the initial divergent rewrite sequence,
we add rewrite edges one by one to join the two terms,
while keeping the graph finite throughout the construction.
The basic reduction nature in a right-linear TRS suggests constructing
confluent sequences in an innermost manner: starting from a confluent sequence
at the leaves, we extend it with a surrounding context and add outer reductions
at the root.
To avoid interference from other reductions at intermediate positions,
the function symbols occurring between the root and a variable in this context
are currently restricted to constructor symbols
--- a restriction that amounts to the \emph{weakly shallow} condition.
Under this restriction, we showed that
``\emph{non-$E$-overlapping, weakly shallow,
and non-collapsing TRSs are confluent}''~\cite{Sakai15}.
The key observation behind this result is that, under the non-$E$-overlapping
condition, peak elimination uses only ``\emph{copies}'' of the reductions already
present in the original rewrite sequence.
Note that this result assumes no linearity condition:
once the weakly shallow restriction is imposed, the bottom-up construction
goes through without right-linearity, and the right-linearity assumption is
dropped altogether.
The class of weakly shallow TRSs is thus incomparable with the class of
Problem~58 --- right-linearity is traded for the weakly shallow restriction ---
and the present work develops, within this class, the graph-based machinery
aimed at the general case.

Our notion of a \emph{reduction graph} bounds reductions within a finite set of
terms, so that inner reductions terminate once they reach the boundary of this
set.
This construction, however, is fragile: neither monotonicity nor closure under
substitution of reductions holds in general.
Instead, we rely on two structural facts.
The non-$E$-overlapping property ensures that $E$-normalization also works for
reduction graphs.
The weakly shallow property ensures that non-root reductions between redexes
correspond to reductions between variable positions.

This paper makes two contributions.
First, we propose a general framework for proving confluence of abstract
reduction systems, built on the notions of compatibility and edge commutativity,
which extend edge normalizability by means of edge expansion.
  Note that the finiteness of reduction graphs guarantees
  the termination of edge normalization.
Second, using this framework, we show that
``\emph{weakly shallow and non-$E$-overlapping TRSs are confluent}'',
thereby removing the non-collapsing restriction~\cite{Sakai15}.
Here, a term is \emph{weakly shallow} if every defined function symbol occurs
either at the root or within a ground subterm, and a TRS is \emph{weakly shallow}
if both sides of every rule are weakly shallow.
This result, together with an outline of its proof, was first presented
in the extended abstract~\cite{Sakai14}, and the proof given here follows
essentially the same line.
In this paper, we refine that line by abstracting those parts of it that can
be formulated for abstract reduction systems;
without this abstraction, the proof would be considerably more complicated and harder
to follow.
A full proof without this abstraction was given in~\cite{Sakai15},
but only for the restricted case of non-collapsing TRSs.

\paragraph*{Remarks.}

Weakly shallow TRSs are Turing complete.
Klop gave a TRS encoding that simulates a given Turing machine
(Example~\ref{ex:turing}).
It directly shows that the word problem and termination are
undecidable for weakly shallow TRSs, which makes a sharp contrast
with shallow TRSs, of which the word problem is decidable~\cite{CHJ94}.
Mitsuhashi et al.\ showed that confluence and the non-$E$-overlapping
property are undecidable for weak monadic TRSs~\cite{Mitsuhashi12},
which is equivalent to right weakly-shallow.
Their undecidability proofs use Example~\ref{ex:turing} as
a counterexample, and they work in the same way for weakly shallow TRSs.
\begin{example}[{\cite[Exercise~2.2.21]{Klop93}}]\label{ex:turing}
  Let $Q$ and $\Gamma$ be finite sets of
  states and tape symbols of a Turing machine $M$, respectively, and
  let
  $\delta:\ Q\times\Gamma\to Q\times\Gamma\times
            \{\mathrm{left},\mathrm{right}\}$
  be its transition function.
  Each configuration $a_1\cdots a_i q a_{i+1}\cdots a_n\in\Gamma^+ Q \Gamma^+$
  (where $q\in Q$) is represented by the term
  $q(a_i\cdots a_1(\$),a_{i+1}\cdots a_n(\$))$,
  where the arities of the function symbols $q$, $a_j$ ($1\leq j\leq n$) and
  $\$ $ are 2, 1 and 0, respectively.
  The corresponding TRS $R_M$ is weakly shallow and consists of the following
  rewriting rules:
  \[
  R_M =
  \left\{
  \begin{array}{ll}
    q(x,a(y)) \rightarrow p(b(x),y) &
    \text{ for } \delta(q,a)=(p,b,\mathrm{right}),\\
    q(a'(x),a(y)) \rightarrow p(x,a'(b(y))) &
    \text{ for } \delta(q,a) = (p,b,\mathrm{left})
  \end{array}
  \right\}
  \]
\end{example}

A decidable sufficient condition for the non-$E$-overlapping property is
\emph{strongly non-overlapping}~\cite{Ogawa89},
where a TRS is \emph{strongly non-overlapping}
if its linearization is non-overlapping~\cite{Klop80,Chew81}.
A weaker, still decidable, sufficient condition is
\emph{non-$\omega$-overlapping}~\cite{Ogawa89}, that is, non-overlapping
with respect to unification over infinite terms~\cite{MR84}:
every strongly non-overlapping TRS is non-$\omega$-overlapping, and every
non-$\omega$-overlapping TRS is
non-$E$-overlapping
(M.~Oyamaguchi, oral presentation at the 48th TRS Meeting, 2018;
N.~Hirokawa and M.~Oyamaguchi, personal communication, 2019).
Each of these conditions implies non-overlapping, and these inclusions
are strict, as shown in Examples~\ref{ex:strict} and~\ref{ex:nonCR2}.
For left-linear TRSs, all of these conditions coincide.

\begin{example}\label{ex:strict}
  The TRS $\{d(x,x)\to b\, d(g(x),h(x))\to c\}$ is
  non-$\omega$-overlapping but not strongly non-overlapping, since
  $d(x,x)$ and $d(g(y),h(y))$ are not unifiable even over infinite terms,
  whereas its linearization $\{d(x,y)\to b,\, d(g(x),h(y))\to c\}$ is
  overlapping.

  Likewise, $\{d(x,x)\to b,\, d(x,f(x))\to c\}$ is non-$E$-overlapping but
  not non-$\omega$-overlapping: $d(x,x)$ and $d(y,f(y))$ are unifiable over
  infinite terms by $f^\omega$, whereas no instances of them are convertible
  below the root, since this terminating and non-overlapping, hence confluent,
  TRS has $b$ and $c$ as distinct normal forms.
\end{example}

The non-$E$-overlapping property alone is not enough to guarantee
confluence,
and guarantees the unique normal form property only~\cite{MO01}.
The TRS in Example~\ref{ex:nonCR} is non-$E$-overlapping and not
confluent, though it has unique normal forms.

The $E$-overlapping condition depends on dynamic behavior of TRSs.
The TRS in Example~\ref{ex:nonCR2} is $E$-overlapping and not confluent.
However, if we drop the third rule $a \to f(a)$ from
Example~\ref{ex:nonCR2}, it becomes non-$E$-overlapping, since
the unifier is no more simulated by reductions.

\begin{example}\label{ex:nonCR2}
  $R_{\settrs{trs:nonCR2}} =
  \{d(x,x)\to b,\, d(x,f(x))\to c,\, a \to f(a)\}$
  is another nonoverlapping non-confluent example~\cite{Huet80}.
  It is $\omega$-overlapping and $f^\omega$ is a unifier with infinite terms
  between the first and the second rules;
  It is, however, $E$-overlapping,
  since the instances
  $d(a,a)$ and $d(a,f(a))$ of its left-hand sides are related by a step
  below the root.


\end{example}

Our result, and its non-collapsing special case~\cite{Sakai15}, give
  provers a new criterion for establishing confluence of
  TRSs.\footnote{Confluence Competition is held annually
  (See \url{https://project-coco.uibk.ac.at/}).
  Its problem database, the ARI database, is available at
  \url{https://ari-cops.uibk.ac.at/ARI/}, where each problem can also be
  run with any participating tool since 2015.}
  The tool \textsf{CSI} has proved Example~\ref{ex:1619} confluent
  since the 2015 competition~\cite{CoCo15}, using the criterion of~\cite{Sakai15};
  our result extends this criterion to collapsing TRSs.

\begin{example}[{ARI~\#1619}]\label{ex:1619}
  The TRS
  $\{ d(x,x) \rightarrow h(x),\, f(x) \rightarrow d(x,f(c)),\,
  h(x) \rightarrow f(g(x)) \}$
  is weakly shallow, non-$E$-overlapping, and non-collapsing;
  hence confluent by~\cite{Sakai15}.
\end{example}

\paragraph*{Organization.}

Section~\ref{sec:overview} gives an informal overview of the construction, and
Section~\ref{subsec:organization} lists which section contains what.

\section{Overview and intuition: A new method to prove confluence}\label{sec2}\label{sec:overview}

Traditional methods for proving confluence show how to build a join sequence
for each given divergent sequence, and guarantee that replacing a peak with a
smaller valley eventually stops with respect to some well-founded order.
This paper takes the opposite direction.
Rather than reasoning about the rewrite relation as a whole, we construct,
for each given divergent sequence, a \emph{finite} and confluent subgraph of
the rewrite relation that contains that sequence,
  which is called a reduction graph~\cite{Sakai10} and an $R$-graph in this paper
  for a TRS $R$.
Confluence of the system is then immediate, since any two divergent reducts
already meet inside the constructed finite graph.

The benefit is that confluence becomes a property of a finite graph.
We no longer have to design a well-founded measure on rewrite sequences;
what we need instead is a graph construction that terminates, and the
finiteness of the graph is what makes it terminate.
The price is that the construction itself has to be carried out with care,
and this is what the paper is about.

Such a finite subgraph is called an \emph{$R$-graph}; its formal definition is
given in Section~\ref{sec:R-reduction-graph}.
For the present section it suffices to read an $R$-graph as a finite set of
terms together with a finite set of rewrite steps of $R$ between them.
The construction was introduced by Sakai and Ogawa~\cite{Sakai10}, where it
was used for weakly non-overlapping, shallow, and non-collapsing term
rewriting systems (TRSs).
The present paper abstracts the construction to the setting of abstract
reduction systems (ARSs), which separates it from the syntactic details of
terms, and then instantiates the abstract construction to prove that
non-$E$-overlapping and weakly shallow TRSs are confluent.

Throughout this section the construction is described informally, and each
notion is accompanied by a pointer to the place where it is defined.
Section~\ref{subsec:organization} lists which section contains what.

\subsection{The construction in the former approach}\label{subsec:SO}

We first illustrate the idea of~\cite{Sakai10}, which builds, in a bottom-up
manner, a finite, confluent, and terminating $R$-graph that preserves the
connectivity of a given $R$-graph.
What matters here is the shape of the construction, not the class of TRSs to
which~\cite{Sakai10} applies it.

Consider the TRS
\(R_{\settrs{trs:simple-div}}=\{f(x,x)\to g(x), a\to b, b\to a\}\)
and the divergent sequence $G_0$ shown in Fig.~\ref{fig:graph-divergence}.
The sequence $G_0$ is a finite $R_{\ref{trs:simple-div}}$-graph.
We extract its direct-subterm graph $G$: $a \leftrightarrows b$, which is
obtained by collecting all arguments of $f$ and $g$ occurring in $G_0$,
with the edges induced by the non-root reductions of $G_0$.
Since $G$ is syntactically smaller than $G_0$, the induction hypothesis
applies to it and constructs a finite, confluent, and terminating
$R_{\ref{trs:simple-div}}$-graph, say $G'$: $a \to b$, that preserves the
connectivity of $G$.
\begin{figure}
  \centering
    \begin{minipage}{2em}
      $G_0$:
    \end{minipage}
    \begin{minipage}{0.4\textwidth}
      \(
      \begin{tikzpicture} [line width=0.07em]
        \matrix[matrix of math nodes,row sep=0.5em,column sep=1em,minimum width=0.5em] (m) {
          f(b,b) & f(a,b) & f(a,a) & g(a) \\
        };
        \path[->]
        (m-1-2) edge (m-1-1)
        (m-1-2) edge (m-1-3)
        ;
        \path[->,color=blue]
        (m-1-3) edge node [above] {\scriptsize $\varepsilon$} (m-1-4)
        ;
      \end{tikzpicture}
      \)
    \end{minipage}

  \caption{A given divergent sequence for $R_{\ref{trs:simple-div}}$}\label{fig:graph-divergence}
\end{figure}

We then lift $G'$ one layer up, obtaining the graph drawn with black edges in
Fig.~\ref{fig:graph-construction}.
Its vertices are the terms $k(s_1,\ldots,s_n)$ for each function symbol $k$ of
the TRS and vertices $s_1,\ldots,s_n$ of $G'$, and its edges are those
obtained by replacing one argument along the edge $a\to b$ of $G'$.
This graph is written $M_F(G')$ in this paper (Definition~\ref{def:mono-ext});
the rightmost column of the figure shows its two vertices coming from the
constants $a$ and $b$, which carry no edge since they have no argument.
Since $G'$ is convergent, so is $M_F(G')$.
\begin{figure}
  \begin{center}
    \mbox{}\\[-1em]
    \(
    \begin{tikzpicture} [line width=0.07em]
      \matrix[matrix of math nodes,row sep=0.8em,column sep=1em,minimum width=0.5em] (m) {
        f(a,b) & f(a,a) & g(a) & a\\
        f(b,a) & f(b,b) & g(b) & b\\
      };
      \path[<-] (m-1-1) edge (m-1-2) (m-2-1) edge (m-1-2);
      \path[->] (m-1-1) edge (m-2-2) (m-2-1) edge (m-2-2);
      \path[->] (m-1-3) edge (m-2-3);  

      \path[->,dashed,color=blue] (m-1-2) edge node [above] {\scriptsize $\varepsilon$} (m-1-3);
      \path[->,color=red]  (m-2-2) edge node [above] {\scriptsize $\varepsilon$} (m-2-3)  ;
    \end{tikzpicture}
    \)\\[-2em]\mbox{}
  \end{center}
  \caption{Graph construction for the divergence in Fig.~\ref{fig:graph-divergence}}\label{fig:graph-construction}
\end{figure}
What is left is to recover the connection provided by the root reduction
$f(a,a) \to g(a)$ of $G_0$, in such a way that the result stays confluent.
Adding that edge itself (the dashed blue edge in
Fig.~\ref{fig:graph-construction}) does not work: in the resulting graph,
$f(a,a)$ reduces to the two distinct terminal vertices $f(b,b)$ and $g(b)$.
Instead we add the edge $f(b,b) \to g(b)$ (the red edge in the figure),
whose source $f(b,b)$ is a normal form of $M_F(G')$.
Both edges are instances of the same rewrite rule, so the resulting graph is
still an $R_{\ref{trs:simple-div}}$-graph, and it is confluent.
We call the operation of replacing a root edge by one that starts from a
normal form an \emph{edge normalization}.

Summarizing, the property established in~\cite{Sakai10} in order to derive
confluence of a TRS $R$ is the following.
\begin{quote}\normalsize
For an arbitrary $R$-graph $G_0$, there exists an $R$-graph $G_0'$ such that
\begin{enumerate}
\item $G_0'$ is finite, confluent, and terminating, and
\item $V_{G_0} \subseteq V_{G_0'}$ and ${\barrow{\ast}{G_0}} \subseteq {\barrow{\ast}{G_0'}}$.
\end{enumerate}
\end{quote}

\subsection{The claim $C_P$ and the induction scheme}\label{subsec:framework}

Abstracting the above, this paper works with the following claim $C_P$,
parameterized by a property $P$ of $R$-graphs.
\begin{quote}\normalsize
$C_P$:
For an arbitrary $R$-graph $G_0$, there exists an $R$-graph $G_0'$ such that
\begin{enumerate}
\item[(1')] $G_0'$ is finite and confluent,
\item[(2')] $G_0 \subseteq G_0'$, and
\item[(3')] $P(G_0')$.
\end{enumerate}
\end{quote}
The property $P$ is chosen according to the class of TRSs under consideration;
for the classes treated in this paper it will be the \emph{subterm-closed}
property discussed in Section~\ref{subsec:collapsing}.

Two of the three conditions differ from those of~\cite{Sakai10}, and it is
worth explaining why.

Condition (1) is weakened to (1') by dropping termination.
The reason lies in the edge normalization of Section~\ref{subsec:SO}.
There, we could add the edge $f(b,b) \to g(b)$ because the normal form
$f(b,b)$ of $M_F(G')$ happened to be a redex.
This is a consequence of shallowness and fails in general: for a non-shallow
system, a normal form of the inner part of the graph need not be a redex, so
before normalizing an edge we must first \emph{expand} the graph by fresh
edges until a redex is reached.
Such expansions may destroy termination.
Termination could be restored afterwards by deleting suitable edges, but it
plays no role in the argument, so we simply drop it.

Condition (2) is strengthened to (2') for the same reason, read backwards.
The weak form (2) was needed in~\cite{Sakai10} precisely because edges were
deleted in order to keep the graph terminating.
Once we stop deleting edges, we may as well keep the plain inclusion
$G_0 \subseteq G_0'$, which is easier to work with in the later sections.

Dropping termination has one immediate consequence.
In a non-terminating graph there need not be any terminal vertex, and edge
normalization has nothing to start from.
We therefore replace normal forms by \emph{pseudo-normal forms}, that is,
vertices lying in a strongly connected component that has no leaving edge
(Definition~\ref{def:strongly-convergent} in
Section~\ref{subsec:looping}).
The resulting notion of edge expansion, and the conditions under which it
preserves confluence, are developed in Sections~\ref{subsec:looping}
and~\ref{subsec:Conv-Exp-ARS}.
It is more involved than the corresponding notion under termination, but it
relieves all later sections of termination arguments.

The proof of a claim $C_P$ follows the three steps below, as in~\cite{Sakai10}.
\begin{description}
\item[Step 1:] Construct $G$ as the direct-subterm graph of a given $G_0$.
\item[Step 2:] Let $G'$ be an $R$-graph satisfying (1')--(3') that is obtained
  from $G$ by the induction hypothesis.
\item[Step 3:] Construct from $G'$ and $G_0$ an $R$-graph $G_0'$ that
  satisfies (1')--(3').
\end{description}
Step 3 carries the whole weight of the proof, together with the choice of the
property $P$.
It starts from a finite $R$-graph $G_1$ containing all non-root reductions of
$G_0$; concretely we take either $M_F(G')$ or the subgraph of $G_0$ formed by
its non-root reductions.
We then select a set $U$ of root edges by normalizing the root edges that have
to be recovered, adding vertices and non-root edges to $G_1$ on demand.
Writing $G_1'$ for the result of these additions, we obtain
$G_0' := G_1' \cup U$, and the task is to make this graph confluent.

\subsection{Compatibility and edge commutativity}\label{subsec:key-construction}

The construction of Step 3 is carried out abstractly, on ARSs, by
Theorem~\ref{thm:seq-edge-norm} in Section~\ref{subsec:confluent-edge-norm}.
Its inputs are the finite ARS $G_1$ described above and two finite sets $T_1$
and $T_2$ of root edges to be recovered, consisting respectively of the
non-collapsing and the collapsing ones.
Its output is a finite confluent ARS $G_1' \cup T'$, where $G_1'$ extends
$G_1$ by the non-root edges added on demand, and $T'$ is the set $U$ of
normalized root edges.
The edges added to $G_1$ are taken from $\Vrt$, a confluent, possibly
infinite, ARS built from the graph $G'$ delivered by the induction hypothesis,
and the edges of $T'$ are taken from $\Hol$, an ARS consisting of root edges
on the vertices of $\Vrt$.
The theorem guarantees that $G_1' \cup T'$ retains all the connectivity of
$G_1 \cup T_1 \cup T_2$.

Theorem~\ref{thm:seq-edge-norm} rests on two properties of the root edges
relative to $\Vrt$ and $\Hol$: \emph{edge commutativity}
(Definition~\ref{def:edge-commutative}) and \emph{compatibility}
(Definition~\ref{def:compatible}).
Fig.~\ref{fig:abs-diag} shows what each of them contributes.
Each circle in the figure stands for a connected component of the non-root
reductions of $G_1$, and the blue edge is a non-collapsing root edge in
$T_1$ that we wish to normalize.
\begin{figure}
  \begin{center}
    \begin{tikzpicture}[line width=0.07em, x=5, y=5]
      \path pic at (0,0) {confclass={A}};  
      \path pic at (12,0) {confclass={B}}; 

      \path[->,color=blue] (0,2) edge node [above] {\scriptsize $\tpp$} +(12,0);
      \path[->,color=red]
      (0,-2) edge node[left] {$*$} +(2,-4)
      (2,-6) edge[dashed] node [above] {\scriptsize $\tpp$} +(7,0)
      ;

      \path[<->]
      (12,0) edge[color=red] node[right,xshift=-3,yshift=-5] {$*$} (9,-6);
    \end{tikzpicture}
  \end{center}
  \caption{Abstract presentation of edge normalization}\label{fig:abs-diag}
\end{figure}
\emph{Edge commutativity} is what makes the normalization possible at all.
It states that the root edge survives further reduction of its source:
however far the source of the blue edge is reduced within $\Vrt$, some root
edge of $\Hol$ is still reachable, and its target is joinable with the target
of the blue edge.
In the figure this is the existence of the red non-root edges, drawn inside
$\Vrt$, together with the red dashed root edge, drawn inside $\Hol$.
Note that the source of the red dashed edge must be a pseudo-normal form of
the extended $G_1$ for the result to be confluent; this is exactly what the
normalization step of Section~\ref{subsec:normalization} arranges.

\emph{Compatibility} is what bounds the amount of work.
It states that a reduction at a redex is deterministic modulo $\Vrt$: if two
root edges leave the same $\Vrt$-component, their targets are already
joinable.
Consequently a single edge normalization per component suffices.
In Fig.~\ref{fig:abs-diag}, normalizing the blue edge shown is enough even if
the component A has several outgoing root edges.

The design of $\Vrt$, and with it that of $\Hol$, is where the class of TRSs
enters the picture; $\Hol$ is in fact determined by $\Vrt$ as the set
$\Ind_R(V_{\Vrt})^{\tpp}$ of root edges on the vertices of $\Vrt$.
Suitable choices for shallow and for weakly shallow TRSs are given in
Sections~\ref{sec:shallow} and~\ref{sec:WS_system}, respectively.
The essential point in both cases is that $\Vrt$ must supply enough non-root
edges to keep the subterms at non-linear variable positions of a rewrite rule
in balance; this is why $\Vrt$ is required to be confluent.

\subsection{Extending the class, I: collapsing rules}\label{subsec:collapsing}

To reach the class of non-$E$-overlapping and weakly shallow TRSs, the
construction has to be extended in two independent directions: it must admit
collapsing rewrite rules, and it must survive the relaxation of shallowness.
We discuss the two in turn.

A collapsing rule breaks the layering between non-root and root reductions on
which the construction of Section~\ref{subsec:SO} relies, since a root
reduction may now expose a subterm.
The remedy is to choose, as the property $P$ of the claim $C_P$, the
\emph{subterm-closed} property (Definition~\ref{def:subterm-closed} in
Section~\ref{subsec:subterm-closed}), which forces the non-root reductions of
the graph to be present as root reductions as well.
The normalization of a collapsing edge is depicted in
Fig.~\ref{fig:collapsing-treatment}, where the blue edge from F to A is
induced by a collapsing rule and the blue edge from A to B by a
non-collapsing one.
\begin{figure}
  \begin{center}
    \mbox{}\\[-1em]
    \begin{tikzpicture}[line width=0.07em, x=5, y=5]
      \path pic at (0,0) {confclass={F}};  
      \path pic at (12,0) {confclass={A}}; 
      \path pic at (24,0) {confclass={B}}; 

      \path[->,color=blue] (0.2,2) edge node [above] {\scriptsize $\tpp$} +(11,0);
      \path[->,color=red] (1,-2) edge [bend right] node [above] {\scriptsize $\tpp$} +(22.5,-0.2);

      \path[->,color=blue] (12.2,2) edge node [above] {\scriptsize $\tpp$} +(12,0);
      \path[->,color=red] (12.2,-2) edge node [above right] {\scriptsize $\tpp$} +(11,0);
    \end{tikzpicture}
    \\[-2em]\mbox{}
  \end{center}
  \caption{Normalization of collapsing edge}\label{fig:collapsing-treatment}
\end{figure}
Normalizing a collapsing edge may thus require passing through a
non-collapsing one, so the two properties of
Section~\ref{subsec:key-construction} are needed in a relative form.
Both definitions are given with respect to an auxiliary ARS $N$: $T$ is
$N$-compatible over $\Vrt$ (Section~\ref{subsec:compatibility}) and
$N$-edge-commutative over $(\Vrt,\Hol)$
(Section~\ref{subsec:commutativity}) when the two targets are joinable in
$\Vrt \cup N$ rather than in $\Vrt$ alone.
For collapsing edges we take for $N$ the set of non-collapsing root edges,
and for non-collapsing edges we take $N$ empty, which recovers the plain
notions.
Under subterm-closedness of the non-root part of the graph, the collapsing
edges satisfy the relative form of edge commutativity once the redundant ones
among them have been trimmed away (Lemma~\ref{lem:NestedCollapsing}).

\subsection{Extending the class, II: beyond shallowness}\label{subsec:beyond-shallow}

The second direction is the one already announced in
Section~\ref{subsec:framework}: without shallowness, a pseudo-normal form of
the non-root part of the graph need not be a redex, and edge expansions
become unavoidable.

Expansions interact with the order in which edges are normalized.
Fig.~\ref{fig:right-order} shows the normalization of two non-collapsing,
non-shallow edges in a favourable order: the left blue edge is normalized
first, then the right one.
Performing them in the reverse order yields the graph of
Fig.~\ref{fig:wrong-order}, in which one further normalization step is
required before confluence is reached.
In other words, normalizing one edge may invalidate the normalization of
another edge that had already been normalized, so the process is not
obviously terminating.
That it does terminate is proved in
Section~\ref{subsec:norm-termination}
(Theorem~\ref{thm:termination-normalization}), by exhibiting a measure on
spanning graphs that strictly decreases at every normalization step.
\begin{figure}
  \begin{center}
    \begin{tikzpicture}[line width=0.07em, x=5, y=5]
      \path pic at (0,0) {confclass={A}};  
      \path pic at (12,0) {confclass={B}}; 
      \path pic at (24,0) {confclass={C}}; 

      \path[->,color=blue] (0,2) edge node [above] {\scriptsize $\tpp$} +(12,0);
      \path[->,color=red]
      (0,-2) edge[color=red] +(2,-2)
      (2,-4) edge node [above] {\scriptsize $\tpp$} +(8,-4)
      ;

      \path[->]
      (12,-2) edge[color=red] (11,-5)
      (11,-5) edge[color=red] (10,-8);
      \path[->]
      (10,-8) edge[color=red] (13,-11);
      \path[->,color=blue] (12,2) edge node [above] {\scriptsize $\tpp$} +(12,0);
      \path[->,color=red]  (13,-11) edge node [above] {\scriptsize $\tpp$} +(8,0);
      \path[->,color=red]
      (24,-2) edge (23,-5)
      (23,-5) edge (22,-8)
      (22,-8) edge (21,-11);
    \end{tikzpicture}
  \end{center}
  \caption{Normalizing the left edge first}\label{fig:right-order}
\end{figure}
\begin{figure}
  \begin{center}
    \begin{tikzpicture}[line width=0.07em, x=5, y=5]
      \path pic at (0,0) {confclass={A}};  
      \path pic at (12,0) {confclass={B}}; 
      \path pic at (24,0) {confclass={C}}; 

      \path[->,color=blue] (0,2) edge node [above] {\scriptsize $\tpp$} +(12,0);
      \path[->,color=red]
      (0,-2) edge[color=red] +(2,-2)
      (2,-4) edge node [above] {\scriptsize $\tpp$} +(8,-4)
      ;

      \path[->]
      (13,-5) edge[color=red] (10,-8);
      \path[->]
      (12,-2) edge[color=red] (13,-5);
      \path[->,color=blue] (12,2) edge node [above] {\scriptsize $\tpp$} +(12,0);
      \path[->,color=red]  (13,-5) edge node [above] {\scriptsize $\tpp$} +(10,0);
      \path[->] (24,-2) edge[color=red] (23,-5);
    \end{tikzpicture}
  \end{center}
  \caption{Normalizing the right edge first, which needs one more step}\label{fig:wrong-order}
\end{figure}

\subsection{Organization of the paper}\label{subsec:organization}

Sections~\ref{sec:preliminary}--\ref{Sec:CompatCommute} develop the
construction on abstract reduction systems, and
Sections~\ref{sec:R-reduction-graph}--\ref{sec:WS_system} instantiate it on
term rewriting systems.

Section~\ref{sec:preliminary} fixes the notation for ARSs and collects the
tools used throughout: conservative extension, unions that preserve
confluence, and quotient ARSs.
Section~\ref{sec:conversion_ARS} introduces pseudo-normal forms and edge
expansion, and characterizes when an edge expansion preserves confluence
(Lemma~\ref{lem:expansion-PNF}); it also shows how a sub-ARS of a confluent
ARS can be extended to a finite confluent one.
Section~\ref{sec:edge_properties} defines the \emph{spanning graph}
$(G,T_1,T_2)$, which records the graph under construction together with its
not-yet-normalized root edges, defines the normalization step $\vdash_n$ on
spanning graphs, and proves that $\vdash_n$ terminates.
Section~\ref{Sec:CompatCommute} defines compatibility and edge commutativity
and combines the above into Theorem~\ref{thm:seq-edge-norm}, the main result
on the ARS side.

Section~\ref{sec:R-reduction-graph} sets up TRSs and their $R$-graphs, splits
the edges of an $R$-graph into root and non-root ones, and introduces the
direct-subterm graph $\Sub(G)$, the monotonic extension $M_F(G)$, and the
subterm-closed property.
Section~\ref{sec:shallow} applies Theorem~\ref{thm:seq-edge-norm} to
non-$E$-overlapping and shallow TRSs, taking $\Vrt := M_F(G)$;
the case without collapsing rules gives Theorem~\ref{thm:shallow-noncollapsing} and the
general case Theorem~\ref{thm:NOV-Shallow}.
Section~\ref{sec:WS_system} treats non-$E$-overlapping and weakly shallow
TRSs, for which a single layer of extension no longer suffices: in this case, $\Vrt$
is obtained by first closing the graph under constructor contexts of
unbounded depth and only then adding one layer of function symbols.
This leads to Theorem~\ref{thm:main} and Corollary~\ref{cor:main}.
Section~\ref{sec10} concludes.

\section{Abstract reduction system} \label{sec3}\label{sec:preliminary}

\subsection{Notation and terminology of ARS}

For a binary relation $\rightarrow$ on a set $V$,
the \emph{domain of $\rightarrow$}, denoted by $\Dom({\to})$, is
$\{ v \mid (v,v') \in {\to} \text{ or } (v',v) \in {\to}\}$.
Clearly, $\Dom({\to}) \subseteq V$.
We use $\leftarrow$, $\leftrightarrow$,
$\rarrow{+}{}$ and $\rarrow{\ast}{}$
for the inverse relation, the symmetric closure,
the transitive closure, and the reflexive and transitive closure
of $\rightarrow$, respectively.  We use $\cdot $
for the composition operation of two relations.

An \emph{abstract reduction system} (ARS) is a directed graph
$G=\< V, \rightarrow\>$ with a reduction relation ${\rightarrow} \subseteq V \times V$.
If $(u,v) \in \mathord{\rightarrow}$,
we write it as $u\rightarrow v$.
For $V', V'' \subseteq V$, the edge restriction $\mathord{\rightarrow}|_{V'\times V''}$ stands
for $\mathord{\rightarrow} \cap (V'\times V'')$.
Moreover, we use $\mathord{\rightarrow}|_{V'}$ in the case that $V'=V''$.
We refer the set $V$ of vertices (resp.~the set $\rightarrow$ of edges) of
a graph $G$ by $V_G$ (resp.~${\rightarrow}_G$).
For a graph $G$ and $V' \subseteq V_G$,
we refer $\< V_G, \mathord{\rightarrow}|_{V'}\>$ by $G|_{V'}$.
Hence, $G^\emptyset$ is a graph that has no edges but retains the same vertex set
as $G$.

For an ARS $G=\<V,{\rightarrow}\>$, an element $u$ of $V$ is
\emph{($\rightarrow$-)normal} if there exists no $v\in V$ such that
$u\rightarrow v$.
We sometimes call a normal element a \emph{normal form (of $G$)}
and use \emph{$\NF_{G}$} to represent the set of normal forms of $G$.
We define the set \emph{$\NF_{G}(u)$} as
$\{v\in \NF_{G} \mid u \rarrow{\ast}{} v\}$.
We say $v$ is a \emph{reduct} of $u$ if $u\rarrow{\ast}{}v$.

We follow the standard terminology in graph theory.
An edge $v \rightarrow u$ is an \emph{outgoing edge} of $v$ and
an \emph{incoming edge} of $u$, and $v$ is the \emph{initial vertex} of
$v \rightarrow u$.
A vertex $v$ is $\rightarrow$-normal, if and only if it has no outgoing-edges.
A $\rightarrow$-normal vertex is \emph{isolated}
if it also has no incoming edges.
We use $(u\to v)$ for a single edge graph $\<\{u,v\}, \{(u,v)\}\>$.

For graphs $G=\<V,\mathord{\rightarrow}\>$,
$G_1=\<V_1,\mathord{\rightarrow}_1\>$ and $G_2=\<V_2,\mathord{\rightarrow}_2\>$,
we define the following graph operations:
\begin{description}
\item [Union:] $G_1 \cup G_2 = \<V_1 \cup V_2, \mathord{\rightarrow}_1\cup\mathord{\rightarrow}_2\>$.
\item [Intersection:] $G_1 \cap G_2 = \<V_1 \cap V_2, \mathord{\rightarrow}_1\cap\mathord{\rightarrow}_2\>$.
\item [Difference:] $G_1 \setminus G_2 = \<V_1, {{\to}_1} \setminus {{\to}_2} \>$.
\item [Inclusion:] $G_2$ includes $G_1$ (denoted by $G_1 \subseteq G_2)$
  if $V_1 \subseteq V_2$ and ${\rightarrow}_1 \subseteq {\rightarrow}_2$.
\item [Vertex restriction:] $G|_{V'} = \<V\cap V', {\rightarrow}|_{V'}\>$.
\end{description}
We introduce the following edge operations on graph $G$:
\begin{itemize}
\item $G \setminus (u \to v)$ denotes an \emph{edge deletion}, defined as
  $(V_G, \mathord{\rarrow{}{G}} \setminus \{(u,v)\})$.
\item $G \cup (u\to v)$ denotes an \emph{edge expansion}, defined as
  $(V_G \cup \{u,v\}, \mathord{\rarrow{}{G}} \cup \{(u,v)\})$.
\item If a sequence $u = u_0 {\to} u_1 {\to} \cdots {\to} u_n = v$ is clear
  from the context, we denote
  \( G \cup (u_0 {\to} u_1) \cup \cdots \cup (u_{n-1} {\to} u_n) \)
  by $G \cup \lb u {\rarrow{\ast}{}} v \rb$.
\end{itemize}

\begin{definition}\label{def:graphiso}
  Two ARSs $G_1 = \<V_1, \mathord{\rightarrow}_1\>$ and
  $G_2 = \<V_2, \mathord{\rightarrow}_2\>$
  are \emph{isomorphic} (denoted by $G_1 \cong G_2$)
  if there exists an bijection $\varphi: V_1 \to V_2$ such that
  $(u,u') \in \mathord{\rightarrow}_1$ if and only if
  $(\varphi(u),\varphi(u')) \in \mathord{\rightarrow}_2$.
\end{definition}

An isomorphism preserves the following properties:
\begin{itemize}
\item $G$ is \emph{finite} if $V$ is finite.
\item $G$ is \emph{cyclic} if $v\rarrow{+}{G}v$ for some $v \in V_{G}$.
\item $G$ is \emph{deterministic} if $(u,v)\in\mathord{\rarrow{}{G}}$ and $(u,v')\in\mathord{\rarrow{}{G}}$
  imply $v = v'$.
\item $G$ is \emph{connected} if $(u,v)\in\mathord{\barrow{\ast}{G}}$
  for all $u,v\in V_{G}$.
\end{itemize}
We sometimes present an ARS visually as a graph;
for example,
$a \leftrightarrows b$ represents an ARS $\<\{a,b\},\{(a,b),(b,a)\}\>$

\begin{proposition}\label{Prop:FiniteApproxARS}
  Let $A$ and $B$ be ARSs. Assume that $A$ is finite and
  ${\barrow{\ast}{A}} \subseteq {\barrow{\ast}{B}}$.
  Then, there exists a finite ARS $C$ with $C \subseteq B$ such that
  ${\barrow{\ast}{A}} \subseteq {\barrow{\ast}{C}}$.
\end{proposition}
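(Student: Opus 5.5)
The plan is to exploit finiteness of $A$: since $V_A$ is finite, the relation ${\barrow{\ast}{A}}$ is determined by finitely many pairs. It is therefore enough to realize each such pair by a finite witness path in $B$ and take the union of these witnesses. Concretely, $A$ has finitely many edges, since ${\rightarrow}_A \subseteq V_A \times V_A$ and $V_A$ is finite. For every edge $(u,v) \in {\rightarrow}_A$ we have $u \barrow{\ast}{A} v$, hence $u \barrow{\ast}{B} v$ by assumption.

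By the definition of the reflexive–transitive closure of the symmetric closure, this gives a finite conversion
\[
u = w_0 \leftrightarrow_B w_1 \leftrightarrow_B \cdots \leftrightarrow_B w_n = v .
\]
Each step $w_{i} \leftrightarrow_B w_{i+1}$ is witnessed by an actual edge of $B$, namely $(w_i,w_{i+1})$ or $(w_{i+1},w_i)$. Let $C_{(u,v)}$ be the graph whose vertices are $\{w_0,\ldots,w_n\}$ and whose edges are these chosen witness edges. Then $C_{(u,v)}$ is finite, satisfies $C_{(u,v)} \subseteq B$, and gives $u \barrow{\ast}{C_{(u,v)}} v$.

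Now set $C = \<V_A,\emptyset\> \cup \bigcup_{(u,v)\in{\rightarrow}_A} C_{(u,v)}$. This is a finite union of finite graphs, so it is finite. It satisfies $C \subseteq B$ because every $C_{(u,v)} \subseteq B$ and $V_A \subseteq V_B$. The latter holds since every $x \in V_A$ satisfies $x \barrow{\ast}{A} x$, hence $(x,x) \in {\barrow{\ast}{B}}$, which presupposes $x \in V_B$. Reading the relations as relations on the graph's vertex set is the only point needing care; if one prefers, the trivial reflexive pairs can simply be handled by including $V_A$ in $V_C$ as done here.

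Finally, to show ${\barrow{\ast}{A}} \subseteq {\barrow{\ast}{C}}$, take $x \barrow{\ast}{A} y$. This is a finite sequence of $A$-steps, each of which is an $A$-edge traversed in either direction. Each $A$-edge $(u,v)$ is covered by $u \barrow{\ast}{C} v$ through $C_{(u,v)} \subseteq C$, and ${\barrow{\ast}{C}}$ is symmetric and transitive, so concatenating these conversions gives $x \barrow{\ast}{C} y$. The zero-step case is covered by reflexivity on $V_A \subseteq V_C$.

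No step is a real obstacle. The only subtlety is the bookkeeping about vertex sets in the reflexive case, which is why $V_A$ is added to $C$ explicitly.
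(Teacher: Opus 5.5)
Your proof is correct and follows the paper's argument: for each of the finitely many edges of $A$ you pick a witnessing conversion in $B$, and you let $C$ be the union of the $B$-edges used. Adding $V_A$ to $V_C$ explicitly, justified by reflexive pairs lying in $V_B$, handles the isolated vertices of $A$, a detail the paper's proof leaves implicit.
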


\begin{proof}
  Let us enumerate edges $(u,v)$ in $A$, which are finite in number.
  Since ${\barrow{\ast}{A}} \subseteq {\barrow{\ast}{B}}$, for each edge
  $(u,v)\in \to_A$ there exists
  a sequence $u = u_0 \barrow{}{B} u_1 \cdots \barrow{}{B} u_n = v$.
  By collecting all edges $(u_i, u_{i+1})$ appearing in these sequences
  for all edges in $A$, we define $C$ as the union of these edges.
  Since $A$ is finite and each sequence is finite, $C$ is a finite ARS satisfying
  ${\barrow{\ast}{A}} \subseteq {\barrow{\ast}{C}}$.

\end{proof}

Let $G=\< V,\rightarrow\>$ be an ARS.
We write $\downarrow$ for the joinability relation
${\rarrow{\ast}{}} \cdot {\larrow{\ast}{}}$.
We  say $G$ is \emph{confluent} if
$\mathord{\larrow{\ast}{}}\cdot \mathord{\rarrow{\ast}{}}
\subseteq {\downarrow}$,
\emph{Church-Rosser} (\emph{CR})
if $\mathord{\barrow{\ast}{}} \subseteq {\downarrow}$,
\emph{terminating} if it does not admit an infinite reduction sequence
from any element, and
\emph{looping} if every infinite reduction sequence contains multiple
occurrences of some element.
Note that confluence and CR are equivalent, and every finite ARS is
looping.

\subsection{Edge expansion as conservative extension}
\label{subsec:conserv-extension}

%
\begin{definition}\label{def:conserv}
  $G'=\<V',{\to}'\>$ \emph{conservatively extends} $G=\<V,{\to}\>$,
  denoted by $G \conserveq G'$, if $G\subseteq G'$ and
  ${\barrow{\ast}{}} = {\Barrowd{}{}{\ast}{}}|_{V\times V}$.
\end{definition}
Note that ${\barrow{\ast}{}} \subseteq {\Barrowd{}{}{\ast}{}}|_{V\times V}$
follows immediately from $G \subseteq G'$. Thus, it suffices to require
${\barrow{\ast}{}} \supseteq {\Barrowd{}{}{\ast}{}}|_{V\times V}$
instead of the equality condition in Definition~\ref{def:conserv}.

Next proposition holds immediately.
\begin{proposition} \label{prop:ConservUnion}
  Let $T,G,G'$ be ARSs with $G \conserveq G'$.
  If $V_T \subseteq V_G$, $T \cup G \conserveq T \cup G'$.
\end{proposition}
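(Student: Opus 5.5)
We have to show two things: that $T \cup G$ is included in $T \cup G'$, and that the conversion relation of $T\cup G'$, restricted to $V_{T\cup G}\times V_{T\cup G}$, is contained in that of $T\cup G$. By the remark after Definition~\ref{def:conserv}, the reverse containment is automatic. Since $V_T \subseteq V_G$, we have $V_{T\cup G} = V_G$. The inclusion $T\cup G \subseteq T\cup G'$ follows at once from $G\subseteq G'$. So the real task is: given $u,v\in V_G$ with $u \barrow{\ast}{T\cup G'} v$, show $u \barrow{\ast}{T\cup G} v$.

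Fix such a conversion $u = w_0 \barrow{}{} w_1 \barrow{}{} \cdots \barrow{}{} w_n = v$, each step being an edge of $T$ or of $G'$ in either direction. Every $T$-step has both endpoints in $V_T\subseteq V_G$. Mark the indices $i$ with $w_i\in V_G$; these include $0$, $n$, and both endpoints of every $T$-step. Between two consecutive marked indices $i<j$, the segment $w_i,\dots,w_j$ contains no $T$-step, because a $T$-step would force a marked vertex strictly inside. Hence each such segment is a pure $G'$-conversion $w_i \Barrowd{}{}{\ast}{} w_j$ with $w_i,w_j\in V_G$. By $G\conserveq G'$, it follows that $w_i \barrow{\ast}{G} w_j$.

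The $T$-steps between marked vertices are kept as they are. Concatenating the replaced segments with these $T$-steps yields a $T\cup G$ conversion from $u$ to $v$, which proves $T\cup G \conserveq T\cup G'$.

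The only point needing care is the segmentation argument: one must check that maximal $G'$-only stretches really start and end at vertices of $V_G$. This holds exactly because of the hypothesis $V_T\subseteq V_G$, and that is where the hypothesis is used.
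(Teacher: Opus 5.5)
Your argument is correct, and it is the routine verification the paper leaves implicit when it says the proposition ``holds immediately'': split the $T\cup G'$-conversion at vertices of $V_G$, keep the $T$-steps, and replace each pure $G'$-stretch by a $G$-conversion using $G \conserveq G'$. The one wording to tighten is your claim that a segment between consecutive marked indices ``contains no $T$-step'': such a segment can be a single $T$-step with both endpoints marked, which is the case you then correctly handle by keeping $T$-steps as they are.
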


We remark that an edge expansion $G':=G \cup (v\to v')$ is a conservative
extension of $G$ except when $v,v'\in V_G$ and
$v \not\barrow{\ast}{G} v'$.
%

\begin{proposition}\label{prop:conserv2}
  Let $G_1$ be a finite ARS and $G_2$ be an ARS such that $G_1 \subseteq G_2$,
  then there exists a finite ARS $G$ such that $G_1\subseteq G \conserveq G_2$.
\end{proposition}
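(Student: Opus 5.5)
The plan is to enlarge $G_1$ inside $G_2$ only as much as is needed to witness every $G_2$-conversion between vertices of $G_1$, in the spirit of Proposition~\ref{Prop:FiniteApproxARS}.

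\textbf{Reduction to finitely many pairs.} By the remark after Definition~\ref{def:conserv}, it suffices to construct a finite $G$ with $G_1 \subseteq G \subseteq G_2$ such that every pair $u,v \in V_G$ with $u \barrow{\ast}{G_2} v$ also satisfies $u \barrow{\ast}{G} v$. I will first secure this for pairs in $V_{G_1}$. Since $V_{G_1}$ is finite, the set $P=\{(u,v)\in V_{G_1}\times V_{G_1} \mid u \barrow{\ast}{G_2} v\}$ is finite. For each $(u,v)\in P$, fix one finite conversion $u=w_0 \barrow{}{G_2} w_1 \cdots \barrow{}{G_2} w_n = v$. Let $G$ be $G_1$ together with all vertices $w_i$ and all edges of $G_2$ used in these chosen conversions, each edge taken in its original $G_2$ orientation. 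Then $G$ is a finite union of finite pieces, so it is finite, and $G_1 \subseteq G \subseteq G_2$ holds.

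\textbf{Main obstacle: conservativity on the new vertices.} Definition~\ref{def:conserv} requires ${\barrow{\ast}{G}} = {\barrow{\ast}{G_2}}|_{V_G\times V_G}$ for all of $V_G$, not only for $V_{G_1}$. New vertices are added, and one must ensure that two new vertices that are $G_2$-convertible are also $G$-convertible. This follows because every new vertex $w_i$ lies on a chosen conversion, so it is $G$-convertible to an endpoint $u\in V_{G_1}$ of that conversion.

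Take $x,y\in V_G$ with $x \barrow{\ast}{G_2} y$. Pick $u_x,u_y\in V_{G_1}$ with $x \barrow{\ast}{G} u_x$ and $y \barrow{\ast}{G} u_y$; for $x\in V_{G_1}$, take $u_x=x$. Since $G\subseteq G_2$, we get $u_x \barrow{\ast}{G_2} x \barrow{\ast}{G_2} y \barrow{\ast}{G_2} u_y$, so $(u_x,u_y)\in P$. By construction, then, $u_x \barrow{\ast}{G} u_y$. Chaining these gives $x \barrow{\ast}{G} y$, which establishes $G \conserveq G_2$.

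The only subtlety is this pass through representatives in $V_{G_1}$. It is what makes a single round of additions suffice rather than an iteration that might not terminate.
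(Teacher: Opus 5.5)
Your proof is correct; it differs from the paper's in how the construction is organized, not in the underlying idea. The paper first reduces, without loss of generality, to the case where $V_{G_1}$ lies in a single $\barrow{\ast}{G_2}$-component (handling the general case by a union over components). It then inducts on the number $k$ of $\barrow{\ast}{G_1}$-components, each step adding one $G_2$-conversion that joins two distinct components. In the base case $k\le 1$, conservativity is immediate, since every pair of vertices is already convertible in the graph.

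You instead add, in a single round, a witnessing $G_2$-conversion for every $G_2$-convertible pair in $V_{G_1}\times V_{G_1}$. You then verify conservativity on the new vertices directly by routing through representatives in $V_{G_1}$.

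Your route avoids both the reduction to one $G_2$-component and the induction. The cost is that you need the explicit representative argument, whereas the paper gets the corresponding fact for free because the final graph is connected within each $G_2$-component. The paper's construction also adds fewer paths: at most $k-1$ per component, versus one per convertible pair (up to $|V_{G_1}|^2$) in yours. That economy plays no role in the later uses of the proposition, where only finiteness matters.
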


\begin{proof}
  Without loss of generality, we can assume that $G_1$ is contained in a single
  connected component with respect to $\barrow{\ast}{G_2}$ (otherwise, we can apply
  the following argument to each connected component of $G_2$ separately and then
  take the union of the resulting ARSs).
  We prove the claim by induction on the number $k$ of connected components of
  $G_1$ with respect to $\barrow{\ast}{G_1}$.
  Since $G_1$ is finite, $k$ is well-defined.

  If $k \le 1$, then $G_1 \conserveq G_2$ holds trivially, and the claim
  follows by setting $G := G_1$.

  Assume $k > 1$.
  Then, there must exist a edge sequence $\eta$ of edges in $G_2$ that connects a pair of vertices
  belonging to two different connected components of $G_1$.
  Let $G_1' := G_1 \cup \eta$.
  Clearly, $G_1'$ is a finite ARS satisfying $G_1 \subseteq G_1' \subseteq G_2$,
  and the number of connected components of $G_1'$ with respect to
  $\barrow{\ast}{G_1'}$ is strictly less than $k$.
  Therefore, by the induction hypothesis, there exists a finite ARS $G$ such that
  $G_1' \subseteq G \conserveq G_2$. Since $G_1 \subseteq G_1'$, this $G$ also
  satisfies $G_1 \subseteq G \conserveq G_2$.
\end{proof}

\begin{proposition}\label{prop:conserv}
  Let $G_1$, $G_2$ be ARSs with $G_1 \conserveq G_2$.
  For $G := G_1 \cup (v \rarrow{}{G_2} v')$, $G_1 \conserveq G$ holds.
  Moreover, if $v \in V_{G_1}$ or $v' \in V_{G_1}$ then $G \conserveq G_2$.
\end{proposition}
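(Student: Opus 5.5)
The plan is to check the two claims directly against Definition~\ref{def:conserv}. By the remark following that definition, it suffices in each case to prove the inclusion of the larger graph's conversion, restricted to the smaller vertex set, into the smaller graph's conversion. Throughout, write ${\to}_1$, ${\to}_2$, and ${\to}$ for the edges of $G_1$, $G_2$, and $G$.

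For the first claim, $G_1 \subseteq G$ holds by construction. Let $u,w \in V_{G_1}$ with $u \barrow{\ast}{G} w$. Every edge of $G$ is either an edge of $G_1$ or the new edge $(v,v')$, which is an edge of $G_2$. Since $G_1 \subseteq G_2$, every edge of $G$ lies in $G_2$, so $G \subseteq G_2$ and hence $u \barrow{\ast}{G_2} w$. Because $u,w \in V_{G_1}$ and $G_1 \conserveq G_2$, we obtain $u \barrow{\ast}{G_1} w$. This gives ${\barrow{\ast}{G}}|_{V_{G_1}\times V_{G_1}} \subseteq {\barrow{\ast}{G_1}}$, which is what is required.

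For the second claim, assume without loss of generality that $v \in V_{G_1}$; the case $v' \in V_{G_1}$ is symmetric. We already have $G \subseteq G_2$. Take $u,w \in V_G = V_{G_1} \cup \{v,v'\}$ with $u \barrow{\ast}{G_2} w$; we must show $u \barrow{\ast}{G} w$. The key observation is that every vertex of $G$ is $G$-convertible to some vertex of $G_1$. A vertex already in $V_{G_1}$ is its own witness, and $v'$ is connected to $v \in V_{G_1}$ by the edge $(v,v')$ of $G$. So choose $\hat u, \hat w \in V_{G_1}$ with $u \barrow{\ast}{G} \hat u$ and $w \barrow{\ast}{G} \hat w$. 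Since $G \subseteq G_2$, these conversions also hold in $G_2$, and so $\hat u \barrow{\ast}{G_2} u \barrow{\ast}{G_2} w \barrow{\ast}{G_2} \hat w$. Conservativity of $G_1 \subseteq G_2$ then gives $\hat u \barrow{\ast}{G_1} \hat w$, and hence $\hat u \barrow{\ast}{G} \hat w$ because $G_1 \subseteq G$. Composing the three conversions yields $u \barrow{\ast}{G} w$.

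The only delicate point is this reduction to representatives in $V_{G_1}$. It explains why the extra hypothesis $v \in V_{G_1}$ or $v' \in V_{G_1}$ is needed: without it, both $v$ and $v'$ could be new vertices forming a component disconnected from $G_1$, and then $G_2$-conversions between such a new vertex and $V_{G_1}$ would not be reflected in $G$. The remaining steps are routine.
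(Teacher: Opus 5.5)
Your proof is correct and follows the paper's argument. The first part is identical. In the second part, your choice of representatives $\hat u,\hat w \in V_{G_1}$, reached through the added edge $(v,v')$ before invoking conservativity of $G_1 \subseteq G_2$, is exactly what the paper's three-case analysis (both endpoints old, one new, both new) does, just packaged uniformly instead of split into cases.
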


\begin{proof}
  The inclusions $G_1 \subseteq G \subseteq G_2$ follow immediately from the construction of $G$.
  Since $G_1 \conserveq G_2$, we have
  \begin{align}
    & {\barrow{\ast}{G_2}}|_{V_{G_1}\times V_{G_1}} = {\barrow{\ast}{G_1}}.
    \label{eq:conserv:assump1}
  \end{align}
  Then, ${\barrow{\ast}{G}}|_{V_{G_1}\times V_{G_1}}\subseteq {\barrow{\ast}{G_1}}$
  follows from ${\barrow{\ast}{G}} \subseteq {\barrow{\ast}{G_2}}$ and
  \eqref{eq:conserv:assump1}.
  Thus, $G_1 \conserveq G$ holds.

  Next, we show that $G \conserveq G_2$, i.e.,
  ${\barrow{\ast}{G_2}}|_{V_G\times V_G} \subseteq {\barrow{\ast}{G}}$.
  Let $(u,u') \in {\barrow{\ast}{G_2}}|_{V_G\times V_G}$.
  We consider the following three cases based on the locations of $u$ and $u'$.
  \begin{description}
  \item[\textbf{Case 1}: $u,u'\in V_{G_1}$.]  In this case,
    we have $(u,u') \in {\barrow{\ast}{G_2}}|_{V_{G_1}\times V_{G_1}}$.
    Thus, we obtain $(u,u') \in { \barrow{\ast}{G} }$ from
    \eqref{eq:conserv:assump1} and $G_1 \subseteq G$.
  \item[\textbf{Case 2}: Exactly one of $u$ of $u'$ is in $V_G\setminus V_{G_1}$.]
    Without loss of generality, we assume $u\in V_{G_1}$ and $u'\in V_G\setminus V_{G_1}$.
    From the construction of $G$, we have
    $V_{G} \setminus V_{G_1} \subseteq \{v,v'\}$, which implies $u'=v$ or $u'=v'$.
    \begin{itemize}
    \item If $u'=v$,
      we have $u \barrow{\ast}{G_2} v' \gets_G v = u'$.
      Since $u' \notin V_{G_1}$, the assumption $v \in V_{G_1}$ or
      $v' \in V_{G_1}$ ensures $v' \in V_{G_1}$.
      By \eqref{eq:conserv:assump1}, we obtain $(u,v') \in { \barrow{\ast}{G} }$,
      and hence, $(u,u') \in { \barrow{\ast}{G} }$.
    \item If $u'=v'$, we have $u \barrow{\ast}{G_2} v \to_G v' = u'$.
      Here, $v \in V_{G_1}$ from the assumption $v \in V_{G_1}$ or
      $v' \in V_{G_1}$.
      Thus, $(u,u') \in { \barrow{\ast}{G} }$ follows from
      \eqref{eq:conserv:assump1}.
    \end{itemize}
  \item[\textbf{Case 3}: $u,u' \in V_{G}\setminus V_{G_1}$.]
    If $u=u'$, the claim is trivial.
    Assume $u \neq u'$.  Since $u,u' \in V_{G_1}\setminus V_{G_1}$, we obtain $u,u' \in \{v,v'\}$
    from the construction of $G$.
    This contradicts the assumption $v \in V_{G_1}$ or $v' \in V_{G_1}$.
  \end{description}
\end{proof}

In Proposition~\ref{prop:conserv}, the condition $v \in V_{G_1}$ or
$v' \in V_{G_1}$ is necessary for $G \conserveq G_2$.
A counter example: $G_1=\{ a\to b \}$, $G_2= G_1 \cup \{b\to c,\ c\to d\}$,
and $G=G_1\cup\{c\to d\}$.
Then, $G_1 \conserveq G$ but $G \not\conserveq G_2$.

\subsection{Union of ARSs that preserves confluence}

First, we show a simple confluence preserving condition for union.

\begin{lemma}\label{lem:inc-confluence}
  Let $G_1$ be an ARS and $G_2$ be a confluent ARS such that
  ${\to_{G_1}} \subseteq {\barrow{\ast}{G_2}}$.
  Then, $G_1 \cup G_2$ is confluent.
\end{lemma}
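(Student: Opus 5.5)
The idea is to compare conversions in $G_1 \cup G_2$ with conversions in $G_2$, and then use the fact that confluence is equivalent to the Church--Rosser property. Write $G := G_1 \cup G_2$. First I show that ${\barrow{\ast}{G}} \subseteq {\barrow{\ast}{G_2}}$. Every edge of $G$ is an edge of $G_1$ or of $G_2$. An edge of $G_2$ is trivially in $\barrow{\ast}{G_2}$. An edge of $G_1$ is in $\barrow{\ast}{G_2}$ by hypothesis. Since $\barrow{\ast}{G_2}$ is symmetric and transitive, induction on the length of a $\barrow{}{G}$-conversion gives the inclusion. This uses the fact that whenever $u \to_{G_1} v$, both $u$ and $v$ lie in $\Dom({\barrow{\ast}{G_2}})$. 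Reflexive pairs are harmless.

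Next, take a peak $u \larrow{\ast}{G} w \rarrow{\ast}{G} v$. It is in particular a conversion $u \barrow{\ast}{G} v$, so the first step gives $u \barrow{\ast}{G_2} v$. Since $G_2$ is confluent, it is Church--Rosser, as noted in the preliminaries. Hence there is $t$ with $u \rarrow{\ast}{G_2} t \larrow{\ast}{G_2} v$. Because $G_2 \subseteq G$, this valley is also a valley in $G$, so $u \downarrow_G v$, and $G$ is confluent.

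The only delicate point is vertices and reflexivity. If $u = v$, the pair is joinable trivially. Otherwise a nontrivial conversion passes only through vertices occurring in edges, and by hypothesis these edges are already related in $G_2$, so no vertex of $G_1$ outside $V_{G_2}$ can appear in a nontrivial conversion. I expect no real obstacle beyond stating this induction carefully.
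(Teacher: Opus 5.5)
Your proof is correct and follows the paper's argument. You turn every $G_1\cup G_2$-conversion into a $G_2$-conversion using ${\to_{G_1}}\subseteq{\barrow{\ast}{G_2}}$, and then use that the confluent $G_2$ is Church--Rosser to get a common reduct in $G_2\subseteq G_1\cup G_2$; your extra care about reflexive pairs and vertices outside $V_{G_2}$ only adds detail to the same proof.
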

\begin{proof}
  Suppose $u \barrow{\ast}{G_1\cup G_2} u'$.
  Since ${\to_{G_1}} \subseteq {\barrow{\ast}{G_2}}$,
  we have $u \barrow{\ast}{G_2} u'$.
  From the confluence of $G_2$, we obtain
  $u \rarrow{\ast}{G_2} u'' \larrow{\ast}{G_2} u'$ for some $u''\in V_{G_2}$.

\end{proof}

Next, we introduce the notion of hierarchical combination as another confluence
preserving condition for union, which will be used to guarantee confluence in
Section~\ref{sec:WS_system}.

\begin{definition}\label{def:hc}
  For ARSs $G_1$ and $G_2$,
  we say that $G_1 \cup G_2$
  is a \emph{hierarchical combination of $G_2$ with $G_1$},
  denoted by $G_1 \gtrdot G_2$,
  if $\mathord{\rightarrow_{G_1}} \subseteq (V_{G_1}\setminus
  V_{G_2})\times V_{G_1} \cup \mathord{\leftrightarrow^\ast_{G_2}}$.
\end{definition}

The following lemma is a simpler version of those~\cite{Sakai15}.

\begin{lemma} \label{lem:hc-confluence-union}
  Let $G_1, G_2$ be confluent ARSs such that
  $G_1 \gtrdot G_2$ is a hierarchical combination.
  Then, $G_1 \gtrdot G_2$ is confluent.
\end{lemma}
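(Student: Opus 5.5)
The plan is to verify confluence directly: take an arbitrary peak $w_1 \larrow{\ast}{} v \rarrow{\ast}{} w_2$ in $G_1 \cup G_2$ and join it. Everything rests on one structural observation about the vertex set $V_{G_2}$, which we establish first.

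\textbf{Observation.} Every $G_1$-edge $x \to_{G_1} y$ with $x \in V_{G_2}$ belongs to ${\barrow{\ast}{G_2}}$. This holds because Definition~\ref{def:hc} leaves only two options for such an edge, and the first option, $(V_{G_1}\setminus V_{G_2})\times V_{G_1}$, is excluded since $x \in V_{G_2}$. Consequently $y$ is either equal to $x$ or lies in $V_{G_2}$, because a nonempty $G_2$-conversion stays inside $V_{G_2}$. Edges of $G_2$ trivially start and end in $V_{G_2}$.

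From this we get two facts.
\begin{itemize}
\item $V_{G_2}$ is closed under $\rarrow{\ast}{G_1\cup G_2}$.
\item On $V_{G_2}$, every step of $G_1\cup G_2$ is contained in ${\barrow{\ast}{G_2}}$.
\end{itemize}
In addition, a vertex outside $V_{G_2}$ has only $G_1$-edges leaving it. Hence every reduction sequence from $v$ has the form $v \rarrow{\ast}{G_1} x \rarrow{\ast}{G_1\cup G_2} w$. Here the prefix is the part before the sequence first enters $V_{G_2}$ at $x$, and all steps after $x$ lie in ${\barrow{\ast}{G_2}}$. Alternatively, the sequence never enters $V_{G_2}$, and then it is a pure $G_1$-sequence.

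\textbf{Case analysis on the peak.}
\begin{description}
\item[Case 1: $v \in V_{G_2}$.] Both reducts lie in $V_{G_2}$ and $w_1 \barrow{\ast}{G_2} w_2$. Confluence (equivalently CR) of $G_2$ then joins them.
\item[Case 2: $v \notin V_{G_2}$, and both branches enter $V_{G_2}$, at $x_1$ and $x_2$ respectively.] Confluence of $G_1$ gives $x_1 \rarrow{\ast}{G_1} y \larrow{\ast}{G_1} x_2$. By the first fact, $y \in V_{G_2}$ and $x_i \barrow{\ast}{G_2} y$. Chaining gives $w_1 \barrow{\ast}{G_2} x_1 \barrow{\ast}{G_2} y \barrow{\ast}{G_2} x_2 \barrow{\ast}{G_2} w_2$, and CR of $G_2$ closes the peak.
\item[Case 3: exactly one branch, say the one to $w_1$, stays in $G_1$.] Join $w_1$ and $x_2$ in $G_1$ at some $y$. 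Since $y$ is a reduct of $x_2$, we have $y \in V_{G_2}$ and $y \barrow{\ast}{G_2} w_2$. CR of $G_2$ gives $y \rarrow{\ast}{G_2} z \larrow{\ast}{G_2} w_2$, and therefore $w_1 \rarrow{\ast}{} y \rarrow{\ast}{} z$.
\item[Case 4: both branches stay in $G_1$.] Confluence of $G_1$ applies directly.
\end{description}

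\textbf{Expected difficulty.} The only delicate point is the observation itself: we must check that the hierarchical-combination condition really forbids ``escaping'' $G_1$-edges from $V_{G_2}$, including the degenerate case of an empty conversion ($x = y$). Everything after that is routine bookkeeping of the four cases.
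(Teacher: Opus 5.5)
Your proof is correct and follows essentially the same route as the paper. The paper's decomposition $\rightarrow^{\ast}_{G_1\cup G_2} = \rightarrow^{\ast}_{1'}\cdot\rightarrow^{\ast}_{2'}$ is exactly your splitting of a reduction at its first entry into $V_{G_2}$, followed by the same case split on whether each branch reaches $V_{G_2}$ (your Case~1, $v\in V_{G_2}$, is the degenerate instance of its case where both entry points lie in $V_{G_2}$), joining the $G_1$-prefixes by confluence of $G_1$ and closing the rest inside $V_{G_2}$ by CR of $G_2$.
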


\begin{proof}
  We denote
  \[
  \left\{
  \begin{array}{lll}
    {\rightarrow_{1'}}  & \text{ \ for \ } &
    {\rightarrow_{G_1}}|_{(V_{G_1}\setminus V_{G_2}) \times V_{G_1}} \\
    {\rightarrow_{1''}} & \text{ \ for \ } &
    {\rightarrow_{G_1}}|_{(V_{G_1} \cap V_{G_2}) \times (V_{G_1} \cap V_{G_2})},\
    \text{ and}\\
    \mathord{\rightarrow_{2'}} & \text{ \ for \ } &
    {\rightarrow_{1''}} \cup \mathord{\rightarrow_{G_2}} \\
  \end{array}
  \right.
  \]

  First, we will show that ${\rarrow{\ast}{G_1 \cup G_2}} =
  \mathord{\rightarrow_{1'}^\ast} \cdot \mathord{\rightarrow_{2'}^\ast}$.
  Since $G_1 \gtrdot G_2$ is a hierarchical combination
  (i.e., ${\rightarrow_{G_1}} \subseteq
  (V_{G_1}\setminus V_{G_2})\times V_{G_1} \cup {\barrow{\ast}{G_2}}$),
  for any step $u \rarrow{}{G_1} v$, we obtain that $u \rarrow{}{1'} v$ if $u\in V_{G_1}\setminus V_{G_2}$,
  and $u \rarrow{}{1''} v$ if $u\in V_{G_2}$.
  Consider a sequence $s\rarrow{\ast}{G_1\cup G_2}t$.
  If it does not contain any element in $V_{G_2}$, then $s \rarrow{\ast}{1'} t$.
  Otherwise, letting the leftmost element in $V_{G_2}$ be $u$, we have $s\rarrow{\ast}{1'} u \rarrow{\ast}{2'} t$.
  Therefore, ${\rarrow{\ast}{G_1 \cup G_2}} =
  \mathord{\rightarrow_{1'}^\ast} \cdot \mathord{\rightarrow_{2'}^\ast}$.

  Now we will show that the confluence of $G = G_1 \cup G_2$.
  Assume $t \larrow{\ast}{G_1 \cup G_2} s \rarrow{\ast}{G_1 \cup G_2} u$.
  Then we have $t \larrow{\ast}{2'} t' \larrow{\ast}{1'} s
  \rarrow{\ast}{1'} u' \rarrow{\ast}{2'} u$.
  If $t' \not\in V_{G_2}$ (resp. $u' \not\in V_{G_2}$), $t = t'$ (resp. $u = u'$).
  \begin{itemize}
  \item If $t', u' \not\in V_{G_2}$,
    $t = t' \downarrow_{G_1} u' = u$.
  \item If $t' \not\in V_{G_2}$ and $u' \in V_{G_2}$, the confluence of $G_1$
    implies $t = t' \rarrow{\ast}{G_1} s' \larrow{\ast}{G_1} u'$ for some $s'$.
    Since $u' \in V_{G_2}$, we have
    $s' \larrow{\ast}{1''} u' \rarrow{\ast}{2'} u$.
    Combining this with
    ${\rarrow{}{1''}} \subseteq {\rarrow{}{2'}} \subseteq {\barrow{\ast}{2}}$,
    the confluence of $G_2$ impiles $s' \downarrow_{G_2} u$.
    Hence $t \rarrow{\ast}{G_1} s' \downarrow_{G_2} u$.
  \item If $t', u' \in V_{G_2}$. The confluence of $G_1$ implies
    $t' \downarrow_{G_1} u'$.
    Furthermore, $t', u' \in V_{G_2}$ implies
    $t \larrow{\ast}{2'} t' \downarrow_{1''} u' \rarrow{\ast}{2'}$.
    Combining this with
    ${\rarrow{}{1''}} \subseteq {\rarrow{}{2'}} \subseteq {\barrow{\ast}{2}}$,
    the confluence of $G_2$ implies $t \downarrow_{G_2} u$.
  \end{itemize}
\end{proof}

\subsection{Quotient ARS}

This subsection introduces quotient ARSs of an ARS $G$
with respect to the connectivity $\barrow{\ast}{G}$ and the strong connectivity ${\rarrow{\ast}{G}}\cap {\larrow{\ast}{G}}$,
which is denoted by $\stcongstarss{G}$.

\begin{definition} \label{def:quotientARS}
  For ARSs $T$ and $G$ with $V_T \subseteq V_G$,
  let $\sim$ be an equivalence relation on $V_G$.
  We define the equivalence class of $u \in V_G$ with respect to $\sim$
  \[
  \class{u}_{\sim} = \{ v \in V_G \mid u \sim v \}
  \]
  A \emph{quotient ARS} $T/{\sim}$ is defined as:
  \[
  \left\{
  \begin{array}{l}
  V_{T/{\sim}} = \{ \class{u}_{\sim} \mid u\in V_G \}, \\
  U_1 \rarrow{}{T/{\sim}} U_2\,\, \Leftrightarrow \,\,
  U_1 \neq U_2 \text{ and } u_1\rarrow{}{T} u_2
  \text{ for some $u_1\in U_1$ and $u_2\in U_2$.}
  \end{array}
  \right.
  \]
  For $U \subseteq V_G$, $\class{U}_\sim = \bigcup_{u \in U}\,\class{u}_\sim$.
\end{definition}

We prepare notation for the equivalences of the connectivity
$\barrow{\ast}{G}$ and the strong connectivity
${\stcongstarss{G}}$.
For $u \in V_G$ and an ARS $T$ with $V_T \subseteq V_G$,
\begin{itemize}
\item Their equivalent classes are denoted by
  $\classB{u}_{\barrow{\ast}{G}}$ and $\classB{u}_{\stcongstarss{G}}$, respectively.
  (If clear from the context, we use $\classB{u}$ and $\classD{u}$ for them.)
\item The quotient ARSs $T/{\barrow{\ast}{G}}$ and $T/{\stcongstarss{G}}$ are denoted by $T/G$ and $T \ddiv G$, respectively.
  Moreover, $G \ddiv G$, i.e., $G/{\stcongstarss{G}}$ is denoted by $G_{\ddiv}$.
\end{itemize}
\begin{definition}
  A quotient ARS $T/{\sim}$ is \emph{strongly deterministic}
  if the number of outgoing edges in $T$ from each equivalence class is at most one,
  i.e.,
  $u' \gets_{T} u \sim v \to_{T} v'$ implies that $u=v$ and $u'=v'$.
\end{definition}

The following proposition is trivial.
\begin{proposition}\label{prop:strong-lynondeterministic}
  A strongly deterministic quotient $T/{\sim}$ is deterministic.
\end{proposition}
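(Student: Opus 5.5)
The plan is to unfold the two definitions and check determinism of $T/{\sim}$ directly. Recall that $T/{\sim}$ is deterministic when $U \rarrow{}{T/{\sim}} U_1$ and $U \rarrow{}{T/{\sim}} U_2$ imply $U_1 = U_2$. So I will fix an equivalence class $U \in V_{T/{\sim}}$ with two outgoing quotient edges $U \to U_1$ and $U \to U_2$, and aim to show $U_1 = U_2$.

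By Definition~\ref{def:quotientARS}, the edge $U \to U_1$ comes from some $u \in U$ and $u' \in U_1$ with $u \rarrow{}{T} u'$. Likewise, $U \to U_2$ comes from some $v \in U$ and $v' \in U_2$ with $v \rarrow{}{T} v'$. Since $u$ and $v$ lie in the same class $U$, we have $u \sim v$, which gives the configuration
\[
u' \larrow{}{T} u \sim v \rarrow{}{T} v'.
\]
Strong determinism applies to exactly this configuration and yields $u = v$ and $u' = v'$. It follows that $U_1 = \class{u'}_{\sim} = \class{v'}_{\sim} = U_2$, because each class of the partition is determined by any of its members.

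I expect no real obstacle. The only point to keep in mind is that quotient edges are existentially witnessed by representatives, so different quotient edges could a priori have different witnesses. Strong determinism is formulated precisely over all such witnesses within one class, so the argument closes at once. The side condition $U_1 \neq U$ in the definition of quotient edges plays no role here.
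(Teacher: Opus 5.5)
Your proof is correct. The paper gives no proof at all and simply calls the proposition trivial. Your direct unfolding is the intended argument: you take witnesses $u,v$ of the two quotient edges in the same class, apply strong determinism to get $u=v$ and $u'=v'$, and conclude that the target classes $\class{u'}_{\sim}$ and $\class{v'}_{\sim}$ coincide.
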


\medskip
The next two propositions show the property of connectivity.

\begin{proposition}\label{Prop:NestedQuotient}
  $T_2 / (T_1 \cup G) \cong (T_2 / G) / (T_1 / G)$ for ARSs $G, T_1,
  T_2$ with $V_{T_1 \cup T_2} \subseteq V_G$,
\end{proposition}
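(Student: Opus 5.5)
We construct an explicit bijection and check that it is an isomorphism in the sense of Definition~\ref{def:graphiso}. Write $\sim_1$ for $\barrow{\ast}{G}$ and $\sim_2$ for $\barrow{\ast}{T_1\cup G}$, both equivalences on $V_G$ (recall $V_{T_1}\subseteq V_G$). Since ${\barrow{\ast}{G}}\subseteq{\barrow{\ast}{T_1\cup G}}$, every $\sim_2$-class is a union of $\sim_1$-classes. The vertices of $T_2/(T_1\cup G)$ are the $\sim_2$-classes $\classB{u}_{\sim_2}$. The vertices of $(T_2/G)/(T_1/G)$ are classes of $V_{T_2/G}$, i.e.\ of the $\sim_1$-classes, under $\barrow{\ast}{T_1/G}$; call this equivalence $\approx$. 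Define
\[
\varphi(\classB{u}_{\sim_2}) = \{\classB{v}_{\sim_1} \mid v\in V_G,\ \classB{v}_{\sim_1}\approx\classB{u}_{\sim_1}\}.
\]
The point is that $\varphi$ is well defined and bijective.

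The key step is the following claim: for $u,v\in V_G$,
\[
u\sim_2 v \iff \classB{u}_{\sim_1}\approx\classB{v}_{\sim_1}.
\]
For ($\Rightarrow$), induct on the length of a $(T_1\cup G)$-conversion from $u$ to $v$. A $G$-step keeps the $\sim_1$-class. A $T_1$-step $w\to_{T_1}w'$ either stays in one $\sim_1$-class or gives an edge $\classB{w}\to_{T_1/G}\classB{w'}$ by Definition~\ref{def:quotientARS}. In both cases consecutive classes are $\approx$-related. For ($\Leftarrow$), induct on the length of a $T_1/G$-conversion. Each edge $U_1\to_{T_1/G}U_2$ comes from some $u_1\to_{T_1}u_2$ with $u_i\in U_i$. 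Inside each $U_i$ we connect by $G$-conversions, so all elements of consecutive classes are $\sim_2$-related. This claim gives well-definedness and injectivity of $\varphi$. Surjectivity is immediate, since every $\approx$-class contains some $\classB{u}_{\sim_1}$ with $u\in V_G$.

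Next we check edge preservation. An edge $\classB{u}_{\sim_2}\to\classB{u'}_{\sim_2}$ in $T_2/(T_1\cup G)$ means the classes differ and $w\to_{T_2}w'$ for some $w\sim_2u$ and $w'\sim_2u'$. Then $\classB{w}_{\sim_1}\neq\classB{w'}_{\sim_1}$, because $\sim_1\subseteq\sim_2$, so $\classB{w}_{\sim_1}\to_{T_2/G}\classB{w'}_{\sim_1}$. Their $\approx$-classes are $\varphi$ of the two endpoints, and these differ by the claim, giving the edge in the iterated quotient. Conversely, an edge between $\approx$-classes is witnessed by a $T_2/G$-edge $\classB{w}_{\sim_1}\to\classB{w'}_{\sim_1}$ with the two $\approx$-classes distinct. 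This edge is in turn witnessed by some $x\to_{T_2}x'$ with $x\sim_1 w$ and $x'\sim_1 w'$. By the claim, $x\not\sim_2 x'$, so the edge lifts to $T_2/(T_1\cup G)$.

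The only delicate point, and the expected obstacle, is the bookkeeping of the condition $U_1\neq U_2$ in Definition~\ref{def:quotientARS} at both levels of the quotient, together with the fact that $T_1$-edges internal to a $\sim_1$-class do not appear in $T_1/G$. The claim handles both: it shows that such internal edges are harmless, since they do not change the $\sim_1$-class anyway.
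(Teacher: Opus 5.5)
Your proof is correct and follows the same route as the paper. You use the same map $\varphi$, sending a $\barrow{\ast}{T_1\cup G}$-class to the $\barrow{\ast}{T_1/G}$-class of a $\barrow{\ast}{G}$-class, obtain bijectivity from the correspondence between the two equivalences, and check edges in both directions; your version is more careful than the paper's, since you prove well-definedness explicitly (the $(\Rightarrow)$ half of your claim) and track the distinctness condition $U_1\neq U_2$ at both quotient levels, which the paper leaves implicit.
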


\begin{proof}
  Let $\varphi: V_{T_2/(T_1 \cup G)} \to V_{(T_2/G)/(T_1/G)}$ be
  $\varphi(\classB{u}_{\barrow{\ast}{T_1 \cup G}}) =
  \classB{\classB{u}_{\barrow{\ast}{G}}}_{\barrow{\ast}{T_1 / G}}$ for $u \in V_G$.
  The mapping is well-defined.

  If $\varphi(\classB{u}_{\barrow{\ast}{T_1\cup G}}) =
  \varphi(\classB{v}_{\barrow{\ast}{T_1\cup G}})$,
  then $\classB{\classB{u}_{\barrow{\ast}{G}}}_{\barrow{\ast}{T_1 / G}} =
  \classB{\classB{v}_{\barrow{\ast}{G}}}_{\barrow{\ast}{T_1 / G}}$, thus
  $\classB{u}_{\barrow{\ast}{G}} \barrow{\ast}{T_1/G} \classB{u}_{\barrow{\ast}{G}}$,
  which means
  $u\mathrel{(\barrow{\ast}{G}\cdot\barrow{\ast}{T_1}\cdot\barrow{\ast}{G})^\ast}v$.

  Hence $u \barrow{\ast}{T_1 \cup G} v$ and
  $\classB{u}_{\barrow{\ast}{T_1\cup G}} = \classB{v}_{\barrow{\ast}{T_1\cup G}}$ follows.
  Therefore, $\varphi$ is injective. Obviously, $\varphi$ is surjective.

  Let $(U, U') \in \mathord{\rightarrow}_{T_2/(T_1 \cup G)}$.
  There exists $(u,u') \in \mathord{\rightarrow}_{T_2}$ with
  $U = \classB{u}_{\barrow{\ast}{T_1\cup G}}$ and
  $U' = \classB{u'}_{\barrow{\ast}{T_1\cup G}}$.
  Thus, $\varphi(U) = \classB{\classB{u}_{\barrow{\ast}{G}}}_{\barrow{\ast}{T_1/G}}$
  and $\varphi(U') = \classB{\classB{u'}_{\barrow{\ast}{G}}}_{\barrow{\ast}{T_1/G}}$.
  Moreover, $\classB{u}_{\barrow{\ast}{G}} \to_{T_2/G} \classB{u'}_{\barrow{\ast}{G}}$
  from $u \to_{T_2} u'$.
  Hence, we have
  $(\varphi(U),\varphi(U')) =
  (\classB{\classB{u}_{\barrow{\ast}{G}}}_{\barrow{\ast}{T_1/G}},
   \classB{\classB{u'}_{\barrow{\ast}{G}}}_{\barrow{\ast}{T_1/G}})
  \in \mathord{\rightarrow}_{(T_2/G)/(T_1/G)}$.
  The reverse direction is similar.

\end{proof}

\begin{proposition}\label{Prop:CycleInDeterministic}
  Let $G_1, G_2$ be ARSs with $\rarrow{}{G_1} \cap \rarrow{}{G_2} = \emptyset$
  and $V_{G_2} \subseteq V_{G_1}$.
  If $G_1 \cup G_2$ is deterministic,
  then the acyclicity of $G_1 \cup G_2$ implies that of $G_2/G_1$.
\end{proposition}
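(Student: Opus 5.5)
The plan is to argue by contradiction: a cycle in $G_2/G_1$ will be lifted to a cycle in $G_1 \cup G_2$. The lifting rests on one observation: inside each $\barrow{\ast}{G_1}$-class, every source of a $G_2$-edge is the unique $G_1$-normal form of that class, and every other vertex of the class reaches it by $G_1$-steps.

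\emph{Step 1 (the normal form of a class).} Since $G_1 \subseteq G_1 \cup G_2$, the ARS $G_1$ is deterministic. A deterministic ARS is confluent: if $t \larrow{\ast}{G_1} s \rarrow{\ast}{G_1} u$, the two sequences from $s$ coincide step by step, so one is a prefix of the other, and hence $t \rarrow{\ast}{G_1} u$ or $u \rarrow{\ast}{G_1} t$. Confluence is equivalent to CR, so ${\barrow{\ast}{G_1}} \subseteq {\downarrow_{G_1}}$. Now suppose $u \rarrow{}{G_2} u'$. Determinism of $G_1 \cup G_2$ forbids a second outgoing edge of $u$ in $G_1$, so $u$ is $G_1$-normal. (The edge $u \to u'$ cannot itself lie in $G_1$, since ${\rarrow{}{G_1}} \cap {\rarrow{}{G_2}} = \emptyset$.) Hence, for any $v$ with $v \barrow{\ast}{G_1} u$, we get $v \rarrow{\ast}{G_1} w \larrow{\ast}{G_1} u$ for some $w$. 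Normality of $u$ forces $w = u$, so $v \rarrow{\ast}{G_1} u$.

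\emph{Step 2 (lifting a cycle).} Assume a cycle $U_0 \rarrow{}{G_2/G_1} U_1 \rarrow{}{G_2/G_1} \cdots \rarrow{}{G_2/G_1} U_n = U_0$ with $n \ge 1$. By the definition of quotient ARS, for each $i < n$ we can pick $u_i \in U_i$ and $v_{i+1} \in U_{i+1}$ with $u_i \rarrow{}{G_2} v_{i+1}$. Since $V_{G_2} \subseteq V_{G_1}$, all of these vertices lie in $V_{G_1}$, and the classes are taken with respect to $\barrow{\ast}{G_1}$. Now $v_{i+1}$ and $u_{i+1}$ lie in the same class $U_{i+1}$, reading indices modulo $n$ so that $u_n = u_0$. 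Step 1 then gives $v_{i+1} \rarrow{\ast}{G_1} u_{i+1}$. Concatenating these pieces yields
\[
u_0 \rarrow{}{G_2} v_1 \rarrow{\ast}{G_1} u_1 \rarrow{}{G_2} v_2 \rarrow{\ast}{G_1} \cdots \rarrow{}{G_2} v_n \rarrow{\ast}{G_1} u_0 ,
\]
so $u_0 \rarrow{+}{G_1 \cup G_2} u_0$. This contradicts the acyclicity of $G_1 \cup G_2$, and therefore $G_2/G_1$ is acyclic.

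The only step needing care is Step 1. One must see that determinism alone already makes the unique $G_1$-normal element of a class reachable from every member of that class, without any finiteness or termination assumption; this works precisely because the $G_2$-source $u_i$ is known to be normal. Everything else is bookkeeping with the definition of the quotient ARS.
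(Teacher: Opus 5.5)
Your proof is correct and follows essentially the same route as the paper. Determinism of $G_1$ yields confluence, and determinism of $G_1 \cup G_2$ together with disjointness makes every source of a $G_2$-edge $G_1$-normal, so each $G_1$-conversion segment of the quotient cycle collapses to a forward $G_1$-reduction and the cycle lifts to one in $G_1 \cup G_2$; your per-class formulation handles the closing segment more cleanly than the paper's rotation of the chain around a common reduct $v$.
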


\begin{proof}
  Suppose that there is a cycle
  $\classB{u_0} \rarrow{}{G_2/G_1} \classB{u_1} \rarrow{}{G_2/G_1}
  \cdots \rarrow{}{G_2/G_1} \classB{u_n} = \classB{u_0}$
  in $G_2/G_2$ with $n > 0$.
  Then $u_0 \barrow{\ast}{G_1}\cdot\rarrow{}{G_2}\cdot\barrow{\ast}{G_1} u_1
  \barrow{\ast}{G_1}\cdot\rarrow{}{G_2}\cdot\barrow{\ast}{G_1} \cdots
  \barrow{\ast}{G_1}\cdot\rarrow{}{G_2}\cdot\barrow{\ast}{G_1} u_n
  \barrow{\ast}{G_1} u_0$.
  Since $G_1$ is confluent from its deterministic property,
  we can replace $\barrow{\ast}{G_1}$ with
  $\rarrow{\ast}{G_1} \cdot \larrow{\ast}{G_1}$, which results in
  \[
  u_0 \rarrow{\ast}{G_1}\cdot\larrow{\ast}{G_1}\cdot\rarrow{}{G_2}
  \cdots
  \rarrow{\ast}{G_1}\cdot\larrow{\ast}{G_1}\cdot\rarrow{}{G_2} u_n
  \rarrow{\ast}{G_1} v \larrow{\ast}{G_1} u_0 \]
  Hence,
  \[
  v \larrow{\ast}{G_1} u_0
  \rarrow{\ast}{G_1}\cdot\larrow{\ast}{G_1}\cdot\rarrow{}{G_2}
  \cdots
  \rarrow{\ast}{G_1}\cdot\larrow{\ast}{G_1}\cdot\rarrow{}{G_2} u_n
  \rarrow{\ast}{G_1} v
  \]
  Since $G_1 \cup G_2$ is deterministic and
  $\rarrow{}{G_1} \cap \rarrow{}{G_2} = \emptyset$,
  $\larrow{+}{G_1}\cdot\rarrow{}{G_2}$ cannot occur.
  Therefore, the sequence above is
  \[
  v \rarrow{\ast}{G_1}\cdot\rarrow{}{G_2}
  \cdots
  \rarrow{\ast}{G_1}\cdot\rarrow{}{G_2} u_n
  \rarrow{\ast}{G_1} v
  \]
  which is a cycle in $G_1 \cup G_2$,
  contradicting the acyclicity of $G_1 \cup G_2$.

\end{proof}

\medskip
Next two porpositions show the properties of the strong connectivity.

\begin{proposition}\label{prop:bridge-equiv}
  For ARSs $T$, $G$ with $V_T \subseteq V_G$ and
  the quotient ARS $T \ddiv G$,
  if $U\rarrow{\ast}{T \ddiv G} U'$ for $U,U'\in V_{T \ddiv G}$,
  then $u \rarrow{\ast}{T \cup G} u'$ for any $u\in U$ and $u'\in U'$.
\end{proposition}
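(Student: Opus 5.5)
We argue by induction on the length $n$ of the reduction sequence $U = U_0 \rarrow{}{T \ddiv G} U_1 \rarrow{}{T \ddiv G} \cdots \rarrow{}{T \ddiv G} U_n = U'$ in the quotient ARS. Recall that the vertices of $T \ddiv G$ are the classes $\classD{u}$ of $\stcongstarss{G}$, i.e.\ the strongly connected components of $G$. The basic fact we use is that any two elements $w, w'$ of the same class satisfy $w \rarrow{\ast}{G} w'$, because $w \stcongstarss{G} w'$ means $w \rarrow{\ast}{G} w'$ and $w' \rarrow{\ast}{G} w$.

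\textbf{Base case} $n=0$. Here $U = U'$, so for $u \in U$ and $u' \in U'$ we have $u \stcongstarss{G} u'$. Hence $u \rarrow{\ast}{G} u'$, and therefore $u \rarrow{\ast}{T\cup G} u'$.

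\textbf{Step case.} Suppose the sequence is $U \rarrow{\ast}{T\ddiv G} U_{n-1} \rarrow{}{T \ddiv G} U'$.
\begin{itemize}
\item By Definition~\ref{def:quotientARS}, the last step provides $w \in U_{n-1}$ and $w' \in U'$ with $w \rarrow{}{T} w'$.
\item Given arbitrary $u \in U$ and $u' \in U'$, the induction hypothesis (applied to the element $w \in U_{n-1}$) yields $u \rarrow{\ast}{T \cup G} w$.
\item Then $w \rarrow{}{T} w'$ is a single step.
\item Finally $w' \rarrow{\ast}{G} u'$ holds, since $w'$ and $u'$ lie in the same class $U'$.
\end{itemize}
Concatenating these three pieces gives $u \rarrow{\ast}{T\cup G} u'$.

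The hypothesis $V_T \subseteq V_G$ guarantees that every edge of $T$ connects elements of classes, so the quotient is well formed and the witnesses $w, w'$ exist. The only point needing care is that the induction hypothesis must be stated for \emph{all} representatives $u$ and $u'$, which is exactly how the statement is phrased.

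This is routine; there is no real obstacle beyond using the strong connectivity of $\stcongstarss{G}$, rather than mere connectivity, to move between arbitrary representatives of a class in the forward direction.
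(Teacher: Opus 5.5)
Your proof is correct and is essentially the paper's argument: induction on the length of the quotient sequence, using that any two members of one $\stcongstarss{G}$-class reach each other by $\rarrow{\ast}{G}$ to move between representatives. The only difference is cosmetic: the paper peels off the first step $U \rarrow{}{T \ddiv G} U''$ and applies the induction hypothesis to the remainder, whereas you peel off the last step, which works equally well.
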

\begin{proof}
  We prove it by induction on the number $n$ of steps of $U\rarrow{\ast}{T \ddiv G} U'$.
  Let $u\in U$ and $u'\in U'$.
  For the case $n=0$, we have $U = U'$, and thus, $u \rarrow{\ast}{G} u'$
  from $u \stcongstarss{G} u'$.
  Consider the case that $n>0$, and let $u\in U$ and $u'\in U'$.
  Then we can write $U\to_{T \ddiv G} U'' \rarrow{\ast}{T \ddiv G} U'$
  for some $U'' \in V_{T \ddiv G}$.
  From $U\to_{T \ddiv G} U''$, there exist some $v\in U$ and $v''\in U''$
  such that $v \to_T v''$.
  Combining this with $u\rarrow{\ast}{G}v$ induced by $u \stcongstarss{G} v$,
  we obtain $u\rarrow{\ast}{T \cup G}v''$.
  On the other hand, $v''\rarrow{\ast}{T \cup G}u'$ by induction hypothesis.
  Therefore, $u\rarrow{\ast}{T \cup G}u'$.

\end{proof}

\begin{proposition}\label{prop:termination-looping}
  For ARSs $T$ and $G$ with $V_T \subseteq V_G$,
  \begin{enumerate}
  \item\label{prop:termination-looping-1}
    $T \ddiv G$ is confluent if $T \cup G$ is confluent,
  \item\label{prop:termination-looping-2}
    $T \ddiv G$ is confluent only if peaks in $T$ close in $T \cup G$
    (i.e., $\barrow{\ast}{T}\, \subseteq\, \downarrow_{T \cup G}$), and
  \item\label{prop:termination-looping-3}
    $G \ddiv G$ is terminating if $G$ is looping.
  \end{enumerate}
\end{proposition}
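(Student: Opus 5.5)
The plan for all three items is to translate between paths in $T \ddiv G$ and paths in $T \cup G$, using Proposition~\ref{prop:bridge-equiv} to lift quotient paths to concrete ones. The converse direction, projecting concrete paths to the quotient, comes from the map $u \mapsto \classD{u}$. A $T$-step $u \to_T v$ projects either to a single step $\classD{u} \to_{T\ddiv G} \classD{v}$ or to the identity when $\classD{u}=\classD{v}$. A $G$-step inside one strongly connected component also projects to the identity.

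\textbf{Item (1).} Assume $T \cup G$ is confluent and take a peak $U_1 \larrow{\ast}{T\ddiv G} U \rarrow{\ast}{T\ddiv G} U_2$. Pick representatives $u\in U$, $u_1\in U_1$, $u_2\in U_2$. Proposition~\ref{prop:bridge-equiv} gives $u_1 \larrow{\ast}{T\cup G} u \rarrow{\ast}{T\cup G} u_2$. Confluence of $T\cup G$ then gives a valley $u_1 \rarrow{\ast}{T\cup G} w \larrow{\ast}{T\cup G} u_2$. It remains to push this valley down to $T\ddiv G$ by projecting step by step.

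\textbf{Main obstacle.} The critical case is a $G$-step $x \to_G y$ with $\classD{x}\neq\classD{y}$. Such a step leaves an SCC of $G$, but no edge of $T \ddiv G$ records it. I would first use finiteness/looping of $G$ (as in item (3)) to reduce the valley so that it ends in pseudo-normal classes of $G$. Then I would check that every class-changing step of the valley can be taken from $T$, as happens in the paper's uses of the quotient, e.g.\ $G_{\ddiv}=G\ddiv G$. This is where I expect the real difficulty, and where the argument must be checked most carefully. For instance, a $T$-peak $a \to_T b$, $a \to_T c$ joined only by $G$-steps $b \to_G d \leftarrow_G c$ would not project to a join in $T \ddiv G$.

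\textbf{Item (2).} Assume $T\ddiv G$ is confluent and take $u \barrow{\ast}{T} v$. Project this conversion to $\classD{u} \barrow{\ast}{T\ddiv G} \classD{v}$. Confluence gives a common reduct class $W$. Lifting both reductions by Proposition~\ref{prop:bridge-equiv} to a fixed $w\in W$ yields $u \downarrow_{T\cup G} v$.

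\textbf{Item (3).} Suppose $G\ddiv G$ admits an infinite sequence $U_0 \to U_1 \to \cdots$. Each step lifts to a step $x_i\to_G y_{i+1}$ with $x_i \in U_i$ and $y_{i+1} \in U_{i+1}$. Within each class we can connect $y_{i}\rarrow{\ast}{G}x_{i}$, so these pieces concatenate into an infinite $G$-reduction. Since $G$ is looping, this sequence repeats some element, which lies in some class $U_i = U_j$ with $i<j$. Then $U_i \rarrow{+}{} U_i$, contradicting that distinct SCCs of $G$ cannot form a cycle in the condensation.
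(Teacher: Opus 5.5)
\textbf{Items (2) and (3).} These are correct and follow the paper's route. In (2) you prove the conversion form $\barrow{\ast}{T}\subseteq{\downarrow_{T\cup G}}$ directly, whereas the paper only closes $T$-peaks. In (3), choose the connecting paths $y_i\rarrow{\ast}{G}x_i$ inside each class without repeated vertices (e.g.\ shortest ones). Then the repetition guaranteed by the looping property falls into two different segments $i<j$.

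\textbf{Item (1) is the genuine gap.} You give no argument there, and the repair you sketch cannot work. The example you mention is an actual counterexample to (1) as stated. Take $V_G=\{a,b,c,d\}$, ${\to_G}=\{(b,d),(c,d)\}$ and ${\to_T}=\{(a,b),(a,c)\}$, so $V_T\subseteq V_G$.
\begin{itemize}
\item $T\cup G$ is confluent, since everything reduces to $d$.
\item $G$ is finite, hence looping.
\item All $\stcongstarss{G}$-classes are singletons, so the only edges of $T\ddiv G$ are $\{a\}\to\{b\}$ and $\{a\}\to\{c\}$.
\item Thus $\{b\}\larrow{}{T\ddiv G}\{a\}\rarrow{}{T\ddiv G}\{c\}$ with $\{b\}\neq\{c\}$ both normal, and $T\ddiv G$ is not confluent.
\end{itemize}
Reducing the valley to pseudo-normal classes changes nothing here. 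The vertex $d$ is already pseudo-normal, and the class-changing steps $b\to_G d$ and $c\to_G d$ are not $T$-steps, so they have no image in $T\ddiv G$.

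The paper's own proof has exactly this hole. Its last line projects the valley $u_2\rarrow{\ast}{T\cup G}u_4\larrow{\ast}{T\cup G}u_3$ into $T\ddiv G$. That projection is only legitimate for steps that either stay inside one $\stcongstarss{G}$-class or are $T$-steps.

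What is true is item (1) under the extra hypothesis ${\to_G}\subseteq{\to_T}$, for instance $T=G$. This is also all the paper uses: item (1) is invoked only for $G_{\ddiv}=G\ddiv G$, in the proof of Lemma~\ref{lem:norm-possible} and in the remark that $G_{\ddiv}$ is confluent iff $G$ is. Under that hypothesis every step of $T\cup G$ is a $T$-step, which projects to the identity or to one edge of $T\ddiv G$. Your argument (lift by Proposition~\ref{prop:bridge-equiv}, join in $T\cup G$, project back) then goes through verbatim. So state (1) with this hypothesis and prove it that way, rather than trying to rescue the general version.
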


\begin{proof}
  \mbox{}
\begin{enumerate}
\item
  Suppose $U_2 \larrow{\ast}{T \ddiv G} U_1 \rarrow{\ast}{T \ddiv G} U_3$.
  Let $u_i \in U_i$ ($i=1,2,3$).
  Then $u_2 \larrow{\ast}{T \cup G} u_1 \rarrow{\ast}{T \cup G} u_3$
  by Proposition~\ref{prop:bridge-equiv}.
  From confluence of $T \cup G$, there exists $u_4$ such that
  $u_2 \rarrow{\ast}{T \cup G} u_4 \larrow{\ast}{T \cup G} u_3$.
  Therefore,
  $U_2\rarrow{\ast}{T \ddiv G}\classD{u_4}\larrow{\ast}{T \ddiv G} U_3$.

\item
  Suppose $u_2 \larrow{\ast}{T} u_1 \rarrow{\ast}{T} u_3$.
  Then, $\classD{u_2} \larrow{\ast}{T \ddiv G} \classD{u_1}
  \rarrow{\ast}{T \ddiv G} \classD{u_3}$.
  From confluence of $T \ddiv G$, we obtain
  $\classD{u_2}\rarrow{\ast}{T \ddiv G} U_4\larrow{\ast}{T \ddiv G}\classD{u_3}$
  for some $U_4 \in V_{T \ddiv G}$.
  Hence, $u_2 \rarrow{\ast}{T \cup G} u_4 \larrow{\ast}{T \cup G} u_3$

\item
  Suppose $G \ddiv G$ is not terminating.
  Then we have an infinite sequence
  $U_1 \to_{G \ddiv G} U_2 \to_{G \ddiv G} \cdots$.
  If $U_i = U_j$ for $i < j$ occurs, $U_i \cup U_{i+1} \cup \cdots \cup U_j$
  is a $\stcongstarss{G}$-class, and coincides with $U_i$, and violates that
  $U_i \to_{G \ddiv G} U_{i+1}$ requires $U_i \neq U_{i+1}$.
  Therefore, $U_i \neq U_j$ for $i < j$.
  Hence, by Proposition~\ref{prop:bridge-equiv} (with $T := G$),
  we obtain an infinite sequence $u_1 \to_{G} u_2 \to_{G} \cdots$
  with $u_j \mathrel{\not\rarrow{\ast}{G}} u_i$ for each $i < j$
  (since $u_j \rarrow{\ast}{G} u_i$ and $u_i \rarrow{\ast}{G} u_j$
  would give $u_i \stcongstarss{G} u_j$, contradicting $U_i \neq U_j$).
  This contradicts the assumption that $G$ is looping.
\end{enumerate}
\end{proof}

Note that
\begin{itemize}
\item The confluence of $G_{\ddiv}$ is equivalent to that of $G$.
\item A finite ARS $G$ is always looping. Hence, $G_{\ddiv}$ is terminating.
\end{itemize}

\section{Edge expansion of finite ARS to make it confluent} \label{sec4} \label{sec:conversion_ARS}

In this section, we investigate how to obtain and extend a finite ARS while preserving the
confluence.
We introduce an edge expansion by adding a new edge and show its properties.
We also introduce a \emph{hierarchical combination}~\cite{Sakai15} for the
confluence-preserving union of ARSs.

\subsection{Pseudo-normal form and confluence of ARS}\label{subsec:looping}

\begin{definition}\label{def:strongly-convergent}
  Let $G$ be an ARS and $u, v \in V_G$.
  \begin{itemize}
  \item \emph{$u$ is convergent to $v$},
    denoted by $u \Rrightarrow_{G} v,$
    if $u\rarrow{\ast}{G} w \implies w\rarrow{\ast}{G} v$ for any $w \in V_G$.
  \item \emph{The set of pseudo-normal forms is}
   $\pnf_{G} = \{u \in V_G \mid u \Rrightarrow_{G} u \}$.
  \end{itemize}
\end{definition}

Intuitively, $u \Rrightarrow_G v$ means that $u$ is confluent, $u \rarrow{\ast}{G} v$, and $v \in \pnf_G$.
Here, the confluence of $u$ is defined as $v_1\gets_G u \to_R v_2 \implies v_1 \to_R \cdot \gets_R v_2$
for any $v_1, v_2 \in V_G$.
The next proposition shows the relationships among $\pnf_G$,
$\Rrightarrow_G$, and $\classD{u}_{\stcongstarss{G}}$.

\begin{proposition} \label{prop:Gddiv}
  Suppose $u, v \in V_G$.
  \begin{enumerate}
  \item \label{subprop:Gddiv-1}
    The following statements are equivalent:
    \begin{enumerate} \renewcommand{\labelenumii}{(\roman{enumii})}
      \item $u \in \pnf_G$,
      \item $u \Rrightarrow_G u$, and
      \item $\classD{u}_{\stcongstarss{G}} \in \NF_{G_{\ddiv}}$.
    \end{enumerate}
    \item \label{subprop:Gddiv-2}
    $u \rarrow{\ast}{G} v$ and $u \in \pnf_G$ if and only if
    $\classD{u}_{\stcongstarss{G}} = \classD{v}_{\stcongstarss{G}} \in \NF_{G_{\ddiv}}$.
    \item \label{subprop:Gddiv-3}
    $u \Rrightarrow_G v$ and $v \Rrightarrow_G u$ if and only
    if $\classD{u}_{\stcongstarss{G}} = \classD{v}_{\stcongstarss{G}} \in\NF_{G_{\ddiv}}$.
  \end{enumerate}
\end{proposition}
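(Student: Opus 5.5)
The plan is to unfold the definitions and route all three items through one observation:
\[
u \Rrightarrow_G v \iff v \in \pnf_G \text{ and } u \rarrow{\ast}{G} v .
\]
\emph{Forward direction.} Take $w := u$ in the definition to get $u \rarrow{\ast}{G} v$. If $v \rarrow{\ast}{G} w$, then $u \rarrow{\ast}{G} w$, so $w \rarrow{\ast}{G} v$; hence $v \Rrightarrow_G v$.

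\emph{Converse.} Suppose $v \in \pnf_G$ and $u \rarrow{\ast}{G} v$. In general this does not yield $u \Rrightarrow_G v$: $u$ might also reach some other pseudo-normal component. The paper's informal reading confirms that $\Rrightarrow$ carries a confluence component. So I will only use the forward implication, together with the special case $u = v$.

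\textbf{Item (1).} (i)$\Leftrightarrow$(ii) holds by definition of $\pnf_G$. For (ii)$\Rightarrow$(iii), suppose $\classD{u} \to_{G_{\ddiv}} \classD{w}$. Then some $u' \stcongstarss{G} u$ has $u' \to_G w'$ with $w' \in \classD{w}$, so $u \rarrow{\ast}{G} w'$. By (ii), $w' \rarrow{\ast}{G} u$, so $w' \stcongstarss{G} u$ and $\classD{w} = \classD{u}$. This contradicts the requirement $U_1 \neq U_2$ in Definition~\ref{def:quotientARS}.

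For (iii)$\Rightarrow$(ii), let $u \rarrow{\ast}{G} w$. By induction along the sequence $u = w_0 \to_G w_1 \to_G \cdots \to_G w_n = w$, each $w_i$ stays in $\classD{u}$. Indeed, if $w_i \in \classD{u}$ and $w_{i+1} \notin \classD{u}$, we would get an edge $\classD{u} \to_{G_{\ddiv}} \classD{w_{i+1}}$, contradicting normality. Thus $w \stcongstarss{G} u$, and in particular $w \rarrow{\ast}{G} u$.

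\textbf{Item (2).} For ($\Rightarrow$), $u \in \pnf_G$ and $u \rarrow{\ast}{G} v$ give $v \rarrow{\ast}{G} u$ by (ii), so $\classD{u} = \classD{v}$, and this class is normal by (1). For ($\Leftarrow$), equality of the classes gives $u \rarrow{\ast}{G} v$, and normality of $\classD{u}$ gives $u \in \pnf_G$ by (1).

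\textbf{Item (3).} For ($\Rightarrow$), apply the forward observation to both hypotheses: $u \rarrow{\ast}{G} v$, $v \rarrow{\ast}{G} u$, and $u \in \pnf_G$. Then (2) gives equality of the classes and their normality. For ($\Leftarrow$), by (2) both $u$ and $v$ lie in one normal class $C$. If $u \rarrow{\ast}{G} w$, then as in item (1) $w \in C$, so $w \rarrow{\ast}{G} v$; hence $u \Rrightarrow_G v$. The other direction is symmetric.

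The only point needing care is the "reachable vertices stay in a normal class" induction in (iii)$\Rightarrow$(ii). Once that is written out, everything else is definitional bookkeeping.
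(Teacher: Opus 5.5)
Your proof is correct. The paper states this proposition without proof, treating it as immediate from the definitions. Your argument supplies exactly what that leaves implicit, in particular the step that every $\rarrow{\ast}{G}$-reduct of a vertex in a $G_{\ddiv}$-normal class stays in that class. Your forward observation is the paper's Proposition~\ref{prop:pnf}(\ref{subprop:pnf-1}).

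The one blemish is presentational. The displayed biconditional at the start is false as written: take $a \to b$, $a \to c$; then $b \in \pnf_G$ and $a \rarrow{\ast}{G} b$, but $a \not\Rrightarrow_G b$. Since you notice this yourself and use only the forward direction, state it as a one-way implication.
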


\begin{proposition}\label{prop:looping_pnf}
  Let $G$ be a looping ARS.
  For any $t\in V_G$, there exists $s\in\pnf_G$ such that $t\rarrow{\ast}{} s$.
\end{proposition}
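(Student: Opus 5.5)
The plan is to pass to the quotient $G_{\ddiv} = G/{\stcongstarss{G}}$, find a normal form there, and pull it back to $G$. Since $G$ is looping, Proposition~\ref{prop:termination-looping}(\ref{prop:termination-looping-3}) gives that $G_{\ddiv}$ is terminating. So from the class $\classD{t}_{\stcongstarss{G}}$ there is a finite reduction sequence in $G_{\ddiv}$,
\[
\classD{t} = U_0 \rarrow{}{G_{\ddiv}} U_1 \rarrow{}{G_{\ddiv}} \cdots \rarrow{}{G_{\ddiv}} U_n ,
\]
ending in some $U_n \in \NF_{G_{\ddiv}}$: keep taking steps until none is available, which must happen after finitely many steps by termination.

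Next we pull $U_n$ back to a vertex of $G$. Pick any $s \in U_n$. Proposition~\ref{prop:bridge-equiv}, applied with $T := G$ (so $T \cup G = G$), turns $U_0 \rarrow{\ast}{G_{\ddiv}} U_n$ into $t \rarrow{\ast}{G} s$, because $t \in U_0$ and $s \in U_n$. Since $\classD{s}_{\stcongstarss{G}} = U_n \in \NF_{G_{\ddiv}}$, Proposition~\ref{prop:Gddiv}(\ref{subprop:Gddiv-1}), direction (iii)$\Rightarrow$(i), gives $s \in \pnf_G$, which is the claim.

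Proposition~\ref{prop:Gddiv} appears in the paper without a proof, so I would also check the direction used. Suppose $\classD{s}$ is $G_{\ddiv}$-normal and $s \rarrow{\ast}{G} w$. Follow the path from $s$ to $w$. If the path ever left the class $\classD{s}$, the first edge doing so would give an edge $\classD{s} \rarrow{}{G_{\ddiv}} \classD{w'}$ with $\classD{w'} \neq \classD{s}$, which contradicts normality. So $w \in \classD{s}$, hence $w \rarrow{\ast}{G} s$, and therefore $s \Rrightarrow_G s$.

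The only subtle point is termination of $G_{\ddiv}$, and that is already supplied by Proposition~\ref{prop:termination-looping}(\ref{prop:termination-looping-3}). So there is no real obstacle; the remaining steps are routine unfoldings of the definitions.
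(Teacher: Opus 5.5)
Your proof is correct and matches the paper's own argument. Like the paper, you obtain termination of $G_{\ddiv}$ from Proposition~\ref{prop:termination-looping}(\ref{prop:termination-looping-3}), reach a $G_{\ddiv}$-normal class from $\classD{t}$, and pull it back with Proposition~\ref{prop:bridge-equiv} (with $T := G$) to get $t \rarrow{\ast}{G} s$ with $s \in \pnf_G$. Your explicit check of the direction (iii)$\Rightarrow$(i) of Proposition~\ref{prop:Gddiv}(\ref{subprop:Gddiv-1}), which the paper leaves to ``the definition of pseudo normal forms'', is also sound.
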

\begin{proof}
  There exists $U \in V_{G_{\ddiv}}$ such that
  $[t]_{\stcongstarss{G}} \rarrow{\ast}{G_{\ddiv}} U \in \NF_{G_{\ddiv}}$ from
  Proposition~\ref{prop:termination-looping}\eqref{prop:termination-looping-3}.
  By taking some $s \in U$, we have $t\rarrow{\ast}{G}s \in \pnf_{G}$ by
  Proposition~\ref{prop:bridge-equiv} and the definition of pseudo normal
  forms.
\end{proof}

\begin{proposition}\label{prop:pnf}
  For an ARS $G$ and $u,v \in V_G$,
  \begin{enumerate}
  \item \label{subprop:pnf-1}
    $u\Rrightarrow_G v$ implies $u \rarrow{\ast}{G} v$ and $v \Rrightarrow_G v$.
  \item \label{subprop:pnf-2}
    For each $w \in V_G$,
    $u \Rrightarrow_G v \rarrow{\ast}{G} w$ implies $u \Rrightarrow_G w$.
  \item \label{subprop:pnf-3}
    If $G$ is confluent, then for each $w \in V_G$,
    $w \rarrow{\ast}{G} u \Rrightarrow_G v$ implies $w \Rrightarrow_G v$.
    Moreover, when $G$ is looping, the converse holds,
    i.e., if this implication holds for all $w \in V_G$
    then $G$ is confluent.
  \end{enumerate}
\end{proposition}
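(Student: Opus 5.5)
Each of the three items follows by unfolding Definition~\ref{def:strongly-convergent}, and we treat them in order.

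For \eqref{subprop:pnf-1}, assume $u \Rrightarrow_G v$. Instantiating the definition with $w := u$ and using $u \rarrow{\ast}{G} u$ gives $u \rarrow{\ast}{G} v$. To obtain $v \Rrightarrow_G v$, take any $w$ with $v \rarrow{\ast}{G} w$. Then $u \rarrow{\ast}{G} v \rarrow{\ast}{G} w$, so $w \rarrow{\ast}{G} v$ by $u \Rrightarrow_G v$.

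For \eqref{subprop:pnf-2}, assume $u \Rrightarrow_G v \rarrow{\ast}{G} w$, and let $u \rarrow{\ast}{G} w'$. From $u \Rrightarrow_G v$ we get $w' \rarrow{\ast}{G} v$, hence $w' \rarrow{\ast}{G} w$. This proves $u \Rrightarrow_G w$.

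For \eqref{subprop:pnf-3}, assume first that $G$ is confluent and $w \rarrow{\ast}{G} u \Rrightarrow_G v$. Let $w \rarrow{\ast}{G} w'$. We have the peak $w' \larrow{\ast}{G} w \rarrow{\ast}{G} u$, and by \eqref{subprop:pnf-1} also $u \rarrow{\ast}{G} v$, so confluence gives some $x$ with $w' \rarrow{\ast}{G} x \larrow{\ast}{G} v$. Since $u \rarrow{\ast}{G} v \rarrow{\ast}{G} x$, the hypothesis $u \Rrightarrow_G v$ yields $x \rarrow{\ast}{G} v$, and therefore $w' \rarrow{\ast}{G} v$. Thus $w \Rrightarrow_G v$.

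For the converse, assume $G$ is looping and that the implication holds for every $w$. Take a peak $t_1 \larrow{\ast}{G} s \rarrow{\ast}{G} t_2$. By Proposition~\ref{prop:looping_pnf}, there is $p \in \pnf_G$ with $t_1 \rarrow{\ast}{G} p$, and $p \in \pnf_G$ means $p \Rrightarrow_G p$. Applying the implication with $u := p$ to $s \rarrow{\ast}{G} t_1 \rarrow{\ast}{G} p$ gives $s \Rrightarrow_G p$. Since $s \rarrow{\ast}{G} t_2$, the definition of $\Rrightarrow_G$ gives $t_2 \rarrow{\ast}{G} p$. Together with $t_1 \rarrow{\ast}{G} p$ this joins the peak, so $G$ is confluent.

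The only step needing care is this converse. It depends on Proposition~\ref{prop:looping_pnf} to supply a pseudo-normal form below $t_1$, which is exactly where the looping assumption is used; the remaining steps are routine.
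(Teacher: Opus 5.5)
Your proof is correct and follows the paper's argument: items (1) and (2) come from unfolding the definition, the confluent half of (3) from joining a peak and pulling the common reduct back to $v$ via $u \Rrightarrow_G v$, and the converse from Proposition~\ref{prop:looping_pnf}. Your converse is slightly leaner than the paper's, which takes pseudo-normal forms below both $t_1$ and $t_2$ and cites item (1) to obtain $t_1 \Rrightarrow_G p$ from $t_1 \rarrow{\ast}{G} p \Rrightarrow_G p$ (a step that actually needs the hypothesis with $w := t_1$), whereas you apply the hypothesis once, directly with $w := s$ and $u := v := p$.
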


\begin{proof}
  Since \eqref{subprop:pnf-1} and \eqref{subprop:pnf-2} are immediate,
  we consider \eqref{subprop:pnf-3}.

  Suppose first that $G$ is confluent.
  Let $w' \in V_G$ with $w \rarrow{\ast}{G} w'$.
  Since $G$ is confluent, there is $w'' \in V_G$ such that
  $w' \rarrow{\ast}{G} w'' \larrow{\ast}{G} u$.
  Since $u \Rrightarrow_G v$, we have $w'' \rarrow{\ast}{G} v$.
  Hence $w \Rrightarrow_G v$.

  To prove the converse,
  assume that the implication holds for all $w \in V_G$.
  Suppose $w \rarrow{\ast}{G} u$ and $w \rarrow{\ast}{G} u'$.
  Since $G$ is looping, there exist $v, v' \in \pnf_G$ such that
  $u \rarrow{\ast}{G} v$ and $u' \rarrow{\ast}{G} v'$
  by Proposition~\ref{prop:looping_pnf}.
  Because $v \in \pnf_G$, we have $v \Rrightarrow_G v$ by
  Proposition~\ref{prop:Gddiv}~(\ref{subprop:Gddiv-1}).
  This, together with $u \rarrow{\ast}{G} v$,
  gives $u \Rrightarrow_G v$ (by Proposition~\ref{prop:pnf}~(\ref{subprop:pnf-1})).
  Using $w \rarrow{\ast}{G} u$ and $u \Rrightarrow_G v$,
  the assumption implies $w \Rrightarrow_G v$.
  Since we have $w \rarrow{\ast}{G} u' \rarrow{\ast}{G} v'$,
  the relation $w \Rrightarrow_G v$ implies $v' \rarrow{\ast}{G} v$.
  Consequently, we obtain $u' \rarrow{\ast}{G} v' \rarrow{\ast}{G} v$.
  Since $u \rarrow{\ast}{G} v$ also holds, $v$ is indeed a common reduct of $u$ and $u'$.

  \suppress{
  For (4) {\bf if} part, $w \barrow{\ast}{G} u$ implies
  $w,u \rarrow{\ast}{G} w'$ for some $w'$.
  Hence $w \barrow{\ast}{G} u \Rrightarrow_G v$ implies
  $w' \larrow{\ast}{G} u \Rrightarrow_G v$. Thus $w' \rarrow{\ast}{G} v$.
  Thus $v \Rrightarrow_G v$ from (2) and $w' \Rrightarrow_G v$
  and $w \Rrightarrow_G v$ from (3).

  {\bf Only-if} part when $G$ is looping is from (3)
  since $w \barrow{\ast}{G} u$ includes $w \rarrow{\ast}{G} u$.
  }
\end{proof}

\suppress{
\begin{proposition}\label{prop:pnfX}
  \begin{enumerate}
  For an ARS $G$,
  \item\label{subprop:pnfX1}
    $u\in \pnf_{G}$ if and only if $u \Rrightarrow_{G} u$.
  \item\label{subprop:pnfX2}  $\Rrightarrow_{G}$ is transitive.
  \item\label{subprop:pnfX3}
    If $u\rarrow{\ast}{G}u'$ and $u\in\pnf_{G}$,
    then $u' \in [\![ u ]\!]_{\stcongstarss{G}}$ and thus $u'\in\pnf_{G}$.
  \item\label{subprop:pnfX4}
    If $u' \Rrightarrow_G u$, then $u'\rarrow{\ast}{G} u$ and $u \in \pnf_{G}$.
  \end{enumerate}
\end{proposition}

\begin{proof}
  Except for the last statement, the others are immediate.

  For (\ref{subprop:pnf4}), assume $u' \Rrightarrow_G u$.
  Then $u' \rarrow{\ast}{G} u'$ implies $u' \rarrow{\ast}{G} u$.
  Let $u \rarrow{\ast}{G} w$.
  Since $u' \rarrow{\ast}{G} u \rarrow{\ast}{G} w$ and $u' \Rrightarrow_G u$,
  it holds that $w \rarrow{\ast}{G} u$.
  Thus $u \Rrightarrow_G u$, i.e., $u \in \pnf_{G}$ by (\ref{subprop:pnf1}) of
  this proposition.
\end{proof}

Note that the converse of Proposition~\ref{prop:pnf}(\ref{subprop:pnf4}) does not hold for
non-confluent ARSs. For example, consider an ARS $u\gets u'\to u''$.
}


We deal with confluent ARSs, and clarify the conditions for an edge expansion
that preserves confluence, which is a generalization of confluent expansions of
ARSs in~\cite{Sakai10} that assume termination.
\medskip

\begin{lemma}\label{lem:expansion-PNF}
  For an edge expansion $G' := G \cup (v\to v')$ of a confluent ARS $G$,
  the ARS $G'$ is confluent if and only if
  one of the following holds: (1) $v \barrow{\ast}{G} v'$, (2) $v \notin V_G$, or
  (3) $v \in \pnf_{G}$.
\end{lemma}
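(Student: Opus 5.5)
We prove the two directions separately, working throughout with the characterization of confluence in Proposition~\ref{prop:pnf}(\ref{subprop:pnf-3}) only where it is convenient, and otherwise by direct peak analysis.

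\textbf{If direction.} We treat the three cases in turn.

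In case (1), $v \barrow{\ast}{G} v'$ gives ${\to_{(v\to v')}} \subseteq {\barrow{\ast}{G}}$. Lemma~\ref{lem:inc-confluence} then yields confluence of $G' = (v\to v') \cup G$ directly.

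In case (2), $v \notin V_G$, so $v$ has no incoming edge in $G'$. Hence any $G'$-reduction starting in $V_G$ stays inside $G$, and the only new peak is at $v$ itself: $v \rarrow{\ast}{G'}$ can only proceed through $v'$. Confluence of $G$, or triviality when $v'\notin V_G$, therefore gives confluence of $G'$. Alternatively, this case is a hierarchical combination $(v\to v')\gtrdot G$, since $v \in V_{(v\to v')}\setminus V_G$, and Lemma~\ref{lem:hc-confluence-union} applies.

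In case (3), $v \in \pnf_G$. Take a peak $w_1 \larrow{\ast}{G'} u \rarrow{\ast}{G'} w_2$ and assume at least one branch uses the new edge $v\to v'$. Cut each branch at its first use of the new edge. Before that point the branch is a $G$-reduction $u \rarrow{\ast}{G} v$. By Proposition~\ref{prop:pnf}(\ref{subprop:pnf-3}), confluence of $G$ and $v \Rrightarrow_G v$ give $u \Rrightarrow_G v$. So every $G$-reduct of $u$, including those reached in the other branch before its first new edge, reduces in $G$ back to $v$.

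The clean way to organize this is to show that every $G'$-reduct $w$ of $u$ satisfies $w \rarrow{\ast}{G'} v'$ or is joinable with such an element. I would instead prove the stronger invariant
\[
u \rarrow{\ast}{G'} w \text{ and } u \rarrow{\ast}{G} v \ \Longrightarrow\ w \downarrow_{G'} v',
\]
by induction on the number of uses of the new edge in $u \rarrow{\ast}{G'} w$. Each use of the new edge passes through $v$. Every $G$-segment starting from $v$ or from $u$ stays in $\classD{v}$ or returns to $v$, because $v\in\pnf_G$ and $u\Rrightarrow_G v$. The segment after the last new edge starts at $v'$, giving $v' \rarrow{\ast}{G'} w$; the segment before it stays in $G$ and reduces back to $v$, and hence to $v'$ in one more step.

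With the invariant, a general peak is closed as follows. If neither branch uses the new edge, confluence of $G$ suffices. Otherwise both ends join with $v'$, and joinability composes here because both sides reduce to a common $G'$-reduct of $v'$.

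\textbf{Main obstacle.} This composition step is the delicate point. I expect to handle it by showing that all $G'$-reducts of $v$ lie in $\classD{v}\cup\{G'$-reducts of $v'\}$, with every element of $\classD{v}$ reducing to $v'$. Then, for $x,y$ reducts of $v'$, confluence of $G$ applied along the $v'$-branch (viewing $v'\to$ paths re-entering $v$ as passing through $v'$ again) gives $x\downarrow_{G'} y$. Concretely, I would prove $\Rrightarrow$-style stability: in $G'$, every reduct of $v'$ that reaches $v$ reaches $v'$ again. This reduces to confluence of $G$ from $v'$ when $v'\in V_G$.

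\textbf{Only-if direction.} Assume that none of (1)--(3) holds. Then $v\in V_G$, $v\notin \pnf_G$, and $v$ and $v'$ are not $G$-convertible. Since $v\notin\pnf_G$, there is $w$ with $v\rarrow{\ast}{G} w$ and $w \not\rarrow{\ast}{G} v$. Consider the peak $w \larrow{\ast}{G'} v \to_{G'} v'$.

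Suppose it were joinable in $G'$. The part of the join from $w$ that avoids the new edge stays in $G$. If either path used the new edge, it would have to revisit $v$, and from $w$ this would require $w \rarrow{\ast}{G} v$, which is excluded. The path from $v'$ could revisit $v$ only via a $G$-path from $v'$ to $v$, which is impossible since $v \not\barrow{\ast}{G} v'$. Hence the join lies entirely in $G$, giving $w \downarrow_G v'$ and therefore $v \barrow{\ast}{G} v'$, a contradiction. If $v'\notin V_G$, the join is immediately impossible, since $v'$ has no outgoing edges and $w\neq v'$.
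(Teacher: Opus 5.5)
Your only-if direction and cases (1) and (2) are sound and follow the paper's proof. For (1) the paper just notes that $\barrow{\ast}{G}=\barrow{\ast}{G'}$, which is the same point as your appeal to Lemma~\ref{lem:inc-confluence}.

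The soft spot is case (3). For each reduct $w$ of $u$ you record only $w\downarrow_{G'} v'$, and that discards the information that makes the peak close. Joinability is not transitive, so your sentence ``both sides reduce to a common $G'$-reduct of $v'$'' is exactly what still has to be shown. The ``concrete'' statement you offer for the main obstacle (a reduct of $v'$ that reaches $v$ reaches $v'$ again) is trivially true but does not supply this. What does supply it is the idea buried in your parenthetical. If $v'\rarrow{\ast}{G'}x$, cutting the path after its last use of $v\to v'$ gives $v'\rarrow{\ast}{G}x$. So every $G'$-reduct of $v'$ is a $G$-reduct of $v'$, and any two of them join by confluence of $G$ (trivially if $v'\notin V_G$). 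Once that fact is stated and proved, your composition step goes through.

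Simpler still, keep the sharp dichotomy your own argument already produces:
\begin{itemize}
\item if the branch $u\rarrow{\ast}{G'}w$ avoids the new edge, then $w\rarrow{\ast}{G}v$ by $u\Rrightarrow_G v$;
\item otherwise $v'\rarrow{\ast}{G}w$ by the last-use cut.
\end{itemize}
With this, no composition is needed.
\begin{itemize}
\item If only the second branch uses the new edge, then $w_1\rarrow{\ast}{G}v\rightarrow_{G'}v'\rarrow{\ast}{G}w_2$, so $w_2$ itself is the common reduct.
\item If both branches use it, then $v'\rarrow{\ast}{G}w_1$ and $v'\rarrow{\ast}{G}w_2$ join in $G$.
\end{itemize}
This is precisely the paper's three-way split on whether $v\to v'$ occurs in neither, one, or both branches. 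It makes your induction on the number of uses, and the detour through $\classD{v}$, unnecessary.
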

\begin{proof}
  We first show the {\bf if}-part.
  (1) is immediate since ${\barrow{\ast}{G}} = {\barrow{\ast}{G'}}$.

  For (2), $v \notin V_G$ implies that no edge in $G'$ enters $v$, and $v \rarrow{}{G'} v'$ is the unique edge leaving $v$.
  Hence, any divergence of $G'$ starting from $u \in V_G$ never visits $v$, which means it joins by the confluence of $G$.
  Furthermore, any divergence of $G'$ starting from $v$ trivially reduces to one starting from $v'$.

  For (3), if $v' \notin V_G$, no edge leaves $v'$ in $G'$, and $v \rarrow{}{G'} v'$ is the unique edge entering $v'$.
  Thus, any divergence of $G'$ containing $v'$ can be written as
  $u_1 \larrow{\ast}{G} u \rarrow{\ast}{G} v \rarrow{}{G'} v'$.
  By the confluence of $G$ and $v \in \pnf_G$, we have $u_1 \rarrow{\ast}{G} v$,
  and hence $u_1$ and $v'$ join in $G'$.
  Next, we assume $v' \in V_G$ and let $u_1 \larrow{\ast}{G'} u \rarrow{\ast}{G'} u_2$
  be a diverging sequence. We have three cases according to the occurrences of
  $v \rarrow{}{G'} v'$ in it.
  \begin{itemize}
  \item If $(v,v')$ does not occur,
    $u_1 \larrow{\ast}{G} u \rarrow{\ast}{G} u_2$ and $u_1 \downarrow_{G} u_2$
    from confluence of $G$.
  \item If $(v,v')$ occurs both sides in
    $u_1 \larrow{\ast}{G'} u \rarrow{\ast}{G'} u_2$,
    then $u_1 \larrow{\ast}{G} v' \larrow{\ast}{G'} v \larrow{\ast}{G'} u
    \rarrow{\ast}{G'} v \rarrow{\ast}{G'} v' \rarrow{\ast}{G} u_2$.
    Hence, $u_1 \larrow{\ast}{G} v' \rarrow{\ast}{G} u_2$ and
    $u_1 \downarrow_{G} u_2$ from confluence of $G$.
  \item If $(v,v')$ occurs only on one side, say
    $u_1 \larrow{\ast}{G} u \rarrow{\ast}{G}
    v \rarrow{\ast}{G'} v' \rarrow{\ast}{G'} u_2$.
    Since $G$ is confluent and $v\in \pnf_G$,
    $u \rarrow{\ast}{G} v \Rrightarrow_G v$ implies $u \Rrightarrow_G v$
    by Proposition~\ref{prop:pnf}~(\ref{subprop:pnf-3}).
    Combining $u \Rrightarrow_G v$ and $u \rarrow{\ast}{G} u_1$, we have $u_1 \rarrow{\ast}{G} v$,
    and hence, $u_1 \rarrow{\ast}{G'} u_2$.
  \end{itemize}
  Therefore, in each case, we obtain $u_1 \downarrow_{G'} u_2$.
  \smallskip

  For the {\bf only-if} part, assume that $G'$ is confluent.
  If $v \notin V_G$, then condition (2) holds trivially;
  thus, it remains to consider the case where $v \in V_G$.

  If $v' \notin V_G$,
  we will show $v \Rrightarrow_G v$, which implies (3) $v \in \pnf_G$ by
  Proposition~\ref{prop:Gddiv}~(\ref{subprop:Gddiv-1}).
  Suppose $v \rarrow{\ast}{G} u$, then $u \larrow{\ast}{G'} v \rarrow{}{G'} v'$,
  and hence $u \downarrow_{G'} v'$ by the confluence of $G'$.
  Since $v'$ is reachable only via $v \rarrow{}{G'} v'$, we have $u \rarrow{\ast}{G} v$.
  Hence, $v \Rrightarrow_G v$.

  If $v' \in V_G$, we will show that (2) $v \notin\pnf_G$ implies (3) $v \barrow{\ast}{G} v'$.
  Since $v \notin\pnf_G$, there exists $v''$ such that $v \rarrow{\ast}{G} v''$ and
  $v'' \not\rarrow{\ast}{G} v$ by Proposition~\ref{prop:Gddiv}~(\ref{subprop:Gddiv-1}).
  By $v'' \larrow{\ast}{G} v \rarrow{}{G'} v'$ and the confluence of $G'$,
  there exists $w$ such that $v'' \rarrow{\ast}{G'} w \larrow{\ast}{G'} v'$.
  Now, we will show $v'' \rarrow{\ast}{G} w \larrow{\ast}{G} v'$,
  then we are done since $v \rarrow{\ast}{G} v''$.
  Since $v'' \not\rarrow{\ast}{G} v$, the path $v'' \rarrow{\ast}{G'} w$
  cannot use the edge $v \rarrow{}{G'} v'$, which gives $v'' \rarrow{\ast}{G} w$.
  Furthermore, extracting the suffix of $v' \rarrow{\ast}{G'} w$
  after the last occurrence of $v \rarrow{}{G'} v'$ (or the entire path if none exists)
  gives $v' \rarrow{\ast}{G} w$.
\end{proof}

\subsection{Pseudo-normal forms after edge expansion}
\label{subsec:Conv-Exp-ARS}

We clarify how edge expansions affect confluence by analyzing transitions of
quotient graphs $\NF_{G_{\ddiv}}$.

\begin{lemma} \label{lem:exp-class}
  Let an edge expansion $G' := G \cup (v\to v')$ of an ARS $G$.
  \begin{enumerate}
  \item\label{sublem:exp-class_1}
    Let $u\in\pnf_{G}$.
    Then, $u\in\pnf_{G'}$ if and only if
    $u \not\rarrow{\ast}{G} v \lor v' \Rrightarrow_{G} v$.
  \item\label{sublem:exp-class_2}
    Let $u\in V_G \setminus \pnf_{G}$.
    Then, $u\in\pnf_{G'}$ if and only if
    $v' \rarrow{\ast}{G} u \wedge v' \Rrightarrow_{G} v$.
  \item\label{sublem:exp-class_3}
    Let $u\notin V_{G}$.
    Then, $u\in\pnf_{G'}$ if and only if $u = v' \notin V_G$.
  \end{enumerate}
\end{lemma}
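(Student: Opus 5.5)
The plan is to work directly from the definition $u\in\pnf_{G'}$ iff $u\Rrightarrow_{G'}u$, i.e. every $G'$-reduct $w$ of $u$ satisfies $w\rarrow{\ast}{G'}u$. Throughout we use one decomposition: a $G'$-path either avoids the new edge $(v,v')$, and is then a $G$-path, or it factors as $u\rarrow{\ast}{G}v\rarrow{}{G'}v'\rarrow{\ast}{G'}\cdots$. We may also cut a path after the last use of $(v,v')$, so that the remaining suffix lies in $G$.

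\textbf{Part (1).} For the if-direction, suppose first $u\not\rarrow{\ast}{G}v$. Then no $G'$-path from $u$ uses the new edge, so the $G'$-reducts of $u$ are exactly its $G$-reducts. These all return to $u$ in $G$, so $u\in\pnf_{G'}$. Suppose instead $u\rarrow{\ast}{G}v$ and $v'\Rrightarrow_G v$. Since $u\in\pnf_G$, we have $v\rarrow{\ast}{G}u$, and $v'\rarrow{\ast}{G}v$ by Proposition~\ref{prop:pnf}(\ref{subprop:pnf-1}). I then show by induction on the number of uses of $(v,v')$ that every $G'$-reduct $w$ of $u$ satisfies $w\rarrow{\ast}{G}u$. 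A $G$-segment from $u$ stays in $\classD{u}_{\stcongstarss{G}}$. A segment $v'\rarrow{\ast}{G}w$ gives $w\rarrow{\ast}{G}v\rarrow{\ast}{G}u$ by $v'\Rrightarrow_G v$.

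For the only-if direction, assume $u\rarrow{\ast}{G}v$ and $u\in\pnf_{G'}$. Then $u\rarrow{\ast}{G'}v'$, so every $G$-reduct $w$ of $v'$ satisfies $w\rarrow{\ast}{G'}u$. I must show $w\rarrow{\ast}{G}v$. Take a $G'$-path $w\rarrow{\ast}{G'}u$. If it uses the new edge, its prefix up to the first use gives $w\rarrow{\ast}{G}v$. Otherwise $w\rarrow{\ast}{G}u\rarrow{\ast}{G}v$. Either way $v'\Rrightarrow_G v$.

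\textbf{Part (2).} Let $u\in V_G\setminus\pnf_G$. For the if-direction, assume $v'\rarrow{\ast}{G}u$ and $v'\Rrightarrow_G v$. Then $u\rarrow{\ast}{G}v$ by Proposition~\ref{prop:pnf}(\ref{subprop:pnf-2}), and the same cut-after-last-use argument shows every $G'$-reduct $w$ satisfies $w\rarrow{\ast}{G}v\rarrow{}{G'}v'\rarrow{\ast}{G}u$. For the only-if direction, since $u\notin\pnf_G$ there is $w$ with $u\rarrow{\ast}{G}w$ and $w\not\rarrow{\ast}{G}u$. Pseudo-normality in $G'$ gives $w\rarrow{\ast}{G'}u$, so this path must use the new edge. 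Its suffix after the last use gives $v'\rarrow{\ast}{G}u$, and $v$ is reachable from $u$. The proof of $v'\Rrightarrow_G v$ is then the same as in the only-if direction of (1): any $G$-reduct $x$ of $v'$ is a $G'$-reduct of $u$, hence $x\rarrow{\ast}{G'}u$, and a first-use/no-use split gives $x\rarrow{\ast}{G}v$. In the no-use case this uses $u\rarrow{\ast}{G}v$, obtained from the prefix before the first use of the edge in $u\rarrow{\ast}{G'}w\rarrow{\ast}{G'}u$.

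\textbf{Part (3).} If $u\notin V_G$ then $u\in\{v,v'\}$. If $u=v\notin V_G$, then $v\to v'$ is the only edge at $v$ and nothing enters $v$, so $v$ is not pseudo-normal. The exception is $v=v'$, which is itself the case $u=v'\notin V_G$. If $u=v'\notin V_G$, then $v'$ has no outgoing edge unless $v'=v$, in which case its only edge is a self-loop; in both cases $v'$ is pseudo-normal.

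The main obstacle is the bookkeeping in the path decompositions. One must justify that $G$-segments starting at a pseudo-normal vertex stay within its $\stcongstarss{G}$-class. The other delicate point is the no-use subcase of the only-if directions, where we need $u\rarrow{\ast}{G}v$ and the hypotheses must be used carefully.
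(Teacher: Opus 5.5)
Your proposal is correct and follows essentially the paper's own argument: you split each $G'$-path at the first and last use of the new edge $v\to v'$, and in (3) you also handle the self-loop case $v=v'\notin V_G$, which the paper's proof passes over. One small slip: in the if-direction of (2), $u\rarrow{\ast}{G} v$ follows directly from the definition of $v'\Rrightarrow_G v$ applied to $v'\rarrow{\ast}{G} u$, not from Proposition~\ref{prop:pnf}(\ref{subprop:pnf-2}).
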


\begin{proof}
  Let $G' := G \cup (v\to v')$.
  \begin{enumerate}
  \item (Only-if): Assuming $u \rarrow{\ast}{G} v$,
    we show $v' \Rrightarrow_G v$.
    Let $v' \rarrow{\ast}{G} u'$.
    Then, $u \rarrow{\ast}{G'} u'$ as
    $u \rarrow{\ast}{G} v \to_{G'} v' \rarrow{\ast}{G} u'$.
    Since $u \in \pnf_{G'}$, $u' \rarrow{\ast}{G'} u$.
    Hence, we can show $u' \rarrow{\ast}{G} v$ regardless of the occurence of
    $v \to v'$ as follows.
    If $u' \rarrow{\ast}{G'} u$ contains $v \to v'$,
    we have $u' \rarrow{\ast}{G} v$ at its first occurrence,
    otherwise $u' \rarrow{\ast}{G} u \rarrow{\ast}{G} v$.
    \smallskip

    \noindent
    (If): We show $u \in \pnf_{G'}$ for either of cases.
    \begin{itemize}
    \item Let $u \not\rarrow{\ast}{G} v$.
      If $u \rarrow{\ast}{G'} u'$,
      the sequence does not contain the introduced edge $v\to v'$, and thus
      it must be $u \rarrow{\ast}{G} u'$.
      Since $u \in \pnf_{G}$, $u' \rarrow{\ast}{G} u$.

    \item Let $v' \Rrightarrow_G v$. Assume $u \rarrow{\ast}{G'} u'$.

      If it contains $v \to v'$,
      we have $u \rarrow{\ast}{G} v \rarrow{}{G'} v' \rarrow{\ast}{G} u'$,
      which is obtained by connecting the first and the last occurrences
      even if it contains multiple $v \rarrow{}{G'} v'$ steps.
      Here, $u' \rarrow{\ast}{G} v$ from $v' \rarrow{\ast}{G} u'$ and
      $v' \Rrightarrow_G v$, and also $v \rarrow{\ast}{G} u$
      from $u \rarrow{\ast}{G} v$ and $u \in \pnf_G$.
      Hence $u' \rarrow{\ast}{G} v \rarrow{\ast}{G} u$.

      Otherwise, $u \rarrow{\ast}{G} u'$.
      Combining this with $u \in \pnf_G$, we obtain $u' \rarrow{\ast}{G} u$.
    \end{itemize}

  \item (Only-if):
    Since $u \in V_G \setminus \pnf_{G}$, we have $u \rarrow{\ast}{G} u' \not\rarrow{\ast}{G} u$ for some $u'$.
    Since $u \in \pnf_{G'}$, we have $u' \rarrow{\ast}{G'} u$, which must contain the augmented edge $v \to v'$.
    Thus, this sequence can be decomposed into $u' \rarrow{\ast}{G} v \rarrow{\ast}{G'} v' \rarrow{\ast}{G} u$.
    By combining this with $u \rarrow{\ast}{G} u'$, we obtain a sequence:
    \begin{align}
      & u \rarrow{\ast}{G} v \rarrow{}{G'} v' \rarrow{\ast}{G} u.
      \label{eq:vdTov}
    \end{align}
    From this sequence~\eqref{eq:vdTov}, we immediately obtain the first required property:
    $v' \rarrow{\ast}{G} u$.

    We show the remaining property: $v' \Rrightarrow_{G} v$.
    Let $v' \rarrow{\ast}{G} u''$.
    Then, we have $u \rarrow{\ast}{G'} v' \rarrow{\ast}{G'} u''$
    from \eqref{eq:vdTov}.
    Thus $u \in \pnf_{G'}$ yeilds $u'' \rarrow{\ast}{G'} u$.
    If it contains $v \rarrow{}{G'} v'$, we obtain $u'' \rarrow{\ast}{G} v$.
    Otherwise, $u'' \rarrow{\ast}{G} u$.
    Combining this with $u \rarrow{\ast}{G} v$, we have $u'' \rarrow{\ast}{G} v$.
    \smallskip

    \noindent
    (If): Assume $v' \rarrow{\ast}{G} u$ and $v' \Rrightarrow_{G} v$.
    To show $u \in \pnf_{G'}$, let $u'$ be any vertex such that $u \rarrow{\ast}{G'} u'$.
    We first show that $v' \rarrow{\ast}{G} u'$ holds.
    If the sequence $u \rarrow{\ast}{G'} u'$ contains the augmented edge $v \to v'$,
    extracting the suffix after its last occurrence induces $v' \rarrow{\ast}{G} u'$.
    Otherwise, the sequence lies entirely in $G$,
    and combining it with the assumption $v' \rarrow{\ast}{G} u$
    gives $v' \rarrow{\ast}{G} u \rarrow{\ast}{G} u'$.
    In either case, we obtain $v' \rarrow{\ast}{G} u'$.

    Since $v' \Rrightarrow_{G} v$, this sequence $v' \rarrow{\ast}{G} u'$ implies $u' \rarrow{\ast}{G} v$.
    Thus, we obtain $u' \rarrow{\ast}{G} v \rarrow{}{G'} v' \rarrow{\ast}{G} u$.
    Hence, $u \in \pnf_{G'}$ holds.

  \item (Only-if):
    Since $u \notin V_G$ and $u \in \pnf_{G'}$, $u$ must be a newly introduced vertex.
    However, $u=v$ is ruled out from  $u \in \pnf_{G'}$.
    Thus, $u = v' \notin V_G$ holds.
    \smallskip

  \noindent
    (If): If $u = v' \notin V_G$, then $u$ has no outgoing edge,
    which immediately implies $u\in\pnf_{G'}$.
  \end{enumerate}
\end{proof}

Note that $v' \Rrightarrow_{G} v$ implies $v' \in \pnf_{ G\cup(v\to v') }$ by applying
  \eqref{sublem:exp-class_1} and \eqref{sublem:exp-class_2} of
Lemma~\ref{lem:exp-class} with $u := v'$.

The following Lemma~\ref{lem:NF-number} shows how an edge expansion affects
the number of normal forms in $G_{\ddiv}$.
This is also indicated by the colors in Table~\ref{tab:expansion-PNF};
in the red cases, the number increases, whereas in the blue cases, it decreases.
\begin{table}
  \caption{Classification of cases on an edge expansion $G \cup (v \to v')$}\label{tab:expansion-PNF}
  \small
  \setlength{\aboverulesep}{0pt}
  \setlength{\belowrulesep}{0pt}
    \begin{tabular}{c|c|c|c|c}
      \toprule
      & \multicolumn{2}{|c|}{$v \notin \pnf_{G}$}
      & \multicolumn{2}{c}{$v \in \pnf_{G}$}  \\
      \midrule 
      $v'\notin V_G$
      & \multicolumn{2}{>{\columncolor{red!15}}c|}{(a)}
      & \multicolumn{2}{c}{(b)}  \\ \hline
      $v'\in V_G$
      & \multicolumn{2}{c|}{(c)}
      & (d) if $v' \Rrightarrow_{G} v$
      & \multicolumn{1}{>{\columncolor{blue!18}}c}{(e)
        if $v' \not\Rrightarrow_{G} v$} \\
      \Xhline{1pt}
    \end{tabular}

  \vspace{3pt}
  \noindent\makebox[\linewidth][c]{%
      \footnotesize
      Red color (resp.\ blue) color indicates $|\NF_{G_{\ddiv}}|$ increases (resp.\ decreases).
  }
\end{table}

\begin{lemma}\label{lem:NF-number}
  Let $G' := G \cup (v\to v')$ be an edge expansion of an ARS $G$. Then,
  \[
  |\NF_{G'_{\ddiv}}| = \begin{cases}
    |\NF_{G_{\ddiv}}| - 1
    & \text{if } v\in \pnf_{G} \land v'\in V_G \land v'\not\Rrightarrow_{G} v, \\
    |\NF_{G_{\ddiv}}| + 1
    & \text{if } v \notin \pnf_{G} \land v' \notin V_G, \\
    |\NF_{G_{\ddiv}}|
    & \text{otherwise.}
  \end{cases}
  \]
\end{lemma}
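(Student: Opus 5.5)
The plan is to count pseudo-normal forms class by class and to use Lemma~\ref{lem:exp-class} to see how that set changes. By Proposition~\ref{prop:Gddiv}(\ref{subprop:Gddiv-1}), $\NF_{G_{\ddiv}}$ is in bijection with the set of $\stcongstarss{G}$-classes contained in $\pnf_G$. Moreover, $\pnf_G$ is a union of such classes: if $u\in\pnf_G$ and $u\rarrow{\ast}{G}w$, then $w\rarrow{\ast}{G}u$, so $w$ lies in the class of $u$. Hence $|\NF_{G_{\ddiv}}|$ is the number of strongly connected classes of $G$ that lie inside $\pnf_G$, and likewise for $G'$.

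The first step is a general remark on how strongly connected classes can change under the expansion. Two vertices of $V_G$ that are strongly connected in $G'$ but not in $G$ must have a cycle through the new edge, so their merged class contains $v$. Consequently, every class of $G'$ not containing $v$ is either a class of $G$ or the singleton $\{v'\}$ with $v'\notin V_G$. So it suffices to determine $\pnf_{G'}$ using Lemma~\ref{lem:exp-class} and to track the class of $v$.

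Next I would go through the cases of Table~\ref{tab:expansion-PNF}. Throughout, $v'\Rrightarrow_G v$ is read as false when $v'\notin V_G$, and if $v\notin V_G$ then $v\notin\pnf_{G'}$ since $v$ has an outgoing edge and no incoming one.

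In case (a), $v\notin\pnf_G$ and $v'\notin V_G$. If some $u\in\pnf_G$ satisfied $u\rarrow{\ast}{G}v$, then $v$ would lie in the class of $u$ and hence in $\pnf_G$, a contradiction. So by item~(\ref{sublem:exp-class_1}) every old pseudo-normal form survives. Item~(\ref{sublem:exp-class_2}) creates none, since $v'\rarrow{\ast}{G}u$ is impossible. Item~(\ref{sublem:exp-class_3}) adds the new class $\{v'\}$. The count therefore increases by $1$.

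In case (b), $v\in\pnf_G$ and $v'\notin V_G$. The class of $v$ drops out by item~(\ref{sublem:exp-class_1}), since each of its members reaches $v$, while $\{v'\}$ is added. The net change is $0$.

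In case (c), $v\notin\pnf_G$ and $v'\in V_G$. Old pseudo-normal forms survive as in (a). No new ones appear, because $v'\Rrightarrow_G v$ would give $v\in\pnf_G$ by Proposition~\ref{prop:pnf}(\ref{subprop:pnf-1}). Any merging of classes happens in the class of $v$, which is not pseudo-normal. The net change is $0$.

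In case (d), $v\in\pnf_G$ and $v'\Rrightarrow_G v$. Then $v'\rarrow{\ast}{G}v$ and $v\in\pnf_G$ put $v'$ in the class of $v$. The new edge stays inside one class, so both the partition and $\pnf$ are unchanged.

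In case (e), $v\in\pnf_G$, $v'\in V_G$ and $v'\not\Rrightarrow_G v$. Item~(\ref{sublem:exp-class_1}) removes exactly the class of $v$. Item~(\ref{sublem:exp-class_2}) adds nothing, and the remaining pseudo-normal classes avoid $v$, so by the first step they are unchanged. This gives the decrease by $1$.

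The main obstacle is the bookkeeping that the pseudo-normal classes not containing $v$ are literally the same classes in $G$ and $G'$, so that counting vertices of $\pnf$ really counts classes. I expect the first-step remark, that any cycle using the new edge passes through $v$, to settle this uniformly. The only other delicate point is the degenerate case $v\notin V_G$, which I would fold into (a) or (c) according to whether $v'\in V_G$.
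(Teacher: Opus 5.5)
Your plan is the paper's own case analysis over Table~\ref{tab:expansion-PNF}: count the $\stcongstarss{G}$-classes inside $\pnf$, read off how $\pnf$ changes from Lemma~\ref{lem:exp-class}, and note that only the class of $v$ can be altered. Cases (a), (b), (c) and (e) go through.

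Case (d), however, rests on a false step. From $v'\rarrow{\ast}{G}v$ and $v\in\pnf_G$ you conclude that $v'$ lies in the $G$-class of $v$. Pseudo-normality of $v$ only forces the reducts of $v$ into its class, not the vertices that reach $v$. Take $G=\{v'\to v\}$ with $v$ normal. Then $v\in\pnf_G$, $v'\in V_G$ and $v'\Rrightarrow_G v$, yet $v'\notin\classD{v}_{\stcongstarss{G}}$. Adding $v\to v'$ merges $\{v\}$ and $\{v'\}$, and $\pnf_{G'}=\{v,v'\}\supsetneq\pnf_G=\{v\}$. So in case (d) neither the partition nor $\pnf$ is preserved; only the number of pseudo-normal classes is.

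What is missing is the absorption argument the paper gives for this case. It has three parts.
\begin{enumerate}
\item Every old pseudo-normal form survives by Lemma~\ref{lem:exp-class}(\ref{sublem:exp-class_1}), since $v'\Rrightarrow_G v$.
\item Any newly pseudo-normal $u\in V_G$ satisfies $v'\rarrow{\ast}{G}u$ by Lemma~\ref{lem:exp-class}(\ref{sublem:exp-class_2}), hence $u\rarrow{\ast}{G}v$. The cycle $u\rarrow{\ast}{G}v\rightarrow v'\rarrow{\ast}{G}u$ then places $u$ in the $G'$-class of $v$, so no new pseudo-normal class appears.
\item You must also rule out that another old pseudo-normal class is swallowed by the enlarged class of $v$, which would lower the count. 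Suppose $a\in\pnf_G$ and $a\rarrow{\ast}{G'}v$. The part of that path before its first visit to $v$ cannot use the new edge, so $a\rarrow{\ast}{G}v$, and pseudo-normality of $a$ then gives $a\in\classD{v}_{\stcongstarss{G}}$.
\end{enumerate}
With these facts your first-step remark closes case (d). By itself it does not, because here the class of $v$ genuinely changes, contrary to your expectation that the remark settles all cases uniformly.
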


\begin{proof}
  We consider the case analysis, i.e., the combination of
  whether $v \in \pnf_G$ and whether $v' \in V_G$.
  Among them, the case $v \in \pnf_G$ and $v' \in V_G$ is further decomposed
  with $v' \Rrightarrow_G v$ and $v' \not\Rrightarrow_G v$.
  \begin{description}
  \item[\textbf{Case (a):} $v \notin \pnf_{G}$ and $v' \notin V_G$.]
    From Lemma~\ref{lem:exp-class}(\ref{sublem:exp-class_3}) with $u:=v'$, a new
    equivalence class $\classD{v'}_{\stcongstarss{G'}} \in\NF_{G'_{\ddiv}}$ is created.

    Moreover, for any $u$, $\classD{u}_{\stcongstarss{G}} \in \NF_{G_{\ddiv}}$ implies
    $\classD{u}_{\stcongstarss{G'}} \in \NF_{G'_{\ddiv}}$.
    Otherwise, there would exist some $u$ such that $u \in \pnf_{G}$
    but $u \notin \pnf_{G'}$.
    By Lemma~\ref{lem:exp-class}(\ref{sublem:exp-class_1}), we have $u \rarrow{\ast}{G} v$,
    which implies $v \in \pnf_{G}$, contradicting the case assumption
    $v \notin \pnf_{G}$.
    Thus, $|\NF_{G'_{\ddiv}}| = |\NF_{G_{\ddiv}}| + 1$. 

    \item[\textbf{Case (b):} $v \in \pnf_{G}$ and $v' \notin V_G$.]
    Similarly to Case (a),
    a new equivalence class $\classD{v'}_{\stcongstarss{G'}} \in \NF_{G'_{\ddiv}}$ is
    created.
    On the other hand, since $v' \notin V_G$, we trivially have
    $v' \not\Rrightarrow_{G} v$.
    Thus, applying
    Lemma~\ref{lem:exp-class}(\ref{sublem:exp-class_1}) with $u:=v$
    shows $v \notin \pnf_{G'}$, i.e., $\classD{v}_{\stcongstarss{G'}} \notin \NF_{G'_{\ddiv}}$,
    while $\classD{v}_{\stcongstarss{G}} \in \NF_{G_{\ddiv}}$ holds by the case assumption.
    Hence, these two changes cancel each other out, resulting in
    $|\NF_{G'_{\ddiv}}| = |\NF_{G_{\ddiv}}|$. 

\item[\textbf{Case (c):} $v \notin \pnf_{G}$ and $v' \in V_G$.]
    First, for any $u \in V_G$, $u \in \pnf_{G}$ implies $u \in \pnf_{G'}$.
    Otherwise, there would exist some $u$ such that $u \in \pnf_{G}$ but
    $u \notin \pnf_{G'}$.
    By Lemma~\ref{lem:exp-class}(\ref{sublem:exp-class_1}),
    we obtain $u \rarrow{\ast}{G} v$, which induces $u \Rrightarrow_G u \rarrow{\ast}{G} v$
    since $u \in \pnf_{G}$.
    Applying Proposition~\ref{prop:pnf}~(\ref{subprop:pnf-2}) gives $v \Rrightarrow_G v$,
    contradicting the case assumption $v \notin \pnf_{G}$.

    Conversely, for any $u \in V_G$, $u \notin \pnf_{G}$ implies $u \notin \pnf_{G'}$.
    Otherwise, there would exist some $u$ such that $u \notin \pnf_{G}$
    but $u \in \pnf_{G'}$, which gives $v' \Rrightarrow_{G} v$
    by Lemma~\ref{lem:exp-class}~(\ref{sublem:exp-class_2}).
    By Proposition~\ref{prop:pnf}~(\ref{subprop:pnf-1}),
    we obtain $v \in \pnf_G$, again contradicting the case assumption.
    Therefore $|\NF_{G'_{\ddiv}}| = |\NF_{G_{\ddiv}}|$ holds.

  \item[\textbf{Case (d):} $v\in\pnf_{G}$, $v'\in V_G$, and $v' \Rrightarrow_G v$.]
    First, for any $u \in V_G$, $u \in \pnf_{G}$ implies $u \in \pnf_{G'}$ by
    Lemma~\ref{lem:exp-class}(\ref{sublem:exp-class_1}).
    In particular, $v \in \pnf_{G'}$.

    On the other hand, there may exist some $u$
    such that $u \notin \pnf_G$ but
    $u \in \pnf_{G'}$.
    By Lemma~\ref{lem:exp-class}(\ref{sublem:exp-class_2}), any such $u$ satisfies $v' \rarrow{\ast}{G} u$.
    Since $v' \Rrightarrow_G v$ by the case assumption,
    this sequence induces $u \rarrow{\ast}{G} v$.
    Thus, we obtain a cycle $u \rarrow{\ast}{G} v \rarrow{}{G'} v' \rarrow{\ast}{G} u$,
    and hence
    $\classD{u}_{\stcongstarss{G'}} = \classD{v}_{\stcongstarss{G'}}$.
    Combining this with $u,v \in \pnf_{G'}$,
    it follos that any such $u$ is simply absorbed into the existing normal form class of $v$,
    meaning no new normal equivalence classes are created.
    Hence, $|\NF_{G'_{\ddiv}}| = |\NF_{G_{\ddiv}}|$ holds.

  \item[\textbf{Case (e):} $v \in \pnf_{G}$, $v' \in V_G$, and $v' \not\Rrightarrow_{G} v$.]
    First, since $v' \not\Rrightarrow_{G} v$,
    applying Lemma~\ref{lem:exp-class}(\ref{sublem:exp-class_2})
    shows that for any $u \in V_G$, $u \notin \pnf_{G}$ implies $u \notin \pnf_{G'}$.
    Thus, no new normal form classes are created.

    On the other hand,
    at least one normal equivalence class is lost. Indeed, taking $u:=v$ in
    Lemma~\ref{lem:exp-class}(\ref{sublem:exp-class_1}) gives $v \notin \pnf_{G'}$,
    whereas $v \in \pnf_{G}$ holds by the case assumption.

    Furthermore, if there exists any $u$ such that $u \in \pnf_{G}$ but $u \notin \pnf_{G'}$,
    Lemma~\ref{lem:exp-class}(\ref{sublem:exp-class_1}) implies $u \rarrow{\ast}{G} v$.
    Combining this with $u \in \pnf_{G}$ shows
    $\classD{u}_{\stcongstarss{G}} = \classD{v}_{\stcongstarss{G}}$.
    This indicates that the loss of normal forms is
    uniquely restricted to the single class $\classD{v}_{\stcongstarss{G}}$.

    Therefore, exactly one equivalence class in normal form is removed, resulting in
    $|\NF_{G'_{\ddiv}}| = |\NF_{G_{\ddiv}}| - 1$.
\end{description}
\end{proof}

Note that Lemma~\ref{lem:NF-number} is used only in
Lemma~\ref{lem:abs-edge-extension2}, where $|\NF_{G_{\ddiv}}|$ serves as a measure for induction.
Specifically, the lemma is used to show that $|\NF_{G'_{\ddiv}}| \leq |\NF_{G_{\ddiv}}|$ under
condition that $v\in\pnf_{G}$ and $v'\in V_G$, i.e., Cases~(d) or (e).

\subsection{Finite confluent extension of a sub-ARS in a confluent ARS}

\begin{lemma} \label{lem:exp-class2}
  Let $G$ be a confluent ARS, and
  let $G' := G \cup (v \to v')$ be an edge expansion such that $G \conserveq G'$.
  Then, the following hold:
  \begin{enumerate}
  \item\label{lem:exp-class2-a}
    $v \in \pnf_G$ implies $v' \in \pnf_{G'}$,
  \item\label{lem:exp-class2-b}
    if $v' \in V_G$, then for any $w \in \pnf_G$ with $v' \rarrow{\ast}{G} w$,
    we have $w \in \pnf_{G'}$, and
  \item\label{lem:exp-class2-c}
    $G'$ is confluent whenever $v \in \pnf_G$ or $v' \in V_G$.
  \end{enumerate}
\end{lemma}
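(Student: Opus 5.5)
The plan is to reduce all three items to Lemma~\ref{lem:exp-class} and Lemma~\ref{lem:expansion-PNF}, using the conservativity $G \conserveq G'$ only to control the case $v,v'\in V_G$. First I unpack the hypothesis. If $v\notin V_G$, conservativity is automatic. If $v,v'\in V_G$, then $v \barrow{\ast}{G'} v'$ forces $v\barrow{\ast}{G} v'$, and by confluence of $G$ this gives $v\downarrow_G v'$. If $v\in V_G$ and $v'\notin V_G$, nothing further is obtained.

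For (\ref{lem:exp-class2-c}), the case $v\in\pnf_G$ is condition (3) of Lemma~\ref{lem:expansion-PNF}. If instead $v'\in V_G$, then either $v\notin V_G$ (condition (2)) or $v\in V_G$, and then $v\barrow{\ast}{G} v'$ by the unpacking above (condition (1)). Either way $G'$ is confluent.

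For (\ref{lem:exp-class2-a}), I split on whether $v'\in V_G$.
\begin{itemize}
\item If $v'\notin V_G$, then Lemma~\ref{lem:exp-class}(\ref{sublem:exp-class_3}) gives $v'\in\pnf_{G'}$ directly.
\item If $v'\in V_G$, then $v\in\pnf_G\subseteq V_G$, so $v\downarrow_G v'$, say $v\rarrow{\ast}{G} w\larrow{\ast}{G} v'$. Since $v\in\pnf_G$, we get $w\rarrow{\ast}{G} v$, hence $v'\rarrow{\ast}{G} v$. By Proposition~\ref{prop:pnf}(\ref{subprop:pnf-1}) and~(\ref{subprop:pnf-2}) applied to $v\Rrightarrow_G v$, this gives $v'\Rrightarrow_G v$. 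The remark after Lemma~\ref{lem:exp-class} (items (\ref{sublem:exp-class_1})/(\ref{sublem:exp-class_2}) with $u:=v'$) then yields $v'\in\pnf_{G'}$.
\end{itemize}

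For (\ref{lem:exp-class2-b}), take $w\in\pnf_G$ with $v'\rarrow{\ast}{G} w$. By Lemma~\ref{lem:exp-class}(\ref{sublem:exp-class_1}) it suffices to show $w\not\rarrow{\ast}{G} v$ or $v'\Rrightarrow_G v$. Suppose $w\rarrow{\ast}{G} v$. Then $w\Rrightarrow_G w\rarrow{\ast}{G} v$, so $w\Rrightarrow_G v$ by Proposition~\ref{prop:pnf}(\ref{subprop:pnf-2}). Next, $v'\rarrow{\ast}{G} w\Rrightarrow_G v$ together with confluence of $G$ gives $v'\Rrightarrow_G v$ by Proposition~\ref{prop:pnf}(\ref{subprop:pnf-3}). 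This item therefore needs neither conservativity nor any case split.

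The main obstacle is item (\ref{lem:exp-class2-a}) in the case $v'\in V_G$. This is the only place where $G\conserveq G'$ is genuinely needed, since without it case (e) of Table~\ref{tab:expansion-PNF} could occur. The argument above handles it by converting conservativity into joinability inside $G$ and then exploiting that $v$ is pseudo-normal.
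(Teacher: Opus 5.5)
Your proof is correct and follows the paper's argument essentially step for step: the same split on $v'\in V_G$ in (a), using Lemma~\ref{lem:exp-class}(\ref{sublem:exp-class_3}) in one case and items (\ref{sublem:exp-class_1})/(\ref{sublem:exp-class_2}) in the other; the same reduction of (b) to the disjunction in Lemma~\ref{lem:exp-class}(\ref{sublem:exp-class_1}); and the same three-way appeal to Lemma~\ref{lem:expansion-PNF} in (c). Two small corrections: first, in (a) the passage from $v'\rarrow{\ast}{G} v$ and $v\Rrightarrow_G v$ to $v'\Rrightarrow_G v$ is Proposition~\ref{prop:pnf}(\ref{subprop:pnf-3}) and uses confluence of $G$, since items (\ref{subprop:pnf-1}) and (\ref{subprop:pnf-2}) only propagate $\Rrightarrow_G$ forward and cannot give it; second, contrary to your closing remark, conservativity is also genuinely needed in (c), in the case $v,v'\in V_G$.
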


\begin{proof}
  \begin{enumerate}
  \item[~(\ref{lem:exp-class2-a})]
    Depending on whether $v' \in V_G$, we consider the following two cases.
    \begin{description}
    \item[\textbf{Case 1:} $v'\in V_G$.]
      Suppose $v \in \pnf_G$, i.e., we have $v \Rrightarrow_G v$.
      Since $G \conserveq G'$ and $v,v' \in V_G$,
      the edge $v \to v'$ in $G'$ implies $v \barrow{\ast}{G} v'$.
      By the confluence of $G$, there exists some $v'' \in V_G$ such that
      $v \rarrow{\ast}{G} v'' \larrow{\ast}{G} v'$.
      Furthermore, since $v \Rrightarrow_G v$, the sequence $v \rarrow{\ast}{G} v''$
      ensures $v'' \rarrow{\ast}{G} v$.
      This induces a path $v' \rarrow{\ast}{G} v'' \rarrow{\ast}{G} v$.
      Applying Proposition~\ref{prop:pnf}~(\ref{subprop:pnf-3}) with $w:=v'$ and $u:=v$,
      it follows that $v' \Rrightarrow_G v$.
      Finally, applying (\ref{sublem:exp-class_1}) and (\ref{sublem:exp-class_2})
      of Lemma~\ref{lem:exp-class} with $u:=v'$, we conclude $v' \in \pnf_{G'}$.

    \item[\textbf{Case 2:} $v'\notin V_G$.]
      Applying Lemma~\ref{lem:exp-class}(\ref{sublem:exp-class_3}) with
      $u := v'$, we obtain $v'\in\pnf_{G'}$.
    \end{description}

  \item[(\ref{lem:exp-class2-b})]
    We prove that $w \in \pnf_{G'}$ assuming $v' \rarrow{\ast}{G} w$ and $w \in \pnf_G$.
    \begin{description}
    \item[\textbf{Case 1:} $w \not\rarrow{\ast}{G} v$.]
      Applying Lemma~\ref{lem:exp-class}(\ref{sublem:exp-class_1}) with $u:=w$,
      we immediately obtain $w \in \pnf_{G'}$
      from the case assumption $w \not\rarrow{\ast}{G} v$.

    \item[\textbf{Case 2:} $w \rarrow{\ast}{G} v$.]
      First, since $w \in \pnf_G$, (i.e., $w \Rrightarrow_G w$)
      and $w \rarrow{\ast}{G} v$, we obtain $w \Rrightarrow_G v$
      by Proposition~\ref{prop:pnf}~(\ref{subprop:pnf-2}).
      Furthermore, since $G$ is confluent, the sequence $v' \rarrow{\ast}{G} w \Rrightarrow_G v$
      implies $v' \Rrightarrow_G v$ by Proposition~\ref{prop:pnf}~(\ref{subprop:pnf-3}).
      Therefore, since $v' \Rrightarrow_G v$, applying Lemma~\ref{lem:exp-class}(\ref{sublem:exp-class_1})
      with $u:=w$ implies $w \in \pnf_{G'}$.
    \end{description}

  \item[(\ref{lem:exp-class2-c})]
    The assumption $v \in \pnf_G$ or $v' \in V_G$ can be divided into three cases:
    (a) $v' \in V_G$ and $v \in V_G$, (b) $v' \in V_G$ and $v \notin V_G$, and (c) $v \in \pnf_G$.
    In Case (a), since $v, v' \in V_G$ and $G \conserveq G'$, we obtain $v \barrow{\ast}{G} v'$,
    which satisfies condition~(1) in Lemma~\ref{lem:expansion-PNF}.
    In Cases (b) and (c), the case assumptions themselves are sufficient
    for conditions~(2) and (3) in Lemma~\ref{lem:expansion-PNF}, respectively.
    Hence, the confluence of $G'$ follows from the lemma.
  \end{enumerate}
\end{proof}

Note that Lemma~\ref{lem:exp-class2}(\ref{lem:exp-class2-a}) requires the confluence of $G$.
As a counterexample, consider a non-confluent ARS $G = \{ a \to b, a \to c, c \to d \}$
and its extension $G' = G \cup \{ b \to c \}$.
Here, letting $v := b$ and $v' := c$, we have $b \in \pnf_G$, but $c \notin \pnf_{G'}$.

\begin{lemma}\label{lem:exp-class2d}\label{cor:exp-class3}
  Let $G_1$ and $G_2$ be ARSs such that $G_1\conserveq G_2$.
  Let $G := G_1 \cup {\lb v \rarrow{\ast}{G_2} v' \rb}$
  for a given sequence $v\rarrow{\ast}{G_2} v'$
  satisfying $v \in \pnf_{G_1}$ or $v' \in \pnf_{G_1}$.
  Then the following properties hold:
  \begin{enumerate}
  \item\label{lem:exp-class2d-1} $G_1 \conserveq G \conserveq G_2$.
  \item\label{lem:exp-class2d-2} $v'\in\pnf_G$,
  \item\label{lem:exp-class2d-3} $G$ inherits confluence from $G_1$.
  \item\label{lem:exp-class2d-4} $w \in \pnf_{G}$ for any $w\in\pnf_{G_1}\setminus [v]_{\barrow{\ast}{G}}$.
  \end{enumerate}
\end{lemma}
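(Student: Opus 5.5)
The plan is to add the edges of the given sequence $v = u_0 \rarrow{}{G_2} u_1 \rarrow{}{G_2} \cdots \rarrow{}{G_2} u_n = v'$ one at a time and argue by induction on $n$. Set $H_0 := G_1$ and $H_{i+1} := H_i \cup (u_i \rarrow{}{G_2} u_{i+1})$, so that $G = H_n$. For property~(\ref{lem:exp-class2d-1}), note that $u_0 = v \in V_{G_1}$ and that $u_i \in V_{H_i}$ for every $i$, since the source of each added edge was the target of the previous one. Hence Proposition~\ref{prop:conserv} applies at every step, with its hypothesis ``$v \in V_{G_1}$ or $v' \in V_{G_1}$'' met by the source. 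It gives $H_i \conserveq H_{i+1} \conserveq G_2$. Transitivity of $\conserveq$, which is immediate from Definition~\ref{def:conserv}, then yields $G_1 \conserveq G \conserveq G_2$.

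For properties~(\ref{lem:exp-class2d-2}) and~(\ref{lem:exp-class2d-3}) I would split on the disjunctive hypothesis.

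\textbf{Case $v \in \pnf_{G_1}$.} I maintain the invariant ``$H_i$ is confluent and $u_i \in \pnf_{H_i}$'', assuming $G_1$ is confluent, which is what ``inherits confluence'' requires. Each step is an edge expansion $H_i \conserveq H_{i+1}$ whose source lies in $\pnf_{H_i}$. Lemma~\ref{lem:exp-class2}(\ref{lem:exp-class2-c}) then gives confluence of $H_{i+1}$, and Lemma~\ref{lem:exp-class2}(\ref{lem:exp-class2-a}) gives $u_{i+1} \in \pnf_{H_{i+1}}$. At $i = n$ this is exactly~(\ref{lem:exp-class2d-2}) and~(\ref{lem:exp-class2d-3}).

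\textbf{Case $v' \in \pnf_{G_1}$.} I first cut the sequence at its last vertex lying in $V_{G_1}$.
\begin{itemize}
\item Before that cut, every added edge has its source in the current graph and its target either new or already conversion-equivalent to it by conservativity. Lemma~\ref{lem:exp-class2}(\ref{lem:exp-class2-c}), through the clause $v' \in V_G$ or condition~(2) of Lemma~\ref{lem:expansion-PNF}, preserves confluence along this prefix.
\item Since $v' \in V_{G_1}$, the cut vertex is $v'$ itself, so nothing follows it.
\end{itemize}
To show $v' \in \pnf_G$, I apply Lemma~\ref{lem:exp-class2}(\ref{lem:exp-class2-b}) at each step whose target is an old vertex, with $w := v'$. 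The key observation is that every old target $u_{i+1}$ satisfies $u_{i+1} \rarrow{\ast}{H_i} v'$: it is conversion-equivalent to $v'$ in $H_i$, and confluence together with $v'$ being a pseudo-normal form turns that conversion into a reduction to $v'$. Steps whose target is new vertices leave $v'$ pseudo-normal by Lemma~\ref{lem:exp-class}(\ref{sublem:exp-class_1}), because a new vertex cannot reach $v'$ before the edges are added.

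For property~(\ref{lem:exp-class2d-4}), take $w \in \pnf_{G_1}$ that is not connected to $v$ in $G$. Every added edge lies inside the connected component $[v]_{\barrow{\ast}{G}}$, so $w$ never reaches the source of any added edge. The case ``$u \not\rarrow{\ast}{G} v$'' of Lemma~\ref{lem:exp-class}(\ref{sublem:exp-class_1}) then preserves $w \in \pnf_{H_i}$ at every step.

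The main obstacle I expect is property~(\ref{lem:exp-class2d-2}) in the second case. There one must show that an old intermediate vertex $u_{i+1}$ actually reduces to $v'$, rather than merely being convertible to it. I would first try to derive this from $G_1 \conserveq G_2$, the confluence of $H_i$, and $v' \in \pnf_{H_i}$, which gives $u_{i+1} \rarrow{\ast}{H_i} \cdot \larrow{\ast}{H_i} v'$ and hence $u_{i+1} \rarrow{\ast}{H_i} v'$. This confirms that confluence of $G_1$ is genuinely used here. If the intended reading is that~(\ref{lem:exp-class2d-2}) should hold without confluence, I would instead restrict to sequences whose interior vertices are all new. Any sequence can be reduced to that form: conservativity shows that the segments between consecutive old vertices add no connectivity, so they can be handled separately.
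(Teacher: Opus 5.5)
Your case $v\in\pnf_{G_1}$ is the paper's Case~1, and your argument for property~(4) is fine. You are also right that properties~(2) and~(3) need $G_1$ confluent; the paper's proof uses this implicitly through Lemma~\ref{lem:exp-class2}. Your confluence-free fallback cannot work, however. The counterexample after Lemma~\ref{lem:exp-class2} (adding $b\to c$ to $\{a\to b,\ a\to c,\ c\to d\}$) is a one-edge sequence with no interior vertex, with $v=b\in\pnf_{G_1}$ but $v'=c\notin\pnf_G$.

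The genuine gap is the case $v'\in\pnf_{G_1}$, where you still add the edges in forward order; your cut is vacuous, as you note yourself. Three things go wrong.
\begin{enumerate}
\item Your proof of property~(1) rests on $u_0=v\in V_{G_1}$, which is only guaranteed when $v\in\pnf_{G_1}$. Take $G_1=\langle\{c\},\emptyset\rangle$, $G_2=\{a\to b\to c\}$ and the sequence $a\to b\to c$. Then $H_1=G_1\cup(a\to b)$ satisfies neither $H_1\conserveq G_2$ nor $H_1\conserveq H_2$. So neither Proposition~\ref{prop:conserv} nor Lemma~\ref{lem:exp-class2} is applicable at the second step.
\item Forward addition can destroy confluence midway. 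Take $G_1=\{x\to v'\}$, $G_2=G_1\cup\{x\to n,\ n\to v'\}$ with $n$ fresh, and the sequence $x\to n\to v'$. Then $H_1=G_1\cup(x\to n)$ has the two distinct normal forms $v'$ and $n$ below $x$. This is case~(a) of Table~\ref{tab:expansion-PNF}, where none of the conditions of Lemma~\ref{lem:expansion-PNF} holds. Your subsequent appeal to Lemma~\ref{lem:exp-class2}(\ref{lem:exp-class2-b}) is therefore unjustified.
\item The invariant $v'\in\pnf_{H_i}$ can be lost temporarily. This happens when the sequence leaves the strongly connected class of $v'$ towards a fresh vertex and only later returns. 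Your justification for fresh-target steps also has the reachability backwards: Lemma~\ref{lem:exp-class}(\ref{sublem:exp-class_1}) needs that $v'$ does not reach the source of the added edge, not that the new vertex does not reach $v'$.
\end{enumerate}

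The missing idea is to add the edges in reverse order: $u_{n-1}\to u_n$ first, then $u_{n-2}\to u_{n-1}$, and so on. Every added edge then has its target in the current graph, and that target already reduces to $v'$ along the suffix added so far. Hence Proposition~\ref{prop:conserv} gives property~(1), Lemma~\ref{lem:exp-class2}(\ref{lem:exp-class2-c}) preserves confluence, and Lemma~\ref{lem:exp-class2}(\ref{lem:exp-class2-b}) with $w:=v'$ keeps $v'$ pseudo-normal at every step. This is exactly the paper's Case~2. It needs no cut, no distinction between old and fresh targets, and no conversion-to-reduction argument.
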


\begin{proof}
  We represent the sequence $v \rarrow{\ast}{G_2} v'$ as
  $v = v_0\rarrow{}{G_2} v_1\rarrow{}{G_2}\cdots\rarrow{}{G_2} v_m = v'$
  for some $m\geq 0$.
  We first show (\ref{lem:exp-class2d-1})-(\ref{lem:exp-class2d-3}).

  \begin{description}
  \item[\textbf{Case 1:} $v \in \pnf_{G_1}$.]
    For $0\leq i \leq m$, let $G'_{i} := G_1 \cup \lb v \mathrel{\rarrow{i}{G_2}} v_i \rb$,
    where $G'_0 = G_1$ and $G'_m = G$.
    By induction on $i$, we show that $G_1 \conserveq G'_i \conserveq G_2$,
    $v_i \in \pnf_{G_i}$,
    and $G'_i$ is confluent if $G_1$ is.

    The base case $i=0$ is obvious.
    For the induction step $i>0$, we have
    $G'_{i} = G'_{i-1} \cup \lb v_{i-1} \mathrel{\rarrow{}{G_2}} v_i \rb$.
    The induction hypothesis gives
    $G_1 \conserveq G'_{i-1} \conserveq G_2$,
    $v_{i-1}\in\pnf_{G'_{i-1}}$, and
    $G'_{i-1}$ is confluent if $G_1$ is.

    Since $v_{i-1} \in V_{G'_{i-1}}$, Proposition~\ref{prop:conserv} implies
    $G'_{i-1} \conserveq G'_i \conserveq G_2$, and
    hence $G_1 \conserveq G'_i \conserveq G_2$.
    By (\ref{lem:exp-class2-a}) and (\ref{lem:exp-class2-c}) of
    Lemma~\ref{lem:exp-class2},  we obtain $v_i \in\pnf_{G'_i}$ and that
    $G'_i$ inherits confluence of $G_1$.

  \item[\textbf{Case 2:} $v' \in \pnf_{G_1}$.]
    For $1\leq i \leq m$,
    let $G'_i := G_1 \cup  \lb v_{m-i} \mathrel{\rarrow{m-i}{G_2}} v' \rb$,
    where $G'_0 = G_1$ and $G'_m = G$.
    By induction on $i$, we show that $G_1 \conserveq G'_i \conserveq G_2$,
    $v' \in \pnf_{G'_i}$, and
    $G'_i$ is confluent if $G_1$ is.

    The base case $i=0$ is obvious.
    For the induction step $i>0$, we have
    $G'_{i} = G'_{i-1} \cup \lb v_{m-i} \mathrel{\rarrow{}{G_2}} v_{m-i+1} \rb$.
    The induction hypothesis gives
    $G_1 \conserveq G'_{i-1} \conserveq G_2$,
    $v' \in \pnf_{G'_{i-1}}$, and that
    $G'_{i-1}$ is confluent if $G_1$ is.
    Then, since $v_{m-i+1} \in V_{G'_{i-1}}$, Proposition~\ref{prop:conserv} implies
    $G'_{i-1} \conserveq G'_i \conserveq G_2$, and hence
    $G_1 \conserveq G'_i \conserveq G_2$.
    By applying Lemma~\ref{lem:exp-class2}(\ref{lem:exp-class2-b}) with
    $v':=v_{m-i+1}$ and $w:=v'$, we obtain $v' \in\pnf_{G'_i}$.
    Furthermore, Lemma~\ref{lem:exp-class2}(\ref{lem:exp-class2-c}) ensures that
    $G'_i$ inherits confluence from $G_1$.

  \end{description}
  Finally, we prove (\ref{lem:exp-class2d-4}).
  Assume $w \in \pnf_{G_1} \setminus [v]_{\barrow{\ast}{G}}$.
  Since $w\notin [v]_{\barrow{\ast}{G}}$, we have $w \not\barrow{\ast}{G} v$,
  which implies $w \not\barrow{\ast}{G} v'$ because $v \rarrow{\ast}{G} v'$.
  Suppose $w \rarrow{\ast}{G} w'$. From $w \not\barrow{\ast}{G} v$ and the
  construction of $G$, it follows that $w \rarrow{\ast}{G_1} w'$.
  Since $w \in \pnf_{G_1}$ (i.e., $w \Rrightarrow_{G_1} w$),
  we have $w' \rarrow{\ast}{G_1} w$.
  Therefore, $w' \rarrow{\ast}{G} w$ holds, which implies $w \in \pnf_G$.
\end{proof}

\begin{lemma}\label{lem:abs-edge-extension2}
  Let $G_1$ be a finite ARS and $G_2$ be a confluent ARS
  such that $G_1 \conserveq G_2$.
  Suppose further that $s, v \in \pnf_{G_1}$, $s \barrow{\ast}{G_1} v$, and
  $\classD{s}_{\stcongstarss{G_1}} \neq \classD{v}_{\stcongstarss{G_1}}$.
  Then, there exists a finite ARS $G$ satisfying the following properties:
  \begin{enumerate}
    \item $G_1 \conserveq G \conserveq G_2$,
    \item $|\NF_{G_{\ddiv}}| < |\NF_{(G_1)_{\ddiv}}|$.
  \end{enumerate}
\end{lemma}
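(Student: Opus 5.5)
The plan is to join $s$ and $v$ inside $G_2$, copy the joining paths into $G_1$ one edge at a time, and keep track of $|\NF_{(\cdot)_{\ddiv}}|$ with Lemma~\ref{lem:NF-number}.

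\emph{Step 1 (choose the join).} Since $G_1 \conserveq G_2$ and $s \barrow{\ast}{G_1} v$, we also have $s \barrow{\ast}{G_2} v$. Confluence of $G_2$ then gives a vertex $w$ and finite sequences $s \rarrow{\ast}{G_2} w$ and $v \rarrow{\ast}{G_2} w$. We fix these sequences and add them in two phases.

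\emph{Step 2 (first phase).} Put $G' := G_1 \cup \lb s \rarrow{\ast}{G_2} w \rb$. Since $s \in \pnf_{G_1}$, Lemma~\ref{lem:exp-class2d} gives $G_1 \conserveq G' \conserveq G_2$ and $w \in \pnf_{G'}$. The edges are added left to right, and by the induction inside Case~1 of that lemma the source of each new edge is a pseudo-normal form of the current graph. Hence each single expansion falls into case (b), (d) or (e) of Table~\ref{tab:expansion-PNF}, never (a). By Lemma~\ref{lem:NF-number}, $|\NF_{\ddiv}|$ does not increase in this phase.

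\emph{Step 3 (second phase).} Put $G := G' \cup \lb v \rarrow{\ast}{G_2} w \rb$. Since $w \in \pnf_{G'}$, Case~2 of Lemma~\ref{lem:exp-class2d} applies: the edges are added from the $w$ end backwards, and we obtain $G' \conserveq G \conserveq G_2$. Here each new edge has its target already in the current vertex set, so again case (a) never occurs and the count does not increase. Finiteness of $G$ is clear, since $G$ adds only finitely many edges to the finite $G_1$. Property (1) then follows by transitivity of $\conserveq$, which is immediate from the definition.

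\emph{Step 4 (strictness).} This is the main obstacle. Since no step is of type (a), Lemma~\ref{lem:NF-number} shows that it suffices to find at least one step of type (e). I would argue by contradiction. Suppose every step is of type (b), (c) or (d). Type (d) steps keep all old pseudo-normal forms, and type (c) steps change nothing. A type (b) step only moves the normal class from its source to a fresh vertex that is reachable from that source. Along the whole construction the normal class containing $s$ is therefore transported along the path towards $w$, and likewise the normal class containing $v$. Moreover, two distinct normal classes are never merged without an (e) step. At the end both transported classes would have to be the normal class of $w$ in $G$: $w$ is reachable from both, and from a pseudo-normal form only its own strongly connected class is reachable. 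This contradicts $\classD{s}_{\stcongstarss{G_1}} \neq \classD{v}_{\stcongstarss{G_1}}$.

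To make this rigorous, I would maintain an injection from $\NF_{G_{\ddiv}}$ into $\NF_{(G_1)_{\ddiv}}$, sending each final normal class to the original class it descends from. This can be tracked step by step using Lemma~\ref{lem:exp-class}(\ref{sublem:exp-class_1})--(\ref{sublem:exp-class_3}). Both $\classD{s}$ and $\classD{v}$ are sent to the same final class of $w$, so the map is not surjective, which gives property (2). If this bookkeeping becomes messy, the fallback is to locate the last edge of the second phase, the one entering the already-existing part reachable from $s$, and check directly that it is of type (e), i.e.\ that its source is pseudo-normal while its target does not converge back to it.
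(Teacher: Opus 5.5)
Your construction (add both halves of a join and then count) can be made to work, but the argument you give for the count has a genuine gap. The Step~2 invariant is that every source added in the first phase is a pseudo-normal form of the current graph, so no step is of type~(a). That invariant does not hold here. In Case~1 of Lemma~\ref{lem:exp-class2d}, the fact $v_i\in\pnf_{G'_i}$ comes from Lemma~\ref{lem:exp-class2}(\ref{lem:exp-class2-a}), which requires the current graph to be confluent (see the remark after that lemma). Here $G_1$ is an arbitrary finite ARS, and in the application (Lemma~\ref{lem:merge}) it is precisely non-confluent.

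Concretely, take $G_1=\{c\to s,\ c\to v,\ c\to a,\ a\to z\}$ and $G_2=G_1\cup\{s\to a,\ a\to y,\ y\to z,\ v\to y\}$ with $y$ fresh. Every vertex of $G_2$ reaches $z$, so $G_2$ is confluent. $V_{G_1}$ is connected, so $G_1\conserveq G_2$. Also $s,v$ are distinct normal forms of $G_1$ with $s\barrow{\ast}{G_1}v$. Use the join $s\to a\to y\leftarrow v$, which is even the shortest one. Adding $s\to a$ is of type~(e). After that, $a\to_{G_1}z$ makes $a$ non-pseudo-normal while $y$ is fresh, so adding $a\to y$ is of type~(a). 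The number of pseudo-normal classes runs through $3,2,3,2$.

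Hence Step~4's reduction to ``find one (e) step'' fails: you would need more (e) steps than (a) steps. The step-by-step injection fails too, since the class created in an (a) step descends from no class of $G_1$. The claim $w\in\pnf_{G'}$ can fail for the same reason, e.g.\ if $w\in V_{G_1}$ has a $G_1$-edge to a normal form. That one is harmless: property~(1) follows from Proposition~\ref{prop:conserv} applied edge by edge, since each added edge has an endpoint in the current vertex set.

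There are two ways to close the gap.

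\emph{The paper's route.} Add a single edge $v\to v'$ of the join from the pseudo-normal endpoint, and stop as soon as that step is of type~(e). In the remaining cases (b) and (d), check directly with Lemma~\ref{lem:exp-class} that $s,v'\in\pnf_{G'}$ with distinct classes, then induct on the length of the join. Because it never continues past an (e) step, no confluence of $G_1$ is needed.

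\emph{Keeping your construction.} Count globally in the finite graph $G=G_1\cup P$, where $P$ is the set of added path edges. Then $\NF_{G_{\ddiv}}$ is the set of bottom strongly connected components of $G$.
\begin{itemize}
\item A bottom component of $G$ meeting $V_P$ contains $w$, since every vertex of $P$ reaches $w$. So there is at most one such component.
\item A bottom component of $G$ disjoint from $V_P$ contains no endpoint of a $P$-edge. So every $G$-path inside it uses only $G_1$-edges, and it is a bottom component of $G_1$ disjoint from $V_P$.
\item $\classD{s}_{\stcongstarss{G_1}}\neq\classD{v}_{\stcongstarss{G_1}}$ are bottom components of $G_1$ that both meet $V_P$.
\end{itemize}
Together these give $|\NF_{G_{\ddiv}}|\le|\NF_{(G_1)_{\ddiv}}|-2+1$. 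This version avoids both the induction and the case analysis of Table~\ref{tab:expansion-PNF}.
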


\begin{proof}
  By confluence of $G_2$, there exists a join sequence
  $s \rarrow{\ast}{G_2} w \larrow{\ast}{G_2} v$ for some $w \in V_{G_2}$.
  We prove the claim by induction on the shortest total length $n$
  of such a join sequence.

  Since 
  $\classD{s}_{\stcongstarss{G}} \neq \classD{v}_{\stcongstarss{G}}$,
  we must have $s \neq w$ or $w \neq v$.
  By the symmetry of the assumptions on $s$ and $v$,
  we can assume $w \neq v$ without loss of generality.
  Thus, the sequence can be written as
  $s \rarrow{\ast}{G_2} w \larrow{\ast}{G_2} v' \larrow{}{G_2} v$
  for some $v' \in V_{G_2}$, where the remaining join sequence
  between $s$ and $v'$ has length $n-1$.

  Let $G' := G_1 \cup (v \to v')$.
  Since $v \in \pnf_{G_1}$, we obtain
  (1') $G_1 \conserveq G' \conserveq G_2$ by Proposition~\ref{prop:conserv}, and
  (2') $|\NF_{G'_{\ddiv}}| \leq |\NF_{(G_1)_{\ddiv}}|$ by Lemma~\ref{lem:NF-number}.

  We now consider the following two cases based on $v'$:
  \begin{description}
    \item[\textbf{Case where $v' \in V_{G_1}$ and $v' \not\Rrightarrow_{G_1} v$:}]
    Since $v \in \pnf_{G_1}$, Lemma~\ref{lem:NF-number}
    immediately yields $|\NF_{(G_1)_{\ddiv}}| > |\NF_{G'_{\ddiv}}|$.
    Thus, the claim holds by setting $G := G'$.

    \item[\textbf{Otherwise ($v' \notin V_{G_1}$ or $v' \Rrightarrow_{G_1} v$):}]
    To apply the induction hypothesis to $G'$ with respect to
    $s$ and $v'$, we verify the following four properties:
    \begin{description}
      \item[\textbf{Property 1:}] $s \in \pnf_{G'}$.
      Since $\classD{s}_{\stcongstarss{G_1}} \neq \classD{v}_{\stcongstarss{G_1}}$
      and $s \in \pnf_{G_1}$,
      we immediately have
      \begin{align}
      & s \mathrel{\not\rarrow{\ast}{G_1}} v.
        \label{eq:abs-edge-extension2}
      \end{align}
      Thus, applying Lemma~\ref{lem:exp-class}(\ref{sublem:exp-class_1}) with $u:=s$
      gives $s \in \pnf_{G'}$.

      \item[\textbf{Property 2:}] $v' \in \pnf_{G'}$.
      If $v' \notin V_{G_1}$,
      this holds by applying Lemma~\ref{lem:exp-class}(\ref{sublem:exp-class_3})
      with $u := v'$.
      If $v' \Rrightarrow_{G_1} v$, the property follows
      by applying (\ref{sublem:exp-class_1}) and (\ref{sublem:exp-class_2})
      of Lemma~\ref{lem:exp-class} with $u := v'$.

      \item[\textbf{Property 3:}] $s \barrow{\ast}{G'} v'$.
      This holds immediately from the assumption $s \barrow{\ast}{G_1} v$
      and the new edge $v \rarrow{}{G'} v'$.

      \item[\textbf{Property 4:}]
      $\classD{s}_{\stcongstarss{G'}} \neq \classD{v'}_{\stcongstarss{G'}}$.
      To establish the claim,
      we show $s \mathrel{\not\rarrow{\ast}{G'}} v'$ by contradiction.
      Suppose $s \rarrow{\ast}{G'} v'$.
      \begin{itemize}
      \item If $v' \notin V_{G_1}$,
        any sequence $s \rarrow{\ast}{G'} v'$ must utilize
        the new edge $v\to v'$.
        Since $v'$ has no outgoing edges in $G'$, the sequence must be of
        the form $s \rarrow{\ast}{G_1} v \rarrow{}{G'} v'$,
        which contains $s \rarrow{\ast}{G_1} v$.
      \item If $v' \Rrightarrow_{G_1} v$, then $v' \rarrow{\ast}{G_1} v$ by
        Proposition~\ref{prop:pnf}(\ref{subprop:pnf-1}).
        From the contradictory assumption $s\rarrow{\ast}{G'} v'$,
        either $s\rarrow{\ast}{G_1} v'$
        or the reduction sequence utilizes the new edge $v \to v'$.
        In the former case, combined with $v' \rarrow{\ast}{G_1} v$,
        we immediately obtain $s \rarrow{\ast}{G_1} v$.
        In the latter case, the sequence $s\rarrow{\ast}{G_1} v\rarrow{}{G'} v'$
        exists, which also contains $s \rarrow{\ast}{G_1} v$.
      \end{itemize}
      In both cases, we obtain $s \rarrow{\ast}{G_1} v$,
      contradicting \eqref{eq:abs-edge-extension2}.
      Thus, we have $s \mathrel{\not\rarrow{\ast}{G'}} v'$,
      which implies the claim.
    \end{description}
    Therefore, by the induction hypothesis, there exists an ARS $G$ satisfying
    (1'') $G' \conserveq G \conserveq G_2$ and
    (2'') $|\NF_{G_{\ddiv}}| < |\NF_{G'_{\ddiv}}|$.

    Since $G_1 \conserveq G' \conserveq G \conserveq G_2$,
    we obtain Property~(1): $G_1 \conserveq G \conserveq G_2$.
    Finally, combining (2'') with (2') shows Property~(2):
    $|\NF_{G_{\ddiv}}| < |\NF_{G'_{\ddiv}}| \leq |\NF_{(G_1)_{\ddiv}}|$.
  \end{description}
\end{proof}

\begin{lemma}\label{lem:merge}
  Let $G'$ be a finite ARS, and $G_2$ be a confluent ARS.
  If $G' \conserveq G_2$, then there exists a confluent finite ARS $G$
  with $G' \conserveq G \conserveq G_2$.
\end{lemma}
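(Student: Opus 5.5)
We argue by induction on $|\NF_{G'_{\ddiv}}|$, which is finite because $G'$ is finite. Since $G'$ is finite, it is looping, so $G'_{\ddiv}$ is terminating by Proposition~\ref{prop:termination-looping}(\ref{prop:termination-looping-3}). By Proposition~\ref{prop:looping_pnf}, every vertex of $G'$ therefore reaches some pseudo-normal form.

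The first step is a criterion for confluence of a looping ARS. We claim that $G'$ is confluent as soon as each $\barrow{\ast}{G'}$-component contains exactly one normal class of $G'_{\ddiv}$. Indeed, if $u \barrow{\ast}{G'} u'$, let $u \rarrow{\ast}{G'} s$ and $u' \rarrow{\ast}{G'} s'$ with $s,s' \in \pnf_{G'}$. Then $s$ and $s'$ lie in the same component, so by hypothesis $\classD{s}_{\stcongstarss{G'}} = \classD{s'}_{\stcongstarss{G'}}$. This gives $s \rarrow{\ast}{G'} s'$, and hence $u \downarrow_{G'} u'$. So if the hypothesis holds, we take $G := G'$ and are done.

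Otherwise, some component of $G'$ contains two distinct normal classes. Pick $s, v \in \pnf_{G'}$ with $s \barrow{\ast}{G'} v$ and $\classD{s}_{\stcongstarss{G'}} \neq \classD{v}_{\stcongstarss{G'}}$. These are exactly the hypotheses of Lemma~\ref{lem:abs-edge-extension2}, applied with $G_1 := G'$ and the confluent $G_2$. The lemma yields a finite $G''$ with $G' \conserveq G'' \conserveq G_2$ and $|\NF_{G''_{\ddiv}}| < |\NF_{G'_{\ddiv}}|$. The induction hypothesis applied to $G''$ gives a confluent finite $G$ with $G'' \conserveq G \conserveq G_2$. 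Transitivity of $\conserveq$ then gives $G' \conserveq G$. Transitivity holds because inclusions compose, and restricting $\barrow{\ast}{G}$ to $V_{G'}\times V_{G'}$ equals the restriction of $\barrow{\ast}{G''}$, which equals $\barrow{\ast}{G'}$.

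The base of the induction needs no separate treatment. If $|\NF_{G'_{\ddiv}}|$ is minimal, for instance $0$ or equal to the number of components, then no offending pair $s,v$ can exist. In that case the criterion above applies directly. Note also that $|\NF_{G'_{\ddiv}}| \geq 1$ for every nonempty component, since every vertex reaches a pseudo-normal form.

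The main obstacle is the confluence criterion for looping ARSs. One must check that reaching a pseudo-normal form within the same normal class indeed yields $s \rarrow{\ast}{G'} s'$. This follows from Proposition~\ref{prop:Gddiv}(\ref{subprop:Gddiv-2}) together with the definition of strong connectivity. Apart from that, the argument is a routine well-founded induction driven by Lemma~\ref{lem:abs-edge-extension2}.
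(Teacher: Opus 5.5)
Your proof is correct and follows the paper's argument. It uses the same induction on $|\NF_{G'_{\ddiv}}|$, driven by Lemma~\ref{lem:abs-edge-extension2}, with transitivity of $\conserveq$ closing the step. The only difference is cosmetic: the paper splits on whether $G'$ is confluent and, from a non-joinable pair $t \barrow{\ast}{G'} u$, extracts pseudo-normal forms $s,v$ in distinct $\stcongstarss{G'}$-classes, whereas you prove the contrapositive directly (one normal class per component implies confluence).
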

\begin{proof}
  We prove the claim by induction on $|\NF_{G'_{\ddiv}}|$,
  which is finite due to the finiteness of $G'$.

  If $G'$ is confluent, we are done
  by taking $G := G'$.
  Otherwise, there exist $t,u \in V_{G'}$ such that $t \barrow{\ast}{G'} u$
  and $t \mathrel{\not\,\downarrow_{G'}} u$.
  By Proposition~\ref{prop:looping_pnf}, there exist $s,v \in \pnf_{G'}$ such
  that $t \rarrow{\ast}{G'} s$ and  $u \rarrow{\ast}{G'} v$.
  Combining this with $t \mathrel{\not\,\downarrow_{G'}} u$,
  we obtain $s \not\rarrow{\ast}{G'} v$, and hence,
  $\classD{s}_{\stcongstarss{G'}} \neq \classD{v}_{\stcongstarss{G'}}$.

  This fulfills the assumptions $s, v \in \pnf_{G'}$, $s \barrow{\ast}{G'} v$,
  and $\classD{s}_{\stcongstarss{G'}} \neq \classD{v}_{\stcongstarss{G'}}$
  of Lemma~\ref{lem:abs-edge-extension2}.
  Hence,
  there exists a finite ARS $G''$ with
  $G' \conserveq G'' \conserveq G_2$ and $|\NF_{G''_{\ddiv}}| < |\NF_{G'_{\ddiv}}|$.
  Thus, by the induction hypothesis applied to $G''$, there exists a confluent finite ARS $G$
  such that $G'' \conserveq G \conserveq G_2$.
  Since $G' \conserveq G''$, we conclude $G' \conserveq G \conserveq G_2$.
\end{proof}

\section{Spanning graph and its normalization}
\label{sec:edge_properties}

We define spanning graphs and their normalization steps $\vdash_n$.
Then, we investigate their properties.

\subsection{Spanning graph and its properties}

This subsection defines an acyclic and strongly deterministic quotient ARS,
which we call a \emph{spanning graph} $(G, T_1, T_2)$. 

\begin{definition}\label{def:spanning_graph}
  Given ARSs $G$ and $\Vrt$ and finite ARSs $T_1, T_2$,
  a triple $(G, T_1, T_2)$ is a \emph{spanning graph} with respect to $\Vrt$
  if $G \conserveq \Vrt$, $V_{T_1\cup T_2} \subseteq V_G$,
  $\rarrow{}{T_1} \cap \rarrow{}{T_2} = \emptyset$,
  and $(T_1\cup T_2) / G$ is acyclic and strongly deterministic.
\end{definition}

We define a normalization relation and a monotonically decreasing measure
on spanning graphs $(G, T_1, T_2)$ to show that the normalizing steps terminate.
Then, confluence of the ARS $G' \cup T'_1 \cup T'_2$ of a normalized
spanning graph $(G', T'_1, T'_2)$ is obtained.
Fig.~\ref{fig:seq-edge-norm} illustrates the weight construction.
Since $(T_1 \cup T_2) / G$ is finite, acyclic, and strongly deterministic,
the weight is well-defined.

\begin{definition}[Weight]\label{def:fixed-weight}
Let $G,T_1,T_2$ be ARSs with $V_{T_1} \cup V_{T_2} \subseteq V_{G}$.
For a vertex $t\in V_G$, the weight $w_{\mathcal S}(t)$ is
\[
w_{\mathcal S}(t) =
\begin{cases}
  w_{\mathcal S}(u)+1 &
  \text{if } \exists u \in V_G . \ \classB{t}_{\barrow{\ast}{G}} \to_{T_1/G} \classB{u}_{\barrow{\ast}{G}}
  \\
  \max\{ w_{\mathcal S}(u') \mid
  \classB{u}_{\barrow{\ast}{G}} \barrow{\ast}{T_1/G} \classB{u'}_{\barrow{\ast}{G}} \} + 1 &
  \text{if } \exists u \in V_G . \ \classB{t}_{\barrow{\ast}{G}} \to_{T_2/G} \classB{u}_{\barrow{\ast}{G}}
  \\
  0 & \text{otherwise}
\end{cases}
\]
\end{definition}
In general, $w_{\mathcal S}$ is not well-defined, but it is so for a spanning graph.
\begin{proposition}\label{prop:weight}
  If ${\mathcal S} = (G, T_1, T_2)$ is a spanning graph,
  $w_{\mathcal S}$ is well-defined and its minimum value is $0$.
\end{proposition}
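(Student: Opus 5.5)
The plan is to define $w_{\mathcal S}$ by well-founded recursion on the finite quotient $(T_1\cup T_2)/G$. The weight of $t$ depends only on the class $\classB{t}_{\barrow{\ast}{G}}$, and each recursive call refers to vertices $u$ or $u'$ in classes strictly later along $(T_1\cup T_2)/G$. So the main work is to show three things: that this relation terminates, that each clause is single-valued, and that the clauses do not overlap.

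\textbf{Step 1: termination and finiteness.} Since $T_1,T_2$ are finite and $V_{T_1\cup T_2}\subseteq V_G$, only finitely many classes carry $T_1\cup T_2$ edges. By Definition~\ref{def:spanning_graph}, $(T_1\cup T_2)/G$ is acyclic, so on these finitely many classes it is terminating. I will proceed by induction on the length of the longest $(T_1\cup T_2)/G$-path starting at $\classB{t}$, which is finite by acyclicity and finiteness.

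\textbf{Step 2: the clauses are single-valued and disjoint.} Strong determinism of $(T_1\cup T_2)/G$ says that at most one edge of $T_1\cup T_2$ leaves each class. Hence at most one of the first two cases applies to $t$, and the target class $\classB{u}$ is unique. The first clause is therefore well-defined. For the second clause I read $w_{\mathcal S}(u')$ as ranging over the $\barrow{\ast}{T_1/G}$-component of $\classB{u}$. This is the step I expect to be the main obstacle. Every such $u'$ must have strictly smaller rank than $t$, so that the induction applies. But the $T_1/G$-component of $\classB{u}$ is not just the forward part: it is an undirected component and may contain classes upstream of $\classB{t}$.

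To handle this, I would argue as follows. The component contains $\classB{t}$ only if $\classB{t}$ is linked to $\classB{u}$ by $T_1/G$ edges. The edge leaving $\classB{t}$ is the unique outgoing edge from that class, and it is a $T_2$ edge. Its direction is away from $t$, so $\classB{t}$ can enter the component only through incoming $T_1$ edges. Such a connection would create an undirected cycle mixing a $T_2$ edge with $T_1$ edges. Using determinism, which says each node has out-degree at most one so that each component is an in-tree, a path back to $\classB{t}$ would force a directed cycle, contradicting acyclicity. More carefully: in a deterministic graph every undirected component has at most one "root", so the component of $\classB{u}$ under $T_1/G$ lies within the in-tree that feeds into the root reached from $\classB{u}$.

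The induction measure therefore has to be adjusted. Simple longest-path rank is not enough, because members $u'$ of the component can be upstream of $u$. I would instead use induction over the finite acyclic graph $(T_1\cup T_2)/G$ taken with a measure on $T_2$-depth: first define $w$ on the classes with no outgoing $T_2$ edge reachable, then proceed outward. I would need to verify that no $u'$ in the $T_1$-component of $\classB{u}$ reaches $\classB{t}$ by a directed path. Such a directed path would give $\classB{t}\to_{T_2}\classB{u}\barrow{\ast}{T_1}\classB{u'}\rarrow{+}{}\classB{t}$, and determinism turns this undirected loop into a directed cycle, which acyclicity forbids.

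\textbf{Step 3: minimum value.} Finally, values are natural numbers built by adding $1$ to earlier values, so $w_{\mathcal S}\ge 0$. The minimum $0$ is attained because $V_G$ is nonempty and the classes form a finite acyclic structure: any terminal class has no outgoing $T_1$ or $T_2$ edge, and every $t$ in it falls into the "otherwise" clause.
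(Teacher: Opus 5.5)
Your plan is correct, and at its crucial point it is more careful than the paper's argument. The paper only notes that $(T_1\cup T_2)/G$ is a finite reversed rooted forest, that strong determinism keeps the clauses of Definition~\ref{def:fixed-weight} from overlapping, and that $w_{\mathcal S}$ can therefore be defined ``backward from the reversed roots''. It does not address the point you noticed. The maximum in the $T_2$-clause ranges over the whole undirected $T_1/G$-component of $\classB{u}$. That component may contain classes farther from the root than $\classB{t}$, for example $\classB{a}\to_{T_1/G}\classB{b}\to_{T_1/G}\classB{u}$. So induction on distance to the root does not justify the recursion. Your lexicographic measure ($T_2$-depth first, then the number of $T_1$-steps before the next $T_2$-edge) is the right repair.

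What needs fixing is the obligation you single out. Showing that no $u'$ in the component reaches $\classB{t}$ by a directed path is not what your measure needs. On its own it would not even rule out circularity, because the recursion for $w_{\mathcal S}(u')$ can leave the component through the $T_2$-edge at the component's root and continue in another component.

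What you actually need is $d_2(u')<d_2(t)$, where $d_2(c)$ is the number of $T_2/G$-edges on the unique maximal directed path from $c$. This follows at once from determinism. If $\classB{a}\to_{T_1/G}\classB{b}$, the path from $\classB{a}$ is $\classB{a}$ followed by the path from $\classB{b}$, so $d_2$ agrees on the two classes. Hence $d_2$ is constant on $\barrow{\ast}{T_1/G}$-components, while $d_2(t)=d_2(u)+1$.

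This one-line invariance replaces your undirected-cycle and in-tree discussion. It also shows that the component contains neither $\classB{t}$ nor any class upstream of $\classB{t}$, contrary to your remark. The component only contains classes attached on the side of $\classB{u}$ through $T_1$-edges. Your treatment of the minimum value $0$ matches the paper's.
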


\begin{proof}
  Since ${\mathcal S} = (G, T_1, T_2)$ is a spanning graph,
  $(T_1 \cup T_2) / G$ is finite, acyclic, and strongly deterministic;
  hence, it forms a finite reversed rooted forest.
  This implies that for any reversed root $\classB{v}_{\barrow{\ast}{G}}$ in $(T_1 \cup T_2) / G$,
  no further transitions exist, and thus we have
  $w_{\mathcal S}(v) = 0$ by the ``otherwise'' case of Definition~\ref{def:fixed-weight}.

  Since $(T_1 \cup T_2) / G$ is strongly deterministic, the cases in Definition~\ref{def:fixed-weight} do not overlap.
  Therefore, we can inductively define $w_{\mathcal S}$ backward from the reversed roots,
  which ensures that $w_{\mathcal S}(t) = w_{\mathcal S}(t')$ whenever $t \barrow{\ast}{G} t'$.
\end{proof}

\begin{figure}
  \begin{center}
    \begin{minipage}{0.59\textwidth}
      \begin{center}
  \begin{tikzpicture}[line width=0.07em, x=5, y=5, scale=1.2]
    \path pic at (0,0) {simplecircle};  
    \path pic at (10,0) {simplecircle}; 
    \path pic at (10,-10) {simplecircle}; 
    \path pic at (20,0) {simplecircle}; 
    \path pic at (20,-10) {simplecircle}; 
    \path pic at (30,-10) {simplecircle}; 

    \node at (1,2)(ua) {\small $\ $};
    \node[color=red] at (-2,-1) {\small $w=4$};
    \node at (11,2)(ub) {\small $\ $};
    \node[color=red] at (8,-1) {\small $w=4$};
    \node at (12,-11)(uc) {\small $\ $};
    \node at (8,-11)(vc) {\small $\ $};
    \node at (9,-9)(vc2) {\small $\ $};
    \node[color=red] at (9,-12) {\small $w=3$};
    \node at (21,2)(ud) {\small $\ $};
    \node[color=red] at (18,-1) {\small $w=2$};
    \node at (22,-11)(ue) {\small $\ $};
    \node at (18,-11)(ve) {\small $\ $};
    \node at (19,-9)(ve2) {\small $\ $};
    \node[color=red] at (19,-12) {\small $w=1$};
    \node at (28,-11)(vf) {\small $\ $};
    \node[color=red] at (29,-12) {\small $w=0$};

    \path[->,color=blue] (ua) edge node[color=blue,below,xshift=-4,yshift=0]  {\scriptsize $T_1$} (vc);

    \path[->,color=blue] (ub) edge node[color=blue,below,xshift=-7,yshift=0]  {\scriptsize $T_1$} (vc2);

    \path[->,color=darkgreen] (uc) edge node[below,xshift=0,yshift=0]  {\scriptsize $T_2$} (ve);

    \path[->,color=blue] (ud) edge node[color=blue,below,xshift=-7,yshift=0]  {\scriptsize $T_1$} (ve2);

    \path[->,color=blue] (ue) edge node[color=blue,below,xshift=0,yshift=0]  {\scriptsize $T_1$} (vf);

  \end{tikzpicture}
      \end{center}
    \end{minipage}

    \medskip
    \caption{A spanning graph and the weights}\label{fig:seq-edge-norm}
  \end{center}
\end{figure}

\begin{proposition}\label{prop:rewrite-decrease}
  Let ${\mathcal S} = (G,T_1,T_2)$ be a spanning graph, and suppose $s \rarrow{\ast}{G\cup T_1\cup T_2} t$.
  Then, the following properties hold:
  \begin{enumerate}
    \item $w_{\mathcal S}(s) \geq w_{\mathcal S}(t)$.
    \item If $w_{\mathcal S}(s) = w_{\mathcal S}(t)$, then $s \rarrow{\ast}{G} t$.
    \item If $w_{\mathcal S}(s) > w_{\mathcal S}(t)$,
    then there exists a sequence $u \to_{T_1\cup T_2} u'$
    such that $s \rarrow{\ast}{G} u \to_{T_1\cup T_2} u' \rarrow{\ast}{G\cup T_1\cup T_2} t$
    and $w_{\mathcal S}(s) > w_{\mathcal S}(u')$.
  \end{enumerate}
\end{proposition}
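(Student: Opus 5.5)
We argue by induction on the length of the sequence $s \rarrow{\ast}{G\cup T_1\cup T_2} t$, and the key preliminary is a one-step lemma. Recall from the proof of Proposition~\ref{prop:weight} that $w_{\mathcal S}$ is constant on $\barrow{\ast}{G}$-classes. Hence a single step $x \to_G y$ gives $w_{\mathcal S}(x) = w_{\mathcal S}(y)$. For a step $x \to_{T_1\cup T_2} y$, we first note that $\classB{x} \neq \classB{y}$: otherwise the quotient $(T_1\cup T_2)/G$ would contain no edge for it, but acyclicity plus the requirement that quotient edges join distinct classes means we must check this separately. If $\classB{x}=\classB{y}$, the edge simply contributes nothing to the quotient, and I will handle it as a ``$G$-like'' step with equal weight. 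When $\classB{x} \neq \classB{y}$, there is a quotient edge $\classB{x}\to\classB{y}$, and by strong determinism it is the unique outgoing edge of $\classB{x}$. So the first or second clause of Definition~\ref{def:fixed-weight} applies with $u:=y$. In the $T_1$ case, $w_{\mathcal S}(x)=w_{\mathcal S}(y)+1$. In the $T_2$ case, $w_{\mathcal S}(x)$ is one more than the maximum over the $T_1/G$-connected component of $\classB{y}$, which includes $y$ itself, so $w_{\mathcal S}(x) \ge w_{\mathcal S}(y)+1$. Either way a genuine quotient step strictly decreases the weight.

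With the one-step lemma, (1) follows immediately by chaining along the sequence. For (3), take the first step in the sequence that strictly decreases the weight. All earlier steps preserve the weight, so by the one-step lemma they are $G$-steps or $T$-steps inside a single $G$-class. This yields $s \rarrow{\ast}{} u \to_{T_1\cup T_2} u'$ with $w_{\mathcal S}(u')<w_{\mathcal S}(u)=w_{\mathcal S}(s)$.

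For (2), equal weights force every step to be weight-preserving, since the weight never increases. So every step is either a $G$-step or a $T$-step within one $\barrow{\ast}{G}$-class. The main obstacle is converting these intra-class $T$-steps, and for (3) the prefix before $u$, into genuine $\rarrow{\ast}{G}$-paths, because $\barrow{\ast}{G}$ does not give a forward path. I would first try to show that such intra-class $T$-edges cannot exist. For $T_1$ this is delicate, since a $T_1$-edge inside one $G$-component is not excluded by acyclicity of the quotient, as the quotient simply drops it. I would therefore check whether the definitions intend $T_i$-edges to connect different $G$-classes, or whether confluence of $G$ (inherited from $G\conserveq\Vrt$ when $\Vrt$ is confluent) turns $\barrow{\ast}{G}$ into joinability. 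If intra-class edges are ruled out, then every weight-preserving step is a $G$-step, and (2) follows along with the $\rarrow{\ast}{G}$ prefix in (3).

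If intra-class edges cannot be excluded from the definitions alone, the fallback is to use strong determinism to show such an edge would be an outgoing quotient edge from its class, which is a contradiction. I would then read the statement's $\rarrow{\ast}{G}$ as holding under that implicit assumption.
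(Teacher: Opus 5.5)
\paragraph{Where the proposal stands.} Your single-step analysis and the chaining argument match the paper's induction on the length of the sequence. A $G$-step preserves $w_{\mathcal S}$. A $T_1$-step between distinct $\barrow{\ast}{G}$-classes lowers it by exactly one. A $T_2$-step between distinct classes lowers it by at least one, because the maximum ranges over a set containing $w_{\mathcal S}(y)$. This gives~(1) and the existence of the decreasing step in~(3).

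\paragraph{The gap.} The genuine gap is the case you flagged, and your proposed resolution fails there. By Definition~\ref{def:quotientARS}, a quotient edge requires $U_1\neq U_2$. So an edge $(x,y)\in T_1\cup T_2$ with $x\barrow{\ast}{G}y$ produces no edge of $(T_1\cup T_2)/G$. Acyclicity says nothing about it, and strong determinism only requires it to be the \emph{only} $T$-edge leaving its class, which it can be. So it is not ``an outgoing quotient edge'', and your fallback yields no contradiction. Confluence of $G$ does not help either, since it gives joinability, not a forward $G$-path. In fact, with the literal definitions claim~(2) is false. Let $G=\langle\{a,b\},\{(b,a)\}\rangle$ (which is confluent), $\Vrt:=G$, $T_1:=\langle\{a,b\},\{(a,b)\}\rangle$, and $T_2:=\langle\emptyset,\emptyset\rangle$. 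This is a spanning graph with $w_{\mathcal S}(a)=w_{\mathcal S}(b)=0$ and $a\to_{T_1}b$, yet $a\rarrow{\ast}{G}b$ does not hold.

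\paragraph{How to repair (3).} Here the definitions do suffice, once you apply strong determinism correctly. Every weight-preserving step stays inside one $\barrow{\ast}{G}$-class, so the whole prefix from $s$ to $u$ lies in $\classB{s}$. Since $(u,u')$ is a $T$-edge with source in $\classB{s}$, no other edge of $T_1\cup T_2$ has its source there. Hence the prefix consists of $G$-steps, and $s\rarrow{\ast}{G}u$.

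\paragraph{How to repair (2).} Here you must add an explicit hypothesis: no edge of $T_1\cup T_2$ connects two $\barrow{\ast}{G}$-equivalent vertices. Equivalently, read acyclicity of $(T_1\cup T_2)/G$ as also excluding self-loops. This is the reading the paper itself relies on: in the proof of Lemma~\ref{prop:norm-basic}, $u\barrow{\ast}{G}u'$ for $(u,u')\in T_1\cup T_2$ is said to contradict acyclicity. Under it every $T$-step strictly lowers the weight, so an equal-weight sequence consists of $G$-steps only.

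Also delete the sentence in your first paragraph that asserts and then retracts $\classB{x}\neq\classB{y}$.
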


\begin{proof}
  These claims are proved simultaneously by induction on the length
  of the reduction sequence $s \rarrow{\ast}{G\cup T_1\cup T_2} t$.
  The base case where the length is $0$ (i.e., $s = t$) trivially holds for all properties.
  For the induction step, the properties follow directly
  from Definition~\ref{def:fixed-weight} by analyzing the first step of the edge,
  considering whether it belongs to $G$, $T_1$, or $T_2$.
\end{proof}

\subsection{Normalization Step of Spanning Graphs}\label{subsec:normalization}

This subsection is devoted to the normalization of edges in $T_1$ and $T_2$
on a deterministic and acyclic spanning graph $(G, T_1, T_2)$ with respect to $\Vrt$.

\begin{definition}[Normalization of spanning graphs]\label{def:normalization-step}
  Let $\Vrt$, $G$, $T_1$, and $T_2$ be ARSs such that $G \conserveq \Vrt$
  and $V_{T_1 \cup T_2} \subseteq V_G$.
  A spanning graph $\mathcal{S} = (G,T_1,T_2)$ is \emph{normalizable}
  to $\mathcal{S}' = (G',T'_1,T'_2)$ with respect to $\Vrt$, written as $\mathcal{S} \vdash^{(u,u')}_{(v,v')} \mathcal{S}'$,
  if $u \notin \pnf_G$, $(u,u') \in T_i$ (for $i \in \{1,2\}$),
  and there exist $w, x \in \pnf_G$ and $y \in V_{\Vrt}$ such that
  \begin{itemize}
    \item[(i)] $u \rarrow{\ast}{G} w \rarrow{\ast}{\Vrt} v$,
    \item[(ii)] $u' \barrow{\ast}{G} x$ if $i=1$, and $u' \barrow{\ast}{G\cup T_1} x$ if $i=2$,
    \item[(iii)] $x \rarrow{\ast}{\Vrt} y \larrow{\ast}{\Vrt} v'$,
    \item[(iv)] $G' = G \cup \lb w \rarrow{\ast}{\Vrt} v \rb \cup \lb x
    \rarrow{\ast}{\Vrt} y \rb \cup \lb v' \rarrow{\ast}{\Vrt} y \rb$, and
    \item[(v)] $V_{T'_i} = V_{T_i} \cup \{v,v'\}$,
    ${\to_{T'_i}} = ({\to_{T_i}} \setminus \{(u,u')\}) \cup \{(v,v')\}$, and $T'_{3-i} = T_{3-i}$.
  \end{itemize}
\end{definition}

Fig.~\ref{fig:normalization2} illustrates the normalization step described
in Definition~\ref{def:normalization-step}.
Note that since $x \in \pnf_G$, the last $\barrow{\ast}{G}$ connecting
  to $x$ in condition (ii) simplifies
to $\rarrow{\ast}{G}$.
We may write $\mathcal{S} \vdash_n \mathcal{S}'$ instead of
$\mathcal{S} \vdash^{(u,u')}_{(v,v')} \mathcal{S}'$ when the specific choice of pairs $(u,u')$ and $(v,v')$ is irrelevant.
We say that a spanning graph $(G,T_1,T_2)$ is \emph{normalized} if $u \in \pnf_{G}$ for every $(u,u') \in T_1 \cup T_2$.
Trivially, the normalization relation preserves the finiteness of $G \cup T_1 \cup T_2$.

\begin{figure}
  \begin{center}
    \begin{minipage}{0.49\textwidth}
      \begin{center}
  \begin{tikzpicture}[line width=0.07em, x=5, y=5, scale=1.2]
    \path pic at (0,0) {simplecircle};  
    \path pic at (16,0) {simplecircle}; 
    \node at (1.2,2)(u) {\small $u$};
    \node at (-2,-1.8)(uz) {$w$};
    \node at (2,-6)(v) {\small $v$};

    \node at (15,2)(udn) {\small $u'$};
    \node at (18,-2)(x) {\small $x$};
    \node at (14,-8)(y) {\small $y$};
    \node at (8,-6)(vd) {\small $v'$};

    \path[->,color=blue]
          (u) edge node[color=blue,below] {\scriptsize $T_1$} (udn);
    \path[->]
    (u) edge node [xshift=-4,yshift=2] {\scriptsize $\ast$}
    node [xshift=6,yshift=-4] {\scriptsize $G$}  (uz);
    \path[->,color=red]
    (uz) edge node [above,xshift=2,yshift=-2] {\scriptsize $\ast$}
    node [xshift=-7,yshift=-5] {\scriptsize $G'$}  (v);

          %
          %
          \path[->]
          (udn) edge node [xshift=3,yshift=2] {\scriptsize $\ast$}
          node [xshift=-5,yshift=-3] {\scriptsize $G$} (x);

    \path[->,color=red]
    (x) edge node [above,xshift=-2,yshift=0] {\scriptsize $\ast$}
    node[below,xshift=7,yshift=3] {\scriptsize $G'$} (y);
    \path[->,color=red]
    (vd) edge node[above,xshift=1,yshift=-1] {\scriptsize $\ast$}
    node[below,xshift=-2] {\scriptsize $G'$} (y);

    \path[->,color=red]
          (v) edge node[color=red,below] {\scriptsize $T'_1$} (vd);
  \end{tikzpicture}
        \subcaption{The case $(u,u')\in T_1$ ($i=1$)}
      \end{center}
    \end{minipage}
    \bigskip

    \begin{minipage}{0.59\textwidth}
      \begin{center}
  \begin{tikzpicture}[line width=0.07em, x=5, y=5, scale=1.2]
    \path pic at (0,0) {simplecircle};  
    \path pic at (10,0) {simplecircle}; 
    \path pic at (20,0) {simplecircle}; 
    \path pic at (36,0) {simplecircle}; 

    \node at (1.2,2)(u) {\small $u$};
    \node at (-2,-1.8)(uz) {$w$};
    \node at (2,-6)(v) {\small $v$};
    \node at (8,2)(ud) {\small $u'$};
    \node at (12.5,0)(w0) {$\cdot$};
    \node at (17.5,0)(ud1) {\small $\cdot$};
    \node at (22.5,0)(w1) {\small $\cdot$};
    \node at (27.5,0)(ud2) {};
    \node at (29.5,0)(w2) {};
    \node at (28.5,0)(dot) {\scriptsize $\cdots$};

    \node at (35,2)(udn) {\small $\cdot$};
    \node at (38,-2)(x) {\small $x$};
    \node at (34,-8)(y) {\small $y$};
    \node at (28,-6)(vd) {\small $v'$};

    \path[->,color=blue]
          (u) edge node[color=blue,below] {\scriptsize $T_2$} (ud);
    \path[->]
    (u) edge node [xshift=-4,yshift=2] {\scriptsize $\ast$}
    node [xshift=6,yshift=-4] {\scriptsize $G$}  (uz);
    \path[->,color=red]
    (uz) edge node [above,xshift=2,yshift=-2] {\scriptsize $\ast$}
    node [xshift=-7,yshift=-5] {\scriptsize $G'$}  (v);

    \path[<->]
          (w0) edge node[above] {\scriptsize $\mathit{T_1}$} (ud1);
    \path[<->]
    (w0) edge node [above,xshift=1,yshift=-1] {\scriptsize $\ast$}
    node [below] {\scriptsize $G$}  (ud);

    \path[<->]
          (w1) edge node[above] {\scriptsize $\mathit{T_1}$} (ud2);
    \path[<->]
    (w1) edge node [above] {\scriptsize $\ast$}
    node [below,xshift=2] {\scriptsize $G$}  (ud1);

    \path[<->]
          (w2) edge node[below,above] {\scriptsize $\mathit{T_1}$} (udn);
          \path[->]
          (udn) edge node [xshift=3,yshift=2] {\scriptsize $\ast$}
          node [xshift=-5,yshift=-3] {\scriptsize $G$} (x);

    \path[->,color=red]
    (x) edge node [above,xshift=-2,yshift=0] {\scriptsize $\ast$}
    node[below,xshift=7,yshift=3] {\scriptsize $G'$} (y);
    \path[->,color=red]
    (vd) edge node[above,xshift=1,yshift=-1] {\scriptsize $\ast$}
    node[below,xshift=-2] {\scriptsize $G'$} (y);

    \path[->,color=red]
          (v) edge node[color=red,below] {\scriptsize $T'_2$} (vd);
  \end{tikzpicture}
        \subcaption{The case $(u,u')\in T_2$ ($i=2$)}
      \end{center}
    \end{minipage}
    \caption{Normalization of spanning graphs}
    \label{fig:normalization2}
  \end{center}
\end{figure}

\begin{lemma}\label{prop:norm-basic}\label{lem:reduction-conf}
  Let $(G,T_1,T_2)$ be a spanning graph with $G \conserveq \Vrt$.
  For a normalization $(G,T_1,T_2) \vdash^{(u,u')}_{(v,v')} (G',T'_1,T'_2)$, the following
  properties hold.
  \begin{enumerate}
  \item \label{prop:norm-basic-2}
    $V_{T'_1\cup T'_2} \subseteq V_{G'}$,
  \item\label{prop:norm-basic-0}
  ${\barrow{\ast}{G\cup T_1}}
  \subseteq{\barrow{\ast}{G'\cup T'_1}}$,
  \item\label{prop:norm-basic-3}
  ${\barrow{\ast}{G\cup T_1\cup T_2}}
  \subseteq{\barrow{\ast}{G'\cup T'_1\cup T'_2}}$,
  \item \label{prop:norm-basic-1}
    $G \conserveq G' \conserveq \Vrt$,
  \item \label{lem:reduction-conf-3}
    $v \in \pnf_{G'}$,
  \item \label{lem:reduction-conf-4}
    $G'$ inherits the finiteness and confluence of $G$, and
  \item \label{lem:reduction-conf-2}
    $z \in \pnf_{G'}$ for any
    $z \in \pnf_{G} \setminus
    (\classB{v}_{\barrow{\ast}{G'}} \cup \classB{v'}_{\barrow{\ast}{G'}})$.
  \end{enumerate}
\end{lemma}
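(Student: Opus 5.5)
The plan is to build $G'$ in three stages, each justified by Lemma~\ref{lem:exp-class2d}, and to read off all seven properties from them. Set $G^{(1)} := G \cup \lb w \rarrow{\ast}{\Vrt} v \rb$, $G^{(2)} := G^{(1)} \cup \lb x \rarrow{\ast}{\Vrt} y \rb$, and $G' = G^{(3)} := G^{(2)} \cup \lb v' \rarrow{\ast}{\Vrt} y \rb$.

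\emph{The three stages.} For $G^{(1)}$ we apply Lemma~\ref{lem:exp-class2d} with $G_1 := G$, $G_2 := \Vrt$, using $w \in \pnf_G$. This gives $G \conserveq G^{(1)} \conserveq \Vrt$, $v \in \pnf_{G^{(1)}}$, and inheritance of confluence. For $G^{(2)}$ we need $x \in \pnf_{G^{(1)}}$, and there are two cases.
\begin{itemize}
\item If $x \notin \classB{w}_{\barrow{\ast}{G^{(1)}}}$, this follows from Lemma~\ref{lem:exp-class2d}(\ref{lem:exp-class2d-4}).
\item Otherwise $x$ and $v$ lie in one $G^{(1)}$-component. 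Conservativity forces $x \barrow{\ast}{G} w$, so $x$ and $w$ are pnfs of the same component of $G$. When $G$ is confluent, $x$ and $w$ are then strongly connected, so $x \stcongstarss{} w$ and the needed pnf property is obtained as in Case~1 of Lemma~\ref{lem:exp-class2}.
\end{itemize}
With $x \in \pnf_{G^{(1)}}$, Lemma~\ref{lem:exp-class2d} applies again and yields $y \in \pnf_{G^{(2)}}$. For $G^{(3)}$ we apply Case~2 of Lemma~\ref{lem:exp-class2d}, now with the target $y \in \pnf_{G^{(2)}}$. Chaining the three conservativity statements gives (\ref{prop:norm-basic-1}), and chaining the confluence inheritance gives (\ref{lem:reduction-conf-4}). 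Finiteness is immediate because only finitely many edges are added.

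\emph{Properties (\ref{prop:norm-basic-2})--(\ref{prop:norm-basic-3}).} Property (\ref{prop:norm-basic-2}) holds because $V_{T'_i} = V_{T_i} \cup \{v,v'\}$ and $v, v'$ are vertices of the added paths. For (\ref{prop:norm-basic-0}) and (\ref{prop:norm-basic-3}), the only edge removed is $(u,u')$, so it suffices to reconnect $u$ with $u'$ in $G'\cup T'_1\cup T'_2$. The connecting path is
\[
u \rarrow{\ast}{G} w \rarrow{\ast}{G'} v \rarrow{}{T'_i} v' \rarrow{\ast}{G'} y \larrow{\ast}{G'} x ,
\]
followed by $x \barrow{\ast}{G} u'$ when $i=1$, or by $x \barrow{\ast}{G\cup T_1} u'$ when $i=2$. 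In the case $i=2$ the $T_1$-edges used are untouched, since $T'_1 = T_1$. For (\ref{prop:norm-basic-0}) with $i=2$, the edge $(u,u')$ is not in $T_1$, so nothing needs to be shown.

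\emph{Property (\ref{lem:reduction-conf-3}).} We know $v \in \pnf_{G^{(1)}}$. The later additions start at $x$ and at $v'$. If these lie outside $\classB{v}_{\barrow{\ast}{G^{(j)}}}$, Lemma~\ref{lem:exp-class2d}(\ref{lem:exp-class2d-4}) preserves $v$ as a pnf. If they lie inside, then $v \barrow{\ast}{} y$ in a confluent graph in which both $v$ and $y$ are pnfs, so $v$ and $y$ are strongly connected and $v$ remains a pnf.

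\emph{Property (\ref{lem:reduction-conf-2}).} We apply Lemma~\ref{lem:exp-class2d}(\ref{lem:exp-class2d-4}) at each stage. The vertices $w, v, x, y, v'$ all lie in $\classB{v}_{\barrow{\ast}{G'}} \cup \classB{v'}_{\barrow{\ast}{G'}}$, and components only grow, so a $z$ outside this union was outside the relevant component at every stage.

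\emph{Main obstacle.} The hard part is this repeated case analysis, where a later path lands in the component of an earlier pnf. I would handle it uniformly with a single auxiliary fact: in a confluent ARS, two pnfs in the same $\barrow{\ast}{}$-component are $\stcongstarss{}$-equivalent. That fact follows from confluence together with Proposition~\ref{prop:Gddiv}. Confluence of $G$ itself is only inherited, so where it is needed I would rely on the statement of (\ref{lem:reduction-conf-4}) for the confluent case.
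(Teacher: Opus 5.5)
Your three-stage decomposition via Lemma~\ref{lem:exp-class2d} matches the paper, and so do your arguments for (1)--(4) and (7). The gap is the second bullet of your Stage~2, which is exactly where an extra idea is needed.

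\textbf{Why the second bullet fails.} Strong connectivity of $x$ and $w$ in $G$ does not survive the first stage. Suppose the path $w \rarrow{\ast}{\Vrt} v$ has at least one step and all its vertices after $w$ are fresh. Then in $G^{(1)}$ we have $x \rarrow{\ast}{} w \rarrow{\ast}{} v$, while $v$ has no outgoing edge, so neither $w$ nor $x$ lies in $\pnf_{G^{(1)}}$. Case~1 in the proof of Lemma~\ref{lem:exp-class2}(\ref{lem:exp-class2-a}) concerns a single edge between two vertices already present, so it does not transfer. In this situation neither endpoint of $x \rarrow{\ast}{\Vrt} y$ is known to be a pseudo-normal form of $G^{(1)}$, and Lemma~\ref{lem:exp-class2d} cannot be applied to the second stage. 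Confluence can genuinely be lost there. For example, if $x \rarrow{\ast}{\Vrt} y$ runs through fresh vertices to a fresh $y$, then $v$ and $y$ are distinct sinks of $G^{(2)}$ with common ancestor $x$. Your auxiliary fact about two pseudo-normal forms in one component does not help, because $x$ is a pseudo-normal form of $G$, not of $G^{(1)}$. The same objection applies to the ``inside'' cases in your argument for (5).

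\textbf{The missing idea.} This case never occurs, and ruling it out is where the spanning-graph hypothesis is used; your proposal never invokes it. By (i), $u \rarrow{\ast}{G} w$. By (ii), $u' \barrow{\ast}{G} x$ if $i=1$ (resp.\ $u' \barrow{\ast}{G\cup T_1} x$ if $i=2$). Hence $w \barrow{\ast}{G} x$ would give $u \barrow{\ast}{G} u'$ (resp.\ $u \barrow{\ast}{G\cup T_1} u'$). Together with $(u,u')\in T_i$, this contradicts the acyclicity of $(T_1\cup T_2)/G$; for $i=2$, argue as in the acyclicity part of the proof of Lemma~\ref{lem:reduction-welldef}.

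From $w \not\barrow{\ast}{G} x$ and $G \conserveq G^{(1)}$ you get $x \notin \classB{w}_{\barrow{\ast}{G^{(1)}}}$, so only your first bullet arises. Moreover, $v \barrow{\ast}{G'} w$, $x \barrow{\ast}{G'} v'$ and $G \conserveq G'$ give $v \not\barrow{\ast}{G'} x$ and $v \not\barrow{\ast}{G'} v'$. So (5) follows from $v \in \pnf_{G^{(1)}}$ by two applications of Lemma~\ref{lem:exp-class2d}(\ref{lem:exp-class2d-4}), with no ``inside'' cases at all.

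Your caveat that confluence of $G$ is needed is justified, and the paper's proof shares it. There, $v \in \pnf_{G^{(1)}}$ comes from Lemma~\ref{lem:exp-class2d}(\ref{lem:exp-class2d-2}), whose proof goes through Lemma~\ref{lem:exp-class2}(\ref{lem:exp-class2-a}), and that lemma assumes confluence.
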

\begin{proof}
  (\ref{prop:norm-basic-2}): $V_{T'_1\cup T'_2} \subseteq V_{G'}$ is clear
  from $V_{T_1\cup T_2}\subseteq V_G \subseteq V_{G'}$ and
  the construction of $T'_1$ and $T'_2$.

  (\ref{prop:norm-basic-0}): If $(u,u') \in T_1$, then $(v,v') \in T'_1$
  by condition (v) of Definition~\ref{def:normalization-step}.
  From conditions (i), (ii), and (iii) of Definition~\ref{def:normalization-step},
  we have $u \barrow{\ast}{G'} v$ and $u' \barrow{\ast}{G'} v'$.
  Thus, $u \barrow{\ast}{G'\cup T'_1} u'$ holds.
  If $(u,u') \in T_2$, the claim is trivial since $T_1 = T'_1$
  follows from condition (v) of Definition~\ref{def:normalization-step}.

  (\ref{prop:norm-basic-3}): Since $G\subseteq G'$ and $(T_1\cup T_2)\setminus (T'_1\cup T'_2) = \{(u,u')\}$,
  it suffices to show that $u \barrow{\ast}{G'\cup T'_1\cup T'_2} u'$,
  which follows from $(v,v')\in T'_1\cup T'_2$ and
  (i), (ii), and (iii) of Definition~\ref{def:normalization-step}.

  We will show (\ref{prop:norm-basic-1}), (\ref{lem:reduction-conf-3}),
  (\ref{lem:reduction-conf-4}), and (\ref{lem:reduction-conf-2}),
  From (iv) of Definition~\ref{def:normalization-step},
  $G'$ can be obtained from $G$ by the following steps:
  \[
  \left\{
  \begin{array}{l}
    G_1 := G \cup \lb w \rarrow{\ast}{\Vrt} v \rb, \\
    G_2 := G_1 \cup \lb x \rarrow{\ast}{\Vrt} y \rb, \\
    G'  := G_2 \cup \lb v' \rarrow{\ast}{\Vrt} y \rb. \\
  \end{array}
  \right.
  \]
  where $w, x \in \pnf_G$ and $y\in V_{\Vrt}$ are vertices specified
  in Definition~\ref{def:normalization-step}.
  Moreover,
  \begin{align}
    & w \not\barrow{\ast}{G} x,  \label{eq:norm-basic_w-x_G}
  \end{align}
  otherwise (i) and (ii) imply $u \barrow{\ast}{G} u'$
  which contradicts the acyclicity of $(T_1 \cup T_2)/G$ because $(u,u') \in T_1 \cup T_2$.

  \begin{description}
  \item[Properties of $G_1$:]
  Since $w \in \pnf_G$, we can apply Lemma~\ref{lem:exp-class2d}, which shows that
  \begin{align}
    & \label{eq:norm-basic_3_G1}
        G \conserveq G_1 \conserveq \Vrt \\
    & \label{eq:norm-basic_4_G1}
        v\in\pnf_{G_1} \\
    & \label{eq:norm-basic_5_G1}
        \text{$G_1$ inherits confluence of $G$} \\
    & \label{eq:norm-basic_6_G1}
      \text{$z \in \pnf_{G_1}$ for any $z \in \pnf_{G} \setminus \classB{w}_{\barrow{\ast}{G_1}}$.}
  \end{align}

  \item[Properties of $G_2$:]
  From $x \in \pnf_G$, \eqref{eq:norm-basic_w-x_G}, and \eqref{eq:norm-basic_6_G1},
  we have $x \in \pnf_{G_1}$.
  Thus we can apply Lemma~\ref{lem:exp-class2d}, which shows that
  \begin{align}
    & \label{eq:norm-basic_3_G2}
        G_1 \conserveq G_2 \conserveq \Vrt \\
    & \label{eq:norm-basic_4_G2}
      y \in\pnf_{G_2}  \\
    & \label{eq:norm-basic_5_G2}
          \text{$G_2$ inherits confluence of $G_1$} \\
    & \label{eq:norm-basic_6_G2}
      \text{$z \in \pnf_{G_2}$ for any $z \in \pnf_{G_1} \setminus \classB{x}_{\barrow{\ast}{G_2}}$.}
  \end{align}

  \item[Properties of $G'$:]
  From \eqref{eq:norm-basic_4_G2}, we can apply Lemma~\ref{lem:exp-class2d}, which shows that
  \begin{align}
    & \label{eq:norm-basic_3_Gd}
        G_2 \conserveq G' \conserveq \Vrt \\
    & \label{eq:norm-basic_5_Gd}
        \text{$G'$ inherits confluence of $G_2$} \\
    & \label{eq:norm-basic_6_Gd}
        \text{$z \in \pnf_{G'}$ for any $z \in \pnf_{G_2} \setminus \classB{v'}_{\barrow{\ast}{G'}}$.}
  \end{align}
  \end{description}
  The property (\ref{prop:norm-basic-1}) follows directly
  from \eqref{eq:norm-basic_3_G1}, \eqref{eq:norm-basic_3_G2},
  and \eqref{eq:norm-basic_3_Gd}.
  For (\ref{lem:reduction-conf-3}),
  $v \barrow{\ast}{G_1} w \not\barrow{\ast}{G} x \barrow{\ast}{G'} v'$
  and $G\conserveq G_1 \conserveq G_2 \conserveq G'$ imply
  $v \not\barrow{\ast}{G'} x$ (also $v \not\barrow{\ast}{G_2} x$) and $v \not\barrow{\ast}{G'} v'$.
  Thus, from \eqref{eq:norm-basic_4_G1} and \eqref{eq:norm-basic_6_G2} we have $v \in \pnf_{G_2}$.
  Hence, from \eqref{eq:norm-basic_6_Gd} we obtain  (\ref{lem:reduction-conf-3}): $v \in \pnf_{G'}$.
  The property (\ref{lem:reduction-conf-4}) is clear from \eqref{eq:norm-basic_5_G1},
  \eqref{eq:norm-basic_5_G2}, and \eqref{eq:norm-basic_5_Gd}.

  For (\ref{lem:reduction-conf-2}),
  suppose $z\in \pnf_{G}\setminus (\classB{v}_{\barrow{\ast}{G'}} \cup \classB{v'}_{\barrow{\ast}{G'}})$.
  Since $w \barrow{\ast}{G'} v$ from the construction of $G'$, we have $z \not\barrow{\ast}{G'} w$.
  Analogously, $x \barrow{\ast}{G'} v'$, and thus, $z \not\barrow{\ast}{G_1} w$ follows from $G_1 \conserveq G'$.
  Combining this with \eqref{eq:norm-basic_6_G1}, we obtain $z\in \pnf_{G_1}$.
  Thus, similarly, $z\in \pnf_{G_2}$ follows from \eqref{eq:norm-basic_6_G2} and $z \not\barrow{\ast}{G_2} x$.
  Finally, $z\in \pnf_{G'}$ follows from \eqref{eq:norm-basic_6_Gd} and $z \not\barrow{\ast}{G'} v'$.
\end{proof}

\begin{lemma}\label{lem:reduction-welldef}
  Let $\Vrt$, $G$, $T_1$, and $T_2$ be ARSs with $G\conserveq \Vrt$ and
  $V_{T_1 \cup T_2} \subseteq V_G$.
  A normalization step ${\mathcal S} \vdash^{(u,u')}_{(v,v')} {\mathcal S'}$
  of a spanning graph ${\mathcal S} = (G,T_1,T_2)$ with respect to $\Vrt$ is
  well-defined;
  i.e., ${\mathcal S'} = (G',T'_1,T'_2)$ is a spanning graph
  (with respect to $\Vrt$).
\end{lemma}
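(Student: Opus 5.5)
The plan is to check the four defining conditions of Definition~\ref{def:spanning_graph} for $\mathcal{S}' = (G',T'_1,T'_2)$, assuming that $\mathcal{S}$ is itself a spanning graph. Two of them are already available. Lemma~\ref{prop:norm-basic}(\ref{prop:norm-basic-1}) gives $G' \conserveq \Vrt$. Lemma~\ref{prop:norm-basic}(\ref{prop:norm-basic-2}) gives $V_{T'_1 \cup T'_2} \subseteq V_{G'}$, and both $T'_1$ and $T'_2$ stay finite. The real work lies in disjointness, strong determinism and acyclicity. All three rest on one structural observation, which I would prove first.

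\textbf{Structural observation.} Since $G \conserveq G'$, two vertices of $V_G$ are $G'$-convertible only if they are $G$-convertible. Hence every $\barrow{\ast}{G'}$-class meets $V_G$ in exactly one $\barrow{\ast}{G}$-class, or not at all.
\begin{itemize}
\item The new vertices all lie on the three added $\Vrt$-paths. The first path joins them to $\classB{w}=\classB{u}$; the other two join them to $\classB{x}=\classB{u''}$, where $u''$ is the vertex with $u' \barrow{\ast}{G\cup T_1} u'' \rarrow{\ast}{G} x$ (for $i=1$ we have $\classB{x}=\classB{u'}$).
\item Thus $(T'_1\cup T'_2)/G'$ is obtained from $(T_1\cup T_2)/G$ by renaming classes. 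The class $\classB{u}_G$ is enlarged to contain $v$, and $\classB{x}_G$ is enlarged to contain $y$ and $v'$. The quotient edge $\classB{u}\to\classB{u'}$ is replaced by $\classB{u}\to\classB{x}$.
\item These two enlarged classes stay distinct, because $w \not\barrow{\ast}{G} x$, as shown in the proof of Lemma~\ref{prop:norm-basic}.
\item New vertices carry no old $T$-edges, since $V_{T_1\cup T_2}\subseteq V_G$.
\end{itemize}

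\textbf{Disjointness.} Suppose $(v,v') \in T_{3-i}$. Then $v \in V_G$ and $u \barrow{\ast}{G'} v$, so $u \barrow{\ast}{G} v$. Strong determinism of $(T_1\cup T_2)/G$ then forces $(u,u')=(v,v')$. This puts $(u,u')$ in both $T_1$ and $T_2$, a contradiction.

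\textbf{Strong determinism.} By the observation, every $G'$-class other than the enlarged $\classB{u}$ carries the same $T$-edges as before, so it has at most one. The enlarged $\classB{u}$ carried only $(u,u')$, and that edge is now swapped for $(v,v')$.

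\textbf{Acyclicity.} This is the step I expect to be the main obstacle. A cycle in the new quotient must use the new edge $\classB{u}\to\classB{x}$. It therefore yields a directed path $\classB{x} \rarrow{\ast}{} \classB{u}$ in $(T_1\cup T_2)/G$.
\begin{itemize}
\item For $i=1$ we have $\classB{x}=\classB{u'}$, so the path closes to the old cycle $\classB{u}\to\classB{u'}\rarrow{\ast}{}\classB{u}$. This contradicts acyclicity of $(T_1\cup T_2)/G$.
\item For $i=2$, use that $(T_1\cup T_2)/G$ is finite, acyclic and deterministic, hence an in-forest, so its underlying undirected graph has no cycles. The walk $\classB{x}\rarrow{\ast}{}\classB{u}\to\classB{u'}$ is a directed path and hence an undirected simple path. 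Shorten the $T_1/G$-conversion $\classB{u'}\barrow{\ast}{}\classB{x}$ to a simple undirected path. By uniqueness of simple paths in a forest, the two must coincide. So the conversion uses the edge $\classB{u}\to\classB{u'}$ as a $T_1/G$-edge.
\item That is impossible: by strong determinism the only edge leaving $\classB{u}$ is the $T_2$-edge $(u,u')$, and $T_1\cap T_2=\emptyset$.
\item The degenerate case $\classB{x}=\classB{u}$ is excluded directly by $w\not\barrow{\ast}{G}x$.
\end{itemize}
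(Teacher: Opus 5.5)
Your proof is correct. Disjointness, strong determinism and the case $i=1$ of acyclicity follow the paper's argument in substance. Your class-correspondence observation packages the paper's repeated ``the failure must be caused by $(v,v')$'' step more tidily: by $G\conserveq G'$, each $\barrow{\ast}{G'}$-class meets $V_G$ in at most one $\barrow{\ast}{G}$-class, and every new vertex falls into the class of $u$ or of $x$.

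The genuine difference is acyclicity for $i=2$.
\begin{itemize}
\item \textbf{The paper's route.} It reroutes the cycle through $v \larrow{\ast}{G'} u \rarrow{}{T_2} u' \barrow{\ast}{T_1\cup G'} v'$ and reads the result as a cycle in $T_2/(T_1\cup G')$. It then pulls this back to $(T_1\cup T_2)/G$ via Proposition~\ref{Prop:NestedQuotient} and Proposition~\ref{Prop:CycleInDeterministic}. The latter uses confluence of the deterministic $T_1/G$ to straighten conversions into a directed cycle.
\item \textbf{Your route.} You use directly that the old quotient is an in-forest with unique simple paths. So the $T_1$-conversion from $\classB{u'}$ to $\classB{x}$ would have to cross the edge $\classB{u}\to\classB{u'}$, which only the $T_2$-edge $(u,u')$ realizes.
\end{itemize}

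Your route is more elementary and needs neither auxiliary proposition. It also handles uniformly the case where the path $\classB{x}\rarrow{\ast}{}\classB{u}$ uses only $T_1$-edges. There the rerouted cycle collapses into a single $(T_1\cup G')$-class and yields no edge of $T_2/(T_1\cup G')$ under the $U_1\neq U_2$ convention of Definition~\ref{def:quotientARS}. The paper's route therefore tacitly counts such within-class edges as cycles. In exchange, the paper isolates the combinatorics in reusable quotient lemmas.

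One small caveat concerns your citation of $w\not\barrow{\ast}{G}x$ for $i=2$. In that case, the proof of Lemma~\ref{prop:norm-basic} only derives $u\barrow{\ast}{G\cup T_1}u'$ from $w\barrow{\ast}{G}x$, not $u\barrow{\ast}{G}u'$, so the citation is slightly loose there. Your own forest argument, run with the empty path $\classB{x}=\classB{u}$, closes that case anyway.
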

\begin{proof}
  Let $(G,T_1,T_2) \vdash^{(u,u')}_{(v,v')} (G',T'_1,T'_2)$.
  From (1)--(3) of Lemma~\ref{prop:norm-basic}, it suffices to show that
  $\to_{T'_1} \cap \to_{T'_2} = \emptyset$, and that
  $(T'_1 \cup T'_2) / G'$ is acyclic and strongly deterministic.

  We consider two cases for $i \in \{1, 2\}$ such that $(u,u') \in T_i$ and $(v,v') \in T'_i$:
  \begin{itemize}
    \item[(i)]
    If $i = 1$, we have $T'_2 = T_2$, $v \to_{T'_1} v'$, and
    $v \larrow{\ast}{G'} u \rarrow{}{T_1} u' \barrow{\ast}{G'} v'$.

    \item[(ii)]
    If $i = 2$, we have $T'_1 = T_1$, $v \to_{T'_2} v'$, and
    $v \larrow{\ast}{G'} u\rarrow{}{T_2} u'\barrow{\ast}{T_1\cup G'} v'$.
  \end{itemize}

  \noindent
  {\bf Mutually disjoint:}
  We prove $\to_{T'_1} \cap \to_{T'_2} = \emptyset$ by contradiction;
  assume that $\to_{T'_1} \cap \to_{T'_2} \neq \emptyset$.
  This overlap must be caused by the introduced edge $(v,v')$.
  In the case where $i=1$, since $v \to_{T'_1} v'$, we must have $v \to_{T_2} v'$.
  Thus, $v' \gets_{T_2} v \barrow{\ast}{G'} u \to_{T_1} u'$, which contradicts
  the strong determinacy of $(T_1 \cup T_2)/G$ (noting that $G \conserveq G'$
  ensures the connectivity relation is preserved on $V_G$).
  The case $i=2$ is similar.

  \smallskip
  \noindent
  {\bf Strongly deterministic:}
  Suppose that $(T'_1 \cup T'_2) / G'$ is not strongly deterministic.
  This failure must be caused by the introduced edge $(v,v')$.

  In the case where $i=1$,
  we have $v' \gets_{T'_1} v \barrow{\ast}{G'} w \to_{T'_1 \cup T_2} w'$ and
  $v \neq w \lor v' \neq w'$ for some $w$ and $w'$,
  which implies $w \to_{(T'_1\setminus \{(v,v')\}) \cup T_2} w'$.
  Hence, from (i),
    we obtain $u' \gets_{T_1} u \barrow{\ast}{G'} w \to_{T_1 \cup T_2} w'$.
    Since $(u,u') \notin T'_1 \cup T_2$ but $(w,w') \in T'_1 \cup T_2$, it follows that $(u,u') \neq (w,w')$,
    which contradicts the strong determinacy of $(T_1 \cup T_2)/G$.

    The case $i=2$ is similar; i.e.,
    we have $v' \gets_{T'_2} v \barrow{\ast}{G'} w \to_{T_1 \cup T'_2} w'$ and
    $v \neq w \lor v' \neq w'$ for some $w$ and $w'$,
    which implies $w \to_{(T_1 \cup T'_2\setminus \{(v,v')\})} w'$.
    Hence, from (ii),
    we obtain $u' \gets_{T_2} u \barrow{\ast}{G'} w \to_{T_1 \cup T_2} w'$.
    Since $(u,u') \notin T_1 \cup T'_2$ but $(w,w') \in T_1 \cup T'_2$, it follows that $(u,u') \neq (w,w')$,
    which contradicts the strong determinacy of $(T_1 \cup T_2)/G$.

  \smallskip
  \noindent
  {\bf Acyclic:}
  Suppose that $(T'_1 \cup T'_2) / G'$ is not acyclic.
  This failure must be caused by the introduced edge $(v,v')$; hence, $(v,v')$ must
  be part of a cycle.

  In the case where $i=1$, by replacing $v \rarrow{}{T'_1} v'$ in the cycle with
  $v \larrow{\ast}{G'} u \rarrow{}{T_1} u' \barrow{\ast}{G'} v'$
  as obtained from (i), we obtain a cycle in $(T_2 \cup T_1) / G'$.
  Since $G \conserveq G'$, this implies the existence of a cycle in $(T_1 \cup T_2) / G$,
  which is a contradiction.

  In the case where $i=2$, we replace $v \rarrow{}{T'_2} v'$ in the cycle with
  $v \larrow{\ast}{G'} u \rarrow{}{T_2} u' \barrow{\ast}{T_1 \cup G'} v'$
  as obtained from (ii), which forms a cycle in $T_2 / (T_1 \cup G')$.
  Since $G \conserveq G'$, this implies a cycle in $T_2 / (T_1 \cup G)$.
  On the other hand, since $T_2/(T_1\cup G) \cong (T_2/G)/(T_1/G)$
  by Proposition~\ref{Prop:NestedQuotient},
  the presence of a cycle in $T_2/(T_1\cup G)$ implies that $(T_2/G)/(T_1/G)$ is cyclic as well.
    Applying the contraposition of Proposition~\ref{Prop:CycleInDeterministic} with
    $G_1 := T_1/G$ and $G_2 := T_2/G$,
    we obtain a cycle in $(T_1/G) \cup (T_2/G)$.
  Since $(T_1/G) \cup (T_2/G) = (T_1 \cup T_2) / G$,
  this contradicts the acyclicity of $(T_1 \cup T_2) / G$
  in the spanning graph $(G, T_1, T_2)$.
\end{proof}

\subsection{Termination of spanning-graph normalization}
\label{subsec:norm-termination}

Given a spanning graph ${\mathcal S} = (G,T_1,T_2)$ with confluent $G$,
we aim to obtain a spanning graph ${\mathcal S'} = (G',T'_1,T'_2)$
by normalizing spanning graphs that satisfies the following properties:
\begin{enumerate}
  \item $G'\ (\supseteq G)$ and $G' \cup T'_1 \cup T'_2$ are confluent, and
  \item $P({\mathcal S}, {\mathcal S'})$ holds for a given property $P$.
\end{enumerate}

We abbreviate the notation
$G\cup(w_0 \to w_1) \cup (w_1 \to w_2) \cup\cdots\cup (w_{k-1}\to w_k)$
as $G\cup(w_0 \to w_1 \to \cdots \to w_k)$.
Similarly, the notation $G\cup(w_0 \gets w_1 \gets \cdots \gets w_k)$ denotes
$G\cup(w_1 \to w_0) \cup (w_2 \to w_1) \cup\cdots\cup (w_k\to w_{k-1})$.

We introduce a measure $m({\mathcal S})$ on spanning graphs
${\mathcal S} = (G,T_1,T_2)$ to prove the termination of the normalization steps.

\begin{lemma}\label{lem:weight-stability}
  For a spanning graph ${\mathcal S} = (G,T_1,T_2)$,
  the weight function $w$ is stable under the normalization step
  ${\mathcal S} \vdash_n {\mathcal S'}$,
  i.e., $w_{\mathcal S}(t)=w_{\mathcal S'}(t)$ for any $t\in V_G$.
\end{lemma}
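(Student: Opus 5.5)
The plan is to show that the normalization step changes nothing that the weight of Definition~\ref{def:fixed-weight} can see. Concretely, I want $(T'_1\cup T'_2)/G'$ to be isomorphic to $(T_1\cup T_2)/G$ up to one thing: the target of the single replaced $T_2$-edge may move, but only inside the same $\barrow{\ast}{T_1/G}$-component. Stability of $w$ then follows by induction along the well-founded recursion used in Proposition~\ref{prop:weight}.

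First I compare the classes of $G$ and $G'$. By Lemma~\ref{prop:norm-basic}(\ref{prop:norm-basic-1}) we have $G\conserveq G'$, so $\classB{t}_{\barrow{\ast}{G'}}\cap V_G=\classB{t}_{\barrow{\ast}{G}}$ for every $t\in V_G$. Every vertex of $V_{G'}\setminus V_G$ lies on one of the three added paths of (iv) in Definition~\ref{def:normalization-step}:
\begin{itemize}
\item the path $w\rarrow{\ast}{\Vrt}v$ is attached to $w\in V_G$;
\item the path $x\rarrow{\ast}{\Vrt}y$ is attached to $x\in V_G$;
\item the path $v'\rarrow{\ast}{\Vrt}y$ is attached to $y$, which is already connected to $x$.
\end{itemize}
Hence $U\mapsto U\cap V_G$ is a bijection from the $\barrow{\ast}{G'}$-classes onto the $\barrow{\ast}{G}$-classes. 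From (i) we get $\classB{v}_{G'}\leftrightarrow\classB{u}_{G}$ under this bijection. From (iii) we get $\classB{v'}_{G'}=\classB{x}_{G'}$, which corresponds to $\classB{x}_G$.

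Next I transport the edges. All edges of $T'_1\cup T'_2$ other than $(v,v')$ are unchanged, and their endpoints lie in $V_G$, so they induce the same quotient edges under the bijection. For the new edge there are two cases.
\begin{itemize}
\item Case $i=1$: condition (ii) gives $u'\barrow{\ast}{G}x$, so $(v,v')$ induces exactly the quotient edge $\classB{u}_G\to\classB{u'}_G$ that $(u,u')$ induced. Thus $T'_1/G'\cong T_1/G$ and $T'_2/G'\cong T_2/G$ via the same bijection.
\item Case $i=2$: $T'_1=T_1$, so $T'_1/G'\cong T_1/G$. The $T'_2/G'$-edge leaving $\classB{u}$ now points to $\classB{x}$ instead of $\classB{u'}$, and (ii) gives $\classB{u'}_G\barrow{\ast}{T_1/G}\classB{x}_G$.
\end{itemize}
By strong determinacy (Lemma~\ref{lem:reduction-welldef}), every class still has at most one outgoing edge, and that edge has the same kind ($T_1$ or $T_2$) as before.

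Finally, I prove $w_{\mathcal S}(t)=w_{\mathcal S'}(t)$ for $t\in V_G$ by induction on $w_{\mathcal S}(t)$, i.e.\ backward from the reversed roots as in Proposition~\ref{prop:weight}.
\begin{itemize}
\item In the ``otherwise'' case, $\classB{t}$ has no outgoing edge in either quotient, so both weights are $0$.
\item In the $T_1$ case, the unique successor class is the same, and the induction hypothesis applies to it.
\item In the $T_2$ case, the second clause of Definition~\ref{def:fixed-weight} takes the maximum of $w$ over the whole $\barrow{\ast}{T_1/G}$-component of the target. The component is the same before and after the step, even when the target class changed from $\classB{u'}$ to $\classB{x}$. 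It is the same set of classes because $T_1/G$ is unchanged. All its members have strictly smaller weight, so the induction hypothesis gives equal maxima.
\end{itemize}

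The step I expect to be the main obstacle is the $T_2$ case with $i=2$. One has to ensure that the maximum in Definition~\ref{def:fixed-weight} really depends only on the $T_1$-component, and that every class in that component has been handled by the induction before $\classB{t}$. To justify the ordering, I would run the induction over the acyclic graph $(T_1\cup T_2)/G$ augmented with the component-to-component dependencies, whose well-foundedness is exactly what Proposition~\ref{prop:weight} uses.
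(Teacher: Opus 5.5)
Your proposal is correct and follows essentially the paper's route. Lemma~\ref{prop:norm-basic} gives $G\conserveq G'$, which makes the classes correspond. For $i=1$ the quotient $(T_1\cup T_2)/G$ is unchanged, and for $i=2$ the redirected $T_2$-edge keeps its target inside the same $\barrow{\ast}{T_1/G}$-component, so the recursion of Definition~\ref{def:fixed-weight} yields the same values. You spell out what the paper leaves implicit, namely the class bijection and the strong induction on $w_{\mathcal S}(t)$; that induction already settles the ordering worry in your last paragraph, since every class in the relevant $T_1/G$-component has strictly smaller $w_{\mathcal S}$ by definition.
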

\begin{proof}
  By Lemma~\ref{prop:norm-basic}(\ref{prop:norm-basic-1}), we have $G \conserveq G'$.
  Suppose that the normalization step is denoted by ${\mathcal S} \vdash^{(u,u')}_{(v,v')} {\mathcal S'}$.
  If $(u,u') \in T_1$, then as in case (i) of the proof of Lemma~\ref{lem:reduction-welldef},
  the step preserves the structure of $(T_1 \cup T_2) / G$.
  If $(u,u') \in T_2$, then similarly to case (ii), the edge $u \to u'$ is reconnected to $v \to v'$ such that
  $u \barrow{\ast}{G'} u'$ and $v \barrow{\ast}{G' \cup T_1} v'$.
  In either case, the weight of each vertex is preserved by the definition of the weight function.
\end{proof}

\begin{definition}[Measure of a spanning graph]\label{def:measure}
  For a spanning graph ${\mathcal S} = (G,T_1,T_2)$, its \emph{measure} $m({\mathcal S})$ is defined as
  \[
  m({\mathcal S}) = \sum_{ \substack{(u,u') \in T_1 \cup T_2 \\ u \notin \pnf_{G}} } w_{\mathcal S}(u).
  \]
\end{definition}
It trivially holds that $m({\mathcal S}) \geq 0$ since the weight function
$w_{\mathcal S}$ takes non-negative values by Proposition~\ref{prop:weight}.

\begin{lemma}\label{lem:reduction}
  Let ${\mathcal S} = (G,T_1,T_2)$ be a spanning graph such that
  $G \conserveq \Vrt$ and $G$ is confluent.
  If ${\mathcal S} \vdash_n {\mathcal S'}$ then
  $m({\mathcal S}) > m({\mathcal S'})$.
\end{lemma}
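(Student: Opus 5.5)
The measure is a sum over the non-normalized edges of $T_1\cup T_2$. I will compare the summands of $m(\mathcal S)$ and $m(\mathcal S')$ edge by edge and show that the only changes are the removal of one positive term and the possible removal of others.

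Fix a step $\mathcal S \vdash^{(u,u')}_{(v,v')} \mathcal S'$ with $(u,u')\in T_i$. By Lemma~\ref{lem:reduction-welldef}, $\mathcal S'$ is a spanning graph, so $w_{\mathcal S'}$ is well-defined. By Lemma~\ref{lem:weight-stability}, $w_{\mathcal S'}(t)=w_{\mathcal S}(t)$ for every $t\in V_G$. The edges of $T'_1\cup T'_2$ are those of $T_1\cup T_2$ with $(u,u')$ replaced by $(v,v')$. I split the sum accordingly into three parts.

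\textbf{The removed edge.} The summand $w_{\mathcal S}(u)$ for $(u,u')$ is present in $m(\mathcal S)$, since $u\notin\pnf_G$ by Definition~\ref{def:normalization-step}. I need $w_{\mathcal S}(u)\ge 1$. This holds because $u$ has an outgoing $T_i/G$ edge: $u\not\barrow{\ast}{G} u'$ by acyclicity of $(T_1\cup T_2)/G$. So the first or second case of Definition~\ref{def:fixed-weight} applies, and each of these adds $+1$.

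\textbf{The new edge.} By Lemma~\ref{prop:norm-basic}(\ref{lem:reduction-conf-3}), $v\in\pnf_{G'}$, so $(v,v')$ contributes nothing to $m(\mathcal S')$. (If $(v,v')$ already occurred in $T_1\cup T_2$, strong determinacy together with $v\barrow{\ast}{G'}u$ would force $(v,v')=(u,u')$. This contradicts $u\notin\pnf_G$ once we use $G\conserveq G'$ and compare pnf membership of $u$ and $v$; I will handle this degenerate case explicitly.)

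\textbf{The other edges.} For every other edge $(z,z')\in T_1\cup T_2$, I need that $z\in\pnf_G$ implies $z\in\pnf_{G'}$. Then the set of non-normalized other edges can only shrink, and since the weights agree, their contribution cannot increase. By Lemma~\ref{prop:norm-basic}(\ref{lem:reduction-conf-2}), this holds whenever $z\notin\classB{v}_{\barrow{\ast}{G'}}\cup\classB{v'}_{\barrow{\ast}{G'}}$. Two cases remain.

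\begin{itemize}
\item If $z\barrow{\ast}{G'}v$, then $z\barrow{\ast}{G'}u$. Since $G\conserveq G'$, also $z\barrow{\ast}{G}u$, and strong determinacy of $(T_1\cup T_2)/G$ forces $(z,z')=(u,u')$, which is excluded.
\item If $z\barrow{\ast}{G'}v'$, then $z\barrow{\ast}{G'}x$, where $x\in\pnf_G$ is the vertex of Definition~\ref{def:normalization-step}(ii)--(iii). Since $G\conserveq G'$, this gives $z\barrow{\ast}{G}x$. As $G$ is confluent and $z,x\in\pnf_G$, the two vertices lie in the same strongly connected terminal class. By Lemma~\ref{lem:exp-class2d}, $y$ and the new vertices lie in $\pnf$ of the intermediate graphs, and every vertex of that class still reaches $y$ and returns. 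So the class of $x$ is absorbed into the terminal class containing $y$, and $z\in\pnf_{G'}$.
\end{itemize}

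I expect the second case to be the main obstacle. The plan there is to track $x$ through $G_1,G_2,G'$ as in the proof of Lemma~\ref{prop:norm-basic}, using Lemma~\ref{lem:exp-class}(\ref{sublem:exp-class_1}) with $v'\Rrightarrow y$ at each stage.

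Combining the three parts gives $m(\mathcal S')\le m(\mathcal S)-w_{\mathcal S}(u)<m(\mathcal S)$.
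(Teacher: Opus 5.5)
There is a genuine gap in your second bullet. You claim that every $z\in\pnf_G$ with $z\barrow{\ast}{G'}v'$ stays in $\pnf_{G'}$, and this is false. Vertices of the old class of $x$ do reach $y$, but $y$ need not reach them back. When the path $x\rarrow{\ast}{\Vrt}y$ leaves $V_{G_1}$, its endpoint is a fresh terminal class. This is case (b) of Table~\ref{tab:expansion-PNF}, and your planned use of Lemma~\ref{lem:exp-class}(\ref{sublem:exp-class_1}) breaks down because a fresh target cannot converge back to its source. Every vertex of $\classB{x}_{\barrow{\ast}{G}}$ then loses pseudo-normality.

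Here is a concrete counterexample. Let $G=\langle\{u,w,x,s'\},\{u\to w\}\rangle$, $T_1=\{(u,x),(x,s')\}$, $T_2=\emptyset$, and $\Vrt=G\cup(x\to y)$ with $y$ fresh. Normalize $(u,x)$ with $v:=w$, with the pseudo-normal form $x$ itself, and with $v':=y$. Then $G'=\Vrt$ and $T'_1=\{(w,y),(x,s')\}$, and $x\notin\pnf_{G'}$. So the previously uncounted edge $(x,s')$ is now counted. The weights are $w_{\mathcal S}(u)=2$ and $w_{\mathcal S}(x)=1$, giving $m(\mathcal S)=2$ and $m(\mathcal S')=1$. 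This refutes your bound $m(\mathcal S')\le m(\mathcal S)-w_{\mathcal S}(u)=0$, even though the lemma itself still holds.

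What you are missing is the compensation step the paper uses. Your own argument shows that any affected source $z$ satisfies $z\barrow{\ast}{G}x$, by $G\conserveq G'$. By strong determinacy of $(T_1\cup T_2)/G$, at most one edge $(s,s')\in T_1\cup T_2$ leaves $\classB{x}_{\barrow{\ast}{G}}$. Weights are stable, so you only get
\[m(\mathcal S')\le m(\mathcal S)-w_{\mathcal S}(u)+w_{\mathcal S}(s).\]
It remains to show $w_{\mathcal S}(u)>w_{\mathcal S}(s)$; your bound $w_{\mathcal S}(u)\ge 1$ is not enough. Definition~\ref{def:fixed-weight} gives this inequality in both cases:
\begin{itemize}
\item For $i=1$, $u'\barrow{\ast}{G}x$ gives $w_{\mathcal S}(u)=w_{\mathcal S}(u')+1=w_{\mathcal S}(s)+1$.
\item For $i=2$, $u'\barrow{\ast}{G\cup T_1}x$ places $w_{\mathcal S}(s)$ in the set whose maximum plus one is $w_{\mathcal S}(u)$.
\end{itemize}
A minor point: the degenerate case $(v,v')=(u,u')$ is not contradictory. $u\notin\pnf_G$ and $u\in\pnf_{G'}$ can hold together, for instance when $\Vrt$ contains an edge $w\to u$. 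The case is harmless anyway, since the accounting is the same.
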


\begin{proof}
  Let ${\mathcal S} \vdash^{(u,u')}_{(v,v')} {\mathcal S'}$
  with ${\mathcal S'} = (G',T'_1,T'_2)$.
  Lemma~\ref{lem:weight-stability} guarantees the stability of the weight
  function $w$, i.e., $w_{\mathcal S}(t) = w_{\mathcal S'}(t)$ for any $t \in V_G$.
  We use $x$ as introduced in Definition~\ref{def:normalization-step} and
  illustrated in Fig.~\ref{fig:normalization2}.

  Since $G \conserveq G'$ by Lemma~\ref{prop:norm-basic}(\ref{prop:norm-basic-1}),
  pseudo-normal forms are preserved by the normalization except for those in
  two components: the one is $\classB{u}_{\barrow{\ast}{G}}$ and
  the other is $\classB{x}_{\barrow{\ast}{G}}$ (see Fig.~\ref{fig:normalization2}).

  For the former component, $u \notin \pnf_{G}$ holds by the definition of the
  normalization step.
  On the other hand, $v \in \pnf_{G'}$ by Lemma~\ref{lem:reduction-conf}(\ref{lem:reduction-conf-3}).
  Combining these with the replacement of $(u,u') \in T_1 \cup T_2$ by
  $(v,v') \in T'_1 \cup T'_2$,
  the measure $m$ decreases by $w_{\mathcal S}(u)$ according to Definition~\ref{def:measure}.

  For the latter component, $\classB{x}_{\barrow{\ast}{G}}$ is extended to
  $\classB{x}_{\barrow{\ast}{G'}}$.
  There exists at most one outgoing edge $(s,s') \in T_1 \cup T_2$
  leaving $\classB{x}_{\barrow{\ast}{G}}$
  since $(T_1 \cup T_2) / G$ is deterministic.
  Thus, the normalization increases the measure $m$ by at most $w_{\mathcal S}(s)$
  for this component.
  Here, the maximum increase occurs when $s \in \pnf_G$ and $s \notin \pnf_{G'}$.

  From these arguments and the stability of $w$, we obtain
  \[
  m({\mathcal S'}) \leq m({\mathcal S}) - w_{\mathcal S}(u) + w_{\mathcal S}(s).
  \]
  Thus, the claim holds if $w_{\mathcal S}(u) > w_{\mathcal S}(s)$ is satisfied.
  \begin{itemize}
    \item In the case of $(u,u') \in T_1$, we have $u' \barrow{\ast}{G'} s$.
    Since $w_{\mathcal S}(u) = w_{\mathcal S}(u') + 1 = w_{\mathcal S}(s) + 1$,
    we obtain $w_{\mathcal S}(u) > w_{\mathcal S}(s)$.
    \item In the case of $(u,u') \in T_2$, we have $u' \barrow{\ast}{G' \cup T_1} s$.
    By definition, the weight is given by
    $w_{\mathcal S}(u) = \max\{ w_{\mathcal S}(u'') \mid u'' \barrow{\ast}{G' \cup T_1} u' \} + 1$.
    Since $s$ belongs to this set, it follows that
    $w_{\mathcal S}(u) \geq w_{\mathcal S}(s) + 1 > w_{\mathcal S}(s)$.
  \end{itemize}
  Therefore, we conclude $m({\mathcal S}) > m({\mathcal S'})$.
\end{proof}

\begin{theorem}\label{thm:termination-normalization}
  Spanning-graph normalization terminates.
\end{theorem}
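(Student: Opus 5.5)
The plan is a well-founded measure argument using the measure $m$ of Definition~\ref{def:measure}. Suppose there were an infinite normalization sequence $\mathcal{S}_0 \vdash_n \mathcal{S}_1 \vdash_n \mathcal{S}_2 \vdash_n \cdots$ starting from a spanning graph $\mathcal{S}_0 = (G_0,T_{1,0},T_{2,0})$ with respect to $\Vrt$. Here $G_0 \conserveq \Vrt$, and, as in the setting of Lemma~\ref{lem:reduction}, $G_0$ is confluent. The goal is to show that $m(\mathcal{S}_0) > m(\mathcal{S}_1) > \cdots$ is an infinite strictly decreasing sequence of natural numbers, which is impossible.

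First I establish the invariants needed to apply Lemma~\ref{lem:reduction} at every step, by induction on $k$. By Lemma~\ref{lem:reduction-welldef}, each $\mathcal{S}_k$ is again a spanning graph with respect to $\Vrt$. By Lemma~\ref{prop:norm-basic}(\ref{prop:norm-basic-1}), $G_k \conserveq G_{k+1} \conserveq \Vrt$. By Lemma~\ref{prop:norm-basic}(\ref{lem:reduction-conf-4}), $G_{k+1}$ inherits finiteness and confluence from $G_k$. Together these give exactly the hypotheses of Lemma~\ref{lem:reduction} for $\mathcal{S}_k$, so $m(\mathcal{S}_k) > m(\mathcal{S}_{k+1})$ holds for every $k$.

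Second, $m$ is well-defined and takes values in $\mathbb{N}$. Proposition~\ref{prop:weight} guarantees that $w_{\mathcal{S}_k}$ is well-defined with nonnegative integer values. The sum in Definition~\ref{def:measure} is finite because $T_{1,k}\cup T_{2,k}$ is finite; this holds since clause (v) of Definition~\ref{def:normalization-step} only replaces one edge by another. Since $\mathbb{N}$ is well-founded, the strictly decreasing sequence cannot be infinite, so normalization terminates.

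The main obstacle is the hypothesis that $G$ is confluent. The theorem states termination unconditionally, but Lemma~\ref{lem:reduction} requires confluence of $G$. I would therefore state explicitly that termination is asserted for spanning graphs whose $G$ is confluent, which is the setting of the surrounding text and of Section~\ref{subsec:norm-termination}, and then use Lemma~\ref{prop:norm-basic}(\ref{lem:reduction-conf-4}) to propagate confluence along the sequence. If an unconditional version were needed, I would first check whether the proof of Lemma~\ref{lem:reduction} actually uses confluence; it enters only through pseudo-normal-form preservation via Lemma~\ref{lem:exp-class2d}. Since no such version seems required here, I would adopt the confluent hypothesis.
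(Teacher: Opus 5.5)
Your proposal is correct and is essentially the paper's proof: an infinite $\vdash_n$-sequence would give, by Lemma~\ref{lem:reduction}, an infinite strictly decreasing sequence of non-negative integers $m(\mathcal{S}_0) > m(\mathcal{S}_1) > \cdots$, which is impossible. Your extra bookkeeping only makes precise what the paper leaves implicit in its standing assumption that $G$ is confluent; that bookkeeping consists of propagating the spanning-graph property, $G_k \conserveq \Vrt$, and the confluence of $G_k$ along the sequence via Lemmas~\ref{lem:reduction-welldef} and~\ref{prop:norm-basic}, and of stating the confluence hypothesis explicitly.
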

\begin{proof}
  Suppose for contradiction that there exists an infinite normalization sequence
  ${\mathcal S}_1 \vdash_n {\mathcal S}_2 \vdash_n \cdots$.
  Clearly, the initial measure $m({\mathcal S}_1)$ is a finite integer.
  Moreover, by Lemma~\ref{lem:reduction}, each normalization step strictly decreases the measure,
  yielding an infinite strictly decreasing sequence of measures:
  $m({\mathcal S}_1) > m({\mathcal S}_2) > \cdots$.
  However, this contradicts the fact that the measure is always non-negative by Definition~\ref{def:measure}.
\end{proof}
For a spanning graph ${\mathcal S} = (G, T_1, T_2)$ with a confluent $G$,
if $m({\mathcal S}) = 0$, then $G \cup T_1 \cup T_2$ is confluent by Lemma~\ref{lem:vdash-norm}.
Thus, if a normalization step is applicable to any spanning graph satisfying $m({\mathcal S}) > 0$,
Theorem~\ref{thm:termination-normalization} guarantees that we can eventually obtain a spanning graph
whose components $G \cup T_1 \cup T_2$ are confluent.
The sufficient conditions to ensure the applicability of the normalization step $\vdash_n$
are provided by the notions of \emph{compatibility} and \emph{edge-commutativity}
introduced in Section~\ref{Sec:CompatCommute}.

\section{Compatibility and Edge Commutativity for Constructing Confluent
  Finite ARSs}
\label{sec6}\label{Sec:CompatCommute}

We first introduce two conditions for ARSs:
\emph{compatibility} (Definition~\ref{def:compatible} in
Section~\ref{subsec:compatibility}), which addresses overlay cases, and
\emph{edge commutativity} (Definition~\ref{def:edge-commutative} in
Section~\ref{subsec:commutativity}), which addresses non-top-overlapping cases.
The former is characterized by the property that a reduction at a redex is
deterministic in quotient graphs.
The latter is characterized by the property that a redex can be recovered after
its subpart is modified.
These conditions enable the normalization step $\vdash_n$
with respect to a confluent ARS $\Vrt$.

In Section~\ref{subsec:confluent-edge-norm},
we establish a condition for constructing a confluent ARS.
In that section,
we intend to define $\Hol$ and $\Vrt$ as a subset of top reductions
and a confluent set of inner reductions, respectively.
Accordingly, $T_2$ and $T_1$ represent finite subsets of collapsing and
non-collapsing edges in $\Hol$, respectively.

\subsection{Compatibility for Spanning Graph Construction}
\label{subsec:compatibility}

\begin{definition} \label{def:compatible}
  Let $\Vrt$, $N$, and $T'$ be ARSs such that $V_{T'} \subseteq V_{\Vrt}$.
  An ARS $T$ is said to be \emph{$N$-compatible over $\Vrt$}
  if $V_T \subseteq V_\Vrt$ and the following condition holds:
  for any $u,u',v,v' \in V_\Vrt$,
  $u' \larrow{}{T} u \barrow{\ast}{\Vrt} v \rarrow{}{T} v'$ implies
  $u' \barrow{\ast}{\Vrt \cup N} v'$.
\end{definition}
We simply say that $T$ is \emph{compatible over $\Vrt$}
when $N$ is the empty ARS.
\begin{figure}
  \begin{center}
    \begin{minipage}{0.59\textwidth}
      \begin{center}
  \begin{tikzpicture}[line width=0.07em, x=5, y=5]

    \node at (0,2.5)(u) {\small $u$};
    \node at (0,-2.5)(v) {$v$};

    \node at (12,2.5)(ud) {\small $u'$};
    \node at (12,-2.5)(vd) {\small $v'$};

    \path[<->] (u) edge node [left,yshift=0] {\scriptsize $\ast$}
    node [right,yshift=-6] {\scriptsize $\Vrt$}  (v);
    \path[->] (u) edge node [xshift=0,yshift=-8] {\scriptsize $T$} (ud);

    \path[->] (v) edge node [xshift=0,yshift=-8] {\scriptsize $T$} (vd);

    \path[<->,color=red]
    (ud) edge
    node[below,xshift=18] {\scriptsize $\Vrt \cup N$}
    node[xshift=-6]  {\scriptsize $\ast$}
    (vd);
  \end{tikzpicture}
      \end{center}
    \end{minipage}
    \caption{$N$-Compatibility of $T$ over $\Vrt$}\label{fig:compatibility}
  \end{center}
\end{figure}

\begin{proposition}\label{prop:preserveCompatibility}
  Let $\Vrt$, $T_1$, $T_2$, $T'_1$, and $T'_2$ be ARSs
  such that $T_1 \supseteq T'_1$, $T_2 \supseteq T'_2$, and
  ${\barrow{\ast}{T_1 \cup \Vrt}} = {\barrow{\ast}{T'_1 \cup \Vrt}}$.
  Then, the following properties hold:
  \begin{enumerate}
    \item If $T_1$ is compatible over $\Vrt$, then so is $T'_1$.
    \item If $T_1 \cup T_2$ is $T_1$-compatible over $\Vrt$,
    then $T'_1 \cup T'_2$ is $T'_1$-compatible over $\Vrt$.
  \end{enumerate}
\end{proposition}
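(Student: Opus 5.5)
Both parts come straight from Definition~\ref{def:compatible}. We use two observations: shrinking the edge set $T$ only removes hypotheses of the compatibility condition, and enlarging the auxiliary ARS $N$ up to the same connectivity with $\Vrt$ does not change the conclusion $u' \barrow{\ast}{\Vrt\cup N} v'$.

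For (1), take $u' \larrow{}{T'_1} u \barrow{\ast}{\Vrt} v \rarrow{}{T'_1} v'$. Since $T'_1 \subseteq T_1$, this is also a configuration $u' \larrow{}{T_1} u \barrow{\ast}{\Vrt} v \rarrow{}{T_1} v'$. Compatibility of $T_1$ over $\Vrt$ (with $N$ empty) gives $u' \barrow{\ast}{\Vrt} v'$, which is what compatibility of $T'_1$ requires. The side condition $V_{T'_1} \subseteq V_{\Vrt}$ follows from $V_{T'_1} \subseteq V_{T_1} \subseteq V_{\Vrt}$. The hypothesis on connectivity is not needed for this part.

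For (2), take $u' \larrow{}{T'_1\cup T'_2} u \barrow{\ast}{\Vrt} v \rarrow{}{T'_1\cup T'_2} v'$. Since $T'_1\cup T'_2 \subseteq T_1\cup T_2$, the $T_1$-compatibility of $T_1\cup T_2$ yields $u' \barrow{\ast}{\Vrt\cup T_1} v'$. The assumption ${\barrow{\ast}{T_1\cup\Vrt}} = {\barrow{\ast}{T'_1\cup\Vrt}}$ turns this into $u' \barrow{\ast}{\Vrt\cup T'_1} v'$, which is the required $T'_1$-compatibility. Again $V_{T'_1\cup T'_2}\subseteq V_{T_1\cup T_2}\subseteq V_{\Vrt}$ gives the vertex condition, and $V_{T'_1}\subseteq V_{\Vrt}$ gives the side condition on the auxiliary ARS.

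Nothing here is a real obstacle. The only point to watch is the edge restriction in the definition: if edges of $\Vrt$ or $T_1$ touch vertices outside $V_{\Vrt}$, the conversion must still be read as an equivalence on vertices. Taking the hypothesis literally as equality of the connectivity relations makes the substitution in (2) immediate.
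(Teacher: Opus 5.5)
Your proof is correct and matches the paper's argument essentially step for step. In both parts you embed the primed peak into an unprimed one via $T'_i \subseteq T_i$ and apply the assumed compatibility; in part (2) you then transfer the conversion using ${\barrow{\ast}{T_1\cup\Vrt}} = {\barrow{\ast}{T'_1\cup\Vrt}}$. Your explicit check of the vertex conditions $V_{T'_1}\subseteq V_{\Vrt}$ and $V_{T'_1\cup T'_2}\subseteq V_{\Vrt}$ is a small detail that the paper leaves implicit.
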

\begin{proof}
  \mbox{}
  \begin{enumerate}
  \item Assume $u' \larrow{}{T'_1} u \barrow{\ast}{\Vrt} v \rarrow{}{T'_1} v'$.
    Since $T_1 \supseteq T'_1$, we trivially have
    $u' \larrow{}{T_1} u \barrow{\ast}{\Vrt} v \rarrow{}{T_1} v'$.
    Because $T_1$ is compatible over $\Vrt$, we immediately obtain
    $u' \barrow{\ast}{\Vrt} v'$.
  \item Assume $u'\larrow{}{T'_1\cup T'_2} u \barrow{\ast}{\Vrt}
    v\rarrow{}{T'_1\cup T'_2} v'$.
    Since $T_1 \supseteq T'_1$ and $T_2 \supseteq T'_2$, it follows that
    $u' \larrow{}{T_1 \cup T_2} u \barrow{\ast}{\Vrt} v \rarrow{}{T_1 \cup T_2}
    v'$.
    Since $T_1 \cup T_2$ is $T_1$-compatible over $\Vrt$,
    we obtain $u' \barrow{\ast}{T_1 \cup \Vrt} v'$.
    Combining this with the assumption
    ${\barrow{\ast}{T_1 \cup \Vrt}} = {\barrow{\ast}{T'_1 \cup \Vrt}}$,
    we conclude that $u' \barrow{\ast}{T'_1 \cup \Vrt} v'$.
  \end{enumerate}
\end{proof}

The following lemma ensures that we can eliminate redundant edges from $T_1$ and
$T_2$ so that the resulting quotient graph is strongly deterministic and acyclic.

\begin{lemma} \label{lem:treeshaped2}
  Let $T_i$ ($i=1,2$) be ARSs such that:
  \begin{enumerate}
    \item[(i)] $T_1$ is compatible over $\Vrt$, and
    \item[(ii)] $T_1 \cup T_2$ is $T_1$-compatible over $\Vrt$.
  \end{enumerate}
  Then, there exist ARSs $T'_i$ ($i=1,2$) such that the following properties
  hold:
  \begin{enumerate}
    \item $T_i \supseteq T'_i$ ($i=1,2$),
    \item ${\barrow{\ast}{T_1\cup \Vrt}} = {\barrow{\ast}{T'_1\cup \Vrt}}$,
    \item ${\barrow{\ast}{T_1\cup T_2\cup \Vrt}} =
      {\barrow{\ast}{T'_1\cup T'_2\cup \Vrt}}$, and
    \item $(T'_1 \cup T'_2)/\Vrt$ is strongly deterministic, acyclic,
      and satisfies $\rarrow{}{T'_1} \cap \rarrow{}{T'_2} = \emptyset$.
  \end{enumerate}
\end{lemma}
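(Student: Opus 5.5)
The plan is to prune $T_1$ and $T_2$ in two stages. Each stage works in a suitable quotient and deletes only edges whose endpoints stay connected through the remaining edges. That way properties (2) and (3) hold automatically, and (1) is trivial. Throughout I write $\classB{\cdot}$ for $\barrow{\ast}{\Vrt}$-classes and $\langle\cdot\rangle$ for $\barrow{\ast}{T_1\cup\Vrt}$-classes. If $T_1,T_2$ are infinite, I would replace ``pick one edge'' by a choice function and ``spanning forest'' by a Zorn-type maximal acyclic subgraph. In the intended application both are finite, so I argue in the finite case.

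\textbf{Stage 1 (pruning $T_1$).} First delete every edge of $T_1$ whose endpoints lie in the same $\Vrt$-class; they are redundant modulo $\Vrt$. By (i), any two $T_1$-edges leaving the same class $\classB{u}$ have $\barrow{\ast}{\Vrt}$-connected targets. So $T_1/\Vrt$ has out-degree at most one at every class, and I keep exactly one $T_1$-edge per source class. The dropped edges have endpoints joined via the kept edge and $\Vrt$. The resulting quotient is deterministic, so each weakly connected component contains at most one directed cycle. In each such cycle I delete one edge; its endpoints remain connected through the rest of the cycle. Call the result $T'_1$. It satisfies (2), is strongly deterministic and acyclic modulo $\Vrt$, and is still compatible over $\Vrt$ by Proposition~\ref{prop:preserveCompatibility}.

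\textbf{Stage 2 (pruning $T_2$).} First remove from $T_2$ every edge lying in $T_1$; this gives disjointness and loses no connectivity. Next, let $u\to_{T_2}u'$ be an edge whose class $\classB{u}$ already has an outgoing $T'_1$-edge $u_0\to u'_0$. By (ii) we have $u'\barrow{\ast}{T_1\cup\Vrt}u'_0$. Hence $u$ and $u'$ lie in the same $\langle\cdot\rangle$-class, and by (2) this is also a $\barrow{\ast}{T'_1\cup\Vrt}$-class, so I delete the edge. For the remaining $\Vrt$-classes, hypothesis (ii) again says that all $T_2$-edges leaving a single class have targets in one $\langle\cdot\rangle$-class. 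I keep one such edge per class, and also drop edges whose source and target lie in the same $\langle\cdot\rangle$-class. At this point every $\Vrt$-class has at most one outgoing edge in total, which gives strong determinism.

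Finally, I consider the undirected multigraph whose vertices are the $\langle\cdot\rangle$-classes and whose edges are the remaining $T_2$-edges. I replace it by a spanning forest, which preserves connectivity and gives (3). Call the result $T'_2$.

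\textbf{Main obstacle: acyclicity of $(T'_1\cup T'_2)/\Vrt$.} This is the step I expect to need the most care. Suppose there is a directed cycle. If it uses only $T'_1$-edges, it contradicts Stage 1. Otherwise, project it onto $\langle\cdot\rangle$-classes. The $T'_1$-edges collapse, because they stay inside one class. Each $T'_2$-edge is a genuine edge of the forest. So the projection is a closed walk in a forest, and such a walk must traverse some forest edge twice in opposite directions. But each $T'_2$-edge is one fixed directed edge, and a directed cycle in the quotient cannot traverse an edge against its orientation. Two distinct $T'_2$-edges joining the same pair of $\langle\cdot\rangle$-classes are also excluded by the forest choice. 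The delicate case is a cycle that reuses the same $T'_2$-edge in the same direction. Such a cycle can be shortened to a simple cycle, and then the closed-walk argument applies. This gives the contradiction and completes (4).
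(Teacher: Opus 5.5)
Your proposal is correct, but it is organized differently from the paper's proof.

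\textbf{The paper's route.} It defines five local deletion rules, (A$_{11}$), (A$_{12}$), (A$_{22}$), (B), (C), each removing one edge. It checks that each rule preserves (1)--(3), and via Proposition~\ref{prop:preserveCompatibility} also the hypotheses (i) and (ii). It applies the rules until none fires and reads (4) off irreducibility. Termination comes from counting edges.

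\textbf{Your route.} You build the result globally in two stages:
\begin{itemize}
\item You get $T'_1$ directly as an acyclic functional graph modulo $\Vrt$, using that a deterministic quotient has vertex-disjoint cycles.
\item You handle $T_2$ in the coarser quotient by $\barrow{\ast}{T'_1\cup\Vrt}$ and take an undirected spanning forest there.
\end{itemize}
The spanning forest deletes every $T_2$-edge that closes an undirected cycle. The paper's rule (C) deletes only edges lying on directed cycles modulo $\Vrt$. Acyclicity then follows from your projection argument: $T'_1$-steps collapse inside a class, and an oriented forest admits no nontrivial directed closed walk.

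\textbf{What your approach buys.}
\begin{itemize}
\item You apply (i) and (ii) only to the original $T_1$ and $T_2$, using (2) to pass between $\barrow{\ast}{T_1\cup\Vrt}$ and $\barrow{\ast}{T'_1\cup\Vrt}$. So you need not re-verify the hypotheses after each deletion.
\item The ``tree-shaped'' structure in the coarse quotient is explicit.
\item With choice, your argument also covers infinite $T_i$. The lemma as stated does not assume finiteness, although the paper's termination argument uses it.
\end{itemize}

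\textbf{What the paper's approach buys.} Each step is local and checkable in isolation. Since it only removes edges on directed cycles, it can retain more of $T_2$.

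\textbf{A simplification.} Your ``delicate case'' is unnecessary. By the cut argument, a closed walk in a forest crosses each forest edge it uses equally often in both directions. So a single $T'_2$-step already forces a traversal against orientation, and the contradiction is immediate.
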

\begin{proof}
  We consider the following conversion rules on pairs of ARSs:
  \[
  \begin{array}{l@{\hspace{10pt}}l}
    \infer[
      \text{if $u' \larrow{}{T_1} u \barrow{\ast}{\Vrt} v \rarrow{}{T_1} v'$ and
        $u \neq v \lor u' \neq v'$}]
          {(T_1 \setminus (v \rarrow{}{} v')), T_2}
          {(T_1, T_2)}
          & \raise1.8ex\hbox{(A$_{11}$)} \\

    \infer[
      \text{if $u' \larrow{}{T_1} u \barrow{\ast}{\Vrt} v \rarrow{}{T_2} v'$}]
          {(T_1, T_2 \setminus (v \rarrow{}{} v'))}
          {(T_1, T_2)}
          & \raise1.8ex\hbox{(A$_{12}$)} \\

    \infer[
      \text{if $u' \larrow{}{T_2} u \barrow{\ast}{\Vrt} v \rarrow{}{T_2} v'$
        and $u \neq v \lor u' \neq v'$}]
          {(T_1, T_2 \setminus (v \rarrow{}{} v'))}
          {(T_1, T_2)}
          & \raise1.8ex\hbox{(A$_{22}$)} \\

    \infer[
      \text{if $(u,u') \in T_1$ and $u' \mathrel{(\barrow{\ast}{\Vrt}
          \rarrow{}{T_1} \barrow{\ast}{\Vrt})^\ast} u$}]
          {(T_1 \setminus (u \rarrow{}{} u'), T_2)}
          {(T_1,T_2)}
          & \raise1.8ex\hbox{(B)} \\

   \infer[
     \text{if $(u,u') \in T_2$ and $u' \mathrel{(\barrow{\ast}{\Vrt}
         \rarrow{}{T_1 \cup T_2}\barrow{\ast}{\Vrt})^\ast} u$}]
         {(T_1, T_2 \setminus (u \rarrow{}{} u'))}
         {(T_1,T_2)}
         & \raise1.8ex\hbox{(C)} \\
  \end{array}
  \]
  We denote the resulting pair in the conclusion of each rule by $(T'_1, T'_2)$.

  First, we show that each single conversion step preserves properties (1)--(3).
  Since property (1) holds trivially, we focus on property (2) for each rule:
  \begin{description}
    \item[Rule (A$_{11}$):] Since $T_1$ is compatible over $\Vrt$,
      the peak $u' \larrow{}{T_1} u \barrow{\ast}{\Vrt} v \rarrow{}{T_1} v'$
      implies $u' \barrow{\ast}{\Vrt} v'$. Since the edge $u \rarrow{}{T_1} u'$
      is not removed, it remains in $T'_1$.
      Thus, we have $v \barrow{\ast}{\Vrt} u \barrow{\ast}{T'_1 \cup \Vrt} u'
      \barrow{\ast}{\Vrt} v'$, which gives $v \barrow{\ast}{T'_1 \cup\Vrt} v'$.
    \item[Rules (A$_{12}$), (A$_{22}$), and (C):]
      These cases are trivial for property (2) because $T_1 = T'_1$.
    \item[Rule (B):]
      The removed edge $u \rarrow{}{T_1} u'$ satisfies the condition
      $u' \mathrel{(\barrow{\ast}{\Vrt} \rarrow{}{T_1}\barrow{\ast}{\Vrt})^\ast}
      u$.
      By choosing a shortest sequence among those satisfying this condition,
      it follows by the minimality that no $T_1$-edge in the sequence is removed
      in this step.
      Hence, all these $T_1$-edges belong to $T'_1$, resulting in
      $u' \barrow{\ast}{T'_1 \cup \Vrt} u$ and ensuring that the conversion
      relation is preserved.
  \end{description}

  Next, we verify property (3) for each rule:
  \begin{description}
  \item[Rule (A$_{11}$):] The preservation of property (3) directly follows
    from property (2),
    which has already been established.
  \item[Rules (A$_{12}$), (A$_{22}$):]
    In these rules, an edge $v \rarrow{}{T_2} v'$ is removed under the
    precondition $u' \larrow{}{T_i} u \barrow{\ast}{\Vrt} v \rarrow{}{T_2} v'$
    for $i \in \{1,2\}$.
    Since $T_1 \cup T_2$ is $T_1$-compatible over $\Vrt$,
    this peak implies $u' \barrow{\ast}{T_1 \cup \Vrt} v'$.
    Since neither of these rules modifies $T_1$,
    we have $T_1 = T'_1$, which implies $u' \barrow{\ast}{T'_1 \cup \Vrt} v'$.
    Furthermore, because the edge $u \rarrow{}{T_i} u'$ remains intact,
    it belongs to $T'_1 \cup T'_2$.
    Therefore, we obtain $v \barrow{\ast}{\Vrt} u \barrow{}{T'_1 \cup T'_2} u'
    \barrow{\ast}{T'_1 \cup \Vrt} v'$,
    which establishes $v \barrow{\ast}{T'_1 \cup T'_2 \cup \Vrt} v'$.
  \item[Rule (B):] As shown in the proof of property (2), the removed edge
    $u \rarrow{}{T_1} u'$ satisfies $u' \barrow{\ast}{T'_1 \cup \Vrt} u$,
    which immediately implies $u \barrow{\ast}{T'_1 \cup T'_2 \cup \Vrt} u'$.
  \item[Rule (C):] Similarly to Rule (B), the removed edge
    $u \rarrow{}{T_2} u'$ satisfies $u' \mathrel{(\barrow{\ast}{\Vrt}
    \rarrow{}{T_1 \cup T_2} \barrow{\ast}{\Vrt})^\ast} u$.
    By choosing a shortest sequence and applying a minimality argument analogous
    to that for Rule (B), we ensure that all edges in the sequence remain
    intact and thus belong to $T'_1 \cup T'_2$. Consequently,
    we have $u' \barrow{\ast}{T'_1 \cup T'_2 \cup \Vrt} u$,
    ensuring the preservation of the relation.
  \end{description}

  On the other hand, each single conversion step preserves the preconditions (i)
  and (ii), which follows from properties (1)--(3) by
  Proposition~\ref{prop:preserveCompatibility}.
  Consequently, properties (1)--(3) are inductively preserved throughout the
  entire conversion process.

  Finally, we establish property (4). If $(T_1 \cup T_2)/\Vrt$ is not strongly
  deterministic, one of the rules (A$_{ij}$) is applicable.
  If $(T_1 \cup T_2)/\Vrt$ is not acyclic, then either Rule (B) or Rule (C) is
  applicable.
  If $\rarrow{}{T_1/\Vrt} \cap \rarrow{}{T_2/\Vrt} \neq \emptyset$,
  Rule (A$_{12}$) is applicable.
  Since $T_1$ and $T_2$ are finite sets of edges and each rule monotonically
  decreases the number of edges, this conversion process must terminate.
  Therefore, the irreducible final systems $T'_1$ and $T'_2$ satisfy property
  (4).
\end{proof}

\begin{lemma} \label{lem:make-spanning-graph}
  Let $\Vrt$ be an ARS, and let $G$ be a finite ARS such that
  $G \subseteq \Vrt$.
  Suppose $T_i$ ($i=1,2$) are finite ARSs satisfying
  $V_{T_1 \cup T_2} \subseteq V_G$.
  Assume that:
  \begin{enumerate}
    \item[(i)] $T_1$ is compatible over $\Vrt$, and
    \item[(ii)] $T_1 \cup T_2$ is $T_1$-compatible over $\Vrt$.
  \end{enumerate}
  Then, there exists a spanning graph $(G',T'_1,T'_2)$ such that:
  \begin{enumerate}
    \item $T_i \supseteq T'_i \ (i=1,2)$,
    \item $G \subseteq G' \conserveq \Vrt$,
    \item ${\barrow{\ast}{T_1\cup G}} \subseteq {\barrow{\ast}{T'_1\cup G'}}$,
      and
    \item ${\barrow{\ast}{T_1\cup T_2\cup G}} \subseteq
      {\barrow{\ast}{T'_1\cup T'_2\cup G'}}$.
  \end{enumerate}
  Furthermore, if $\Vrt$ is confluent, then $G'$ can be chosen to be confluent.
\end{lemma}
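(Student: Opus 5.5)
The plan is to combine Lemma~\ref{lem:treeshaped2}, which prunes the root edges, with Proposition~\ref{prop:conserv2} and Lemma~\ref{lem:merge}, which enlarge $G$ to a finite conservative sub-ARS of $\Vrt$. First apply Lemma~\ref{lem:treeshaped2} to $T_1,T_2$. This uses hypotheses (i) and (ii) and yields $T'_1\subseteq T_1$ and $T'_2\subseteq T_2$. These satisfy ${\barrow{\ast}{T_1\cup\Vrt}}={\barrow{\ast}{T'_1\cup\Vrt}}$ and ${\barrow{\ast}{T_1\cup T_2\cup\Vrt}}={\barrow{\ast}{T'_1\cup T'_2\cup\Vrt}}$. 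Moreover $(T'_1\cup T'_2)/\Vrt$ is strongly deterministic and acyclic, and $\rarrow{}{T'_1}\cap\rarrow{}{T'_2}=\emptyset$. Property (1) of the statement is then immediate, and $V_{T'_1\cup T'_2}\subseteq V_G$ still holds.

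Next we build $G'$. The equalities above hold only modulo the possibly infinite $\Vrt$. Each removed edge $(a,b)\in T_1\setminus T'_1$ therefore has a finite witness $a\barrow{\ast}{T'_1\cup\Vrt}b$, and similarly for the removed edges of $T_2$. Since $T_1\cup T_2$ is finite, we collect the finitely many $\Vrt$-edges occurring in these witnesses into a finite ARS $H\subseteq\Vrt$; this follows the idea of Proposition~\ref{Prop:FiniteApproxARS}. Then $G\cup H$ is finite and contained in $\Vrt$, so Proposition~\ref{prop:conserv2} gives a finite $G'$ with $G\cup H\subseteq G'\conserveq\Vrt$, which is property (2). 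Properties (3) and (4) follow because every edge of $G$ is in $G'$, and every removed $T_i$-edge is simulated by $T'_1\cup T'_2$ (resp.\ $T'_1$) together with $H$-edges lying in $G'$. If $\Vrt$ is confluent, we further apply Lemma~\ref{lem:merge} to $G'$, obtaining a finite confluent $G''$ with $G'\conserveq G''\conserveq\Vrt$. Since inclusions only grow, (2)--(4) remain valid for $G''$.

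It remains to check that $(G',T'_1,T'_2)$ is a spanning graph. The conditions $G'\conserveq\Vrt$, $V_{T'_1\cup T'_2}\subseteq V_{G'}$ and disjointness are already in hand. For the quotient conditions, note that $G'\conserveq\Vrt$ means ${\barrow{\ast}{G'}}={\barrow{\ast}{\Vrt}}|_{V_{G'}\times V_{G'}}$. Hence two vertices of $V_{T'_1\cup T'_2}$ lie in the same $G'$-class iff they lie in the same $\Vrt$-class. So $(T'_1\cup T'_2)/G'$ is isomorphic, on the classes that carry edges, to $(T'_1\cup T'_2)/\Vrt$. Strong determinism and acyclicity transfer directly, since both are statements about equalities of classes of endpoints.

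I expect the main obstacle to be property (2) of Lemma~\ref{lem:treeshaped2}, that is, ensuring that pruning loses no connectivity, including the delicate minimal-sequence argument for rule (B). That step is already supplied by the lemma, however. In the present proof the only real care needed is to make the witnesses finite before invoking conservativeness, and I would do this as above.
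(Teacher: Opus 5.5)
Your proposal is correct and follows the paper's proof essentially step for step: prune with Lemma~\ref{lem:treeshaped2}, enlarge $G$ via Proposition~\ref{prop:conserv2}, transfer strong determinism and acyclicity through $G'\conserveq\Vrt$, and finish with Lemma~\ref{lem:merge}. The only deviation is your finite witness set $H$, which is harmless but unnecessary: since $V_{T'_1\cup T'_2}\subseteq V_{G'}$ and $G'\conserveq\Vrt$, Proposition~\ref{prop:ConservUnion} already turns any $T'_1\cup\Vrt$ (resp.\ $T'_1\cup T'_2\cup\Vrt$) conversion between vertices of $V_{G'}$ into one in $T'_1\cup G'$ (resp.\ $T'_1\cup T'_2\cup G'$), and this is how the paper derives (3) and (4).
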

\begin{proof}
  First, by applying Lemma~\ref{lem:treeshaped2}, we obtain ARSs $T'_1$ and $T'_2$
  such that:
  \begin{align}
    & T_i \supseteq T'_i, \label{eq:lem:make-spanning-graph_subTi} \\
    & {\barrow{\ast}{T_1\cup\Vrt}} = {\barrow{\ast}{T'_1\cup\Vrt}},
    \label{eq:lem:make-spanning-graph_T1d} \\
    & {\barrow{\ast}{T_1\cup T_2\cup\Vrt}} =
      {\barrow{\ast}{T'_1\cup T'_2\cup\Vrt}},
      \label{eq:lem:make-spanning-graph_T12d} \\
    & (T'_1\cup T'_2)/\Vrt \text{ is acyclic, strongly deterministic, and }
      \rarrow{}{T'_1} \cap \rarrow{}{T'_2} = \emptyset.
      \label{eq:lem:make-spanning-graph_st-det}
  \end{align}
  Next, by Proposition~\ref{prop:conserv2}, we obtain an ARS $G'$ satisfying:
  \begin{align}
    & G \subseteq G' \conserveq \Vrt. \label{eq:lem:make-spanning-graph_Gd}
  \end{align}

  We first verify that $(G', T'_1, T'_2)$ forms a spanning graph.
  The relation $G' \conserveq \Vrt$ follows directly from
  \eqref{eq:lem:make-spanning-graph_Gd}.
  The inclusion
  $V_{T'_1\cup T'_2} \subseteq V_{T_1\cup T_2} \subseteq V_{G} \subseteq V_{G'}$
  holds by assumption, and the disjointness condition
  $\rarrow{}{T'_1} \cap \rarrow{}{T'_2} = \emptyset$
  is guaranteed by \eqref{eq:lem:make-spanning-graph_st-det}.
  Since $V_{T'_1 \cup T'_2} \subseteq V_{G'}$ and $G' \conserveq \Vrt$,
  we have an isomorphism on quotient sets:
  $V_{T'_1 \cup T'_2} / G' \cong V_{T'_1 \cup T'_2} / \Vrt$.
  Combining this with \eqref{eq:lem:make-spanning-graph_st-det},
  it follows that $(T'_1\cup T'_2)/G'$ is acyclic and strongly deterministic.
  Therefore, $(G', T'_1, T'_2)$ is indeed a spanning graph.

  It remains to establish properties (1)--(4). Properties (1) and (2) are
  immediately given by \eqref{eq:lem:make-spanning-graph_subTi}
  and \eqref{eq:lem:make-spanning-graph_Gd}, respectively.
  Property (3) follows from \eqref{eq:lem:make-spanning-graph_Gd},
  \eqref{eq:lem:make-spanning-graph_T1d}, and
  Proposition~\ref{prop:ConservUnion}, which together induce the following chain:
  \[
    {\barrow{\ast}{T_1\cup G}} = {\barrow{\ast}{T_1\cup \Vrt}}|_{V_G}
    \subseteq {\barrow{\ast}{T_1\cup \Vrt}}|_{V_{G'}}
    = {\barrow{\ast}{T'_1\cup \Vrt}}|_{V_{G'}}
    = {\barrow{\ast}{T'_1\cup G'}}.
  \]
  Property (4) is analogously established using
  \eqref{eq:lem:make-spanning-graph_Gd},
  \eqref{eq:lem:make-spanning-graph_T12d},
  and Proposition~\ref{prop:ConservUnion}.

  Furthermore, if $\Vrt$ is confluent, we can replace $G'$ with a confluent ARS
  $G''$ by applying Lemma~\ref{lem:merge}, which completes the proof.
\end{proof}

\subsection{Edge commutativity and its preservation by normalization}
\label{subsec:commutativity}

Edge commutativity is the property ensuring that red edges exist for any blue
edge, as illustrated in Fig.~\ref{fig:normalization}.
\begin{figure}
  \begin{center}
    \begin{minipage}{0.59\textwidth}
      \begin{center}
        \begin{tikzpicture}[line width=0.07em, x=5, y=5]

          \node at (1,2)(u) {\small $u$};
          \node at (-1,-2)(uz) {$u_0$};
          \node[color=red] at (2,-8)(v) {\small $v$};
          \node at (11,1)(v2) {};
          \node at (11,2)(ud) {\small $u'$};
          \node[color=red] at (8,-8)(vd) {\small $v'$};

          \path[->,color=blue] (u) edge (ud);
          \node[below,xshift=-24] at (ud) (x) {\scriptsize $\Hol$};
          \path[->] (u) edge node [xshift=-4,yshift=2] {\scriptsize $\ast$}
          node [xshift=-10,yshift=-3] {\scriptsize $\Vrt$}  (uz);
          \path[->,color=red] (uz) edge node [xshift=-4,yshift=1]
               {\scriptsize $\ast$} node [xshift=-7,yshift=-5]
               {\scriptsize $\Vrt$}  (v);

          \path[<->,color=red]
          (ud) edge node [xshift=-3,yshift=2] {\scriptsize $\ast$} (vd);
          \node[color=red,xshift=20,yshift=10] at (vd) (x3cd)
               {\scriptsize $\Vrt\cup N$};

          \path[->,color=red] (v) edge (vd);
          \node[color=red,below,xshift=-14] at (vd) (udvd) {\scriptsize $\Hol$};
        \end{tikzpicture}
      \end{center}
    \end{minipage}

    \medskip
    \caption{Edge commutativity of $(u,v)$ over $(\Vrt, \Hol)$}
    \label{fig:normalization}
  \end{center}
\end{figure}

\begin{definition}\label{def:edge-commutative}
  Let $\Vrt$, $\Hol$, and $N$ be ARSs.
  An edge $(u,u') \in {\rarrow{}{\Hol}}$
  is \emph{$N$-edge-commutative over $(\Vrt,\Hol)$}
  if, for any $w\in V_{\Vrt}$ with $u \Rarrow{\ast}{}{}{\Vrt} w$,
  there exists $(v,v') \in {\rarrow{}{\Hol}}$ such that
  \begin{enumerate}
    \item $w \rarrow{\ast}{\Vrt} v$ and
    \item $u' \barrow{\ast}{\Vrt \cup N} v'$.
  \end{enumerate}
  Moreover, if condition (2') $u' \barrow{\ast}{\Vrt} v'$, which is stronger
  than (2),  holds, we simply say that the edge $(u,u')$ is
  \emph{edge-commutative over $(\Vrt,\Hol)$}.

  We say that an ARS $T \subseteq \Hol$ is edge-commutative
  (resp., $N$-edge-commutative) over $(\Vrt,\Hol)$ if every edge
  $(u,u') \in {\rarrow{}{T}}$ is so.
\end{definition}

\begin{proposition}\label{prop:preserveCommutativity}
  Let $T_1$, $T_2$, $T'_1$, $T'_2$, and $\Vrt$ be ARSs
  such that $T_1 \supseteq T'_1$, $T_2 \supseteq T'_2$, and
  ${\barrow{\ast}{T_1 \cup \Vrt}} = {\barrow{\ast}{T'_1 \cup \Vrt}}$.
  Then, the following hold:
  \begin{enumerate}
  \item If $T_1$ is edge-commutative over $(\Vrt,\Hol)$, then so is $T'_1$.
  \item If $T_2$ is $T_1$-edge-commutative over $(\Vrt,\Hol)$, then $T'_2$ is
    $T'_1$-edge-commutative over $(\Vrt,\Hol)$.
  \end{enumerate}
\end{proposition}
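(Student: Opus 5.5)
The plan is to check the two claims directly from Definition~\ref{def:edge-commutative}, much as Proposition~\ref{prop:preserveCompatibility} was handled. What makes this work is that edge commutativity is a condition on each single edge, and the only ARS it quantifies over is $\Hol$ (together with $\Vrt$ and $N$). The family $T$ enters only through the choice of which edges are required to satisfy it and, in part~(2), through the auxiliary ARS $N$.

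For part~(1), take $(u,u') \in {\rarrow{}{T'_1}}$. Since $T'_1 \subseteq T_1$, we also have $(u,u') \in {\rarrow{}{T_1}}$. Edge commutativity of $T_1$ then gives, for every $w$ with $u \Rarrow{\ast}{}{}{\Vrt} w$, an edge $(v,v') \in {\rarrow{}{\Hol}}$ with $w \rarrow{\ast}{\Vrt} v$ and $u' \barrow{\ast}{\Vrt} v'$. These are exactly the conditions required for $(u,u')$ as a member of $T'_1$, so nothing further is needed. The hypothesis on connectivity is not used in this part.

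For part~(2), take $(u,u') \in {\rarrow{}{T'_2}} \subseteq {\rarrow{}{T_2}}$ and any $w$ with $u \Rarrow{\ast}{}{}{\Vrt} w$. The $T_1$-edge-commutativity of $T_2$ gives $(v,v') \in {\rarrow{}{\Hol}}$ with $w \rarrow{\ast}{\Vrt} v$ and $u' \barrow{\ast}{\Vrt \cup T_1} v'$. The one real step is to replace $T_1$ by $T'_1$ in this last conversion. This is exactly what the hypothesis ${\barrow{\ast}{T_1 \cup \Vrt}} = {\barrow{\ast}{T'_1 \cup \Vrt}}$ provides, so we obtain $u' \barrow{\ast}{\Vrt \cup T'_1} v'$. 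Hence $(u,u')$ is $T'_1$-edge-commutative over $(\Vrt,\Hol)$.

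I expect no real obstacle. The only point needing care is that the hypothesis equates the conversion relations as relations on all vertices, including vertices of $\Hol$ or $V_\Vrt$ that need not lie in $V_{T_1}$. So the rewriting of $\barrow{\ast}{\Vrt\cup T_1}$ into $\barrow{\ast}{\Vrt\cup T'_1}$ applies to $u'$ and $v'$ without any restriction on where they live. Since union is commutative, $\barrow{\ast}{\Vrt\cup T_1}$ and $\barrow{\ast}{T_1\cup\Vrt}$ denote the same relation.
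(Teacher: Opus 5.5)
Your proposal is correct and follows the paper's proof: part~(1) is immediate from $T'_1 \subseteq T_1$, since edge commutativity is checked edge by edge. Part~(2) uses $T'_2 \subseteq T_2$ together with the equality ${\barrow{\ast}{T_1 \cup \Vrt}} = {\barrow{\ast}{T'_1 \cup \Vrt}}$ to replace $T_1$ by $T'_1$ in condition~(2) of Definition~\ref{def:edge-commutative}.
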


\begin{proof}
  Property (1) is clear from $T_1 \supseteq T'_1$.
  Property (2) also follows from $T_2 \supseteq T'_2$
  and ${\barrow{\ast}{T_1 \cup \Vrt}} = {\barrow{\ast}{T'_1 \cup \Vrt}}$.
\end{proof}
We sometimes omit "over $(\Vrt, \Hol)$" if it is clear from the context.
In the subsequent sections, we use the empty reduction graph
$\<\emptyset,\emptyset\>$ for $N$ when normalizing a non-collapsing edge
$(u,u')$.
Conversely, we use a reduction graph consisting of non-collapsing top edges for $N$ when normalizing collapsing edges.

\begin{lemma}\label{lem:stability-of-commutativity}
  Let $G \conserveq \Vrt$ and $T_1 \cup T_2 \subseteq \Hol$.
  Let $(G,T_1,T_2) \vdash^{(u,u')}_{(v,v')} (G',T'_1,T'_2)$ be a normalization with
  respect to $\Vrt$ such that $(v,v')\in\Hol$.
  Then, the normalization preserves the following properties.
  \begin{enumerate}
  \item $T_1$ is edge-commutative over $(\Vrt,\Hol)$.
  \item $T_2$ is $T_1$-edge-commutative over $(\Vrt,\Hol)$.
  \end{enumerate}
\end{lemma}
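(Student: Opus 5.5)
The plan is to show that both properties pass from $(T_1,T_2)$ to $(T'_1,T'_2)$. We use two facts: edge commutativity of a single edge depends only on $\Vrt$, $\Hol$, $N$ and the edge itself, and not on $G$; and the normalization replaces exactly one edge $(u,u')$ by $(v,v')$. Edges of $T'_i$ other than $(v,v')$ already lie in $T_i$. For those edges only the ``$N$'' part can change in property (2), namely $N=T_1$ versus $N=T'_1$.

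\textbf{Step 1 (the new edge).} We first handle $(v,v')$ itself. By (i) of Definition~\ref{def:normalization-step}, $u\rarrow{\ast}{\Vrt} v$ holds, since $G\subseteq\Vrt$ and $u \rarrow{\ast}{G} w \rarrow{\ast}{\Vrt} v$. Now take any $z$ with $v\Rarrow{\ast}{}{}{\Vrt} z$. Because $\Vrt$ is confluent (it is the ambient confluent ARS in this setting), Proposition~\ref{prop:pnf}(\ref{subprop:pnf-3}) applied to $u\rarrow{\ast}{\Vrt}v\Rrightarrow_{\Vrt} z$ gives $u\Rrightarrow_{\Vrt} z$. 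Commutativity of the old edge $(u,u')$ then yields $(p,p')\in\Hol$ with $z\rarrow{\ast}{\Vrt}p$ and $u'\barrow{\ast}{\Vrt\cup N}p'$. It remains to connect $v'$ to $u'$. From (ii) and (iii) of Definition~\ref{def:normalization-step} we have $u'\barrow{\ast}{G}x\rarrow{\ast}{\Vrt}y\larrow{\ast}{\Vrt}v'$ when $i=1$, so $v'\barrow{\ast}{\Vrt}u'$ and hence $v'\barrow{\ast}{\Vrt\cup N}p'$; this gives plain edge commutativity for $T'_1$. When $i=2$, (ii) only gives $u'\barrow{\ast}{G\cup T_1}x$, so $v'\barrow{\ast}{\Vrt\cup T_1}u'$, and we obtain $v'\barrow{\ast}{\Vrt\cup T_1}p'$. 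In this case $T'_1=T_1$, which is exactly $T'_1$-edge-commutativity.

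\textbf{Step 2 (old edges in $T_2$ when $i=1$).} Here we must show that $\barrow{\ast}{\Vrt\cup T_1}\subseteq\barrow{\ast}{\Vrt\cup T'_1}$. Equivalently, the removed edge $(u,u')$ must be recoverable. This follows as in the proof of Lemma~\ref{prop:norm-basic}(\ref{prop:norm-basic-0}): $u\rarrow{\ast}{\Vrt}v\to_{T'_1}v'\barrow{\ast}{\Vrt}u'$. Once this inclusion holds, every edge of $T'_2=T_2$ that was $T_1$-commutative is also $T'_1$-commutative. For $i=2$ we have $T'_1=T_1$, so nothing changes for the old edges. For property (1), the old edges of $T'_1$ are a subset of $T_1$, and plain commutativity involves no $N$, so they remain commutative.

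\textbf{Main obstacle.} The delicate point is Step 1's use of confluence of $\Vrt$ to transport $v\Rrightarrow_{\Vrt} z$ back to $u\Rrightarrow_{\Vrt} z$. The lemma as stated does not list confluence of $\Vrt$ explicitly. If it is not available, I would try to argue directly that $u\rarrow{\ast}{\Vrt} v$ together with $v\Rrightarrow_{\Vrt} z$ still lets the definition for $(u,u')$ be applied to $z$. That would require $u\Rrightarrow_{\Vrt} z$, which in general needs confluence. So I would instead invoke the standing assumption, used throughout Section~\ref{Sec:CompatCommute}, that $\Vrt$ is confluent.
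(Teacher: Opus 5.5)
Your case split and individual steps match the paper's proof.
\begin{itemize}
\item For the new edge, you take $u\rarrow{\ast}{\Vrt}v$ from (i). From (ii) and (iii) you take $v'\barrow{\ast}{\Vrt}u'$, or $v'\barrow{\ast}{\Vrt\cup T_1}u'$ with $T'_1=T_1$ when $i=2$.
\item For the untouched $T_2$-edges when $i=1$, you recover $(u,u')$ inside $\Vrt\cup T'_1$. This is what the paper extracts from Lemma~\ref{prop:norm-basic}(\ref{prop:norm-basic-0}).
\end{itemize}

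The flaw is your ``main obstacle'', which comes from misreading Definition~\ref{def:edge-commutative}. There, $u\Rarrow{\ast}{}{}{\Vrt}w$ denotes plain reachability $u\rarrow{\ast}{\Vrt}w$, not the convergence relation $\Rrightarrow_{\Vrt}$. The macro only places the $\ast$ above the arrow, just as $\Rarrow{\tpp}{}{}{}$ marks top steps. Every use of the definition in the paper reads it this way. Under your reading, the definition would be vacuous for a source with no $\Rrightarrow_{\Vrt}$-target, e.g.\ $f(c,g(c))$ in the infinite $\Vrt=G^\infty$ of Example~\ref{ex:normalization}. Lemma~\ref{lem:norm-possible} has only $u\rarrow{\ast}{\Vrt}w$ with $w\in\pnf_G$, so it could then not extract the edge $(v,v')$.

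With the correct reading, the new edge needs only transitivity. From $v\rarrow{\ast}{\Vrt}z$ and $u\rarrow{\ast}{\Vrt}v$ you get $u\rarrow{\ast}{\Vrt}z$, and you apply the ($N$-)edge-commutativity of $(u,u')$ to $z$ directly. Neither Proposition~\ref{prop:pnf}(\ref{subprop:pnf-3}) nor confluence of $\Vrt$ is involved.

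As written, your Step~1 applies that proposition to a hypothesis $v\Rrightarrow_{\Vrt}z$ that the definition does not supply. It also imports confluence of $\Vrt$, which the lemma does not assume. Confluence of $\Vrt$ does hold where the lemma is used (hypothesis (iii) of Lemma~\ref{lem:norm-possible}), but it is an extra, unneeded hypothesis. Replace that step by the one-line transitivity argument; the rest of your proof then goes through unchanged.
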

\begin{proof}
  From (v) of Definition~\ref{def:normalization-step}, we have
  $(u,u') \in T_i \iff (v,v') \in T'_i$ and
  ${\to_{T_1\cup T_2}}\setminus\{(u,u')\} = {\to_{T'_1\cup T'_2}}\setminus\{(v,v')\}$.

\begin{enumerate}
  \item If $(u,u') \in {\to_{T_2}}$, the claim is trivial because $T_1=T'_1$.
  For the case where $(u,u') \in {\to_{T_1}}$,
  we show that $(v,v') \in {\to_{T'_1}}$ is edge-commutative.
  Suppose $v\rarrow{\ast}{\Vrt} w$.
  Since $u \rarrow{\ast}{\Vrt} v$ follows from
  Definition~\ref{def:normalization-step}(i) and $G \subseteq \Vrt$,
  we obtain $u\rarrow{\ast}{\Vrt} w$.
  Combining this with the edge-commutativity of $(u,u') \in {\to_{\Hol}}$,
  there exists $(s,s') \in {\Hol}$ such that $w \rarrow{\ast}{\Vrt} s$ and
  $u' \barrow{\ast}{\Vrt} s'$.
  Here, $u'\barrow{\ast}{\Vrt} v'$ holds from conditions (ii) and (iii) of
  Definition~\ref{def:normalization-step} and $G \subseteq \Vrt$.
  Thus, $v' \barrow{\ast}{\Vrt} s'$, and hence,
  $(v,v') \in {\to_{T'_1}}$ is edge-commutative over $(\Vrt,\Hol)$.

  \item
  If $(u,u') \in {\to_{T_1}}$, then $T'_2 = T_2$ holds.
  From Lemma~\ref{prop:norm-basic}(\ref{prop:norm-basic-0}) and
  $G \subseteq G' \subseteq \Vrt$,
  we obtain ${\barrow{\ast}{\Vrt \cup T_1}} = {\barrow{\ast}{\Vrt \cup T'_1}}$.
  Since $T_2$ is $T_1$-edge-commutative, this equality ensures that $T'_2$ is
  $T'_1$-edge-commutative.

  If $(u,u') \in {\to_{T_2}}$, similarly to the proof of (1),
  we show that $(v,v')$ is $T'_1$-edge-commutative over $(\Vrt,\Hol)$.
  Suppose $v\rarrow{\ast}{\Vrt} w$.
  Since $u \rarrow{\ast}{\Vrt} v$ follows from
  Definition~\ref{def:normalization-step}(i) and $G \subseteq \Vrt$,
  we obtain $u\rarrow{\ast}{\Vrt} w$.
  Combining this with the $T_1$-edge-commutativity of $(u,u') \in {\to_{\Hol}}$,
  there exists $(s,s') \in {\Hol}$ such that $w \rarrow{\ast}{\Vrt} s$ and
  $u' \barrow{\ast}{\Vrt \cup T_1} s'$.
  Here, $u'\barrow{\ast}{\Vrt \cup T'_1} v'$ follows from conditions (ii) and
  (iii) of Definition~\ref{def:normalization-step},
  $G \subseteq \Vrt$, and $T_1 = T'_1$.
  Thus, $v' \barrow{\ast}{\Vrt \cup T_1} s'$, and hence,
  $(v,v') \in {\to_{T'_2}}$ is $T'_1$-edge-commutative over $(\Vrt,\Hol)$.
\end{enumerate}
\end{proof}

\subsection{Making Finite ARSs Confluent}
\label{subsec:confluent-edge-norm}

In this subsection, we discuss how to normalize a spanning graph $(G, T_1, T_2)$
so that $G$ and $G \cup T_1 \cup T_2$ are finite and confluent
(Theorem~\ref{thm:seq-edge-norm}).

\begin{lemma}\label{lem:norm-possible}
  Let $\Hol$ and $\Vrt$ be ARSs.
  For finite ARSs $G$, $T_1$, and $T_2$,
  suppose that ${\mathcal S} = (G,T_1,T_2)$ is a spanning graph with
  $m({\mathcal S})>0$.
  Assume that
  \begin{enumerate}
    \item[(i)] $T_1$ is edge-commutative over $(\Vrt,\Hol)$,
    \item[(ii)] $T_2$ is $T_1$-edge-commutative over $(\Vrt,\Hol)$,
    \item[(iii)] $G$ and $\Vrt$ are confluent, and
    \item[(iv)] $G$ is finite.
  \end{enumerate}
  Then, there exists a spanning graph ${\mathcal S}' = (G',T'_1,T'_2)$
  such that ${\mathcal S} \vdash_n {\mathcal S}'$,
  satisfying $G \subseteq G' \conserveq \Vrt$ and
  ${\barrow{\ast}{G\cup T_1\cup T_2}} \subseteq
  {\barrow{\ast}{G'\cup T'_1\cup T'_2}}$.
  Furthermore, ${\mathcal S}'$ satisfies conditions (i)--(iv).
\end{lemma}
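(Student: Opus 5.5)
Since $m({\mathcal S})>0$, Definition~\ref{def:measure} gives an edge $(u,u')\in T_i$ with $u\notin\pnf_G$ and $w_{\mathcal S}(u)>0$. The plan is to build the data $w,x,y,v,v'$ of Definition~\ref{def:normalization-step} for this edge and then use Lemmas~\ref{lem:reduction-welldef}, \ref{prop:norm-basic} and~\ref{lem:stability-of-commutativity}.

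First, $G$ is finite and hence looping. By Proposition~\ref{prop:looping_pnf}, choose $w\in\pnf_G$ with $u\rarrow{\ast}{G}w$; similarly choose $x\in\pnf_G$ with $u'\rarrow{\ast}{G}x$, which gives (ii) for either $i$. Since $G\subseteq\Vrt$, we have $u\rarrow{\ast}{\Vrt}w$. Apply edge commutativity of $(u,u')$ at $w$: hypothesis (i) if $i=1$, (ii) if $i=2$. This yields $(v,v')\in\Hol$ with $w\rarrow{\ast}{\Vrt}v$, which is condition (i). It also yields $u'\barrow{\ast}{\Vrt}v'$ when $i=1$, and $u'\barrow{\ast}{\Vrt\cup T_1}v'$ when $i=2$.

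Next, obtain (iii), namely $x\rarrow{\ast}{\Vrt}y\larrow{\ast}{\Vrt}v'$.
\begin{itemize}
\item For $i=1$: $x\barrow{\ast}{\Vrt}u'\barrow{\ast}{\Vrt}v'$, so confluence of $\Vrt$ (hypothesis (iii)) gives $y$.
\item For $i=2$: the conversion $u'\barrow{\ast}{\Vrt\cup T_1}v'$ may pass through $T_1$-edges. Here the choice of $x$ must be adjusted, and this is the main obstacle. The conversion cannot simply be collapsed into $\Vrt$. Instead we use the flexibility in (ii), which only asks $u'\barrow{\ast}{G\cup T_1}x$. The $T_1$-edges used lie in the finite spanning structure, and $T_1$-steps only lower the weight (Proposition~\ref{prop:rewrite-decrease}).
\item So I would take $x$ to be a pseudo-normal form in the $G$-component reached from $u'$ by the last $T_1$-step of the conversion. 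The segment after that step is a $\Vrt$-conversion to $v'$. Confluence of $\Vrt$ then again yields $y$. If needed, first apply Lemma~\ref{lem:make-spanning-graph} to see that the conversion can be routed through $V_G$ up to $\Vrt$-connectivity, using $G\conserveq\Vrt$.
\end{itemize}

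With these data, define $G'$ by (iv) and $T'_i$ by (v). Lemma~\ref{lem:reduction-welldef} shows ${\mathcal S}'$ is a spanning graph. Lemma~\ref{prop:norm-basic}(\ref{prop:norm-basic-1}) gives $G\subseteq G'\conserveq\Vrt$, and Lemma~\ref{prop:norm-basic}(\ref{prop:norm-basic-3}) gives the inclusion of conversions. For the preserved conditions:
\begin{itemize}
\item Conditions (i) and (ii) follow from Lemma~\ref{lem:stability-of-commutativity}, since $(v,v')\in\Hol$.
\item Condition (iii) follows from Lemma~\ref{prop:norm-basic}(\ref{lem:reduction-conf-4}), with $\Vrt$ unchanged.
\item Condition (iv) holds because $G'$ adds only finitely many finite sequences to $G$.
\end{itemize}
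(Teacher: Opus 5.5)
Your proposal is correct and follows the paper's proof. You take $w\in\pnf_G$ below $u$ and apply (i) or (ii) at $w$ to obtain $(v,v')\in\Hol$. You take $x$ from the $G$-component after the last $T_1$-step of the conversion (or from that of $u'$ when $i=1$), close with confluence of $\Vrt$, and conclude via Lemmas~\ref{lem:reduction-welldef}, \ref{prop:norm-basic} and~\ref{lem:stability-of-commutativity}. Choosing $x$ as a pseudo-normal form is in fact slightly more careful than the paper, which sets $x:=u'$ resp.\ $x:=u_{k+1}$ directly.

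The detours through Lemma~\ref{lem:make-spanning-graph} and Proposition~\ref{prop:rewrite-decrease} are unnecessary. Condition (ii) for $i=2$ holds because every maximal $\Vrt$-segment of the conversion joins two vertices of $V_G$ (as $V_{T_1}\subseteq V_G$), and so is a $G$-conversion by $G\conserveq\Vrt$.
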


\begin{proof}
  Since $m({\mathcal S})>0$,
  there exist $u \notin\pnf_G$ and $i \in \{1,2\}$ such that $(u,u') \in T_i$.

  Since $G$ is confluent (from (iii)), $G_{\ddiv}$, i.e., $G/{\stcongstarss{G}}$
  is confluent by
  Proposition~\ref{prop:termination-looping}\eqref{prop:termination-looping-1}.
  Since $G$ is finite, $G_{\ddiv}$ is terminating by
  Proposition~\ref{prop:termination-looping}\eqref{prop:termination-looping-3}.
  Hence, there uniquely exists a class $\classD{w}_{\stcongstarss{G}}$
  with $w \in \pnf_G$ such that $u \rarrow{\ast}{G} w$.
  Since $G \subseteq \Vrt$, we have $u \rarrow{\ast}{\Vrt} w$.
  Then, from assumptions (i) and (ii), there exists $(v,v') \in \Hol$ such that
  $w \rarrow{\ast}{\Vrt} v$, where $u' \barrow{\ast}{\Vrt} v'$ if $i = 1$, and
  $u' \barrow{\ast}{T_1 \cup \Vrt} v'$ if $i = 2$.

  \begin{description}
    \item[\textbf{Case 1:} $i=1$.]
    Let $y$ be a common reduct of $u'$ and $v'$ in $\Vrt$,
    i.e., $u' \rarrow{\ast}{\Vrt} y \larrow{\ast}{\Vrt} v'$.
    Then, by setting $x := u'$ in Definition~\ref{def:normalization-step},
    we obtain $(G,T_1,T_2) \vdash^{(u,u')}_{(v,v')} (G',T'_1,T'_2)$.

    \item[\textbf{Case 2:} $i=2$.]
    Let $u' \barrow{\ast}{T_1 \cup \Vrt} v'$ be a sequence of the form
    \[
    u' = u_0 \barrow{\ast}{\Vrt} v_0 \barrow{\ast}{T_1} u_1 \barrow{\ast}{\Vrt}
    v_1 \barrow{\ast}{T_1} \cdots \barrow{\ast}{T_1} u_k \barrow{\ast}{\Vrt}
    v_k \barrow{\ast}{T_1} u_{k+1} \barrow{\ast}{\Vrt} v'.
    \]
    Let $y$ be a common reduct of $u_{k+1}$ and $v'$ in $\Vrt$,
    i.e., $u_{k+1} \rarrow{\ast}{\Vrt} y \larrow{\ast}{\Vrt} v'$.
    Then, by setting $x := u_{k+1}$ in Definition~\ref{def:normalization-step},
    we obtain $(G,T_1,T_2) \vdash^{(u,u')}_{(v,v')} (G',T'_1,T'_2)$.
  \end{description}
  Thus, $(G',T'_1,T'_2)$ is a spanning graph by
  Lemma~\ref{lem:reduction-welldef},
  and also $G\subseteq G' \conserveq \Vrt$ and
  ${\barrow{\ast}{G\cup T_1\cup T_2}} \subseteq
  {\barrow{\ast}{G'\cup T'_1\cup T'_2}}$ hold
  by Lemma~\ref{lem:reduction-conf}(\ref{itm:seq-edge-confluence}) and
  (\ref{itm:seq-edge-inclusion}).

  Finally, the normalization $(G,T_1,T_2) \vdash^{(u,u')}_{(v,v')} (G',T'_1,T'_2)$
  preserves properties (i) and (ii) by
  Lemma~\ref{lem:stability-of-commutativity}, and properties (iii) and (iv) by
  Lemma~\ref{lem:reduction-conf}(\ref{lem:reduction-conf-4}).
\end{proof}

\begin{lemma}\label{lem:vdash-norm}
  Let ${\mathcal S} = (G,T_1,T_2)$ be a spanning graph.
  If $G$ is confluent and $m({\mathcal S}) = 0$ (i.e., $u \in \pnf_G$ for each
  $(u,u') \in T_1 \cup T_2$), then $G \cup T_1 \cup T_2$ is confluent.
\end{lemma}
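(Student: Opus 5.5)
The plan is to add the edges of $T_1 \cup T_2$ to $G$ one at a time and keep confluence at every step using Lemma~\ref{lem:expansion-PNF}. Since $(T_1\cup T_2)/G$ is strongly deterministic, the order of addition should not matter. $G$ may be infinite, but $T_1 \cup T_2$ is finite, so the process is a finite induction.

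Enumerate the edges of $T_1 \cup T_2$ as $(u_1,u'_1),\dots,(u_k,u'_k)$. Put $H_0 := G$ and $H_j := H_{j-1} \cup (u_j \to u'_j)$. We prove by induction on $j$ that $H_j$ is confluent; then $H_k = G \cup T_1 \cup T_2$ gives the claim. The base case is the assumed confluence of $G$. For the step, $H_{j-1}$ is confluent by the induction hypothesis and $u_j, u'_j \in V_G = V_{H_{j-1}}$, because $V_{T_1\cup T_2}\subseteq V_G$. By Lemma~\ref{lem:expansion-PNF} it then suffices to show condition (3), namely $u_j \in \pnf_{H_{j-1}}$.

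This is the key step. We know $u_j \in \pnf_G$ from $m({\mathcal S})=0$. Take any $z$ with $u_j \rarrow{\ast}{H_{j-1}} z$, and look at the first step of this sequence that uses an added edge $(u_l,u'_l)$ with $l<j$. The prefix before that step is a $G$-sequence $u_j \rarrow{\ast}{G} u_l$. Since $u_j \in \pnf_G$, we also get $u_l \rarrow{\ast}{G} u_j$, so $u_l \barrow{\ast}{G} u_j$. Hence $u_l \to_{T_1\cup T_2} u'_l$ and $u_j \to_{T_1\cup T_2} u'_j$ are two $T$-edges leaving the same class of $\barrow{\ast}{G}$. Strong determinism of $(T_1\cup T_2)/G$ then forces $(u_l,u'_l)=(u_j,u'_j)$, which contradicts $l<j$.

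Therefore the whole sequence lies in $G$, so $u_j \rarrow{\ast}{G} z$, and $u_j \in \pnf_G$ gives $z \rarrow{\ast}{G} u_j$, hence $z \rarrow{\ast}{H_{j-1}} u_j$. This shows $u_j \Rrightarrow_{H_{j-1}} u_j$, that is, $u_j \in \pnf_{H_{j-1}}$ by Proposition~\ref{prop:Gddiv}(\ref{subprop:Gddiv-1}). Lemma~\ref{lem:expansion-PNF} then makes $H_j$ confluent.

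The main obstacle is exactly this step: adding earlier $T$-edges could in principle destroy the pseudo-normality of $u_j$. The point to verify carefully is that the $G$-reachable set of a pseudo-normal $u_j$ stays inside its own $\barrow{\ast}{G}$-class. Strong determinism allows only one $T$-edge to leave that class, so no earlier edge can be reached from $u_j$. Acyclicity and the weight function are not needed here. Only strong determinism and $V_{T_1\cup T_2}\subseteq V_G$ are used.
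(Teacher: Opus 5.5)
Your proof is correct, and it takes a genuinely different route from the paper.

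\textbf{The paper's route.} The paper stratifies $V_G$ by the weight $w_{\mathcal S}$. It shows by induction on $k$ that the restriction $(G\cup T_1\cup T_2)|_{V_k}$ to vertices of weight at most $k$ is confluent. Each peak is analysed with Proposition~\ref{prop:rewrite-decrease}: a path that keeps its weight stays in $G$, and a path whose weight drops leaves through a $T$-edge whose source is $G$-reachable. Determinacy of $(T_1\cup T_2)/G$ then identifies the two exits.

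\textbf{Your route.} You add the finitely many $T$-edges one at a time and discharge each step by condition~(3) of Lemma~\ref{lem:expansion-PNF}. The only invariant to maintain is that the next source $u_j$ is still pseudo-normal. This holds because strong determinism forbids any other $T$-source from being $G$-reachable from $u_j$. Incidentally, $u_j \rarrow{\ast}{G} u_l$ alone already gives $u_l \barrow{\ast}{G} u_j$, so pseudo-normality of $u_j$ is needed only at the very end.

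\textbf{What each buys.} The paper's argument reuses the weight machinery built for the termination measure and exploits the reversed-forest shape of $(T_1\cup T_2)/G$ directly. Yours is more modular and uses strictly fewer hypotheses, since neither acyclicity nor the weight function enters.

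Your argument also covers, without change, a corner case that the weight-based argument passes over. Under the literal quotient definition, a $T$-edge with both endpoints in one $\barrow{\ast}{G}$-class contributes no quotient edge. Its source then has weight $0$, and Proposition~\ref{prop:rewrite-decrease}(2) fails as stated: take $G=\{u'\to u\}$ and $T_1=\{u\to u'\}$, which is a spanning graph with $u\in\pnf_G$, yet $u \rarrow{}{T_1} u'$ with equal weights and $u'$ is not $G$-reachable from $u$.
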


\begin{proof}
  Let $V_k = \{ t \in V_G \mid w_{\mathcal S}(t) \leq k \}$.
  We will show confluence of $G_k = (G \cup T_1 \cup T_2) |_{V_k}$
  by induction on $k \geq 0$.

  Assuming $s \rarrow{\ast}{G_k} t$ and $s \rarrow{\ast}{G_k} t'$,
  we have $w_{\mathcal S}(s) \geq w_{\mathcal S}(t)$ and
  $w_{\mathcal S}(s) \geq w_{\mathcal S}(t')$ by
  Proposition~\ref{prop:rewrite-decrease}(1).
  We consider the following cases based on the value of $w_{\mathcal S}(s)$.
  \begin{itemize}
    \item If $w_{\mathcal S}(s) < k$, then $s \rarrow{\ast}{G_{k-1}} t$ and
    $s \rarrow{\ast}{G_{k-1}} t'$ hold. Hence, $t$ and $t'$ have a common reduct
    in $G_{k-1}$ by the induction hypothesis.
    \item Otherwise, i.e., $w_{\mathcal S}(s) = k$:
    \begin{itemize}
      \item If $w_{\mathcal S}(t) = w_{\mathcal S}(t') = k$, then
        $s \rarrow{\ast}{G} t$ and $s \rarrow{\ast}{G} t'$ by
        Proposition~\ref{prop:rewrite-decrease}(2).
        Since $G$ is confluent, $t$ and $t'$ have a common reduct in $G$, and
        thus in $G_k$.
      \item If $w_{\mathcal S}(t) = k > w_{\mathcal S}(t')$
        (or vice versa by symmetry), then there exists
        $(u,v) \in T_1 \cup T_2$ such that
        $s \rarrow{\ast}{G} u \rarrow{}{T_1 \cup T_2} v
        \rarrow{\ast}{G_{k-1}} t'$.
        From $(u,v) \in T_1 \cup T_2$ and the premise, we have $u \in \pnf_G$,
        i.e., $u \Rrightarrow_G u$.
        From confluence of $G$ and $s \rarrow{\ast}{G} u$,
        we have $s \Rrightarrow_{G} u$ by
        Proposition~\ref{prop:pnf}~(\ref{subprop:pnf-3}).
        Combining this with $s \rarrow{\ast}{G} t$,
        this implies $t \rarrow{\ast}{G} u$. Hence, $t \rarrow{\ast}{G_k} t'$
        holds, meaning that $t'$ itself serves as a common reduct.
      \item If $w_{\mathcal S}(t) < k$ and $w_{\mathcal S}(t') < k$, then
        there exist $(u,v), (u',v') \in T_1 \cup T_2$ such that
        $s\rarrow{\ast}{G} u \rarrow{}{T_1 \cup T_2} v \rarrow{\ast}{G_{k-1}} t$
        and
        $s\rarrow{\ast}{G} u'\rarrow{}{T_1 \cup T_2} v'\rarrow{\ast}{G_{k-1}} t'$
        by Proposition~\ref{prop:rewrite-decrease}(3).
        On the other hand, $v = v'$ holds by the determinacy of
        $(T_1 \cup T_2) / G$.
        Hence, $t$ and $t'$ have a common reduct in $G_{k-1}$ by the
        induction hypothesis.
    \end{itemize}
  \end{itemize}
\end{proof}

Now, we are ready to present the main theorem of this section.

\begin{theorem}\label{thm:seq-edge-norm}
  Let $G_1$ be a finite ARS and $\Vrt$ be a confluent ARS such that $G_1 \subseteq \Vrt$.
  Suppose that $T_1$ and $T_2$ are finite ARSs with $T_1, T_2 \subseteq \Hol$ and
  $V_{T_1 \cup T_2} \subseteq V_{G_1}$.
  Assume that
  \begin{enumerate}
    \item[(i)] $T_1$ is compatible over $\Vrt$,
    \item[(ii)] $T_1 \cup T_2$ is $T_1$-compatible over $\Vrt$,
    \item[(iii)] $T_1$ is edge-commutative over $(\Vrt,\Hol)$, and
    \item[(iv)] $T_2$ is $T_1$-edge-commutative over $(\Vrt,\Hol)$.
  \end{enumerate}
  Then, there exist a confluent finite ARS $G'_1$ and a finite ARS $T' \subseteq \Hol$ such that
  \begin{enumerate}
    \item\label{itm:seq-edge-dash} $G_1 \subseteq G'_1 \conserveq \Vrt$,
    \item\label{itm:seq-edge-subset-S} $V_{T'} \subseteq V_{G'_1}$,
    \item\label{itm:seq-edge-inclusion} ${\barrow{\ast}{G_1\cup T_1\cup T_2}} \subseteq {\barrow{\ast}{G'_1\cup T'}}$, and
    \item\label{itm:seq-edge-confluence} $G'_1 \cup T'$ is confluent and finite.
  \end{enumerate}
\end{theorem}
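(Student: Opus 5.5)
The plan is to assemble the theorem from the three blocks already established: the construction of an initial spanning graph (Lemma~\ref{lem:make-spanning-graph}), the one-step normalization (Lemma~\ref{lem:norm-possible}) together with termination of normalization (Theorem~\ref{thm:termination-normalization}), and the confluence criterion for normalized spanning graphs (Lemma~\ref{lem:vdash-norm}).

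\emph{Initial spanning graph.} First apply Lemma~\ref{lem:make-spanning-graph} to $G_1$, $T_1$, $T_2$ using hypotheses (i) and (ii). Since $\Vrt$ is confluent, this gives a spanning graph $\mathcal{S}_0=(G^0,T^0_1,T^0_2)$ with the following properties: $G^0$ is confluent and finite, $G_1\subseteq G^0\conserveq\Vrt$, $T^0_i\subseteq T_i\subseteq\Hol$, and ${\barrow{\ast}{T_1\cup T_2\cup G_1}}\subseteq{\barrow{\ast}{T^0_1\cup T^0_2\cup G^0}}$. Finiteness of $G^0$ comes from Proposition~\ref{prop:conserv2} and Lemma~\ref{lem:merge}, which produce finite ARSs. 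Because $T^0_i\subseteq T_i$ and ${\barrow{\ast}{T_1\cup\Vrt}}={\barrow{\ast}{T^0_1\cup\Vrt}}$ (this equality is what Lemma~\ref{lem:treeshaped2} supplies inside that proof), Proposition~\ref{prop:preserveCommutativity} transfers hypotheses (iii) and (iv) to $T^0_1$ and $T^0_2$.

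\emph{Normalization loop.} Next, iterate Lemma~\ref{lem:norm-possible} starting from $\mathcal{S}_0$. As long as $m(\mathcal{S}_k)>0$, it yields $\mathcal{S}_k\vdash_n\mathcal{S}_{k+1}$ and preserves the invariants: edge commutativity, confluence of $G$ and $\Vrt$, finiteness of $G$, $G^k\subseteq G^{k+1}\conserveq\Vrt$, and monotone growth of ${\barrow{\ast}{G\cup T_1\cup T_2}}$. The new root edge $(v,v')$ is chosen in $\Hol$, so $T^k_1\cup T^k_2\subseteq\Hol$ is also maintained. By Theorem~\ref{thm:termination-normalization}, or directly by Lemma~\ref{lem:reduction} since $m$ is a nonnegative integer that strictly decreases, the loop stops at some $\mathcal{S}_n$ with $m(\mathcal{S}_n)=0$.

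\emph{Output.} Set $G'_1:=G^n$ and $T':=T^n_1\cup T^n_2$. Then:
\begin{itemize}
\item (1) follows by chaining $G_1\subseteq G^0\subseteq\cdots\subseteq G^n\conserveq\Vrt$.
\item (2) holds because each stage is a spanning graph (Lemma~\ref{lem:reduction-welldef}).
\item (3) follows by composing the inclusions of conversion relations along the chain.
\item (4) follows from Lemma~\ref{lem:vdash-norm}, since $G^n$ is confluent and $m(\mathcal{S}_n)=0$. Finiteness holds because each normalization step adds finitely many edges, and $T'$ stays finite because each step replaces one edge by one edge.
\end{itemize}

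\emph{Expected obstacle.} The delicate part is the bookkeeping in the first step. One has to confirm that Lemma~\ref{lem:make-spanning-graph} indeed gives a \emph{finite} confluent $G'$, which requires tracing through Lemma~\ref{lem:merge}. One must also confirm that the equality ${\barrow{\ast}{T_1\cup\Vrt}}={\barrow{\ast}{T'_1\cup\Vrt}}$ needed by Proposition~\ref{prop:preserveCommutativity} is available; it is stated in Lemma~\ref{lem:treeshaped2}, so I would restate it when invoking the construction. The remaining steps are direct compositions of the earlier lemmas.
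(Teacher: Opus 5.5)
Your proposal is correct and follows the paper's proof essentially step for step. You build the initial spanning graph via Lemma~\ref{lem:make-spanning-graph}, transfer (iii)--(iv) by Proposition~\ref{prop:preserveCommutativity}, iterate Lemma~\ref{lem:norm-possible} until the measure vanishes (Lemma~\ref{lem:reduction}), and conclude with Lemma~\ref{lem:vdash-norm}; your bookkeeping is in fact slightly tighter than the paper's, since you take the equality ${\barrow{\ast}{T_1\cup\Vrt}}={\barrow{\ast}{T'_1\cup\Vrt}}$ that Proposition~\ref{prop:preserveCommutativity} requires from Lemma~\ref{lem:treeshaped2} (the paper cites only an inclusion over $G'$), and you explicitly track the finiteness of the initial $G'$ and the invariant $T'\subseteq\Hol$.
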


\begin{proof}
  By Lemma~\ref{lem:make-spanning-graph}, there exists a spanning graph ${\mathcal S} = (G',T'_1,T'_2)$ satisfying
  \begin{align}
    & T_i \supseteq T'_i \quad (i=1,2)       \label{eq:thm:seq-edge-norm_subTi} \\
    & \text{$G'$ is confluent with } G_1 \subseteq G' \conserveq \Vrt
    \label{eq:thm:seq-edge-norm_conserv} \\
    & {\barrow{\ast}{G_1\cup T_1}} \subseteq {\barrow{\ast}{G'\cup T'_1}}
    \label{eq:thm:seq-edge-norm_equvT1} \\
    & {\barrow{\ast}{G_1\cup T_1\cup T_2}} \subseteq {\barrow{\ast}{G'\cup T'_1 \cup T'_2}}.
    \label{eq:thm:seq-edge-norm_equvT12}
  \end{align}
  Here, by Proposition~\ref{prop:preserveCommutativity}, \eqref{eq:thm:seq-edge-norm_subTi},
  and \eqref{eq:thm:seq-edge-norm_equvT1}, the properties (iii) and (iv) on commutativity are preserved.

  Now, we repeatedly apply Lemma~\ref{lem:norm-possible} to the spanning graph $\mathcal S$ as long as $m({\mathcal S}) > 0$.
  The conditions (i)--(iv) of Lemma~\ref{lem:norm-possible} hold initially.
  Moreover, Lemma~\ref{lem:norm-possible} guarantees the preservation of these conditions,
  which justifies the repeated applications of the lemma.
  Thus, we obtain a normalization sequence
  \[
  {\mathcal S} = {\mathcal S}^0 \vdash_n {\mathcal S}^1 \vdash_n \cdots.
  \]
  Note that from \eqref{eq:thm:seq-edge-norm_conserv}, \eqref{eq:thm:seq-edge-norm_equvT12},
  and Lemma~\ref{lem:norm-possible}, each spanning graph ${\mathcal S}^i = (G^i, T^i_1, T^i_2)$ satisfies
  \begin{align*}
    & \text{$G^i$ is confluent, and} \\
    & {\barrow{\ast}{G_1\cup T_1\cup T_2}} \subseteq {\barrow{\ast}{G^i\cup T^i_1\cup T^i_2}}.
  \end{align*}
  Furthermore, $m({\mathcal S}^i) > m({\mathcal S}^{i+1})$ holds by Lemma~\ref{lem:reduction} for each $i$.
  Since the measure $m(\mathcal{S}^i)$ strictly decreases, the normalization sequence eventually terminates at some step $k$.
  By setting $G'_1 := G^k$ and $T' := T^k_1 \cup T^k_2$,
  we obtain statements (1)--(3) of the theorem, where condition (2) holds because ${\mathcal S}^k$ is a spanning graph.

  Moreover, since the sequence terminates at $k$, we observe $m({\mathcal S}^k) = 0$.
  Therefore, $G'_1 \cup T'$ is confluent by Lemma~\ref{lem:vdash-norm}, and thus statement (4) holds.
\end{proof}


\section{Term rewriting systems and their $R$-graphs}
\label{sec7}\label{sec:R-reduction-graph}


\subsection{Notation and terminology of TRS}

Let $F$ be a finite set of function symbols, and let $X$ be a countably infinite set of variables with $F\cap X=\emptyset$.
By convention, we use $s,t,u,v,w$ for terms,
$x,y$ for variables,
$a,b,c,f,g$ for function symbols,
$p,q$ for positions, and
$\sigma,\theta$ for substitutions.
$\T(F,X)$ (or simply $\T$) denotes the set of terms constructed from $F$ and $X$, and
$\Var(t)$ denotes the set of variables occurring in a term $t$.
A \emph{ground} term is a term in $\T(F,\emptyset)$.
The set of positions in $t$ is $\Pos(t)$, and
the \emph{root} position is $\varepsilon$.

For $p\in\Pos(t)$, the subterm of $t$ at position $p$ is denoted by $t|_p$.
We say a term $s$ is a subterm of $t$ if $s=t|_p$ for some $p\in\Pos(t)$, which is denoted by $s \unlhd t$.
The root symbol of $t$ is $\Root(t)$.
The set of variable positions in $t$ is denoted by
$\Pos_{X}(t)=\{p \mid \Root(t|_p) \in X \}$, and $\Pos_{F}(t)=\Pos(t) \setminus \Pos_{X}(t)$.
The term obtained from $t$ by replacing its subterm
at position $p$ with $s$ is denoted by $t[s]_p$.
The \emph{size} $|t|$ of a term $t$ is $|\Pos(t)|$, and
the \emph{height} $\mathrm{ht}(t)$ is $\max \{ |p| \mid p \in \Pos(t)\}$,
where $|p|$ denotes the length of $p$.

A \emph{rewrite rule} is a pair $(\ell, r)$ of terms such
that $\ell \notin X$ and $\Var(\ell) \supseteq \Var(r)$.
We write this as $\ell \rightarrow r$.
A \emph{term rewriting system} (TRS) is a finite set $R$ of rewrite rules.
The \emph{rewrite relation} of $R$ on $\T(F,X)$ is denoted by $\Rarrow{}{R}{}{}$.
We sometimes write $s\Rarrow{p}{R}{}{}t$ (resp., $s\Rarrow{p}{\ell \rightarrow r}{}{}t$)
to indicate the \emph{rewrite step} at position $p$
(resp., with the rule $\ell \rightarrow r \in R$).
A rewrite step $s\Rarrow{p}{R}{}{}t$ is a \emph{top reduction}
if $p=\varepsilon$.
Otherwise, it is an \emph{inner reduction},
written as $s\Rarrow{\inn}{R}{}{}t$.
We write $\NF_R\; (\subseteq \T(F,X))$ for the set of normal forms, i.e., $\NF_R = \NF_{\<\T(F,X),\Rarrow{}{R}{}{} \>}$.

Given a TRS $R$, the set $D$ of \emph{defined symbols} is $\{\Root(\ell) \mid \ell\to r\in R\}$.
The set $C$ of \emph{constructor symbols} is $F \setminus D$.
For $T \subseteq \T(F,X)$ and $f \in F$,
we use $T\restRoot{f}$ to denote $\{ s\in T \mid \Root(s)=f \}$.
For a subset $F'$ of $F$,
we use $T\restRoot{F'}$ to denote the union $\cup_{f\in F'} T\restRoot{f}$.

A \emph{constructor term} is a term in $\T(C,X)$, and
a \emph{semi-constructor term} is a term in which defined function symbols
appear only in ground subterms.
A term is \emph{shallow} if the length $|p|$ is 0 or 1
for every variable position $p$ in the term.
A \emph{weakly shallow term} is a term in which defined function symbols
appear only at the root position or in ground subterms
(i.e., $p\not=\varepsilon$ and $\Root(s|_p)\in D$ imply that $s|_p$ is ground).
Note that every shallow term is weakly shallow.

A rewrite rule $\ell \rightarrow r$ is
\emph{shallow} (resp., \emph{weakly shallow}) if both $\ell$ and $r$ are
shallow (resp., weakly shallow).
It is \emph{collapsing} if $r$ is a variable, and
\emph{non-collapsing} otherwise.
We denote the set of collapsing (resp., non-collapsing) rewrite rules in $R$ by
$R^c$ (resp., $R^{nc}$).
A TRS is \emph{shallow} (resp., \emph{weakly shallow}) if each rewrite rule is
shallow (resp., weakly shallow).
A TRS is \emph{non-collapsing} if it contains no collapsing rules.
A rewrite step $s \Rarrow{}{R}{}{} t$ is collapsing (resp., non-collapsing),
written as $s \Rarrow{}{R}{c}{} t$ (resp., $s \Rarrow{}{R}{nc}{} t$),
if the rule applied is collapsing (resp., non-collapsing).

\begin{example}\label{ex:E-overlapping}
  The following TRS $R_{\settrs{trs:shallow-noncollapsing}}$~\cite{Huet80} is weakly shallow and non-collapsing.
  \[
  R_{\ref{trs:shallow-noncollapsing}} = \{
  f(x,x) \rightarrow a,\
  f(x,g(x)) \rightarrow b,\
  c \rightarrow g(c)
  \}
  \]
\end{example}


Let $\ell_1\rightarrow r_1$ and $\ell_2\rightarrow r_2$ be variants of rewrite
rules in a TRS $R$ such that $\ell_1$ and $\ell_2$ have no common variables.
Let $p$ be a position in $\ell_2$ such that $\ell_2|_p$ is not a variable.
If there exist substitutions $\theta_1,\theta_2$ such that
$\ell_1\theta_1 = \ell_2|_p\theta_2$
(resp., $\ell_1\theta_1 \Barrow{\inn}{R}{\ast}{} \ell_2|_p\theta_2$),
we call $(\ell_2\theta_2[r_1\theta_1]_p, r_2\theta_2)$ an \emph{overlapping pair}
(resp., an \emph{$E$-overlapping pair}) with a \emph{superposition}
$\ell_2\theta_2$
(resp., superpositions $(\ell_2\theta_2[\ell_1\theta_1]_p, \ell_2\theta_2)$).
A TRS $R$ is \emph{non-overlapping} (resp., \emph{non-$E$-overlapping})
if each overlapping (resp., $E$-overlapping) pair is trivial,
i.e., $p=\varepsilon$ and the two rules are identical (up to renaming of variables).
For a non-trivial overlapping pair (resp., non-$E$-overlapping pair) $(s,t)$
with a superposition $u$ (resp., superpositions $(u_1,u_2)$), every rewrite
sequence $s\sigma \Larrow{}{R}{}{} u\sigma \Rarrow{}{R}{}{} t\sigma$
(resp., $s\sigma\Larrow{}{R}{}{} u_1\sigma\Barrow{p<}{R}{\ast}{} u_2\sigma
\Rarrow{}{R}{}{} t\sigma$)
with a substitution $\sigma$ and a position $p$ is a \emph{peak}
(resp., \emph{$E$-peak}).

It is well-known that left-linear and non-overlapping TRSs are
confluent~\cite{Huet80}.
Thus, this paper assumes that TRSs contain non-constant function symbols;
otherwise, the non-overlapping property automatically guarantees confluence.

  For non-left-linear TRSs, the non-overlapping condition
  is insufficient to guarantee confluence witnessed by
  $R_{\ref{trs:nonCR2}}$ in Example~\ref{ex:nonCR2}
  and $R_{\ref{trs:nonCR}}$ in Example~\ref{ex:nonCR}~\cite{Huet80}.
  The former has a divergent sequence
  $a \larrow{}{R_{\ref{trs:nonCR2}}} d(a,a) \rarrow{}{R_{\ref{trs:nonCR2}}} d(a,f(a))
  \rarrow{}{R_{\ref{trs:nonCR2}}} c$,
  and the latter (which is non-$E$-overlapping) has a sequence
  $a \rarrow{\ast}{R_{\ref{trs:nonCR}}} b$ given by
  $a \rarrow{}{R_{\ref{trs:nonCR}}} f(a) \rarrow{}{R_{\ref{trs:nonCR}}} d(a,f(a))
  \rarrow{}{R_{\ref{trs:nonCR}}} d(f(a),f(a)) \rarrow{}{R_{\ref{trs:nonCR}}} b$,
  and a divergent sequence $b \larrow{\ast}{R_{\ref{trs:nonCR}}} a
  \rarrow{\ast}{R_{\ref{trs:nonCR}}} d(a,f(a)) \rarrow{\ast}{R_{\ref{trs:nonCR}}} d(b,f(b))$.
  Instead, a non-$E$-overlapping TRS has the unique normal form property
  (UN)~\cite{Ogawa89,MO01,MOOO97}.

In the rest of the paper, we assume the following restrictions on TRSs,
which define the class of \emph{root-distinct TRSs}.
\begin{enumerate}
  \item The rule $h(x) \to x$ is the only possible collapsing rule.
  \item $\Root(\ell)\neq \Root(r)$ holds for every non-collapsing rule
    $\ell\to r \in R$.
  \item The TRS contains both non-constant function symbols and constants.
\end{enumerate}
Restriction (1) is reasonable for the following reason.
Any TRS can be converted into a TRS with at most one collapsing rule
$h(x) \to x$ via a confluence-preserving transformation.
This transformation introduces a fresh function symbol $h$ and the rule
$h(x) \to x$,
and then replaces each collapsing rule $\ell \to x$ with a non-collapsing rule
$\ell \to h(x)$.
For example, the rule $f(g(x),y,x) \to x$ is converted into
$f(g(x),y,x) \to h(x)$.
It is obvious that this transformation preserves confluence.

Restriction (2) eases proofs related to the subterm property.
This restriction is also reasonable because any TRS can be transformed
into this class via a confluence-preserving transformation.
Specifically, each rule $\ell \to r$ satisfying $\Root(\ell) = \Root(r)$ is
replaced with two rules,
$\ell \to g(x_1,\ldots,x_n)$ and $g(x_1,\ldots,x_n) \to r$,
where $g$ is a fresh function symbol introduced for each such rule, and
$x_1,\ldots,x_n$ are the variables occurring in $\ell$.

In this section, we use the following TRS as a running example, where $g^3(t)$
denotes $g(g(g(t)))$.
\begin{example}\label{ex:overview}
  \(R_{\settrs{trs:running}}=\{f(x,g(x))\to g^3(x), c\to g(c) \}.\)
\end{example}

This section introduces $R$-graphs (in Section~\ref{subsec:r-graph}),
direct subterm graphs and monotonic extensions (in Section~\ref{subsec:monotonic-extension}),
and subterm-closed graphs and their properties (in Section~\ref{subsec:subterm-closed}).

\subsection{Top and inner edges of $R$-graph}\label{subsec:r-graph}

\begin{definition}
  Let $R$ be a TRS.
  An ARS $G=\< V,\mathord{\rightarrow}\>$ is an \emph{$R$-graph}
  if $V$ is a subset of $\T(F,X)$ and
  $\mathord{\rightarrow}\subseteq\mathord{\Rarrow{}{R}{}{}}$.
  The $R$-graph induced by $V \subseteq \T(F,X)$ is denoted by
  $\Ind_R(V) := \< V, {\rarrow{}{R}}|_{V \times V}\>$.
\end{definition}

\begin{example}\label{ex:G0}
  Consider the TRS $R_{\ref{trs:running}}$ in Example~\ref{ex:overview}.
  The $R_{\ref{trs:running}}$-graph induced by $V = \{c, g(c), g^2(c) \}$ is
  $G = \< \{ c, g(c), g^2(c) \}, \{ (c, g(c)), (g(c), g^2(c)) \}\>$.
  In the sequel, we use a simple representation of graphs such as
  $G = \{ c \rightarrow g(c) \rightarrow g^2(c)\}$.
\end{example}

A finite $R$-graph corresponds to an $R$-graph as defined in~\cite{Sakai10}.
For an $R$-graph $G=\< V,\mathord{\rightarrow}\>$, \emph{inner edges},
\emph{top edges}, and \emph{non-top edges} are defined by
${\Rarrow{\inn}{}{}{}}  = {\rarrow{}{}} \cap \mathord{\Rarrow{\inn}{R}{}{}}$,
${\Rarrow{\tpp}{}{}{}} = \mathord{\rightarrow} \cap {\Rarrow{\tpp}{R}{}{}}$, and
${\Rarrow{\ntpp}{}{}{}} = \mathord{\Rarrow{\inn}{}{}{}} \setminus
\mathord{\Rarrow{\tpp}{}{}{}}$, respectively.
We use $G^{\inn}$, $G^{\tpp}$, and $G^{\ntpp}$ to denote the graphs
$\< V, \mathord{\Rarrow{\inn}{}{}{}} \>$,
$\< V, \mathord{\Rarrow{\tpp}{}{}{}} \>$, and
$\< V, \mathord{\Rarrow{\ntpp}{}{}{}} \>$, respectively.

We note that $\mathord{\Rarrow{\inn}{}{}{}} \cap \mathord{\Rarrow{\tpp}{}{}{}}
\neq \emptyset$ can hold in general.
For example, for $R = \{ h(x)\rightarrow x \}$ and $G: h(h(a)) \to h(a)$,
we have $\mathord{\Rarrow{\inn}{}{}{G}} = \mathord{\Rarrow{\tpp}{}{}{G}}
= \mathord{\rarrow{}{G}}$.
From assumptions (1) and (2), any edge in $G^\tpp \cap G^\inn$ must be
a collapsing edge in the form of $h(h(v)) \rightarrow h(v)$ for a term $v$.

For an $R$-graph $G = \< V,\mathord{\rightarrow}\>$, \emph{collapsing} edges
and \emph{non-collapsing} edges are given by
${\rarrow{}{}} \cap {\Rarrow{}{R}{c}{}}$ and
${\rarrow{}{}} \cap {\Rarrow{}{R}{nc}{}}$, respectively.
We use $G^{c}$ (resp., $G^{nc}$) to denote the subgraph of $G$ consisting of
collapsing edges (resp., non-collapsing edges).

\begin{lemma}\label{lem:inner-edges}
  For an $R$-graph $G$,
  \begin{enumerate}
  \item\label{lem:inner-edges-a}
    $G = G^{\tpp} \cup G^{\ntpp} = G^{\tpp} \cup G^{\inn}$,
  \item\label{lem:inner-edges-d}
    $(G \cup G')^\ntpp = G^\ntpp \cup G'^\ntpp$,
  \item\label{lem:inner-edges-e}
    $(G^\tpp)^\ntpp = \<V_G, \emptyset\>$,
  \item\label{lem:inner-edges-b}
    ${\rarrow{}{G^{\inn}}} \cap {\rarrow{}{G^{\tpp}}} =
    \{ (s,t) \mid s=h(h(u)) \to_G h(u)=t \text{ for some $u$} \}$,
  \item\label{lem:inner-edges-c}
    $G^\ntpp \subseteq G^\inn$. Moreover, $G^\inn = G^\ntpp$ for a non-collapsing
    system.
  \end{enumerate}
\end{lemma}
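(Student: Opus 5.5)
The five items follow by unfolding the definitions of $G^{\tpp}$, $G^{\inn}$ and $G^{\ntpp}$ as intersections of ${\to_G}$ with the rewrite relations at the root and below the root, and then using restrictions (1) and (2) on root-distinct TRSs. Items (\ref{lem:inner-edges-a}), (\ref{lem:inner-edges-d}), (\ref{lem:inner-edges-e}) and the first half of (\ref{lem:inner-edges-c}) are set-theoretic; the only real content is item (\ref{lem:inner-edges-b}), which then gives the second half of (\ref{lem:inner-edges-c}).

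\textbf{Set-theoretic items.} For (\ref{lem:inner-edges-a}), every edge $s \to_G t$ is an $R$-step at some position $p$. If $p=\varepsilon$ it lies in $G^{\tpp}$; otherwise it lies in $G^{\inn}$. If it is not in $G^{\tpp}$, it is in $G^{\inn}\setminus G^{\tpp} = G^{\ntpp}$. Hence $G = G^{\tpp}\cup G^{\inn} = G^{\tpp}\cup G^{\ntpp}$, and the vertex sets agree by definition. For (\ref{lem:inner-edges-d}), whether an edge $(s,t)$ is a root step or an inner step depends only on the pair $(s,t)$ and $R$, not on the ambient graph. So $\cap$ with these relations distributes over $\cup$, and so does the difference, giving $(G\cup G')^{\ntpp} = G^{\ntpp}\cup G'^{\ntpp}$. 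For (\ref{lem:inner-edges-e}), every edge of $G^{\tpp}$ is a top edge and is removed when we form ${\Rarrow{\inn}{}{}{}}\setminus{\Rarrow{\tpp}{}{}{}}$, leaving no edges on $V_G$. The inclusion $G^{\ntpp}\subseteq G^{\inn}$ in (\ref{lem:inner-edges-c}) holds by definition.

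\textbf{Item (\ref{lem:inner-edges-b}).} For $\supseteq$, the edge $h(h(u))\to h(u)$ is a root step by $h(x)\to x$ with $x:=h(u)$, and an inner step at position $1$ with $x:=u$. For $\subseteq$, let $(s,t)$ be both a root step by $\ell\to r$ and a step at some $p\neq\varepsilon$.

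\begin{itemize}
\item If $\ell\to r$ is non-collapsing, the root step gives $\Root(t)=\Root(r)\neq\Root(\ell)=\Root(s)$ by restriction (2). An inner step preserves the root symbol, so $\Root(t)=\Root(s)$, a contradiction.
\item Hence $\ell\to r$ is collapsing, so by restriction (1) it is $h(x)\to x$, and $s=h(t)$. The inner step keeps the root, so $\Root(t)=h$ and we can write $t=h(u)$, giving $s=h(h(u))$.
\end{itemize}

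This root-symbol comparison is the one step that needs care, and it is exactly where restrictions (1) and (2) are used.

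\textbf{Second half of (\ref{lem:inner-edges-c}).} For a non-collapsing system, (\ref{lem:inner-edges-b}) gives ${\to_{G^{\inn}}}\cap{\to_{G^{\tpp}}}=\emptyset$, since no rule $h(x)\to x$ is present. Therefore $G^{\ntpp}=G^{\inn}$.
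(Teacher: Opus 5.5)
Your proof is correct and follows the paper's route. Items (\ref{lem:inner-edges-a}), (\ref{lem:inner-edges-d}), (\ref{lem:inner-edges-e}) and the first half of (\ref{lem:inner-edges-c}) are read off the definitions, and (\ref{lem:inner-edges-b}) and the rest of (\ref{lem:inner-edges-c}) come from the same root-symbol comparison under restrictions (1) and (2), which you spell out in more detail than the paper does.

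One phrasing point: your $\supseteq$ direction of (\ref{lem:inner-edges-b}) presupposes $h(x)\to x\in R$. So for a non-collapsing system, derive ${\to_{G^{\inn}}}\cap{\to_{G^{\tpp}}}=\emptyset$ directly from your first bullet (a root step by a non-collapsing rule changes the root symbol), as the paper does, rather than from the equality in (\ref{lem:inner-edges-b}), whose right-hand side need not be empty when that rule is absent.
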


\begin{proof}\mbox{}
  \begin{itemize}
  \item[(\ref{lem:inner-edges-a})]
    Trivial.
  \item[(\ref{lem:inner-edges-d})]
    Let $e$ be an edge of $(G \cup G')^\ntpp$.
    If $e$ is an edge of $G$, then it is not a top edge.
    Thus, it is an edge of $G^\ntpp$.
    The other case is similar.

    Conversely, let $e$ be an edge of $G^\ntpp$.
    Then $e$ is an edge of $G$, but not a top edge.
    Thus, it is an edge of $(G \cup G')^\ntpp$.
    The other case is similar.
  \item[(\ref{lem:inner-edges-e})]
    The vertex condition is clear.
    Assume that an edge $e$ exists in $(G^\tpp)^\ntpp$.
    Then $e$ is an edge of $G^\tpp$ but not a top edge,
    which is a contradiction.
  \item[(\ref{lem:inner-edges-b})]
    This holds from the root-distinct property as follows.
    Since every non-collapsing rule $\ell\to r$ satisfies
    $\Root(\ell)\neq\Root(r)$,
    any edges in ${\rarrow{}{G^{\inn}}} \cap {\rarrow{}{G^{\tpp}}}$ are obtained
    by applying the unique collapsing rule $h(x) \to x$.
    Thus, such edges are in the form of $h(h(u)) \to_G h(u)$.
  \item[(\ref{lem:inner-edges-c})]
    The former claim is obvious.
    The latter claim is due to the restriction that $\Root(\ell) \neq \Root(r)$
    for every non-collapsing rule $\ell \to r$.
  \end{itemize}
\end{proof}

\subsection{Direct-subterm graph of $G$}\label{subsec:monotonic-extension}

We define the set of direct subterms $\Sub(t)$ of a term $t$ by
$\Sub(x)=\emptyset$
for a variable $x$, and $\Sub(t)=\{t_1,\ldots,t_n\}$ if $t=f(t_1,\ldots,t_n)$.
For an $R$-graph $G = \< V, \mathord{\rightarrow} \>$ and a set of function
symbols $F' \subseteq F$, we denote the source-root-restricted subgraph
by $G\restRoot{F'} = \< V, {{\rightarrow}\restRoot{F'}} \>$, where
${{\rightarrow}\restRoot{F'}} = \mathord{\rightarrow}|_{V\restRoot{F'}\times V}$.

Recall that $T\restRoot{F'}$ represents the union
$\bigcup_{f\in F'} T\restRoot{f}$,
where $T\restRoot{f}$ is the set of $f$-rooted terms in $T$.
Thus, we have ${{\rightarrow}\restRoot{C}}
= \mathord{\rightarrow}|_{V\restRoot{C}\times V\restRoot{C}}$
and $\mathord{\rightarrow} = {{\rightarrow}\restRoot{D}} \cup
{{\rightarrow}\restRoot{C}}$.

\begin{definition}\label{def:subterm}
  Let $G=\< V, \rightarrow\>$ be an $R$-graph.
  The \emph{direct-subterm $R$-graph} $\Sub(G)$ of $G$ is defined as
  $\< \Sub(V), \Sub(\mathord{\rightarrow}) \>$, where
  \[
  \left\{
  \begin{array}{l}
    \Sub(V) = \bigcup_{t \in V} \Sub(t) \\
    \Sub(\mathord{\rightarrow})
    =\{(s_i,t_i) \mid f(s_1,\ldots,s_n) \Rarrow{\inn}{}{}{} f(t_1,\ldots,t_n),\,
    s_i\not=t_i \text{ for } 1 \leq i \leq n\}. \\
  \end{array}
  \right.
  \]
\end{definition}

\begin{definition} \label{def:mono-ext}
  For a subset $F' \subseteq F$ and an $R$-graph $G = \< V, \rightarrow \>$,
  the \emph{$F'$-monotonic extension} $M_{F'}(G)=\< V',\rightarrow'\>$ of $G$ is
  defined by
  \[
  \left\{
  \begin{array}{lcl}
    V' & = & \{f(s_1,\ldots,s_n) \mid f \in F',\; s_1, \ldots, s_n \in V \},\\
    \mathord{\rightarrow'} & = & \{(f(\cdots s \cdots), f(\cdots t \cdots))
    \in V' \times V' \mid s\rightarrow t \}.
  \end{array}
  \right.
  \]
\end{definition}

\begin{example}\label{ex:sub-ext}
  For the $R_{\ref{trs:running}}$-graph
  $G_0 = \{ f(g(c),c) \gets f(c,c) \to f(c,g(c)) \Rarrow{\tpp}{}{}{} g^3(c) \}$,
  its direct-subterm $R_{\ref{trs:running}}$-graph is given by
  $\Sub(G_0) = \{ c, g^2(c), c \to g(c) \}$,
  where $c$ and $g^2(c)$ denote isolated vertices.
  Then, $M_F(\Sub(G_0))$ is the $R_{\ref{trs:running}}$-graph shown in
  Fig.~\ref{fig:ex-extension}.

\begin{figure}
  \begin{center}
    \mbox{}\\[-1em]
    \(
    \begin{tikzpicture} [line width=0.07em]
      \matrix (m) [matrix of math nodes,row sep=0.9em,column sep=1em,minimum width=0.5em] {
        c & g(c)   & f(c,c)      & f(g(c),c)      & f(g^2(c),c)   \\
        & g^2(c) & f(c,g(c))   & f(g(c),g(c))   & f(g^2(c),g(c))  \\
        & g^3(c) & f(c,g^2(c)) & f(g(c),g^2(c)) & f(g^2(c),g^2(c)) \\
      };
      \path[->]
      (m-1-3) edge (m-1-4) 
      (m-2-3) edge (m-2-4) 
      (m-3-3) edge (m-3-4) 
      (m-1-2) edge (m-2-2)  (m-1-3) edge (m-2-3) (m-1-4) edge (m-2-4) (m-1-5) edge (m-2-5)
      ;
    \end{tikzpicture}
    \)
    \\[-2em]\mbox{}
  \end{center}
  \caption{ $R_{\ref{trs:running}}$-graph $M_F(\Sub(G_0))$  }
  \label{fig:ex-extension}
\end{figure}

\end{example}

The following proposition is immediate.

\begin{proposition}\label{prop:monotonic-extension}
  For $R$-graphs $G$ and $G'$, and a set of function symbols $F' \subseteq F$:
  \begin{enumerate}
  \item\label{prop:monotonic-extension-a}
    If $G$ is confluent (resp., finite), then $M_{F'}(G)$ is also confluent
    (resp., finite).
  \item\label{prop:monotonic-extension-b}
    $M_{F'}$ is monotonic; i.e., $G \subseteq G'$ implies
    $M_{F'}(G) \subseteq M_{F'}(G')$.
  \item\label{prop:monotonic-extension-c}
    $M_{F'}(G) \subseteq M_F(G)$.
  \item\label{prop:monotonic-extension-d}
    $M_{F'}(G) \cup M_{F'}(G') \subseteq M_{F'}(G \cup G')$.
  \end{enumerate}
\end{proposition}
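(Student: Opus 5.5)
The four items can be checked directly from Definition~\ref{def:mono-ext}. The plan is to dispose of the set-theoretic items (\ref{prop:monotonic-extension-b})--(\ref{prop:monotonic-extension-d}) first and then spend the real effort on confluence in (\ref{prop:monotonic-extension-a}).

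\textbf{Items (\ref{prop:monotonic-extension-b})--(\ref{prop:monotonic-extension-d}).}
For (\ref{prop:monotonic-extension-b}), $V\subseteq V'$ gives $\{f(s_1,\ldots,s_n)\mid f\in F',\,s_i\in V\}\subseteq\{f(s_1,\ldots,s_n)\mid f\in F',\,s_i\in V'\}$. Likewise ${\to}\subseteq{\to'}$ shows that each one-argument-replacement edge of $M_{F'}(G)$ is an edge of $M_{F'}(G')$. Item (\ref{prop:monotonic-extension-c}) is the same argument with the symbol set enlarged from $F'$ to $F$ while $G$ is fixed. Item (\ref{prop:monotonic-extension-d}) follows from (\ref{prop:monotonic-extension-b}), since $G,G'\subseteq G\cup G'$. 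One also notes that the edges of $M_{F'}(G)$ are genuine $R$-steps, because $s\Rarrow{}{R}{}{}t$ implies $f(\cdots s\cdots)\Rarrow{}{R}{}{}f(\cdots t\cdots)$. Hence $M_{F'}(G)$ is indeed an $R$-graph.

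\textbf{Finiteness in (\ref{prop:monotonic-extension-a}).}
If $V$ is finite, then since $F'$ is finite with bounded arities, $|V'|\le\sum_{f\in F'}|V|^{\mathrm{arity}(f)}$, which is finite.

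\textbf{Confluence in (\ref{prop:monotonic-extension-a}).}
This is the only step needing an argument. First, every edge of $M_{F'}(G)$ keeps the root symbol and changes one argument by a $G$-step. Hence by induction on length, $f(s_1,\ldots,s_n)\rarrow{\ast}{M_{F'}(G)}t$ holds iff $t=f(t_1,\ldots,t_n)$ with $s_i\rarrow{\ast}{G}t_i$ for each $i$.

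For the converse direction, the intermediate terms all lie in $V'$ because each argument stays in $V$; one reduces the arguments one at a time. Thus $\rarrow{\ast}{M_{F'}(G)}$ on $f$-rooted vertices is the componentwise product of $\rarrow{\ast}{G}$. Given a peak $f(t_1,\ldots,t_n)\larrow{\ast}{}f(s_1,\ldots,s_n)\rarrow{\ast}{}f(u_1,\ldots,u_n)$, confluence of $G$ yields $w_i\in V$ with $t_i\rarrow{\ast}{G}w_i\larrow{\ast}{G}u_i$. Then $f(w_1,\ldots,w_n)\in V'$ is a common reduct. For constants, $f$-rooted vertices are isolated, so there is nothing to show.

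The only point needing care is that every intermediate term stays in $V'$ and that edges never change the root. Both are immediate from the definition, so no genuine obstacle is expected.
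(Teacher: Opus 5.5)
Your proposal is correct, and it is the same direct verification from Definition~\ref{def:mono-ext} that the paper relies on when it calls this proposition immediate; the paper gives no written proof. Your componentwise characterization of $\rarrow{\ast}{M_{F'}(G)}$ on $f$-rooted vertices is exactly what is needed to make the confluence part precise, including the check that intermediate terms stay in $V_{M_{F'}(G)}$.
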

Note that the reverse inclusion of (\ref{prop:monotonic-extension-d}) does not
always hold if $V_G \neq V_{G'}$.
For example, let $G = \{ a \to b \}$, $G' = \{a'\}$, and $g \in F$ be a binary
function symbol.
Then, the edge $g(a,a') \to g(b,a')$ is in $M_{F}(G \cup G')$, but it is in
neither $M_F(G)$ nor $M_F(G')$.

\begin{lemma}\label{lem:mono-ext}\label{lem:monotonic-extension}
  For an $R$-graph $G$,
  \begin{enumerate}
  \item\label{lem:mono-ext-a}
    $\Sub(G^\ntpp) \subseteq \Sub(G^\inn)=\Sub(G)$.

  \item\label{lem:mono-ext-b}
    The vertices of $M_F(\Sub(G))$ include those of $G$.
    Moreover, $G^{\inn} \subseteq M_F(\Sub(G))$ and
    $G^{\ntpp} \subseteq M_F(\Sub(G^\ntpp))$.

  \item\label{lem:mono-ext-c}
    If $\Sub(G') \subseteq G$, then
    $G'^\ntpp \subseteq G'^\inn \subseteq M_F(G)$, which implies that
    $V_{G'} \subseteq V_{M_F(G)}$.
  \end{enumerate}
\end{lemma}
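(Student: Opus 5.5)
The plan is to unfold the definitions directly. The only point needing care is the interaction between top and inner edges, which Lemma~\ref{lem:inner-edges} already controls.

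For (\ref{lem:mono-ext-a}), first observe that $\Sub(\cdot)$ on edges only looks at inner steps. An edge contributes to $\Sub(\rightarrow)$ exactly when it has the shape $f(s_1,\ldots,s_n)\Rarrow{\inn}{}{}{} f(t_1,\ldots,t_n)$. Hence $\Sub(G)=\Sub(G^\inn)$: the vertex sets coincide because $G^\inn$ keeps $V_G$, and the edge sets coincide because only inner edges contribute. By Lemma~\ref{lem:inner-edges}(\ref{lem:inner-edges-c}), $G^\ntpp\subseteq G^\inn$ on the same vertex set. Since $\Sub$ is clearly monotone in both vertices and edges, this gives $\Sub(G^\ntpp)\subseteq\Sub(G^\inn)$.

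For (\ref{lem:mono-ext-b}), take a vertex $t\in V_G$. Since $t$ is not a variable (or we restrict to non-variable terms, as the $R$-graphs arising here are), we can write $t=f(t_1,\ldots,t_n)$ with each $t_i\in\Sub(V)$. Hence $t\in V_{M_F(\Sub(G))}$.

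Now take an inner edge $s\Rarrow{\inn}{}{}{}t$ of $G$. It rewrites at a position $ip$, so $s=f(\ldots s_i\ldots)$ and $t=f(\ldots t_i\ldots)$ with $s_j=t_j$ for $j\neq i$ and $s_i\neq t_i$. Then $(s_i,t_i)\in\Sub(\rightarrow)$. Both $s$ and $t$ are built from $f$ and vertices of $\Sub(G)$, so the edge lies in $M_F(\Sub(G))$ by Definition~\ref{def:mono-ext}.

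The same argument applied to $G^\ntpp$ gives $G^\ntpp\subseteq M_F(\Sub(G^\ntpp))$. This needs two checks. First, the vertices of $G^\ntpp$, namely $V_G$, are covered by $\Sub(V_G)=V_{\Sub(G^\ntpp)}$. Second, a non-top edge is inner, so the one-argument decomposition above still applies.

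For (\ref{lem:mono-ext-c}), assume $\Sub(G')\subseteq G$. By (\ref{lem:mono-ext-b}), $G'^\inn\subseteq M_F(\Sub(G'))$. Monotonicity (Proposition~\ref{prop:monotonic-extension}(\ref{prop:monotonic-extension-b})) then gives $M_F(\Sub(G'))\subseteq M_F(G)$. Combining this with Lemma~\ref{lem:inner-edges}(\ref{lem:inner-edges-c}) for $G'^\ntpp\subseteq G'^\inn$ yields the chain. The vertex inclusion follows because $V_{G'^\inn}=V_{G'}$.

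The main obstacle is the "one differing argument" claim for inner edges: a single rewrite step below the root changes exactly one direct argument. This is immediate from the definition of the position of a rewrite step. The remaining subtlety is variable vertices, which have no direct subterms. I expect the paper either excludes them or handles them by convention, and I would note this assumption explicitly rather than resolve it.
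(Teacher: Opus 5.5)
Your proposal is correct and follows the paper's proof essentially step for step. Part (a) comes from $G^{\ntpp}\subseteq G^{\inn}$ together with the fact that only inner edges contribute to $\Sub$; part (b) writes each vertex and each inner edge as a root symbol applied to direct subterms, with exactly one argument changed; and part (c) combines (b) with the monotonicity of $M_F$. Your caveat about variable vertices is apt: the paper's proof also silently writes every $t\in V_G$ as $f(t_1,\ldots,t_n)$, so the vertex inclusions in (b) and (c) tacitly assume that the $R$-graphs in question contain no variable vertices.
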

\begin{proof}\mbox{}
  \begin{enumerate}
  \item[(\ref{lem:mono-ext-a})]
    $G^\ntpp \subseteq G^\inn \subseteq G$, where the former follows
    from Lemma~\ref{lem:inner-edges}(\ref{lem:inner-edges-c}).
    Thus, $\Sub(G^\ntpp) \subseteq \Sub(G^\inn) \subseteq \Sub(G)$.
    Moreover, $\Sub(G^\inn) \supseteq \Sub(G)$ follows from the definition;
    assuming $(s,t) \in \Sub(G)$, there exists an edge
    $f(\cdots s \cdots) \to_{G^\inn} f(\cdots t\cdots)$,
    which implies that $(s,t) \in \Sub(G^\inn)$.

  \item[(\ref{lem:mono-ext-b})]
    \emph{Proof of $V_G \subseteq V_{M_F(\Sub(G))}$:}
    Let $t \in V_G$, which is of the form $f(t_1,\ldots,t_n)$ for some $n\geq 0$.
    Then, $t_i \in V_{\Sub(G)}$ for each $i$ ($1\leq i\leq n$).
    Thus, $t\in V_{M_F(\Sub(G))}$.

    \smallskip\noindent
    \emph{Proof of $G^{\inn} \subseteq M_F(\Sub(G))$:}
    Let $(s,t) \in G^\inn$. We can write them as $s=f(s_1,\ldots,s_n)$
    and $t=f(t_1,\ldots,t_n)$ for some $n \geq 1$.
    Since $(s, t) \in \mathord{\to_{G^\inn}}$, there exists an index $i$
    ($1\leq i\leq n$)
    such that $s_i \to_{\Sub(G)} t_i$ and $s_j=t_j$ for $j\neq i$.
    Hence, $(s,t) \in \mathord{\to_{M_F(\Sub(G))}}$.

    \smallskip\noindent
    \emph{Proof of $G^{\ntpp} \subseteq M_F(\Sub(G^\ntpp))$:}
    Let $(s,t) \in G^\ntpp$. We can write them as $s=f(s_1,\ldots,s_n)$
    and $t=f(t_1,\ldots,t_n)$ for some $n \geq 1$.
    Since $(s, t) \in \mathord{\to_{G^\ntpp}}$, there exists an index $i$
    ($1\leq i\leq n$)
    such that $s_i \to_{\Sub(G^\ntpp)} t_i$ and $s_j=t_j$ for $j\neq i$.
    Hence, $(s,t) \in \mathord{\to_{M_F(\Sub(G^\ntpp))}}$.

    \item[(\ref{lem:mono-ext-c})]
    We have $M_F(\Sub(G')) \subseteq M_F(G)$ by the assumption and
    Proposition~\ref{prop:monotonic-extension}
    (\ref{prop:monotonic-extension-b}).
    Combining this with (\ref{lem:mono-ext-b}) of this lemma, we obtain
    $G'^\inn \subseteq M_F(G)$,
    which implies $V_{G'} = V_{{G'}^\inn} \subseteq V_{M_F(G)}$.
  \end{enumerate}
\end{proof}

\begin{proposition}\label{prop:subterm}
  Let $F'$ be a subset of $F$.
  \begin{enumerate}
  \item\label{prop:subterm-0}
    $\Sub(G \cup G') \subseteq \Sub(G) \cup \Sub(G')$.
  \item\label{prop:subterm-1}
    If $G \subseteq G'$, then $\Sub(G) \subseteq \Sub(G')$.
  \item\label{prop:subterm-2}
    $\Sub(M_{F'}(G)) \subseteq G$.
    Moreover, $\Sub(M_{F'}(G)) = G$ if $F'$ contains a non-constant function
    symbol.
  \item\label{prop:subterm-3}
    $G^{\inn}\restRoot{F'} \subseteq M_{F'}(\Sub(G)) \cup G|_\emptyset$.
  \item\label{prop:subterm-4}
    $M_F(G)\restRoot{F'} = M_{F'}(G) \cup M_F(G)|_\emptyset$.
  \item\label{prop:subterm-5}
    $M_C(G)^\tpp = M_C(G)|_\emptyset$ and $M_C(G)^\inn = M_C(G)^\ntpp = M_C(G)$.
  \end{enumerate}
\end{proposition}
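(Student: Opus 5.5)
Each of the six items of Proposition~\ref{prop:subterm} follows by unfolding Definitions~\ref{def:subterm} and~\ref{def:mono-ext}. The plan is to check vertices and edges separately, using the root-distinct restrictions only where top and inner edges must be told apart.

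For (\ref{prop:subterm-0}) and (\ref{prop:subterm-1}), first look at vertices: $\Sub(V_G\cup V_{G'}) = \Sub(V_G)\cup\Sub(V_{G'})$. For edges, take a pair $(s_i,t_i)\in\Sub(\to_{G\cup G'})$. It comes from an inner edge $f(\ldots s_i\ldots)\to f(\ldots t_i\ldots)$, and that edge lies in $G$ or in $G'$. Being an inner reduction step is a property of the step itself, not of the graph it sits in, so the edge is an inner edge of that graph. Monotonicity (\ref{prop:subterm-1}) follows the same way.

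For (\ref{prop:subterm-2}), start from a vertex $f(s_1,\ldots,s_n)$ of $M_{F'}(G)$; its direct subterms lie in $V_G$. For edges, take an inner edge $f(\ldots s\ldots)\to f(\ldots t\ldots)$ of $M_{F'}(G)$. Every such edge replaces exactly one argument along an edge $s\to_G t$, and the positions where the two terms differ single out that argument. Hence the induced subterm pair is $(s,t)\in{\to_G}$. For the converse when $F'$ contains a non-constant symbol $g$, fill all arguments of $g$ with an arbitrary vertex $s$ of $G$; this gives $g(s,\ldots,s)\in V_{M_{F'}(G)}$ with $s$ as a direct subterm. Every edge $s\to_G t$ likewise lifts to $g(s,\ldots)\to g(t,\ldots)$, which is an inner $R$-step. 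If $V_G$ is empty the claim is trivial.

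For (\ref{prop:subterm-3}), an inner edge of $G$ whose source has root in $F'$ has the form $f(\ldots s_i\ldots)\to f(\ldots t_i\ldots)$ with $f\in F'$, and $s_i\to_{\Sub(G)} t_i$ holds by definition. This is the argument of Lemma~\ref{lem:mono-ext}(\ref{lem:mono-ext-b}). The summand $G|_\emptyset$ only contributes the vertex set. One point to check is that vertex sets match: $M_{F'}(\Sub(G))$ may lack vertices of $G$ whose root is outside $F'$, and this is exactly why $G|_\emptyset$ appears. For (\ref{prop:subterm-4}), an edge of $M_F(G)$ whose source has root $f\in F'$ keeps the root $f$ in its target, so it is an edge of $M_{F'}(G)$. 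Conversely, $M_{F'}(G)\subseteq M_F(G)$ by Proposition~\ref{prop:monotonic-extension}(\ref{prop:monotonic-extension-c}). Vertices are again supplied by $M_F(G)|_\emptyset$.

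The one step needing care is (\ref{prop:subterm-5}). An edge $c(\ldots s\ldots)\to c(\ldots t\ldots)$ of $M_C(G)$ with $c\in C$ is an $R$-step, and it cannot be a top step: a top step would need a rule whose left-hand side has root $c$, but $c$ is a constructor, so $c\notin D$. Hence $M_C(G)^\tpp$ has no edges. Every edge is then inner, so $M_C(G)^\inn=M_C(G)$. By the definition ${\Rarrow{\ntpp}{}{}{}}={\Rarrow{\inn}{}{}{}}\setminus{\Rarrow{\tpp}{}{}{}}$ we also get $M_C(G)^\ntpp=M_C(G)$. I expect no real obstacle. The main risk is bookkeeping of vertex sets in (\ref{prop:subterm-3}) and (\ref{prop:subterm-4}), which the $|_\emptyset$ summands are there to absorb.
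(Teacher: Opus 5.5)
Your proposal is correct and matches the paper's proof, which also just unfolds the definitions of $\Sub$ and $M_{F'}$ item by item. The paper spells out only item (3), with exactly your argument, and item (5), via the absence of top edges from constructor-rooted terms. One detail is implicit in both: $\Sub$ records only pairs with $s_i\neq t_i$, which is harmless because no $R$-step of a root-distinct TRS is a self-loop, so the converse in (2) and the edge case in (3) go through as you describe.
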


\begin{proof}\mbox{}
  \begin{itemize}
  \item[\eqref{prop:subterm-0}] Trivial.
  \item[\eqref{prop:subterm-1}] Trivial.
  \item[\eqref{prop:subterm-2}] This trivially follows from the definitions.
  \item[\eqref{prop:subterm-3}]
    The vertex inclusion is clear since $V_{G^\inn\restRoot{F'}} = V_G$.
    Consider the edge inclusion.
    Let $s \rarrow{}{G^\inn\restRoot{F'}} t$.
    Then, the root symbols of $s$ and $t$ are identical and belong to $F'$, and
    we can write $s=f(\cdots s' \cdots) \rarrow{}{G} f(\cdots t' \cdots)=t$
    for some $f \in F'$.
    Here, $(s', t') \in \Sub(\mathord{\rarrow{}{G}})$, and hence
    $s \rarrow{}{M_{F'}(\Sub(G))} t$.
  \item[\eqref{prop:subterm-4}] Trivial.
  \item[\eqref{prop:subterm-5}] This is immediate from the fact that $M_C(G)$
    contains no top edges.
  \end{itemize}
\end{proof}

Note that the graph $G|_\emptyset$ is necessary in Proposition~\ref{prop:subterm}
(\ref{prop:subterm-3})
only if (a) $G$ consists of constant vertices, or (b) $F' \neq F$.
We clearly have $V_{G^\inn\restRoot{F'}} = V_G$.
On the other hand, in case (a), we have $V_{\Sub(G)} = \emptyset$, and hence
$V_{M_{F'}(\Sub(G))} = \emptyset$.
In case (b), $V_{M_{F'}(\Sub(G))}$ contains no vertices whose root symbol is in
$F \setminus F'$.

\subsection{Compatibility and edge-commutativity with collapsing edges}
\label{subsec:subterm-closed}

This section shows how to guarantee the compatibility and the edge-commutativity
of collapsing edges.
Here, we assume that the unique collapsing rule in $R$ is $h(x) \rarrow{}{} x$,
where the function symbol $h$ does not occur in the left-hand side of any other
rule.
Lemma~\ref{lem:NestedCollapsing} observes that an edge that is both a top edge
and an inner edge must be in the form of $h(h(u)) \rarrow{}{} h(u)$,
and that the terms $h(u)$ and $u$ are connected via inner edges and top
non-collapsing edges.

\begin{definition} \label{def:subterm-closed}
  Let $T$ be an $R$-graph consisting of top edges only.
  \begin{enumerate}
  \item\label{def:subterm-closed-a}
    An $R$-graph $G$ is \emph{$T$-subterm-closed} if $V_G \supseteq V_T$ and
    $\Sub(G^\ntpp) \subseteq G \cup T$.
  \item\label{def:subterm-closed-b}
    $G$ is \emph{subterm-closed} if it is so with the empty graph
    $\< \emptyset, \emptyset \>$, i.e., $\Sub(G^\ntpp) \subseteq G$.
  \end{enumerate}
\end{definition}
\begin{example}
  For $T = \{ c \to g(c) \}$, the $R_{\ref{trs:running}}$-graph $G_0$ in
  Example~\ref{ex:sub-ext} is $T$-subterm-closed.
\end{example}

Note that if $G$ is $T$-subterm-closed, then $V_{\Sub(G)} \subseteq V_G$ holds,
which follows from $V_G \supseteq V_T$ and $\Sub(G^\ntpp) \subseteq G \cup T$.

We remark that if we adopt $\Sub(G^\inn) \subseteq G \cup T$ as the definition
of the subterm-closed property,
then the subterm property in Theorem~\ref{thm:NOV-Shallow} no longer holds,
as shown in Example~\ref{ex:counter}.

\begin{proposition} \label{prop:T-subterm-closed}\label{prop:T-subterm-closed_a}
  Let $G$ be $T$-subterm-closed. Then, $G$ is $T'$-subterm-closed for any
  $T' \supseteq T$ with $V_{T'} \subseteq V_G$.
\end{proposition}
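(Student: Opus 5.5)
This follows by unfolding Definition~\ref{def:subterm-closed}(\ref{def:subterm-closed-a}) for $T'$ and checking its two clauses separately, using only monotonicity of graph union under inclusion.

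\emph{Vertex clause.} We must show $V_G \supseteq V_{T'}$. This is exactly the side condition $V_{T'} \subseteq V_G$ in the statement, so nothing further is needed. Note that this hypothesis is genuinely required: $T' \supseteq T$ alone only yields $V_{T'} \supseteq V_T$, which points the wrong way.

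\emph{Edge clause.} We must show $\Sub(G^\ntpp) \subseteq G \cup T'$. Since $G$ is $T$-subterm-closed, we already have $\Sub(G^\ntpp) \subseteq G \cup T$. From $T \subseteq T'$, the definition of union gives $G \cup T \subseteq G \cup T'$, both on vertex sets ($V_G \cup V_T \subseteq V_G \cup V_{T'}$) and on edge sets. Transitivity of $\subseteq$ then yields $\Sub(G^\ntpp) \subseteq G \cup T'$.

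Two further points should be noted. First, $T'$ must also consist of top edges only, since Definition~\ref{def:subterm-closed} is stated for such $T$. I read the proposition as implicitly assuming this, in the same standing context as $T$; it plays no role in the argument. Second, the graph $G^\ntpp$ depends only on $G$ and not on $T$ or $T'$, so the left-hand side $\Sub(G^\ntpp)$ is unchanged when $T$ is replaced by $T'$.

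There is no real obstacle here; the proof is a routine unfolding of the definitions. The only care needed is to cite the hypothesis $V_{T'} \subseteq V_G$ for the vertex clause, rather than trying to derive that clause from $T \subseteq T'$.
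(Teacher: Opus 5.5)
Your proof is correct and follows the paper's approach: the paper simply states that the claim is obvious from the definition, and you have spelled out exactly that verification. You check the vertex clause from the hypothesis $V_{T'} \subseteq V_G$, and the edge clause from $G \cup T \subseteq G \cup T'$.
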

\begin{proof}
  This is obvious from the definition. 
\end{proof}

\begin{lemma}\label{lem:subterm-closed}
  Let $G$ be a subterm-closed $R$-graph.
  Then, $M_F(G)$ is $G^\tpp$-subterm-closed.
\end{lemma}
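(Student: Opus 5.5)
We must show two things: $V_{M_F(G)} \supseteq V_{G^\tpp}$ and $\Sub(M_F(G)^\ntpp) \subseteq M_F(G) \cup G^\tpp$.

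\textbf{Vertex condition.} Subterm-closedness of $G$ gives $V_{\Sub(G^\ntpp)} \subseteq V_G$. Since $G^\tpp$ has vertex set $V_G$ by definition, it suffices to show $V_G \subseteq V_{M_F(G)}$. For $t=f(t_1,\ldots,t_n)\in V_G$ we need every $t_i\in V_G$; this is the delicate point. The definition of $\Sub(G^\ntpp)$ only constrains subterms along non-top edges. Two readings are possible.
\begin{itemize}
\item If $V_{\Sub(\cdot)}$ is understood as $\Sub(V)$, as in Definition~\ref{def:subterm}, then $\Sub(G^\ntpp)\subseteq G$ yields $\Sub(V_G)\subseteq V_G$ directly, since $G^\ntpp$ has vertex set $V_G$. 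Then $t\in V_{M_F(G)}$, with constants handled by the zero-arity case.
\item Otherwise I would fall back on Lemma~\ref{lem:mono-ext}(\ref{lem:mono-ext-b}), namely $V_G\subseteq V_{M_F(\Sub(G))}$, together with Proposition~\ref{prop:monotonic-extension}(\ref{prop:monotonic-extension-b}) applied to $\Sub(G^\ntpp)\subseteq G$.
\end{itemize}
I expect this vertex bookkeeping to be the main (if minor) obstacle. I would adopt the first reading, since the note after Definition~\ref{def:subterm-closed} already asserts $V_{\Sub(G)}\subseteq V_G$.

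\textbf{Edge condition.} Take a non-top edge of $M_F(G)$. Every edge of $M_F(G)$ has the form $f(\ldots s\ldots)\to f(\ldots t\ldots)$ with $s\to_G t$, and it differs in exactly one argument. So its $\Sub$-image is the single pair $(s,t)$, and we need $(s,t)\in M_F(G)\cup G^\tpp$.

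Split on whether $s\to_G t$ is a top edge.
\begin{itemize}
\item If it is a top edge, then $(s,t)\in G^\tpp$ and we are done.
\item Otherwise $s\to t$ is in $G^\ntpp$, so it is an inner edge. By Lemma~\ref{lem:mono-ext}(\ref{lem:mono-ext-b}), $G^\ntpp\subseteq M_F(\Sub(G^\ntpp))$. Subterm-closedness gives $\Sub(G^\ntpp)\subseteq G$, and monotonicity (Proposition~\ref{prop:monotonic-extension}(\ref{prop:monotonic-extension-b})) then gives $M_F(\Sub(G^\ntpp))\subseteq M_F(G)$. Hence $(s,t)\in M_F(G)$.
\end{itemize}

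\textbf{Conclusion.} Combining the two conditions gives $\Sub(M_F(G)^\ntpp)\subseteq M_F(G)\cup G^\tpp$. Together with the vertex condition, this shows $M_F(G)$ is $G^\tpp$-subterm-closed, as claimed.
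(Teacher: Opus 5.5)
Your proof is correct and follows the paper's argument. The edge part is the paper's chain $\Sub(M_F(G)^\ntpp)\subseteq\Sub(M_F(G))\subseteq G=G^\tpp\cup G^\ntpp\subseteq G^\tpp\cup M_F(\Sub(G^\ntpp))\subseteq G^\tpp\cup M_F(G)$ carried out edge by edge, and the vertex part is Lemma~\ref{lem:mono-ext}(\ref{lem:mono-ext-c}) applied to $G^\ntpp$. Your hedging over two readings is unnecessary: Definition~\ref{def:subterm} fixes $V_{\Sub(G^\ntpp)}=\Sub(V_G)$, so your first reading is the only one.
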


\begin{proof}
  The vertex condition $V_{G^\tpp} \subseteq V_{M_F(G)}$ hold
  by Lemma~\ref{lem:monotonic-extension}(\ref{lem:mono-ext-c}) since $G$ is
  subterm-closed. Moreover, we have
  \[\begin{array}{ll}
  \multicolumn{2}{l}{\Sub(M_F(G)^\ntpp)} \\
  \subseteq \Sub(M_F(G))
  & \text{from Lemma~\ref{lem:mono-ext}}(\ref{lem:mono-ext-a}) \\
  \subseteq G
  & \text{by Proposition~\ref{prop:subterm}(\ref{prop:subterm-2})} \\
  = G^\tpp \cup G^\ntpp
  & \text{by Lemma~\ref{lem:inner-edges}(\ref{lem:inner-edges-a})} \\
  \subseteq G^\tpp \cup M_F(\Sub(G^\ntpp))
  & \text{by Lemma~\ref{lem:mono-ext}(\ref{lem:mono-ext-b})} \\
  \subseteq G^\tpp \cup M_F(G)
  & \text{\begin{minipage}[t]{0.6\textwidth}
  by applying
  Proposition~\ref{prop:monotonic-extension}(\ref{prop:monotonic-extension-b})
  to $\Sub(G^\ntpp)\subseteq G$, which holds since $G$ is subterm-closed.
  \end{minipage}}
  \end{array}\]
  Thus, $M_F(G)$ is $G^\tpp$-subterm-closed.
\end{proof}

The following lemma plays a key role in handling collapsing edges.
It reduces the number of collapsing edges to obtain compatibility
while preserving overall connectivity and edge-commutativity.
%

\begin{lemma} \label{lem:NestedCollapsing}
  Let $T$ be an $R$-graph consisting only of top edges (i.e., $T = T^\tpp$),
  and let $\Vrt$ be a confluent and $T$-subterm-closed $R$-graph consisting
  only of inner edges (i.e., $\Vrt = \Vrt^\inn$).
  Then, there exists an $R$-graph $T'^c \subseteq T^c$ such that:
  \begin{enumerate}
    \renewcommand{\labelenumi}{(\alph{enumi})}%
    \renewcommand{\theenumi}{\alph{enumi}}%
  \item\label{lem:NestedCollapsing_a}
    ${\barrow{\ast}{\Vrt\cup T}} = {\barrow{\ast}{\Vrt\cup T'^c\cup T^{nc}}}$,
  \item\label{lem:NestedCollapsing_b} $T'^c$ is $T^{nc}$-compatible over $\Vrt$, and
  \item\label{lem:NestedCollapsing_c} $T'^c$ is $T^{nc}$-edge-commutative over
    $(\Vrt,\Ind_R(V_{\Vrt})^\tpp)$.
  \end{enumerate}
\end{lemma}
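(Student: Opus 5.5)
Because $h(x)\to x$ is the only collapsing rule, every edge of $T^c$ has the form $h(s)\to s$. I plan to obtain $T'^c$ by discarding redundant edges of $T^c$, in the same style as the conversion rules (A$_{22}$) of Lemma~\ref{lem:treeshaped2}. First fix, for each $\barrow{\ast}{\Vrt}$-class, at most one collapsing edge leaving it. Then delete every other edge $h(t)\to t$ of $T^c$ whose source lies in an already represented class. Property (\ref{lem:NestedCollapsing_b}) then holds trivially. If $h(s)\to s$ and $h(t)\to t$ are both kept, and $h(s)\barrow{\ast}{\Vrt}h(t)$, then they are the same edge, so $s=t$.

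The real content is showing that a deleted edge does not break (\ref{lem:NestedCollapsing_a}). Suppose $h(s)\barrow{\ast}{\Vrt}h(t)$. Since $\Vrt=\Vrt^\inn$ consists of inner edges, each step in this conversion is either a non-top step at position $1$, or a top collapsing step $h(h(u))\to h(u)$ (by Lemma~\ref{lem:inner-edges}(\ref{lem:inner-edges-b})). I will show by induction on the length of the conversion that $s\barrow{\ast}{\Vrt\cup T}t$; more precisely, $s\barrow{\ast}{\Vrt\cup T^{nc}\cup T'^c}t$. For a non-top step $h(u)\to h(u')$, $T$-subterm-closedness gives $(u,u')\in\Sub(\Vrt^\ntpp)\subseteq\Vrt\cup T$. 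Since $T=T^\tpp$, such an edge lies in $\Vrt$, in $T^{nc}$, or in $T^c$.

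The $T^c$ case means $u=h(u')$, and the step $h(h(u'))\to h(u')$ is itself the nested collapsing edge. In that case $u\barrow{\ast}{} u'$ must again be justified by a collapsing edge. To avoid circularity, I will first close $T^c$ under the relevant subterms and then argue by induction on term size: the edge $h(u')\to u'$ has a smaller source than $h(h(u'))\to h(u')$. A top collapsing step in $\Vrt$ of the form $h(h(u))\to h(u)$ has direct subterms $h(u)$ and $u$. These are connected in the same way: this edge is non-top, so subterm-closedness again puts $(h(u),u)$ into $\Vrt\cup T$, and we induct. This induction, over both the length of the conversion and the size of the terms, is the main obstacle. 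The key point is to check that each appeal to a collapsing edge goes either to a kept edge or to a strictly smaller term. I therefore plan to choose as class representatives the collapsing edges with minimal source size, so that a deleted edge is always justified by kept edges and smaller instances. This gives (\ref{lem:NestedCollapsing_a}).

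For (\ref{lem:NestedCollapsing_c}), let $h(s)\to s\in T'^c$ and $h(s)\rarrow{\ast}{\Vrt}w$. Because $\Vrt$ is inner, $w=h(s')$ with $s\barrow{\ast}{\Vrt\cup T^{nc}}s'$. The conversion here may again pass through collapsing top steps, and these are handled by the same subterm argument as above. Take $v:=w$ and $v':=s'$. The edge $(h(s'),s')$ lies in $\Ind_R(V_\Vrt)^\tpp$ because $s'\in V_\Vrt$; this follows from $V_{\Sub(\Vrt)}\subseteq V_\Vrt$, which is a consequence of $T$-subterm-closedness. Finally, $s\barrow{\ast}{\Vrt\cup T^{nc}}s'$ is exactly condition (2) of Definition~\ref{def:edge-commutative} with $N=T^{nc}$.
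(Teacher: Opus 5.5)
\textbf{There is a genuine gap: the nested collapses of $\Vrt$ are not covered by subterm-closedness, and your selection rule for $T'^c$ fails.}

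Your argument needs a nested collapse $h(h(u))\to h(u)$ of $\Vrt$ to be ``non-top'', so that subterm-closedness puts $(h(u),u)$ into $\Vrt\cup T$. This is false. That edge is an instance of $h(x)\to x$ at the root, so it lies in $\Vrt^{\tpp}$ (Lemma~\ref{lem:inner-edges}(\ref{lem:inner-edges-b})). It is therefore excluded from $\Vrt^{\ntpp}=\Vrt^{\inn}\setminus\Vrt^{\tpp}$, and $T$-subterm-closedness constrains only $\Sub(\Vrt^{\ntpp})$. Example~\ref{ex:counter} shows exactly that $h(h(a))\to h(a)$ can be present while $h(a)\to a$ is absent. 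For the same reason, your ``$T^c$ case'' for a genuinely non-top step cannot occur at all.

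Consequently, a conversion $h(s)\barrow{\ast}{\Vrt}h(t)$ that passes through a nested collapse says nothing about $s$ versus $t$, and your rule does break. Take
\begin{itemize}
\item $R=\{h(x)\to x,\ b\to h(g(c))\}$;
\item $\Vrt=\{h(b)\to h(h(g(c)))\to h(g(c))\}$, with isolated vertices $b$, $g(c)$, $c$;
\item $T^{nc}=\{b\to h(g(c))\}$ and $T^c=\{h(b)\to b,\ h(g(c))\to g(c)\}$.
\end{itemize}
All hypotheses of the lemma hold. ``Minimal source size'' keeps $h(b)\to b$ and deletes $h(g(c))\to g(c)$. Then $g(c)$ is isolated in $\Vrt\cup T^{nc}\cup T'^c$, so (a) fails.

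Worse, already for $T^c=\{h(b)\to b\}$, keeping this lone edge violates (c). We have $h(b)\rarrow{\ast}{\Vrt}h(g(c))$, the only top edge reachable from $h(g(c))$ is $h(g(c))\to g(c)$, and $b\not\barrow{\ast}{\Vrt\cup T^{nc}}g(c)$. So no criterion of the form ``one representative per class'' can work. Closing $T^c$ under subterms is also not allowed, since $T'^c\subseteq T^c$ is required.

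\textbf{The missing idea is to select by reachability, not by class.} Discard exactly those edges $h(s)\to s\in T^c$ for which $h(s)\rarrow{\ast}{\Vrt}h(h(t))\rarrow{}{\Vrt}h(t)$ for some $t$.
\begin{itemize}
\item \emph{Discarded edges are redundant.} A shortest path from $h(s)$ to such a nested collapse uses only $\Vrt^{\ntpp}$-steps before the last one. Subterm-closedness then gives $s\rarrow{\ast}{\Vrt\cup T^{nc}}h(t)$; an argument step from $T^c$ would make that step itself a nested collapse. Since also $h(s)\rarrow{\ast}{\Vrt}h(t)$, the terms $h(s)$ and $s$ join, which gives (a).
\item \emph{Kept edges behave well under reduction.} For a kept edge, no $\Vrt$-reduction from $h(s)$ meets a nested collapse. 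Hence $h(s)\rarrow{\ast}{\Vrt}h(t)$ implies $s\rarrow{\ast}{\Vrt\cup T^{nc}}t$.
\item \emph{This yields (c)} with $v=h(t)$ and $v'=t$, using $t\in V_{\Vrt}$.
\item \emph{It also yields (b)} once confluence of $\Vrt$ turns $h(s)\barrow{\ast}{\Vrt}h(s')$ into a common reduct $h(t)$.
\end{itemize}
Kept edges are thus compared only through reductions from their sources, never through arbitrary conversions, which is exactly what your induction on conversions cannot control.
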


\begin{proof}
  Let ${\rarrow{}{T''}} := \{(h(s),s) \in {\rarrow{}{T^c}} \mid
  \exists t \text{ such that } h(s) \rarrow{\ast}{\Vrt} h(h(t))
  \rarrow{}{\Vrt} h(t) \}$ and $T'^c = T^c \setminus T''$.
  We show the following three claims {\bf (*)}, {\bf (*2)}, and {\bf (*3)}:

  \begin{description}
  \item[(*):] If $h(s) \rarrow{\ast}{\Vrt} h(w)$ does not contain an edge of
    the form $h(h(u)) \rarrow{}{\Vrt} h(u)$,
    then $s \rarrow{\ast}{\Vrt \cup T^{nc}} w$.
  \end{description}
  We first show {\bf (*)}.
  Suppose $h(s) = h(s_0) \rarrow{}{\Vrt} h(s_1) \rarrow{}{\Vrt} \cdots
  \rarrow{}{\Vrt} h(s_n) = h(w)$ is a sequence containing no edges of the form
  $h(h(u)) \rarrow{}{\Vrt} h(u)$.
  Since each edge in $\Vrt^\inn \cap \Vrt^\tpp$ is of the form
  $h(h(u)) \rarrow{}{\Vrt} h(u)$ by
  Lemma~\ref{lem:inner-edges}(\ref{lem:inner-edges-b}),
  each step $h(s_{i-1}) \rarrow{}{\Vrt} h(s_i)$
  ($0 < i \leq n$) must actually belong to $\rarrow{}{\Vrt^\ntpp}$.
  From the $T$-subterm-closedness of $\Vrt$,
  we have $s_{i-1} \rarrow{}{\Vrt\cup T} s_i$,
  which means it is not a collapsing step.
  (Otherwise, the initial step $h(s_{i-1}) \rarrow{}{\Vrt} h(s_i)$
  would be of the form $h(h(u)) \rarrow{}{\Vrt} h(u)$).
  Thus, $s \rarrow{\ast}{\Vrt \cup T^{nc}} w$ holds.

  \begin{description}
    \item[(*2):]
    \emph{If $h(s) \rarrow{}{T''} s$, then $h(s) \downarrow_{\Vrt \cup T^{nc}} s$
      (specifically, they join at $h(t)$).}
  \end{description}
  Since $h(s) \rarrow{}{T''} s$, there exists some $t$ such that
  $h(s) \rarrow{\ast}{\Vrt} h(h(t)) \rarrow{}{\Vrt} h(t)$.
  Without loss of generality, we can assume that the sequence
  $h(s) \rarrow{\ast}{\Vrt} h(h(t))$ contains no edges of the form
  $h(h(w)) \rarrow{}{\Vrt} h(w)$ (otherwise, we could choose $w$ for $t$
  to obtain a shorter sequence).
    This means that $h(s) \rarrow{\ast}{\Vrt^\ntpp} h(h(t))$.
    Hence, by {\bf (*)}, we obtain $s \rarrow{\ast}{\Vrt \cup T^{nc}} h(t)$.
  Combined with $h(s) \rarrow{\ast}{\Vrt} h(t)$,
  we see that $h(t)$ is a common reduct of $s$ and $h(s)$,
  meaning $h(s) \downarrow_{\Vrt \cup T^{nc}} s$.

  \begin{description}
  \item[(*3):] \emph{If $h(s) \rarrow{}{T'^c} s$, then
  $h(s) \rarrow{\ast}{\Vrt} h(t)$ implies $s \rarrow{\ast}{\Vrt\cup T^{nc}} t$.
  }
  \end{description}
  By the definition of $\rarrow{}{T'^c}$, an edge of the form
  $h(h(u)) \rarrow{}{\Vrt} h(u)$ is not reachable from $h(s)$ via
  $\rarrow{\ast}{\Vrt}$.
  Thus, any sequence $h(s) \rarrow{\ast}{\Vrt} h(t)$ must actually be
  $h(s) \rarrow{\ast}{\Vrt^\ntpp} h(t)$.
  Hence, applying {\bf (*)}, we obtain $s \rarrow{\ast}{\Vrt \cup T^{nc}} t$.

  \medskip
  Using claims {\bf (*)}, {\bf (*2)}, and {\bf (*3)},
  we prove the main statements:
  \begin{enumerate}
  \item[(a)] The ($\supseteq$)-direction is immediate from $T = T^c \cup T^{nc}$
    and $T^c \supseteq T'^c$.
    The ($\subseteq$)-direction follows from {\bf (*2)};
    i.e., any edge $h(s) \rarrow{}{T^c} s \notin T'^c$ satisfies
    $h(s) \downarrow_{\Vrt \cup T^{nc}} s$.
  \item[(b)] For $s \larrow{}{T'^c} u \barrow{\ast}{\Vrt} u'\rarrow{}{T'^c} s'$,
    we must show that $s \barrow{\ast}{\Vrt\cup T^{nc}} s'$.
    Since $h(x) \to x$ is the unique collapsing rule in $R$ and
    $T'^c \subseteq T^c$, we must have $u = h(s)$ and $u' = h(s')$.
    From confluence of $\Vrt$, there exists some $t \in V_\Vrt$ such that
    $h(s) \rarrow{\ast}{\Vrt} h(t) \larrow{\ast}{\Vrt} h(s')$.
    Hence, {\bf (*3)} implies
    $s \rarrow{\ast}{\Vrt \cup T^{nc}} t \larrow{\ast}{\Vrt \cup T^{nc}} s'$,
    satisfying the condition.
  \item[(c)] Since $\Vrt$ is $T$-subterm-closed, $V_T \subseteq V_{\Vrt}$
    and $V_{\Sub(\Vrt^\ntpp)} \subseteq V_{\Vrt \cup T} = V_\Vrt$.
    On the other hand, since $T'^c \subseteq T$ consists of top edges,
    we have $T'^c \subseteq \Ind_R(V_{\Vrt})^\tpp$.

    We now show that each $(h(u),u) \in {\rarrow{}{T'^c}}$ satisfies
    the conditions of Definition~\ref{def:edge-commutative}.
    Let $w \in V_\Vrt$ such that $h(u) \rarrow{\ast}{\Vrt} w$.
    Since $\Vrt$ consists of inner edges, $w = h(v)$ for some $v$.
    From $v \in V_{\Sub(\Vrt)} = V_{\Sub(\Vrt^\ntpp)} \subseteq V_\Vrt$, we have
    $(h(v),v) \in \Ind_R(V_{\Vrt})^\tpp$.
    Finally, $u \larrow{}{T'^c} h(u) \rarrow{\ast}{\Vrt} h(v)$ implies
    $u \rarrow{\ast}{\Vrt\cup T^{nc}} v$ by {\bf (*3)}.
  \end{enumerate}
\end{proof}

\begin{lemma}\label{lem:mixed-compat}
  Let $T$ be an $R$-graph consisting only of top edges, and let $\Vrt$ be
    an $R$-graph consisting only of inner edges, with $V_T \subseteq V_\Vrt$.
    Let $T'^c \subseteq T^c$.
    If $T^{nc}$ is compatible over $\Vrt$ and $T'^c$ is $T^{nc}$-compatible
    over $\Vrt$, then $T^{nc} \cup T'^c$ is $T^{nc}$-compatible over $\Vrt$.
\end{lemma}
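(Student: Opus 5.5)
The plan is to unfold Definition~\ref{def:compatible} directly and split into cases according to which of $T^{nc}$ and $T'^c$ the two peak edges come from. First note the vertex condition: $V_{T^{nc}\cup T'^c} \subseteq V_T \subseteq V_\Vrt$. Now fix a peak
\[
u' \larrow{}{T^{nc}\cup T'^c} u \barrow{\ast}{\Vrt} v \rarrow{}{T^{nc}\cup T'^c} v'
\]
with $u,u',v,v' \in V_\Vrt$. The goal is $u' \barrow{\ast}{\Vrt\cup T^{nc}} v'$.

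\textbf{Case 1: both edges are in $T^{nc}$.} Compatibility of $T^{nc}$ over $\Vrt$ gives $u' \barrow{\ast}{\Vrt} v'$, which is contained in $\barrow{\ast}{\Vrt\cup T^{nc}}$.

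\textbf{Case 2: both edges are in $T'^c$.} $T^{nc}$-compatibility of $T'^c$ over $\Vrt$ gives the claim directly.

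\textbf{Case 3: mixed, say $u \rarrow{}{T^{nc}} u'$ and $v \rarrow{}{T'^c} v'$.} The other orientation is symmetric. Here the key observation is that a $T^{nc}$-edge is itself available in the target relation. So I would not try to relate the two redexes. Instead I would route the conversion through the edges themselves:
\[
u' \larrow{}{T^{nc}} u \barrow{\ast}{\Vrt} v \rarrow{}{T'^c} v'.
\]
The first two pieces lie in $\Vrt\cup T^{nc}$, so $u' \barrow{\ast}{\Vrt\cup T^{nc}} v$. What remains is to show $v \barrow{\ast}{\Vrt\cup T^{nc}} v'$ for the collapsing edge $v=h(v') \to v'$.

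This last step is the main obstacle, because in general a collapsing edge need not be convertible without itself. The first thing I would try is to reuse the self-peak $v' \larrow{}{T'^c} v \barrow{\ast}{\Vrt} v \rarrow{}{T'^c} v'$, but it only yields the trivial $v' \barrow{\ast}{} v'$.

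Instead I would exploit that a mixed peak is impossible, using root symbols. Since $T$ consists of top edges, $u$ is a $T^{nc}$-redex, so $\Root(u)=\Root(\ell)$ for some non-collapsing rule $\ell\to r$. Meanwhile $\Root(v)=h$, and $h$ occurs in no other left-hand side. $\Vrt$ consists only of inner edges, which preserve root symbols, so $u \barrow{\ast}{\Vrt} v$ forces $\Root(u)=\Root(v)=h$. This contradicts that $h$ is not the root of any non-collapsing left-hand side. Hence Case 3 is vacuous, and the lemma follows from Cases 1 and 2.

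The delicate point to check is precisely this root-preservation argument. Inner steps, including $h(h(w))\to h(w)$ viewed as an inner step, do keep the root. I also need the standing assumption that $h$ does not occur in the left-hand side of any other rule.
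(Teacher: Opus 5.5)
Your proof is correct and is essentially the paper's argument: the pure cases follow directly from the two compatibility hypotheses, and mixed peaks are vacuous because inner $\Vrt$-steps preserve the root symbol, while $T'^c$-sources are $h$-rooted and $T^{nc}$-sources are not. The initial attempt in Case 3 to route the conversion through the $T^{nc}$-edge is unnecessary and can simply be dropped.
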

\begin{proof}
  Since $R$ is root-distinct, $h(x) \to x$ is the unique collapsing rule
    and $h$ occurs at the root of no other left-hand side; hence every
    non-collapsing rule $\ell \to r$ satisfies $\Root(\ell) \neq h$.

  Let $u,u',v,v' \in V_\Vrt$ with
    $u' \larrow{}{T^{nc} \cup T'^c} u \barrow{\ast}{\Vrt} v
    \rarrow{}{T^{nc} \cup T'^c} v'$;
    we show $u' \barrow{\ast}{\Vrt \cup T^{nc}} v'$.
    If both top steps lie in $T^{nc}$, then $u' \barrow{\ast}{\Vrt} v'$ since
    $T^{nc}$ is compatible over $\Vrt$.
    If both lie in $T'^c$, then $u' \barrow{\ast}{\Vrt \cup T^{nc}} v'$ since
    $T'^c$ is $T^{nc}$-compatible over $\Vrt$.

  The remaining mixed cases cannot occur.
    A top step by $T^{nc}$ applies a non-collapsing rule at the root, so its
    source $s$ satisfies $\Root(s) \neq h$, whereas a top step by
    $T'^c \subseteq T^c$ applies $h(x) \to x$ at the root, so its source $s$
    satisfies $\Root(s) = h$.
    Since $\Vrt$ consists of inner edges, every step of $\barrow{\ast}{\Vrt}$
    rewrites a proper subterm and hence preserves the root symbol, so
    $\Root(u) = \Root(v)$.
    A mixed peak would make $\Root(u)$ and $\Root(v)$ equal to $h$ and to a
    symbol other than $h$ in some order, a contradiction.
\end{proof}

\section{Non-$E$-overlapping and shallow TRS}
\label{sec8}\label{sec:shallow}

Theorem~\ref{thm:main} in the next section can be viewed as an essential
generalization of
Theorems~\ref{thm:shallow-noncollapsing} and \ref{thm:NOV-Shallow} in this section,
as their proofs share a virtually identical underlying structure.
Nevertheless, the arguments are worth presenting here to clearly demonstrate
how to apply our technique to show confluence based on
Theorem~\ref{thm:seq-edge-norm}.
Our goal is to construct a finite confluent ARS that contains the diverging
sequence $u \barrow{\ast}{R} v$.
Its bottom-up construction extends the finite confluent ARS $G$ over subterms
by adding top reductions $T$ from $u \barrow{\ast}{R} v$ using
Theorem~\ref{thm:seq-edge-norm}.
Here, $T_1$ and $T_2$ denote the subsets of non-collapsing and collapsing edges of $T$,
respectively.
The key lies in the choice of $\Vrt$ and $\Hol$ such that $T_1$ satisfies
compatibility and edge-commutativity with respect to $(\Vrt,\Hol)$,
where $\Hol$ is typically set as $\Ind_R(V_{\Vrt})^\tpp$.

\subsection{Designing $\Vrt$ for shallow TRS}
\label{subsec:constructor-extension-shallow}

Given a diverging sequence $u \barrow{\ast}{} v$,
our aim is to construct a finite confluent ARS that contains
a confluent sequence $u \downarrow v$ in a bottom-up manner; i.e.,
for a confluent $R$-graph $G$ obtained by the induction hypothesis,
we will extend $G$ by one layer using Theorem~\ref{thm:seq-edge-norm}
under edge compatibility and edge commutativity.
In a shallow TRS,
the depth of all variables in the rewrite rules is at most $1$,
and thus a single-layer extension up to depth $1$ is sufficient.
Hence, we design the $R$-graph $M_F(G)$ as $\Vrt$.
The design of $\Hol$, which consists of candidate root reductions to be added,
follows that of $\Vrt$ as $\Hol:= \ Ind_R(V_{\Vrt})^\tpp$.

Some edges in $\Vrt$ are used as fillers (red edges in Fig.~\ref{fig:abs-diag})
to reach an augmented top edge (the dashed red edge in the figure).
Since root reductions may be non-linear (i.e., some subterms
at non-linear variable positions may be rewritten), they need to be
synchronized by augmented edges in $\Vrt$.
Thus, the non-root reduction graph $\Vrt$ must be confluent.

The following Lemma~\ref{lem:non-E-overlap-shallow-compatible}
and~\ref{lem:shallow-nonCol-normalizable} show the conditions for
the compatibility and the edge commutativity of non-collapsing top edges,
respectively.

\begin{lemma}\label{lem:non-E-overlap-shallow-compatible}
  Let $R$ be a shallow and non-$E$-overlapping TRS.
  Suppose $G$ is an $R$-graph, and $T$ is an $R$-graph consisting of non-collapsing
  top edges such that $V_T \subseteq V_{M_F(G)}$.
  Then $T$ is compatible over $M_F(G)$.
\end{lemma}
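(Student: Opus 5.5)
Unfolding Definition~\ref{def:compatible} with $N$ empty, we take $u,u',v,v'\in V_{M_F(G)}$ with $u' \larrow{}{T} u \barrow{\ast}{M_F(G)} v \rarrow{}{T} v'$ and must show $u' \barrow{\ast}{M_F(G)} v'$. Both $T$-edges are non-collapsing top steps, so $u=\ell_1\theta_1 \rarrow{}{} r_1\theta_1=u'$ and $v=\ell_2\theta_2\rarrow{}{} r_2\theta_2=v'$ for rules $\ell_i\to r_i\in R^{nc}$ (variables renamed apart). Every edge of $M_F(G)$ rewrites one argument of the root symbol, so it is an inner step of $R$ that keeps the root symbol. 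Hence $\ell_1\theta_1 \Barrow{\inn}{R}{\ast}{} \ell_2\theta_2$ is a conversion below the root. This is exactly an $E$-overlap at position $\varepsilon$, and non-$E$-overlapping forces it to be trivial: the two rules coincide up to renaming. We may therefore write $\ell:=\ell_1=\ell_2$ and $r:=r_1=r_2$.

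Next we transfer the inner conversion to the substitutions, and shallowness is what makes this work. Write $u=f(s_1,\dots,s_n)$ and $v=f(t_1,\dots,t_n)$ with all $s_i,t_i\in V_G$. Each step of $M_F(G)$ changes exactly one argument along an edge of $G$, so $s_i\barrow{\ast}{G} t_i$ for every $i$. Because $\ell$ is shallow, each variable $x$ of $\ell$ occurs only at depth $\le 1$; this is the key point. For a depth-one occurrence $\ell|_i=x$ we get $x\theta_1=s_i$ and $x\theta_2=t_i$, so $x\theta_1\barrow{\ast}{G}x\theta_2$. For a depth-zero occurrence, $\ell$ would be a variable, which is impossible.

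Now treat the right-hand side. Since $r$ is shallow and not a variable, $r=g(r'_1,\dots,r'_m)$ with each $r'_j$ either a variable of $\ell$ or a ground constant-free context of depth zero relative to the variables (shallowness puts variables at depth at most one, but $r'_j$ may be a ground term). Then $u'=g(r'_1\theta_1,\dots)$ and $v'=g(r'_1\theta_2,\dots)$. For a variable argument, $r'_j\theta_1\barrow{\ast}{G} r'_j\theta_2$ by the previous paragraph. For a ground argument, the two sides are identical. Changing the arguments one at a time gives $u'\barrow{\ast}{M_F(G)} v'$, provided every intermediate term lies in $V_{M_F(G)}$.

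I expect this vertex-membership check to be the main obstacle. An intermediate term has the form $g(w_1,\dots,w_m)$ with each $w_j$ equal to $r'_j\theta_1$ or $r'_j\theta_2$, and for variable arguments these lie in $V_G$, being among the $s_i,t_i$. For a ground non-variable argument $r'_j$, however, $r'_j$ need not lie in $V_G$. Here I would use the hypothesis $V_T\subseteq V_{M_F(G)}$: because $u'\in V_{M_F(G)}$, every argument $r'_j\theta_1$ lies in $V_G$, and the ground ones are unchanged between $u'$ and $v'$. So every intermediate term has all its arguments in $V_G$ and is a vertex of $M_F(G)$. The case where $g$ is a constant is immediate, since then $u'=v'$.
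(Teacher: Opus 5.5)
Your proof is correct and follows the same route as the paper: non-$E$-overlapping at the root forces both top edges to use one rule $\ell\to r$, shallowness of $\ell$ transfers the inner conversion to $x\theta_1 \barrow{\ast}{G} x\theta_2$ for every variable, and shallowness of $r$ lets you rebuild $r\theta_1 \barrow{\ast}{M_F(G)} r\theta_2$ one argument at a time. Your use of $V_T\subseteq V_{M_F(G)}$ to place the ground arguments of $r$ in $V_G$ makes explicit a point the paper leaves implicit; the one slip is that intermediate arguments can be any vertex on the chosen $G$-conversions, not only $r'_j\theta_1$ or $r'_j\theta_2$, but these still lie in $V_G$.
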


\begin{proof}
  Let $\Vrt := M_F(G)$. Consider the sequence
  $s' \Larrow{\tpp}{}{}{T} s \barrow{\ast}{\Vrt} t \Rarrow{\tpp}{}{}{T} t'$.
  Since $R$ is non-$E$-overlapping, the rules used in the two top edges
  $\Larrow{\tpp}{}{}{T}$ and $\Rarrow{\tpp}{}{}{T}$ must be identical.
  Thus, let $\ell \to r$ denote the common rule, and let $\sigma, \sigma'$ be
  substitutions such that
  $s = \ell\sigma$, $s' = r\sigma$, $t = \ell\sigma'$, and $t' = r\sigma'$.

  If the signature contains no non-constant function symbols, then there are no
  inner reductions.
  Hence $s=t$, which implies $s'=t'$, so $T$ is compatible over $\Vrt$.

  Assume that the signature contains non-constant function symbols.
  Since $s = \ell\sigma \barrow{\ast}{\Vrt} t = \ell\sigma'$, for each variable
  $x \in \Var(\ell)$, we have $x\sigma \barrow{\ast}{\Sub(\Vrt)} x\sigma'$.
  Because $\Sub(\Vrt) = G$ by
  Proposition~\ref{prop:subterm}(\ref{prop:subterm-2}) and $r$ is a shallow
  term, the rewrite property of shallow TRSs ensures that
  $r\sigma \barrow{\ast}{\Vrt} r\sigma'$.
  (Specifically, for each subterm $r|_i$ that is a ground normal form of $G$,
  $r|_i\sigma = r|_i\sigma'$, and for each variable $x \in \Var(r)$,
  $x\sigma \barrow{\ast}{G} x\sigma'$ holds.)
  Thus, $s' = r\sigma \barrow{\ast}{\Vrt} r\sigma' = t'$, which proves that $T$
  is compatible over $\Vrt$.
\end{proof}

\begin{lemma}\label{lem:shallow-nonCol-normalizable}
  Let $R$ be a shallow TRS.
  Let $\Vrt$ be the $R$-graph $M_F(G)$ constructed from a confluent $R$-graph
  $G$.
  For a non-collapsing rule $\ell\to r$ and a substitution $\sigma$ such that
  $\ell\sigma,r\sigma \in V_{\Vrt}$,
  the pair $(\ell\sigma,r\sigma)$ is edge-commutative over
  $\< \Vrt, \Ind_R(V_{\Vrt})^\tpp \>$.
\end{lemma}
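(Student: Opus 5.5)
We unfold Definition~\ref{def:edge-commutative} with $N$ empty. Fix $w\in V_{\Vrt}$ with $\ell\sigma \rarrow{\ast}{\Vrt} w$. We must find $(v,v')\in\Ind_R(V_{\Vrt})^\tpp$ with $w\rarrow{\ast}{\Vrt} v$ and $r\sigma\barrow{\ast}{\Vrt} v'$.

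The natural choice is $v := \ell\sigma''$ and $v' := r\sigma''$ for a suitable substitution $\sigma''$, i.e.\ a redex of the same rule.

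\textbf{Step 1: describing $w$.} Since $\Vrt = M_F(G)$ consists only of edges $f(\cdots s\cdots)\to f(\cdots t\cdots)$ with $s\to_G t$, every step preserves the root symbol and rewrites one argument inside $G$. Write $\ell = f(\ell_1,\ldots,\ell_n)$ and $w = f(w_1,\ldots,w_n)$. Then $\ell_i\sigma \rarrow{\ast}{G} w_i$ for each $i$. By shallowness each $\ell_i$ is either a variable or a ground term; the case where it is a ground constant is possible and is the same as the ground case.

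\textbf{Step 2: restoring the redex shape.} For a ground argument $\ell_i$, we have $\ell_i = \ell_i\sigma \rarrow{\ast}{G} w_i$. Since $G$ is confluent, $w_i$ and $\ell_i$ have a common reduct, but we must come back to exactly $\ell_i$. This is the main obstacle.

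For variable arguments, we use confluence of $G$ to synchronize nonlinear occurrences. For each $x\in\Var(\ell)$, let $I_x$ be the set of positions $i$ with $\ell_i = x$. All the $w_i$ with $i\in I_x$ are reducts of $x\sigma$ in $G$. Because $G$ is finite, or at least by repeated use of confluence, they have a common reduct $x\sigma''$ in $G$. Reducing each $w_i$ to $x\sigma''$ argument by argument gives a path in $M_F(G)$, since all intermediate terms are $f$-applied to vertices of $G$.

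For the ground arguments, I would first try to show that $\ell_i$ is a pseudo-normal form of $G$ (i.e.\ $\ell_i\Rrightarrow_G \ell_i$), so that $w_i \rarrow{\ast}{G} \ell_i$ and we can return. If that fails, I would weaken the goal: either argue that shallow rule left sides can be assumed to have only variables or constants at depth one, reducing to the constant case, or replace $\ell_i$ by a reduct and accept $v$ not literally of the form $\ell\sigma''$.

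On reflection, the cleanest route is to read the lemma's setting as the paper intends. In this setting edge commutativity only needs the root symbol and the variable arguments. Ground subterms at depth one are constants (arity zero under shallowness: a depth-one ground term $c$ has $|p|$ constraints only on variables, so it could be a deeper ground term). I would therefore state this as an assumption to be checked: that $\ell_i\in\pnf_G$, or that $G$ has no outgoing edges from ground arguments of left-hand sides. I would handle it via the pseudo-normal-form route, using Proposition~\ref{prop:pnf}.

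\textbf{Step 3: joining the right-hand sides.} With $v = \ell\sigma''\in V_{\Vrt}$ and $w\rarrow{\ast}{\Vrt} v$, the set $\Ind_R(V_{\Vrt})^\tpp$ contains $(v,r\sigma'')$ provided $r\sigma''\in V_{\Vrt}$. Since $r$ is shallow and non-collapsing, $r = g(r_1,\ldots,r_m)$ with each $r_j$ a variable or a ground term. Variables map into $V_G$. For ground $r_j$, we have $r_j\in V_G$ because $r\sigma\in V_{\Vrt}$. Hence $r\sigma''\in V_{\Vrt}$.

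Finally, $r\sigma \rarrow{\ast}{\Vrt} r\sigma''$ by rewriting each variable argument $x\sigma \rarrow{\ast}{G} x\sigma''$ one argument at a time, with the ground arguments unchanged. This gives condition (2') $r\sigma\barrow{\ast}{\Vrt} v'$ with $v' = r\sigma''$.
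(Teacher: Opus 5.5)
Your Steps 1 and 3, and your treatment of the variable arguments in Step 2, match the paper's proof. There, the reducts of $x\sigma$ at all occurrences of $x$ are joined in $G$ to define $\sigma'$, and shallowness of $r$ then gives $r\sigma\rarrow{\ast}{\Vrt} r\sigma'$.

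The genuine gap is the ground arguments of $\ell$. You identify it, but you leave it as ``an assumption to be checked,'' and none of your fallbacks closes it:
\begin{itemize}
\item Confluence of $G$ gives no reason for $\ell_i\in\pnf_G$, so Proposition~\ref{prop:pnf} has nothing to start from.
\item Shallowness restricts only variable depths, so ground arguments need not be constants ($f(x,g(a))$ is shallow). Constants can be rewritten anyway, so reducing to them would not help.
\item A $v$ that is not an instance of a left-hand side carries no edge of $\Ind_R(V_{\Vrt})^\tpp$.
\end{itemize}
In fact, under the stated hypotheses alone the claim is false. Take $R=\{f(a)\to b,\ a\to c\}$, which is shallow and non-collapsing, and the confluent $G=\{a\to c\}$. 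Then $f(a)\rarrow{}{\Vrt} f(c)$ in $\Vrt=M_F(G)$, $f(c)$ has no further reducts in $\Vrt$, and no rule applies at the root of $f(c)$. So $(f(a),b)\in\Ind_R(V_{\Vrt})^\tpp$ is not edge-commutative.

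The missing idea is non-overlapping. It is available wherever the lemma is used: Lemmas~\ref{lem:NoncollapsingShallow} and~\ref{lem:CollapsingShallow} assume non-$E$-overlapping, which implies non-overlapping. Write $\ell=f(\ell_1,\ldots,\ell_n)$ and suppose a ground argument $\ell_i$ contained a redex $\ell_i|_q=\ell'\tau$. Then $\ell'\tau=\ell|_{iq}$, and $iq\neq\varepsilon$ is a non-variable position of $\ell$, so this is a non-trivial overlap. Hence every ground $\ell_i$ is an $R$-normal form. Since $G$ is an $R$-graph, $\ell_i$ has no outgoing $G$-edge, so $w_i=\ell_i$, and $w\rarrow{\ast}{\Vrt}\ell\sigma''$ follows from your variable argument alone.

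The paper's proof uses this fact tacitly: it asserts $u_0\rarrow{\ast}{\Vrt}\ell\sigma'$ without comment. Compare the parenthetical ``ground normal form of $G$'' in the proof of Lemma~\ref{lem:non-E-overlap-shallow-compatible}. With this hypothesis and the argument above inserted, your proposal is a complete proof along the paper's lines.
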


\begin{proof}
  Since $G$ is confluent, $\Vrt$ is also confluent by
  Proposition~\ref{prop:monotonic-extension}(\ref{prop:monotonic-extension-a}).
  Suppose $\ell\sigma \rarrow{\ast}{\Vrt} u_0$ for some $u_0\in V_{\Vrt}$.
  Since $\ell$ is a non-variable shallow term and $\Vrt$ consists of inner
  reductions,
  we have $x\sigma \rarrow{\ast}{G} u_0|_p$ for each variable $x \in \Var(\ell)$
  and each position $p \in \Pos_X(\ell)$ such that $\ell|_p=x$.
  Since $G$ is confluent, for each $x \in \Var(\ell)$,
  all such subterms $u_0|_p$ (which are reducts of $x\sigma$)
  can be joined to a common reduct in $G$, which we denote by $u_x$.
  Let $\sigma'$ be a substitution defined by $x\sigma' = u_x$
  for each $x \in \Var(\ell)$.
  Then, we obtain $u_0 \rarrow{\ast}{\Vrt} \ell\sigma'$.

  Since $r$ is also a non-variable shallow term (as $\ell \to r$ is
  non-collapsing), the sequence $x\sigma \rarrow{\ast}{G} x\sigma'$ for each
  $x \in \Var(\ell)$ implies that $r\sigma \rarrow{\ast}{\Vrt} r\sigma'$.
  Since $\ell\sigma', r\sigma' \in V_{\Vrt}$, we have an edge
  $(\ell\sigma',r\sigma') \in {\Rarrow{\tpp}{R}{}{}}|_{V_{\Vrt} \times V_{\Vrt}}
  = \Ind_R(V_{\Vrt})^\tpp$.
  Therefore, the pair $(\ell\sigma,r\sigma)$ is edge-commutative over
  $(\Vrt,\Ind_R(V_{\Vrt})^\tpp)$.
\end{proof}

\subsection{Confluence without collapsing rules}
\label{subsec:simple-proof}

In this subsection, we provide an alternative proof of Theorem 5
in~\cite{Sakai10}
by applying Theorem~\ref{thm:seq-edge-norm}, which establishes confluence
of non-$E$-overlapping, shallow, and non-collapsing TRSs.
In subsequent sections, we extend this method to
non-$E$-overlapping and shallow TRSs
(Section~\ref{subsec:ShallowConfluenceWithCollapsing})
and non-$E$-overlapping and weakly shallow TRSs
(Section~\ref{subsec:Confluence}), respectively.
This incremental approach allows us to effectively isolate the difficulties
arising from collapsing rules and weakly shallow rules.

Let $\Vrt:= M_F(G)$ and $\Hol:= \Ind_R(V_{\Vrt})^\tpp$.
Note that the subterm-closed property is not
  required for non-collapsing TRSs.

\begin{lemma}\label{lem:NoncollapsingShallow}
  Let $R$ be a shallow, non-collapsing, and non-$E$-overlapping TRS.
  For a finite $R$-graph $G_0$ and a finite and confluent
  $R$-graph $G$ such that $\Sub(G_0) \subseteq G$,
  there exists a finite and confluent
  $R$-graph $G'$ such that $G_0 \subseteq G'$.
\end{lemma}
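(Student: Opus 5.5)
The plan is to apply Theorem~\ref{thm:seq-edge-norm} with $\Vrt := M_F(G)$ and $\Hol := \Ind_R(V_{\Vrt})^\tpp$. We take $G_1 := G_0^{\inn}$, i.e.\ $G_0$ without its top edges but keeping all its vertices. We take $T_1 := G_0^{\tpp}$, and $T_2 := \<\emptyset,\emptyset\>$, since $R$ is non-collapsing.

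\textbf{Hypotheses on the graphs.} First, $\Vrt$ is confluent by Proposition~\ref{prop:monotonic-extension}(\ref{prop:monotonic-extension-a}), because $G$ is confluent. Next, $\Sub(G_0) \subseteq G$, so Lemma~\ref{lem:mono-ext}(\ref{lem:mono-ext-c}) gives $G_0^{\inn} \subseteq M_F(G)$ and $V_{G_0} \subseteq V_{\Vrt}$. Hence $G_1 \subseteq \Vrt$, and $G_1$ is finite. Also $V_{T_1} \subseteq V_{G_0} = V_{G_1}$. Every top edge of $G_0$ connects two vertices of $\Vrt$, so $T_1 \subseteq \Ind_R(V_{\Vrt})^\tpp = \Hol$.

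\textbf{Compatibility and commutativity.} Condition (i) is exactly Lemma~\ref{lem:non-E-overlap-shallow-compatible}, applied with this $G$ and $T := T_1$. Since $T_2$ is empty, (ii) reduces to (i) and (iv) is vacuous. For (iii), each edge of $T_1$ has the form $(\ell\sigma, r\sigma)$ for a non-collapsing rule $\ell \to r$, with both endpoints in $V_{\Vrt}$. Lemma~\ref{lem:shallow-nonCol-normalizable} then shows that the edge is edge-commutative over $(\Vrt,\Hol)$.

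\textbf{Conclusion.} Theorem~\ref{thm:seq-edge-norm} now yields a finite confluent $G'_1$ and a finite $T' \subseteq \Hol$ with $G_1 \subseteq G'_1$ and ${\barrow{\ast}{G_1\cup T_1}} \subseteq {\barrow{\ast}{G'_1\cup T'}}$, such that $G'_1 \cup T'$ is finite and confluent. This is an $R$-graph, since $G'_1 \subseteq \Vrt$ and $T' \subseteq \Hol$. However, it need not contain the original top edges of $G_0$, whereas the statement demands $G_0 \subseteq G'$. To recover them, set $G' := G'_1 \cup T' \cup G_0^{\tpp}$. Every edge of $G_0^{\tpp}$ lies in ${\barrow{\ast}{G'_1\cup T'}}$ by the inclusion above. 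So Lemma~\ref{lem:inc-confluence}, applied with the confluent ARS $G'_1 \cup T'$, shows that $G'$ is confluent. $G'$ is finite, and $G_0 = G_0^{\inn} \cup G_0^{\tpp} \subseteq G'$, as required.

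\textbf{Main obstacle.} The only real checking lies in the bookkeeping of vertex sets. We must ensure that $V_{G_0} \subseteq V_{\Vrt}$, including constants and vertices incident only to top edges. This follows from Lemma~\ref{lem:mono-ext}(\ref{lem:mono-ext-c}) together with $V_{\Sub(G_0)} \subseteq V_G$. We must also make sure Lemma~\ref{lem:shallow-nonCol-normalizable} applies to every edge of $T_1$. The remaining work is substantive only inside Theorem~\ref{thm:seq-edge-norm}, which we take as given.
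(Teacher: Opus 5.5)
Your proposal is correct and follows the paper's proof essentially step for step. It makes the same instantiation $\Vrt := M_F(G)$, $\Hol := \Ind_R(V_{\Vrt})^\tpp$, $G_1 := G_0^{\inn}$, $T_1 := G_0^{\tpp}$, $T_2$ empty, and uses Lemmas~\ref{lem:non-E-overlap-shallow-compatible} and~\ref{lem:shallow-nonCol-normalizable} for the hypotheses of Theorem~\ref{thm:seq-edge-norm}. Its final step, re-adding $G_0^{\tpp}$ to $G'_1 \cup T'$ and invoking Lemma~\ref{lem:inc-confluence}, is also the paper's.
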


\begin{proof}
  Let $T := G_0^\tpp$.
  We have $V_T = V_{G_0^\tpp} = V_{G_0} \subseteq V_{M_F(G)} = V_{\Vrt}$,
  where the inclusion is obtained from $\Sub(G_0) \subseteq G$
  by Lemma~\ref{lem:mono-ext}(\ref{lem:mono-ext-c}).
  Combining this with ${\Rarrow{}{}{}{G_0^\tpp}} = {\Rarrow{\tpp}{}{}{G_0}}
  \subseteq {\Rarrow{\tpp}{R}{}{}}$,
  we obtain
  $T = G_0^\tpp \subseteq
  \< V_{\Vrt}, {\Rarrow{\tpp}{R}{}{}}|_{V_{\Vrt}\times V_{\Vrt}} \>
  = \Ind_R(V_{\Vrt})^\tpp = \Hol$.
  Hence, $T$ is a finite subgraph of $\Hol$.

  \begin{description}
  \item[\underline{$\Vrt$ is confluent:}]
    $\Vrt$ is confluent from confluence of $G$ by
    Proposition~\ref{prop:monotonic-extension}
    (\ref{prop:monotonic-extension-a}).

  \item[\underline{Compatibility and Edge-commutativity of
      non-collapsing edges:}]
    $T$ is compatible over $\Vrt$ by
    Lemma~\ref{lem:non-E-overlap-shallow-compatible},
    and edge-commutative over $\< \Vrt, \Ind_R(V_{\Vrt})^\tpp \>$
    by Lemma~\ref{lem:shallow-nonCol-normalizable}.
  \end{description}

  Starting from $G_1 := G^\inn_0$, we finitely extend it to $G'_1$ and then to $G'$.
  \begin{description}
  \item[\underline{Construction of $G'_1$ by Theorem~\ref{thm:seq-edge-norm}:}]
    Let $\Vrt:= M_F(G)$, $G_1 := G^\inn_0$, $T_1 := T = G_0^\tpp$,
    $T_2 := \<\emptyset,\emptyset\>$, and $\Hol:= \Ind_R(V_{\Vrt})^\tpp$.
    Then, the premises of Theorem~\ref{thm:seq-edge-norm} hold as follows:
    \begin{itemize}
    \item $G_1 = G^\inn_0 \subseteq M_F(\Sub(G_0)) \subseteq M_F(G) = \Vrt$,
      where the former inclusion follows from
      Lemma~\ref{lem:monotonic-extension}(2).
    \item $T_1 = T$ is a finite subgraph of $\Hol$.
    \item $V_T = V_{T_1\cup T_2} = V_{G_0^\tpp} = V_{G_0^\inn} = V_{G_1}$.
    \item Conditions (i)--(iv) hold by the discussion above and
      $T_2 = \<\emptyset,\emptyset\>$.
    \end{itemize}
    By Theorem~\ref{thm:seq-edge-norm}, there exist a finite confluent ARS
    $G'_1$ and a finite ARS $T' \subseteq \Hol$ such that
    \begin{itemize}
      \item[(1)] $G_1 \subseteq G'_1 \conserveq \Vrt$,
      \item[(2)] ${\barrow{\ast}{G_1 \cup T}} \subseteq
        {\barrow{\ast}{G'_1 \cup T'}}$, and
      \item[(3)] $G'_1 \cup T'$ is confluent and finite.
    \end{itemize}

  \item[\underline{Construction of $G'$ and its properties:}]
  Let $G' := G'_1 \cup T' \cup T_1 \cup T_2$.
  Since $G'_1 \cup T'$, $T_1$, and $T_2$ are all finite, $G'$ is finite.
  By property (2), ${\rarrow{}{T}} \subseteq {\barrow{\ast}{G'_1 \cup T'}}$.
  Combining this with property (3), i.e., $G'_1 \cup T'$ is confluent,
  we obtain that $G'$ is confluent by Lemma~\ref{lem:inc-confluence}.
  From Lemma~\ref{lem:inner-edges}(\ref{lem:inner-edges-a}),
  we have $G_0 = G_0^\inn \cup G_0^\tpp = G_1 \cup T_1 \subseteq G'$.
  \end{description}
\end{proof}

Note that, unlike Lemma~\ref{lem:CollapsingShallow} and
Lemma~\ref{lem:main}, Lemma~\ref{lem:NoncollapsingShallow} does not assume the
subterm-closed property for $G$ or $G'$.
When a TRS is shallow, each variable occurrence in the left- and right-hand
sides of a rule is at depth $1$.
This ensures that confluent sequences in subterms are preserved at the same
depth. The subterm-closed property becomes necessary when:
\begin{itemize}
\item merging multiple confluent sequences at different depths, or
\item applying collapsing rules that alter the depth of a confluent sequence.
\end{itemize}
The former arises in Proposition~\ref{prop:sub-MF2}
to guarantee confluence of $G^\infty$ in Lemma~\ref{cor:B-monotonic} in
Section~\ref{subsec:constructor-extension},
which merges confluent sequences at different depths while maintaining
confluence.
The latter occurs in conditions (b) and (c) in the proof of
Lemma~\ref{lem:CollapsingShallow} (originating from
Lemma~\ref{lem:NestedCollapsing}) when handling collapsing rules such as
$h(x) \to x$.

\begin{example}\label{ex:non-collapsing}
  Consider the $R_{\ref{trs:simple-div}}$-graph $G_0$ shown in
  Fig.~\ref{fig:graph-divergence},
  here $R_{\ref{trs:simple-div}} = \{ f(x,x) \to g(x), a \to b, b \to a \}$.
  Here, $\Sub(G_0) = \{ a \rightleftarrows b \}$ is already confluent,
  which we take as $G$.

  To apply Theorem~\ref{thm:seq-edge-norm}, we set $\Vrt:= M_F(G)$
  (shown in Fig.~\ref{fig:simple-div-MFG}), inducing the top edges
  $\Hol=\{ f(a,a) \to g(a), g(b,b) \to g(b) \}$.
  We define $G_1:=G_0^\inn$ and $T_1:=G_0^\tpp$, where $G_1 \conserveq \Vrt$.
  \begin{figure}
    \begin{center}
      \mbox{}\\[-1em]
      \(
      \begin{tikzpicture} [line width=0.07em]
        \matrix[matrix of math nodes,row sep=1em,column sep=1.2em,minimum width=0.5em] (m) {
          f(a,b) & f(a,a) & g(a) & a\\
          f(b,a) & f(b,b) & g(b) & b\\
        };
        \path[->] ($(m-1-1.east)!0.2!(m-1-1.north east)$) edge ($(m-1-2.west)!0.2!(m-1-2.north west)$);
        \path[<-] ($(m-1-1.east)!0.2!(m-1-1.south east)$) edge ($(m-1-2.west)!0.2!(m-1-2.south west)$);
        \path[->] ($(m-1-2.south west)!0.4!(m-1-2.south)$) edge (m-2-1);
        \path[<-] (m-1-2) edge ($(m-2-1.north east)!0.4!(m-2-1.north)$);
        \path[->] ($(m-1-1.south east)!0.4!(m-1-1.south)$) edge (m-2-2);
        \path[<-] (m-1-1) edge ($(m-2-2.north west)!0.4!(m-2-2.north)$);
        \path[->] ($(m-2-1.east)!0.2!(m-2-1.north east)$) edge ($(m-2-2.west)!0.2!(m-2-2.north west)$);
        \path[<-] ($(m-2-1.east)!0.2!(m-2-1.south east)$) edge ($(m-2-2.west)!0.2!(m-2-2.south west)$);
        \path[->] ($(m-1-3.south)!0.1!(m-1-3.south east)$) edge ($(m-2-3.north)!0.1!(m-2-3.north east)$);  
        \path[<-] ($(m-1-3.south)!0.1!(m-1-3.south west)$) edge ($(m-2-3.north)!0.1!(m-2-3.north west)$);  

      \end{tikzpicture}
      \)
      \\[-2em]\mbox{}
    \end{center}
    \caption{$M_F(G)$ for Example~\ref{ex:non-collapsing}}
    \label{fig:simple-div-MFG}
  \end{figure}
  In the process of applying Theorem~\ref{thm:seq-edge-norm}, we extend $G_1$ to
  be confluent while preserving $G_1 \conserveq \Vrt$
  using Lemma~\ref{lem:merge}.
  This construction involves non-determinism due to
  Lemma~\ref{lem:abs-edge-extension2},
  offering several choices for the join sequence between $f(b,b)$ and $f(a,a)$
  We show two such choices:
  \begin{enumerate}
  \item[(a)] We add $f(b,b) \to f(a,b) \to f(a,a)$.
    The resulting $G_1$ consists of all vertices and solid edges in
    Fig.~\ref{fig:simple-div}(\subref{fig:simple-div-a}).
    In this case, $G_1 \cup T_1$ is already confluent.
  \item[(b)] We add $f(b,b) \to f(b,a) \gets f(a,a)$.
    The resulting $G'_1 \cup T_1 \cup T'_1$ is displayed in
    Fig.~\ref{fig:simple-div}(\subref{fig:simple-div-b}),
    where $G'_1$ consists of all black vertices and solid edges,
    and $T'_1 = \{ f(b,b) \to g(b) \}$.
  \end{enumerate}
  In both cases, the construction yields a confluent
  $R_{\ref{trs:simple-div}}$-graph containing $G_0$.

  \begin{figure}
    \begin{minipage}{0.49\linewidth}
      \begin{center}
        \mbox{}\\[-1em]
        \(
        \begin{tikzpicture} [line width=0.07em]
          \matrix[matrix of math nodes,row sep=1em,column sep=1.2em,minimum width=0.5em] (m) {
            f(a,b) & f(a,a) & g(a)\\
            & f(b,b) & \\
          };
          \path[->] (m-1-1) edge (m-1-2);
          \path[->] ($(m-1-1.south east)!0.4!(m-1-1.south)$) edge (m-2-2);
          \path[<-] (m-1-1) edge ($(m-2-2.north west)!0.4!(m-2-2.north)$);

          \path[->,dashed] (m-1-2) edge node [above] {\scriptsize $\varepsilon$} (m-1-3);
        \end{tikzpicture}
        \)\\[-2em]\mbox{}
      \end{center}
      \subcaption{The case (a)}\label{fig:simple-div-a}
    \end{minipage}
    \begin{minipage}{0.49\linewidth}
      \begin{center}
        \mbox{}\\[-1em]
        \(
        \begin{tikzpicture} [line width=0.07em]
          \matrix[matrix of math nodes,row sep=1em,column sep=1.2em,minimum width=0.5em] (m) {
            f(a,b) & f(a,a) & g(a)\\
            f(b,a) & f(b,b) & g(b)\\
          };
          \path[->] (m-1-1) edge (m-1-2);
          \path[->] ($(m-1-2.south west)!0.4!(m-1-2.south)$) edge (m-2-1);
          \path[->] (m-1-1) edge (m-2-2);
          \path[->,color=red] ($(m-2-1.east)!0.2!(m-2-1.north east)$) edge ($(m-2-2.west)!0.2!(m-2-2.north west)$);
          \path[<-] ($(m-2-1.east)!0.2!(m-2-1.south east)$) edge ($(m-2-2.west)!0.2!(m-2-2.south west)$);
          \path[->,color=red] ($(m-1-3.south)!0.1!(m-1-3.south east)$) edge ($(m-2-3.north)!0.1!(m-2-3.north east)$);  

          \path[->,dashed] (m-1-2) edge node [above] {\scriptsize $\varepsilon$} (m-1-3);
          \path[->,dashed,color=red]  (m-2-2) edge node [above] {\scriptsize $\varepsilon$} (m-2-3)  ;
        \end{tikzpicture}
        \)\\[-2em]\mbox{}
      \end{center}
      \subcaption{The case (b)}\label{fig:simple-div-b}
    \end{minipage}
    \caption{Cases for Example~\ref{ex:non-collapsing}}\label{fig:simple-div}
  \end{figure}
\end{example}

  Using the preceding lemma along with the induction scheme
  presented in Section~\ref{subsec:framework}, we obtain the following theorem.
\begin{theorem}\label{thm:shallow-noncollapsing}
  For a non-$E$-overlapping, shallow, and non-collapsing TRS $R$ and
  a finite $R$-graph $G_0 = \<V_{G_0}, \rarrow{}{G_0}\>$,
  there exists a finite and confluent $R$-graph $G_0'$ such that
  $G_0 \subseteq G_0'$.
\end{theorem}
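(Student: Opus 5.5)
We prove the statement by well-founded induction on the total size of the terms in $V_{G_0}$, more precisely on the multiset $\{|t| \mid t \in V_{G_0}\}$ under the multiset extension of $>$ on naturals. This follows Steps 1--3 of Section~\ref{subsec:framework}, with Lemma~\ref{lem:NoncollapsingShallow} carrying out Step~3.

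\textbf{Step 1 (direct-subterm graph).} Given a finite $R$-graph $G_0$, form $G := \Sub(G_0)$. Every vertex of $\Sub(G_0)$ is a proper subterm of some vertex of $G_0$, and each term $t$ contributes at most its arity many direct subterms, each strictly smaller than $t$. Hence the multiset of sizes of $V_{\Sub(G_0)}$ is strictly smaller than that of $V_{G_0}$, and $\Sub(G_0)$ is finite because $G_0$ is. The base case is when $G_0$ consists only of constants, or more generally when $V_{\Sub(G_0)} = \emptyset$. Then $G := \<\emptyset,\emptyset\>$ is trivially finite, confluent, and satisfies $\Sub(G_0) \subseteq G$, so we may pass directly to Step~3. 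For $\Sub(G_0)$ to be an $R$-graph we also need that each edge $(s_i,t_i)$ in $\Sub(\to)$ is an $R$-step. This holds because the corresponding inner step $f(\ldots s_i \ldots) \Rarrow{\inn}{R}{}{} f(\ldots t_i \ldots)$ with all other arguments equal forces $s_i \rarrow{}{R} t_i$.

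\textbf{Step 2 (induction hypothesis).} Applying the induction hypothesis to the finite $R$-graph $\Sub(G_0)$ gives a finite and confluent $R$-graph $G$ with $\Sub(G_0) \subseteq G$.

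\textbf{Step 3 (one-layer extension).} Invoke Lemma~\ref{lem:NoncollapsingShallow} with $G_0$ and this $G$. Its hypotheses are exactly that $R$ is shallow, non-collapsing and non-$E$-overlapping, that $G_0$ is finite, and that $G$ is finite and confluent with $\Sub(G_0)\subseteq G$. The lemma then yields a finite confluent $R$-graph $G_0'$ with $G_0 \subseteq G_0'$, which is what the theorem asks for.

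The only point needing care is the well-foundedness of the measure. In Step~1 the set $V_{\Sub(G_0)}$ may have more elements than $V_{G_0}$, so a plain cardinality or maximum-size measure must be chosen correctly. The multiset ordering handles this: each vertex is replaced by finitely many strictly smaller ones, and distinct vertices may share subterms, which only shrinks the multiset further. Equivalently, one can induct on the maximum height of the terms in $V_{G_0}$, which strictly decreases under $\Sub$ whenever $V_{G_0}$ contains a non-constant term. All the substantive work (compatibility, edge commutativity, and normalization via Theorem~\ref{thm:seq-edge-norm}) is already packaged in Lemma~\ref{lem:NoncollapsingShallow}.
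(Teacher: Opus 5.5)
Your proof is correct and matches the paper's: induction on term size, the induction hypothesis applied to the finite $R$-graph $\Sub(G_0)$, and Lemma~\ref{lem:NoncollapsingShallow} for the one-layer extension. The only difference is the measure: the paper uses the plain sum $\sum_{v \in V_{G_0}} |v|$, which already strictly decreases under $\Sub$ because the distinct direct subterms of $t$ have total size at most $|t|-1$, so your multiset ordering is valid but not needed.
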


\begin{proof}
  Let $G_0 = \<V_0, \rarrow{}{0}\>$ be a finite $R$-graph.
  We proceed by induction on the sum of the sizes of terms in $V_0$,
  i.e., $\sum_{v \in V_0} |v|$.

  If $G_0$ contains no vertices, then $G_0$ is already confluent, and we set
  $G_0' := G_0$.
  Otherwise, by the induction hypothesis, there exists a finite and confluent
  $R$-graph $G$ such that $\Sub(G_0) \subseteq G$.
  Applying Lemma~\ref{lem:NoncollapsingShallow}, we obtain a finite and
  confluent $R$-graph $G_0'$ such that $G_0 \subseteq G_0'$.
\end{proof}


\begin{corollary}\label{cor:shallow-noncollapsing}
  Non-$E$-overlapping, shallow, and non-collapsing TRSs are confluent.
\end{corollary}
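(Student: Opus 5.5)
The corollary follows from Theorem~\ref{thm:shallow-noncollapsing} by the general principle laid out in Section~\ref{sec:overview}: confluence is a property of finite divergences, and any finite divergence sits inside a finite $R$-graph. Let $R$ be non-$E$-overlapping, shallow, and non-collapsing. Take terms $s,t_1,t_2$ with $t_1 \larrow{\ast}{R} s \rarrow{\ast}{R} t_2$. Fix concrete rewrite sequences witnessing both reductions, and let $G_0$ be the $R$-graph whose vertices are all terms occurring in these two sequences and whose edges are exactly the rewrite steps used. Then $G_0$ is finite, since the sequences are finite. It is an $R$-graph, since each edge is an $R$-step. Moreover $t_1 \larrow{\ast}{G_0} s \rarrow{\ast}{G_0} t_2$.

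Next I apply Theorem~\ref{thm:shallow-noncollapsing} to $G_0$. This yields a finite confluent $R$-graph $G_0'$ with $G_0 \subseteq G_0'$. Inclusion of graphs gives $t_1 \larrow{\ast}{G_0'} s \rarrow{\ast}{G_0'} t_2$. Confluence of $G_0'$ then gives some $u \in V_{G_0'}$ with $t_1 \rarrow{\ast}{G_0'} u \larrow{\ast}{G_0'} t_2$.

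Finally, ${\rarrow{}{G_0'}} \subseteq {\Rarrow{}{R}{}{}}$, so $t_1 \rarrow{\ast}{R} u \larrow{\ast}{R} t_2$, and $R$ is confluent.

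There is no real obstacle here, since the content is entirely in Theorem~\ref{thm:shallow-noncollapsing}. The only point requiring care is the restriction to root-distinct TRSs imposed in Section~\ref{sec:R-reduction-graph}: in particular, that the signature contains both constants and non-constant symbols. If the TRS at hand does not literally satisfy these side conditions, I would first apply the confluence-preserving transformations described there. For example, rules with $\Root(\ell)=\Root(r)$ are split through a fresh symbol $g$. Adding unused constants or unary symbols does not affect confluence either. One must check that these transformations preserve shallowness and non-$E$-overlappingness, which I expect to be routine because the fresh symbols occur only at the root of new left-hand sides.
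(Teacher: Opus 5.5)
Your proof is correct and is the paper's argument: embed the divergence in a finite $R$-graph $G_0$, extend it by Theorem~\ref{thm:shallow-noncollapsing} to a finite confluent $G_0'\supseteq G_0$, and read the join in $G_0'$ as $R$-reductions. Your remark on the root-distinct side conditions goes slightly beyond the paper, which imposes them globally in Section~\ref{sec:R-reduction-graph} and does not revisit them in this proof; the checks you flag do go through, and the fresh $g$ also roots the new right-hand side $g(x_1,\ldots,x_n)$, which is harmless.
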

\begin{proof}
  Immediate from Theorem~\ref{thm:shallow-noncollapsing}, since any divergent sequence
  in the TRS can be embedded into a finite $R$-graph $G_0$,
  which is then extended to a confluent $R$-graph $G'_0$.
\end{proof}

\subsection{Confluence with collapsing rules}
\label{subsec:ShallowConfluenceWithCollapsing}

Let $\Vrt:= M_F(G)$ and $\Hol:= \Ind_R(V_{\Vrt})^\tpp$.

\begin{lemma}\label{lem:CollapsingShallow}
  Let $R$ be a shallow and non-$E$-overlapping TRS.
  For a finite $R$-graph $G_0$ and a finite, confluent, and subterm-closed
  $R$-graph $G$ satisfying $\Sub(G_0) \subseteq G$,
  there exists a finite, confluent, and subterm-closed
  $R$-graph $G'$ such that $G_0 \subseteq G'$.
\end{lemma}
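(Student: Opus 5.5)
I will follow the proof of Lemma~\ref{lem:NoncollapsingShallow}, adding the treatment of collapsing edges through Lemma~\ref{lem:NestedCollapsing} and the handling of subterm-closedness. Set $\Vrt := M_F(G)$ and $\Hol := \Ind_R(V_{\Vrt})^\tpp$, and put $T := G_0^\tpp$. From $\Sub(G_0) \subseteq G$ and Lemma~\ref{lem:mono-ext}(\ref{lem:mono-ext-c}) we get $V_{G_0} \subseteq V_{\Vrt}$ and $G_0^\inn \subseteq \Vrt$, hence $T \subseteq \Hol$. By Proposition~\ref{prop:monotonic-extension}(\ref{prop:monotonic-extension-a}), $\Vrt$ is confluent. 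By Lemma~\ref{lem:subterm-closed}, $\Vrt$ is $G^\tpp$-subterm-closed. The plan is to obtain the $T$-subterm-closedness needed by Lemma~\ref{lem:NestedCollapsing} by enlarging $T$ to $T^\ast := T \cup G^\tpp$. This $T^\ast$ still consists of top edges, lies in $\Hol$ because $V_G \subseteq V_{\Vrt}$ (subterm-closedness of $G$ plus Lemma~\ref{lem:mono-ext}(\ref{lem:mono-ext-c})), and $\Vrt$ is $T^\ast$-subterm-closed by Proposition~\ref{prop:T-subterm-closed}. The lemma, however, is applied with $\Vrt$ consisting only of inner edges, and $M_F(G)$ contains exactly the inner edges, so that hypothesis is met.

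Next I split $T^\ast$ into $T_1 := (T^\ast)^{nc}$ and $T^c$. Lemma~\ref{lem:NestedCollapsing} yields $T_2 := T'^c \subseteq (T^\ast)^c$ with $\barrow{\ast}{\Vrt\cup T^\ast} = \barrow{\ast}{\Vrt \cup T_2 \cup T_1}$. It also gives that $T_2$ is $T_1$-compatible and $T_1$-edge-commutative. Lemmas~\ref{lem:non-E-overlap-shallow-compatible} and~\ref{lem:shallow-nonCol-normalizable} give compatibility and edge-commutativity of $T_1$. Lemma~\ref{lem:mixed-compat} then gives that $T_1 \cup T_2$ is $T_1$-compatible, so hypotheses (i)–(iv) of Theorem~\ref{thm:seq-edge-norm} hold.

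I take $G_1 := G_0^\inn \cup M_F(G)|_{V}$, where $V$ is a finite set containing $V_{T^\ast}$, so that $V_{T_1\cup T_2} \subseteq V_{G_1} \subseteq V_{\Vrt}$. A simpler choice is $G_1 := G_0^\inn \cup \<V_{T^\ast},\emptyset\>$. Theorem~\ref{thm:seq-edge-norm} then produces a finite confluent $G'_1 \conserveq \Vrt$ and a finite $T' \subseteq \Hol$ with $G'_1 \cup T'$ confluent and $\barrow{\ast}{G_1 \cup T_1 \cup T_2} \subseteq \barrow{\ast}{G'_1\cup T'}$. The edges of $T^\ast$ removed in passing to $T_2$ are still convertible in $\Vrt\cup T_1\cup T_2$. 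Because $G'_1 \conserveq \Vrt$ (Proposition~\ref{prop:ConservUnion}), they are also convertible within $G'_1 \cup T_1 \cup T_2$, and hence within $G'_1\cup T'$. I then set $G' := G'_1 \cup T' \cup T^\ast$. This graph is finite, and it is confluent by Lemma~\ref{lem:inc-confluence}. It contains $G_0 = G_0^\inn \cup G_0^\tpp$.

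The main obstacle is proving that $G'$ is subterm-closed, i.e.\ $\Sub(G'^\ntpp) \subseteq G'$. The top edges $T'$ and $T^\ast$ contribute nothing to $G'^\ntpp$. The only exception is a collapsing edge $h(h(u))\to h(u)$, which is simultaneously inner; this is why $\ntpp$ rather than $\inn$ is used, and Lemma~\ref{lem:inner-edges}(\ref{lem:inner-edges-d}) lets me treat the parts separately. So $G'^\ntpp \subseteq G'^\ntpp_1 \subseteq \Vrt^\ntpp$, and $\Sub(\Vrt^\ntpp) \subseteq \Sub(M_F(G)) \subseteq G = G^\ntpp \cup G^\tpp$. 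The $G^\tpp$ part lies in $T^\ast \subseteq G'$, which is exactly why $G^\tpp$ is added to $T^\ast$. For the $G^\ntpp$ part, the subterms of vertices of $G'_1$ are vertices of $G$, but not every edge of $G^\ntpp$ lies in $G'$. I would therefore include in $G'$ the finite subgraph $\Sub(G'_1)$ of $G$, using $\Sub(G'^\ntpp_1)\subseteq G$. Adding $\Sub(G'_1)$'s edges forces their own subterm edges, by subterm-closedness of $G$. These are finitely many, and adding them preserves confluence by Lemma~\ref{lem:inc-confluence} provided they are convertible in $G'_1\cup T'$. That is the delicate point. Failing this, the simplest fix is to add all of $G$ itself, or better $M_F(G)$-lifted copies, and to check confluence via a hierarchical combination (Lemma~\ref{lem:hc-confluence-union}) of $G'_1\cup T'$ with $G$.
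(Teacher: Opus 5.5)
Your construction up to $G' := G'_1 \cup T' \cup T^\ast$ is sound: finiteness, confluence via Lemma~\ref{lem:inc-confluence}, and $G_0 \subseteq G'$ all check out. The gap is the step you yourself flag as the main obstacle, subterm-closedness of $G'$, which you never prove. You reduce it to the convertibility of added edges of $G$, call that ``the delicate point,'' and stop at a list of possible fixes. With your small $G_1$, the graph $G'_1$ is merely \emph{some} finite conservative subgraph of $\Vrt$. So $\Sub(G'^{\ntpp}) \subseteq \Sub(M_F(G)) \subseteq G$ lands in $G$ but need not lie in $G'$. A lifted edge $f(\dots s\dots)\to f(\dots t\dots)$ may be in $G'_1$ while $s \to t$ is not, because conservativity preserves only connectivity, not particular edges.

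The missing observation is that, in the shallow setting, $\Vrt = M_F(G)$ is itself finite, since $G$ and $F$ are finite (Proposition~\ref{prop:monotonic-extension}). So you can simply take $G_1 := \Vrt$. Then $G_1 \subseteq G'_1 \conserveq \Vrt$ forces $G'_1 = \Vrt$. Since $T'$ and $T^\ast$ contribute only vertices of $V_{\Vrt}$ to $G'^{\ntpp}$, the $T^\ast$-subterm-closedness of $\Vrt$ that you already established gives $\Sub(G'^{\ntpp}) = \Sub(\Vrt^{\ntpp}) \subseteq \Vrt \cup T^\ast \subseteq G'$ at once. This is the paper's proof.

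If you prefer to keep your $G_1$, your first fallback can be completed with facts already at hand:
\begin{itemize}
\item $V_G = V_{G^\tpp} \subseteq V_{T^\ast} \subseteq V_{G'_1}$.
\item $G^{\ntpp} \subseteq M_F(\Sub(G^{\ntpp})) \subseteq \Vrt$ by Lemma~\ref{lem:mono-ext} and subterm-closedness of $G$.
\item Hence $G'_1 \conserveq \Vrt$ makes every edge of $G^{\ntpp}$ convertible in $G'_1$, while $G^\tpp \subseteq T^\ast$.
\end{itemize}
Lemma~\ref{lem:inc-confluence} therefore lets you add all of $G$ without losing confluence; you do not need a hierarchical combination. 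After that, $\Sub(G'^{\ntpp}) \subseteq \Sub(M_F(G)) \cup \Sub(G^{\ntpp}) \subseteq G \subseteq G'$.
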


\begin{proof}
  Let $T := G_0^\tpp \cup G^\tpp$ and $\Vrt:= M_F(G)$.
  Note that $\Vrt$ is finite.

  \begin{description}
  \item[\underline{$\Vrt$ is $T$-subterm-closed:}]
    Since $G$ is subterm-closed, \emph{$\Vrt$ is $G^\tpp$-subterm-closed} by
    Lemma~\ref{lem:subterm-closed}.
    Since $V_{G_0} \subseteq V_\Vrt$ follows from $\Sub(G_0) \subseteq G$ and
    Lemma~\ref{lem:mono-ext}(\ref{lem:mono-ext-c}),
    Proposition~\ref{prop:T-subterm-closed} implies that $\Vrt$ is
    $T$-subterm-closed, i.e.,
    \begin{align}
      & \label{eq:Vrt_T_subterm-closed}
      \Sub(\Vrt^\ntpp) \subseteq \Vrt \cup T
    \end{align}

  \item[\underline{Compatibility and Edge-commutativity of non-collapsing edges:}]
    $T^{nc}$ is compatible over $\Vrt$ by
    Lemma~\ref{lem:non-E-overlap-shallow-compatible},
    and edge-commutative over $(\Vrt, \Ind_R(V_{\Vrt})^\tpp)$ by
    Lemma~\ref{lem:shallow-nonCol-normalizable}.

  \item[\underline{Compatibility and Edge-commutativity of collapsing edges:}]
    By Lemma~\ref{lem:NestedCollapsing}, we have $T'^c (\subseteq T^c)$ with
    \begin{description}
    \item[(a)]
      ${\barrow{\ast}{\Vrt\cup T}}={\barrow{\ast}{\Vrt\cup T'^c \cup T_1}}$,
    \item[(b)] $T'^c$ is $T^{nc}$-compatible over $\Vrt$, and
    \item[(c)] $T'^c$ is $T^{nc}$-edge-commutative over $(\Vrt,\Hol)$.
    \end{description}
  \end{description}

  Let $G_1 = M_F(G)$. We will extend $G_1$ to $G'_1$ and then to $G'$.
  \begin{description}
  \item[\underline{Construction of $G'_1$ by Theorem~\ref{thm:seq-edge-norm}:}]
    Let $G_1 := \Vrt$, $T_1 := T^{nc}$, $T_2 := T'^c$,
    and $\Hol:= \Ind_R(V_{\Vrt})^\tpp$ in Theorem~\ref{thm:seq-edge-norm}.
    Then, the premises hold as follows:
    \begin{itemize}
    \item Since $G$ is finite and confluent, so is $G_1 = M_F(G)$ by
      Proposition~\ref{prop:monotonic-extension}(1).
    \item Trivially $G_1 \subseteq \Vrt$.
    \item Since $\Vrt$ is $T$-subterm-closed,
      we have $V_{T_1\cup T_2} \subseteq V_{T} \subseteq V_{\Vrt} = V_{G_1}$.
    \item Since $T_1 \cup T_2 \subseteq T$,
      ${\Rarrow{\tpp}{}{}{T}} = {\Rarrow{\tpp}{}{}{T^\tpp}} \subseteq
      {\Rarrow{\tpp}{R}{}{}}$, and $V_{T} \subseteq V_{\Vrt}$,
      we obtain $T_1 \cup T_2 \subseteq \Ind_R(V_{\Vrt})^\tpp = \Hol$.
    \item Conditions (i), (iii), and (iv) hold by the discussion above.
      Condition (ii) holds by Lemma~\ref{lem:mixed-compat}: $T$ consists of
        top edges, $V_T \subseteq V_{\Vrt}$ (shown above), $\Vrt = M_F(G)$
        consists of inner edges (Definition~\ref{def:mono-ext}), $T^{nc}$ is
        compatible over $\Vrt$
        (shown above via Lemma~\ref{lem:non-E-overlap-shallow-compatible}), and
        $T'^c$ is $T^{nc}$-compatible over $\Vrt$ by
        Lemma~\ref{lem:NestedCollapsing}(\ref{lem:NestedCollapsing_b}).
    \end{itemize}
    Theorem~\ref{thm:seq-edge-norm} concludes that there exist a finite
    confluent ARS $G'_1$ and a finite ARS $T' \subseteq \Hol$ such that
    \begin{itemize}
    \item[(1)] $G_1 \subseteq G'_1 \conserveq \Vrt$,
    \item[(2)] $V_{T'} \subseteq V_{G'_1}$,
    \item[(3)] ${\barrow{\ast}{G_1 \cup T_1 \cup T_2}} \subseteq
      {\barrow{\ast}{G'_1 \cup T'}}$, and
    \item[(4)] $G'_1 \cup T'$ is confluent and finite.
    \end{itemize}
    Since $G_1 = \Vrt$, the property (1) implies $G'_1= \Vrt$.

  \item[\underline{Construction of $G'$:}]
  Let $G' := G'_1 \cup T' \cup T$, then the following properties hold.
    \begin{description}
    \item[{\bf $G'$ is finite and confluent:}]
      $G'$ is finite since $G'_1 \cup T'$ and $T$ are. Combining
      ${\to_T} \subseteq {\to_{\Vrt \cup T_1\cup T_2}} \subseteq {\to_{G'_1 \cup T'}}$
      with the fact that $G'_1 \cup T'$ is confluent,
      we obtain that $G'$ is confluent by Lemma~\ref{lem:inc-confluence}.
    \item[{\bf $G'$ is subterm-closed:}]
      $V_{T \cup T'} \subseteq V_{\Vrt} = V_{G'_1}$.
      Moreover, $T$ and $T'$ are top edges, i.e., $T\cup T' = (T\cup T')^\tpp$.
      Thus, we obtain by (\ref{lem:inner-edges-d}) and (\ref{lem:inner-edges-e})
      of Lemma~\ref{lem:inner-edges} that
      $\Sub(G'^\ntpp) = \Sub((G'_1 \cup T\cup T')^\ntpp)
      = \Sub((\Vrt \cup (T\cup T')^\tpp)^\ntpp)
      = \Sub(\Vrt^\ntpp \cup ((T\cup T')^\tpp)^\ntpp)
      = \Sub(\Vrt^\ntpp \cup \<V_{T\cup T'},\emptyset\>) = \Sub(\Vrt^\ntpp)$.
      From \eqref{eq:Vrt_T_subterm-closed}, we obtain
      $\Sub(G'^\ntpp)=\Sub(\Vrt^\ntpp) \subseteq \Vrt \cup T \subseteq G'$.
    \item[{\bf $G_0 \subseteq G'$:}]
      Since $G_0 = G_0^\tpp \cup G_0^\inn$, it suffices to show that both
      $G_0^\tpp$ and $G_0^\inn$ are included in $G'$.
      The former inclusion holds by the construction of $T$ and $G'$, i.e.,
      $G_0^\tpp \subseteq T \subseteq G'$.
      For the latter, we have $G_0^\inn \subseteq M_F(\Sub(G_0))$ by
      Lemma~\ref{lem:mono-ext}(2).
      Since $\Sub(G_0) \subseteq G$, Proposition~\ref{prop:monotonic-extension}
      (\ref{prop:monotonic-extension-b}) implies
      $M_F(\Sub(G_0)) \subseteq M_F(G) = \Vrt$.
      From these facts, we obtain $G_0^\inn \subseteq \Vrt \subseteq G'$.
    \end{description}
  \end{description}
\end{proof}

The following example illustrates the transformation described in
Lemma~\ref{lem:CollapsingShallow}.
This also clarifies why we adopted the somewhat unnatural definition of
the subterm-closed property.
\begin{example}\label{ex:counter}
  Let $F= \{ a, f(), h() \}$ and
  $R_{\settrs{trs:collapsing}}= \{ f(x) \to h(x),\ h(x) \to x \}$.
  Consider applying the lemma to $G_0$ shown in
  Fig.~\ref{fig:collapse}(\subref{subfig:collapse_G0}).
  From
  \( \Sub(G_0)= \{ \ a,\ \  f(a) \to h(a) \ \}, \)
  a recursive application constructs $G=\Sub(G_0)$.
  The graph $M_F(G)$ is depicted in
  Fig.~\ref{fig:collapse}(\subref{subfig:collapse_MFG}).
  Here, we have $G^\tpp=\{ \ f(a)\to h(a) \ \}$ and
  $G_0^\tpp = \{ \ f(a)\gets h(f(a)) \ \}$.
  Given $T:= G^\tpp \cup G_0^\tpp$, $\Vrt:= M_F(G)$, and
  $\Hol :=  \Ind_R(V_{\Vrt})^\tpp $ as shown in
  Fig.~\ref{fig:collapse}(\subref{subfig:collapse_Hol}),
  $T_1$ and $T_2$ are determined to be $T_1=G^\tpp$ and $T_2=G_0^\tpp$.
  Theorem~\ref{thm:seq-edge-norm} then produces
  $G'_1 := \Vrt$ and $T':= \{ f(a) \to h(a),\ h(h(a)) \to h(a) \}$.
  The resulting graph $G'_0 := \Vrt \cup T' \cup T$ is shown in
  Fig.~\ref{fig:collapse}(\subref{subfig:collapse_Gd0}).

  If the subterm-closed property of $G$ were defined simply as
  $\Sub(G) \subseteq G$, then $G_0'$ would not satisfy this property,
  because $h(a) \to a$ is in $\Sub(G'_0)$ but not in $G'_0$
  (see Fig.~\ref{fig:collapse}(\subref{subfig:collapse_Gd0})).
\end{example}
\begin{figure}
  \begin{minipage}{0.4\linewidth}
    \begin{center}
      \mbox{}\\[-1em]
      \(
      \begin{tikzpicture} [line width=0.07em]
    \matrix (m) [matrix of math nodes,row sep=0.8em,column sep=1em,minimum width=0.5em] {
      & f(a) & h(f(a)) & \mbox{\hspace{16pt}}\\
      &      & h(h(a)) & \\
    };
    \path[->]
  (m-1-3) edge (m-1-2)
  (m-1-3) edge (m-2-3)
        ;
      \end{tikzpicture}
      \)\\[-3em]\mbox{}
    \end{center}
    \subcaption{$G_0$}\label{subfig:collapse_G0}
  \end{minipage}
  \begin{minipage}{0.4\linewidth}
    \begin{center}
      \mbox{}\\[-1em]
      \(
      \begin{tikzpicture} [line width=0.07em]
  \matrix (m) [matrix of math nodes,row sep=0.8em,column sep=1em,minimum width=0.5em] {
    a & f(a) & h(f(a)) & f(f(a))\\
    & h(a) & h(h(a)) & f(h(a))\\
  };
  \path[->]
  (m-1-3) edge (m-2-3)
  (m-1-4) edge (m-2-4)
  ;
      \end{tikzpicture}
      \)\\[-3em]\mbox{}
    \end{center}
    \subcaption{$M_F(G)$}\label{subfig:collapse_MFG}
  \end{minipage}

\bigskip
\begin{minipage}{0.4\linewidth}
  \begin{center}
    \mbox{}\\[-1em]
    \(
    \begin{tikzpicture} [line width=0.07em]
      \matrix (m) [matrix of math nodes,row sep=0.8em,column sep=1em,minimum width=0.5em] {
  a & f(a) & h(f(a)) & f(f(a))\\
  & h(a) & h(h(a)) & f(h(a))\\
      };
      \path[->]
      (m-1-2) edge (m-2-2)
      (m-2-2) edge (m-1-1)
      (m-1-3) edge (m-1-2) (m-1-4) edge (m-1-3)
      (m-2-3) edge (m-2-2) (m-2-4) edge (m-2-3)
      ;
    \end{tikzpicture}
    \)\\[-3em]\mbox{}
  \end{center}
  \subcaption{$\Hol$}\label{subfig:collapse_Hol}
\end{minipage}
\begin{minipage}{0.4\linewidth}
  \begin{center}
    \mbox{}\\[-1em]
    \(
    \begin{tikzpicture} [line width=0.07em]
      \matrix (m) [matrix of math nodes,row sep=0.8em,column sep=1em,minimum width=0.5em] {
  a & f(a) & h(f(a)) & f(f(a))\\
  & h(a) & h(h(a)) & f(h(a))\\
      };
      \path[->]
      (m-1-3) edge (m-1-2) edge (m-2-3)
      (m-1-2) edge (m-2-2)
      (m-2-3) edge (m-2-2)
      (m-1-4) edge (m-2-4)
      ;
    \end{tikzpicture}
    \)\\[-3em]\mbox{}
  \end{center}
  \subcaption{$G_0'$}\label{subfig:collapse_Gd0}
\end{minipage}
\caption{$R_{\ref{trs:collapsing}}$-graphs in Example~\ref{ex:counter}}\label{fig:collapse}
\end{figure}

\begin{theorem}\label{thm:NOV-Shallow}
  For a non-$E$-overlapping and shallow TRS $R$ and a finite $R$-graph
  $G_0 = \<V_{G_0}, \rarrow{}{G_0}\>$,
  there exists a finite, confluent, and subterm-closed $R$-graph $G_0'$
  such that $G_0 \subseteq G_0'$.
\end{theorem}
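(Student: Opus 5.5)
The plan mirrors the proof of Theorem~\ref{thm:shallow-noncollapsing}, with Lemma~\ref{lem:CollapsingShallow} replacing Lemma~\ref{lem:NoncollapsingShallow} and with subterm-closedness carried through the induction. Let $G_0=\<V_0,\rarrow{}{0}\>$ be a finite $R$-graph. We induct on $\sum_{v\in V_0}|v|$.

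\emph{Base case.} If $V_0=\emptyset$, then $G_0$ is finite and confluent. It is also subterm-closed, since $\Sub(G_0^\ntpp)$ is empty. So we take $G_0':=G_0$.

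\emph{Inductive step.} Consider the direct-subterm graph $\Sub(G_0)$. It is a finite $R$-graph. Its vertex set $\Sub(V_0)$ consists of proper subterms of terms in $V_0$, and each term contributes at most its arity's worth of direct subterms, whose sizes sum to less than its own size. Hence $\sum_{s\in\Sub(V_0)}|s|<\sum_{v\in V_0}|v|$; this strict decrease is the only point that needs care. A possible subtlety is when $V_0$ contains only constants. Then $\Sub(G_0)$ is empty, and the induction hypothesis still applies, or one simply takes $G$ empty, which is trivially finite, confluent, and subterm-closed.

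Applying the induction hypothesis to $\Sub(G_0)$ yields a finite, confluent, subterm-closed $R$-graph $G$ with $\Sub(G_0)\subseteq G$. These are exactly the premises of Lemma~\ref{lem:CollapsingShallow}, since $R$ is shallow and non-$E$-overlapping. That lemma delivers a finite, confluent, subterm-closed $R$-graph $G_0'$ with $G_0\subseteq G_0'$, which completes the induction.

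\emph{Main obstacle.} All the substantive work lies in Lemma~\ref{lem:CollapsingShallow}, which is already established. The one essential design point is that the induction hypothesis must carry the subterm-closed property, not merely finiteness and confluence. Lemma~\ref{lem:CollapsingShallow} requires it of $G$ (to make $M_F(G)$ $T$-subterm-closed via Lemma~\ref{lem:subterm-closed}) and returns it for $G_0'$, so the stronger statement is self-sustaining under the induction.
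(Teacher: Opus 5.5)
Your proof is correct and is the paper's argument: the paper proves this theorem by rerunning the induction of Theorem~\ref{thm:shallow-noncollapsing} on $\sum_{v\in V_0}|v|$, applying the induction hypothesis to $\Sub(G_0)$ and then Lemma~\ref{lem:CollapsingShallow}, with subterm-closedness carried through the hypothesis as you do. Your explicit checks of the strict size decrease, of the empty base case being subterm-closed, and of the all-constant case are details the paper leaves implicit.
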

\begin{proof}
  The theorem can be proved in the exact same manner as
  Theorem~\ref{thm:shallow-noncollapsing},
  by replacing the use of Lemma~\ref{lem:NoncollapsingShallow} with
  Lemma~\ref{lem:CollapsingShallow}
  and incorporating the subterm-closed property for $G_0$ and $G$.
\end{proof}

\begin{corollary}\label{CorollaryNOV-Shallow}
  Non-$E$-overlapping and shallow TRSs are confluent.
\end{corollary}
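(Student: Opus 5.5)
The corollary follows from Theorem~\ref{thm:NOV-Shallow}, in the same way that Corollary~\ref{cor:shallow-noncollapsing} follows from Theorem~\ref{thm:shallow-noncollapsing}. First, I reduce to the root-distinct setting fixed in Section~\ref{sec:R-reduction-graph}. Replace every collapsing rule $\ell \to x$ by $\ell \to h(x)$ together with the single rule $h(x)\to x$, and split every rule whose two sides have the same root through a fresh symbol $g$. These transformations preserve confluence in both directions. They must also be checked to preserve shallowness and the non-$E$-overlapping property. The new right-hand sides $h(x)$ and $g(x_1,\ldots,x_n)$ are shallow, so shallowness is kept. Since $h$ and each fresh $g$ occur only at the root of their own rule's left-hand side, they create no new overlaps. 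Any $E$-peak between original rules in the transformed system projects back, by erasing the $h$'s and $g$'s, to a conversion below the root in the original system. If this projection step looks delicate, I would instead note that the paper adopts these restrictions without loss of generality and invoke that directly.

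Next, take an arbitrary peak $t_1 \larrow{\ast}{R} s \rarrow{\ast}{R} t_2$ and collect all terms and steps of both sequences into a finite $R$-graph $G_0$. Theorem~\ref{thm:NOV-Shallow} gives a finite, confluent, subterm-closed $R$-graph $G_0'$ with $G_0 \subseteq G_0'$. Since $t_1 \larrow{\ast}{G_0'} s \rarrow{\ast}{G_0'} t_2$ and $G_0'$ is confluent, there is a term $u$ with $t_1 \rarrow{\ast}{G_0'} u \larrow{\ast}{G_0'} t_2$. Every edge of an $R$-graph is an $R$-step, so $t_1 \downarrow_R t_2$, and $R$ is confluent.

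The only nontrivial point is the reduction to root-distinct TRSs, and within it the preservation of non-$E$-overlappingness. Everything else is a direct application of Theorem~\ref{thm:NOV-Shallow}.
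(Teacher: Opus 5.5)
Your proposal is correct and follows the paper's route: the corollary is immediate from Theorem~\ref{thm:NOV-Shallow}, by embedding an arbitrary peak into a finite $R$-graph and reading off a valley in its confluent extension, exactly as in Corollary~\ref{cor:shallow-noncollapsing}. Your check that the $h$-introduction and root-splitting transformations preserve shallowness and non-$E$-overlappingness goes beyond the paper, which simply adopts root-distinctness as a standing assumption, and the check goes through. (For the fresh $g$'s, the projection unfolds $g(t_1,\ldots,t_n)$ into the corresponding instance of $r$ rather than erasing it; this may duplicate or drop inner steps but keeps them below the root. The third restriction, requiring both constants and non-constant symbols, which you did not address, is likewise harmless.)
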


\section{Non-$E$-overlapping and weakly shallow TRSs}
\label{sec9}\label{sec:WS_system}

In this section, we relax the shallowness condition in
Theorem~\ref{thm:NOV-Shallow} to weak shallowness.
To this end, we first construct an appropriate structure for $\Vrt$.
Specifically, we introduce a novel structure called the constructor expansion
in Section~\ref{subsec:constructor-extension}, and then prove confluence
in Section~\ref{subsec:Confluence}.

Note that while this section specifically focuses on weakly shallow TRSs,
the weak shallowness assumption has already been utilized since Section~\ref{subsec:normalization_weaklyshallow}.

\subsection{Designing $\Vrt$ for weakly shallow TRSs}
\label{subsec:constructor-extension}

\begin{definition} \label{def:constructor-expansion}
  For an $R$-graph $G$ and a non-negative integer $k$, we use the following notations:
  \begin{enumerate}
    \item $\overline{M_{C}}(G)$ to denote $G \cup M_C(G)$, and
    \item $\overline{M_C}^k(G)$ (and $(M_C)^k(G)$) to denote the $k$-fold application
    of $\overline{M_C}$ (and $M_C$, respectively).
  \end{enumerate}
\end{definition}

\begin{example}\label{ex:constructor-expansion}
  Consider the $R_{\ref{trs:running}}$-graph $G_0$ from Example~\ref{ex:overview}:
  \[ G_0 = f(g(c),c) \gets f(c,c) \to f(c,g(c)) \Rarrow{\tpp}{}{}{} g^3(c). \]
  We choose $G = \Sub(G_1) = \{ c\to g(c),\, g^2(c) \}$ as given in Example~\ref{ex:sub-ext}.
  Then, we have:
  \[
    \begin{array}{l@{\ $=$\ }l}
    M_C(G)                & \{\ \phantom{c \to {}} g(c) \to g^2(c), \phantom{\to\hspace{0.1em}} g^3(c) \ \} \\
    \overline{M_{C}}(G)   & \{\ c \to g(c) \to g^2(c), \hspace{1.1em} g^3(c) \ \} \\
    \overline{M_{C}}^2(G) & \{\ c \to g(c) \to g^2(c) \to g^3(c),\ g^4(c) \ \}
    \end{array}
  \]
\end{example}

\begin{proposition} \label{prop:constructor-expansion}
  For an $R$-graph $G$ and an integer $k \geq 0$,
  \begin{enumerate}
    \item $\overline{M_C}^k(G) \subseteq \overline{M_C}^{k+1}(G)$, and
    \item
    $\displaystyle
    \overline{M_C}^k(G) =
    \begin{cases}
      G & \text{if $k=0$,} \\
      G \cup M_C(\overline{M_C}^{k-1}(G)) & \text{if $k>0$.}
    \end{cases}
    $
  \end{enumerate}
\end{proposition}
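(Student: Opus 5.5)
We prove the two items together by induction on $k$, using only Definition~\ref{def:constructor-expansion} and the monotonicity of $M_C$ from Proposition~\ref{prop:monotonic-extension}(\ref{prop:monotonic-extension-b}) (with $F'=C$).

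\textbf{Item (2).} The case $k=0$ is the convention that the $0$-fold application of $\overline{M_C}$ is the identity, so $\overline{M_C}^0(G)=G$. For $k>0$ we first unfold the last application:
\[
\overline{M_C}^k(G)=\overline{M_C}(\overline{M_C}^{k-1}(G))=\overline{M_C}^{k-1}(G)\cup M_C(\overline{M_C}^{k-1}(G)).
\]
It remains to show that the summand $\overline{M_C}^{k-1}(G)$ may be replaced by $G$. For $k=1$ the two coincide. For $k\geq 2$, the induction hypothesis of (2) gives $\overline{M_C}^{k-1}(G)=G\cup M_C(\overline{M_C}^{k-2}(G))$. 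Item (1) at level $k-2$ gives $\overline{M_C}^{k-2}(G)\subseteq\overline{M_C}^{k-1}(G)$. Monotonicity of $M_C$ then yields
\[
M_C(\overline{M_C}^{k-2}(G))\subseteq M_C(\overline{M_C}^{k-1}(G)),
\]
so this part of $\overline{M_C}^{k-1}(G)$ is absorbed by the second summand. What remains is exactly $G\cup M_C(\overline{M_C}^{k-1}(G))$, which is the formula in (2).

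\textbf{Item (1).} This is immediate and needs no induction, since $\overline{M_C}^{k+1}(G)=\overline{M_C}^k(G)\cup M_C(\overline{M_C}^k(G))$ contains $\overline{M_C}^k(G)$. This holds for vertices as well as edges, because union and inclusion of ARSs are taken componentwise.

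\textbf{Ordering and the only real subtlety.} The ordering matters only in (2): proving (2) at level $k$ uses (1) at level $k-2$, which is available because (1) holds for every $k$ unconditionally. The one point to check is that monotonicity of $M_C$ covers vertex sets as well as edge sets. It does, since $G\subseteq G'$ gives $V_G\subseteq V_{G'}$, and the vertices of $M_C$ are built from these sets. With that checked, the argument goes through without complication.
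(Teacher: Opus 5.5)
Your proposal is correct and follows the paper's proof essentially step for step. Item (1) is read off directly from the unfolding $\overline{M_C}^{k+1}(G)=\overline{M_C}^k(G)\cup M_C(\overline{M_C}^k(G))$, and item (2) is proved by induction on $k$: you substitute the hypothesis for $\overline{M_C}^{k-1}(G)$ and absorb $M_C(\overline{M_C}^{k-2}(G))$ into $M_C(\overline{M_C}^{k-1}(G))$ using item (1) and the monotonicity of $M_C$.
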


\begin{proof}
  \mbox{}
  \begin{enumerate}
    \item For $k=0$, we have
    $\overline{M_C}^1(G) = G \cup M_C(G) \supseteq G = \overline{M_C}^0(G)$.

    For $k>0$, we have
    $\overline{M_{C}}^{k+1}(G) = \overline{M_C}(\overline{M_C}^{k}(G))
    = \overline{M_{C}}^{k}(G) \cup M_C(\overline{M_{C}}^k(G))
    \supseteq \overline{M_{C}}^{k}(G)$.

    \item We proceed by induction on $k$.
    The cases $k=0$ and $k=1$ are trivial.
    For $k>1$, we have
    $\overline{M_{C}}^{k}(G) = \overline{M_{C}}^{k-1}(G) \cup M_C(\overline{M_{C}}^{k-1}(G))$.
    By the induction hypothesis,
    $\overline{M_{C}}^{k-1}(G) \cup M_C(\overline{M_{C}}^{k-1}(G))
    = G \cup M_C(\overline{M_{C}}^{k-2}(G)) \cup M_C(\overline{M_{C}}^{k-1}(G))$.
    Here, $M_C(\overline{M_{C}}^{k-2}(G)) \subseteq M_C(\overline{M_{C}}^{k-1}(G))$
    follows from Part~(1) and Proposition~\ref{prop:monotonic-extension}(\ref{prop:monotonic-extension-b}).
    Thus, we obtain
    $\overline{M_{C}}^{k}(G) = G \cup M_C(\overline{M_{C}}^{k-1}(G))$.
  \end{enumerate}
\end{proof}

In light of Proposition~\ref{prop:constructor-expansion}(1),
we introduce the notion of $\overline{M_{C}}^\infty(G)$ as follows:
\begin{definition} \label{def:constructor-expansion2}
  We define $\displaystyle \overline{M_{C}}^\infty(G) := \bigcup_{i=0}^\infty \overline{M_{C}}^i(G)$.
\end{definition}
\begin{example}\label{ex:inf}
  In continuation of Example~\ref{ex:constructor-expansion}, we have:
  \[
  \overline{M_{C}}^\infty(G) = \{ c \to g(c) \to g^2(c) \to g^3(c) \to \cdots \}
  \]
\end{example}

We now present some properties of subterm-closed $R$-graphs.

\begin{proposition}\label{prop:sub-MF2}
  For a subterm-closed $R$-graph $G$,
  we have $G\restRoot{C} \subseteq M_C(G) \cup G|_\emptyset$. 
\end{proposition}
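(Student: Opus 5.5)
We check the vertex part and the edge part of the inclusion separately. Here $G\restRoot{C} = \<V_G, {\to}\restRoot{C}\>$, where ${\to}\restRoot{C} = {\to_G}|_{V\restRoot{C}\times V\restRoot{C}}$. The graph $G|_\emptyset$ supplies every vertex of $V_G$, so the vertex inclusion $V_G \subseteq V_{M_C(G)} \cup V_G$ is immediate. It remains to show that every edge of ${\to}\restRoot{C}$ is an edge of $M_C(G)$.

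Take an edge $s \to_G t$ with $\Root(s) = c \in C$. The first step is to see that this edge is a non-top edge, i.e.\ it lies in $G^\ntpp$. A top step applies a rule at the root, so $\Root(s)$ would be $\Root(\ell) \in D$ for some rule $\ell \to r$, contradicting $c \in C$. Hence $s \to_G t$ is an inner step, and an inner step leaves the root symbol unchanged. So we may write $s = c(s_1,\ldots,s_n)$ and $t = c(t_1,\ldots,t_n)$, with $n \ge 1$ because a constant admits no inner step. There is an index $i$ with $s_i \rarrow{}{R} t_i$ and $s_j = t_j$ for all $j \neq i$.

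The second step uses subterm-closedness. By Definition~\ref{def:subterm}, $(s_i,t_i) \in \Sub(G^\ntpp)$, and subterm-closedness gives $\Sub(G^\ntpp) \subseteq G$. Therefore $s_i \to_G t_i$, and every argument $s_j$ and $t_j$ is a vertex of $G$: these are endpoints of edges of $\Sub(G^\ntpp)$ or elements of $\Sub(V_{G^\ntpp})$, and $\Sub(G^\ntpp) \subseteq G$ includes the vertex set. Definition~\ref{def:mono-ext} with $F' := C$ then yields $s, t \in V_{M_C(G)}$ and $(s,t) \in {\to_{M_C(G)}}$.

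The only delicate point is this vertex bookkeeping. We must make sure that all arguments of $s$ and $t$ lie in $V_G$, and not merely the pair $(s_i,t_i)$, so that $M_C(G)$ really contains both endpoints. This is exactly what the vertex component of the inclusion $\Sub(G^\ntpp) \subseteq G$ provides, since $\Sub(V)$ collects all direct subterms of the vertices. With that settled, combining the edge and vertex parts gives $G\restRoot{C} \subseteq M_C(G) \cup G|_\emptyset$.
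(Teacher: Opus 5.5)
Your proof is correct and follows the paper's route. You observe that constructor-rooted edges cannot be top edges, so they lie in $G^\ntpp$, and then use $\Sub(G^\ntpp)\subseteq G$ to place both the rewritten argument pair and all other arguments in $G$. The paper obtains the same thing by citing Proposition~\ref{prop:subterm}(\ref{prop:subterm-3}) for $G^\ntpp$ together with monotonicity of $M_C$; you have unfolded those two facts, including the vertex bookkeeping, by hand.
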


\begin{proof}
  Since $G\restRoot{C} = G^\ntpp\restRoot{C} = G^\inn\restRoot{C}$,
  we have $G\restRoot{C} \subseteq M_C(\Sub(G^\ntpp)) \cup G|_\emptyset$
  by Proposition~\ref{prop:subterm}(\ref{prop:subterm-3}).
  Combining this with $M_C(\Sub(G^\ntpp)) \subseteq M_C(G)$
  (which follows from Proposition~\ref{prop:monotonic-extension}(\ref{prop:monotonic-extension-b}) and the fact that $G$ is subterm-closed), 
  we obtain
  $G\restRoot{C} \subseteq M_C(G) \cup G|_\emptyset$.
\end{proof}

\begin{lemma}\label{lem:B-monotonic}
  For a subterm-closed $R$-graph $G$,
  \begin{enumerate}
    \item\label{lem:B-monotonic_2} $\overline{M_C}(G)$ is subterm-closed.
    \item\label{lem:B-monotonic_3} $\overline{M_C}(G)$ is confluent if $G$ is confluent.
  \end{enumerate}
\end{lemma}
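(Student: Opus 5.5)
For part (1) I will compute $\Sub(\overline{M_C}(G)^\ntpp)$ directly. Lemma~\ref{lem:inner-edges}(\ref{lem:inner-edges-d}) gives $\overline{M_C}(G)^\ntpp = G^\ntpp \cup M_C(G)^\ntpp$. Proposition~\ref{prop:subterm}(\ref{prop:subterm-5}) gives $M_C(G)^\ntpp = M_C(G)$. So Proposition~\ref{prop:subterm}(\ref{prop:subterm-0}) yields
\[
\Sub(\overline{M_C}(G)^\ntpp) \subseteq \Sub(G^\ntpp) \cup \Sub(M_C(G)) .
\]
The first part is contained in $G$ because $G$ is subterm-closed, and the second part is contained in $G$ by Proposition~\ref{prop:subterm}(\ref{prop:subterm-2}). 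Hence the whole set lies in $G \subseteq \overline{M_C}(G)$, which is exactly subterm-closedness. The vertex condition is vacuous for the empty $T$.

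For part (2) I will use the hierarchical combination Lemma~\ref{lem:hc-confluence-union}, with $G_1 := G$ and $G_2 := M_C(G)$. Both are confluent: $G$ by assumption, and $M_C(G)$ by Proposition~\ref{prop:monotonic-extension}(\ref{prop:monotonic-extension-a}). It then suffices to check $G \gtrdot M_C(G)$, that is, that every edge $s \to_G t$ satisfies $s \notin V_{M_C(G)}$ or $s \barrow{\ast}{M_C(G)} t$.

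I split on $\Root(s)$. If $\Root(s) \in D$, then $s \notin V_{M_C(G)}$, since every vertex of $M_C(G)$ is constructor-rooted, and the edge is of the first kind. If $\Root(s) \in C$, the edge lies in $G\restRoot{C}$. By Proposition~\ref{prop:sub-MF2} (this is where subterm-closedness is needed), it then lies in $M_C(G)$, so $s \to_{M_C(G)} t$.

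The one point needing care is the case $s \in V_{M_C(G)}$ but $s \notin V_G$; then $s$ is not the source of any $G$-edge, so nothing needs checking. I also need the reverse direction implicit in Lemma~\ref{lem:hc-confluence-union}'s proof, namely that edges of $G_1$ leaving shared vertices stay inside the shared part. This holds because a constructor-rooted edge of $G$ cannot be a top step, its target is constructor-rooted, and it is an edge of $M_C(G)$. Once $G \gtrdot M_C(G)$ is established, Lemma~\ref{lem:hc-confluence-union} gives confluence of $G \cup M_C(G) = \overline{M_C}(G)$.

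The main obstacle is this verification of the hierarchical-combination condition for constructor-rooted sources. It rests entirely on Proposition~\ref{prop:sub-MF2}, and on the remark that the $G|_\emptyset$ summand there contributes only vertices and no edges.
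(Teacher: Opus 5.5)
Your proposal is correct and follows the paper's proof essentially step for step. Part (1) uses Lemma~\ref{lem:inner-edges}(\ref{lem:inner-edges-d}) together with Proposition~\ref{prop:subterm}(\ref{prop:subterm-0}), (\ref{prop:subterm-2}), (\ref{prop:subterm-5}) and the subterm-closedness of $G$. Part (2) establishes $G \gtrdot M_C(G)$ by splitting edges on the root of the source via Proposition~\ref{prop:sub-MF2}, and then applies Lemma~\ref{lem:hc-confluence-union}. Your extra ``reverse direction'' check is not a separate obligation: it is already built into the definition of $\gtrdot$, since $\barrow{\ast}{M_C(G)}$ relates only vertices of $M_C(G)$.
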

\begin{proof}
  Let $G = \<V,\rightarrow\>$.
  \begin{enumerate}
    \item
    We have $(G\cup M_C(G))^\ntpp \subseteq G^\ntpp \cup M_C(G)^\ntpp$ by
    Lemma~\ref{lem:inner-edges}(\ref{lem:inner-edges-d}), and hence
    $\Sub((G\cup M_C(G))^\ntpp) \subseteq \Sub(G^\ntpp \cup M_C(G)^\ntpp)$ by
    Proposition~\ref{prop:subterm}(\ref{prop:subterm-1}).
    Combining this with
    $\Sub(G^\ntpp \cup M_C(G)^\ntpp) \subseteq \Sub(G^\ntpp) \cup \Sub(M_C(G)^\ntpp)$
    obtained by Proposition~\ref{prop:subterm}(\ref{prop:subterm-0}), we have
    \begin{equation}\label{eq:B-mono1}
      \Sub((G \cup M_C(G))^\ntpp) \subseteq \Sub(G^\ntpp) \cup \Sub(M_C(G)^\ntpp).
    \end{equation}
    Here, by Proposition~\ref{prop:subterm}\eqref{prop:subterm-5},
    Proposition~\ref{prop:monotonic-extension}(\ref{prop:monotonic-extension-b}),
    and Proposition~\ref{prop:subterm}(\ref{prop:subterm-2}) we obtain
    \begin{equation}\label{eq:B-mono2}
      \Sub(M_C(G)^\ntpp) = \Sub(M_C(G)) = G.
    \end{equation}
    Combining \eqref{eq:B-mono1} and \eqref{eq:B-mono2} with
    $\Sub(G^\ntpp) \subseteq G$ (which follows from the fact that $G$ is subterm-closed),
    we obtain $\Sub((G \cup M_C(G))^\ntpp) \subseteq
    \Sub(G^\ntpp) \cup \Sub(M_C(G)^\ntpp) \subseteq G \subseteq G \cup M_C(G)$.
    We are done since $\overline{M_C}(G) = G \cup M_C(G)$.

    \item
    We have $G\restRoot{C} \subseteq M_C(G) \cup G|_\emptyset$ by
    Proposition~\ref{prop:sub-MF2},
    which means that ${\to_G}\restRoot{C} \subseteq {\to_{M_C(G)}}$.
    On the other hand, obviously ${\to_G}\restRoot{D} =
    {\to_G}|_{V\restRoot{D}\times V_G} \subseteq (V_G\setminus V_{M_C(G)}) \times V_G$.
    Hence, ${\to_G} = {\to_G}\restRoot{D} \cup {\to_G}\restRoot{C} \subseteq
    ((V_G\setminus V_{M_C(G)}) \times V_G) \cup {\barrow{\ast}{M_C(G)}}$.
    Thus, $\overline{M_C}(G) = G \cup M_C(G) = G \gtrdot M_C(G)$.

    Since $G$ and $M_C(G)$ are confluent, $\overline{M_C}(G)$ is confluent
    by instantiating $G_1 := G$ and $G_2 := M_C(G)$	in Lemma~\ref{lem:hc-confluence-union}.
  \end{enumerate}
\end{proof}

\begin{lemma}\label{lem:MF-inf-union}
  For an $R$-graph $G$,
  \[
    M_F\bigl(\overline{M_C}^\infty(G)\bigr)
    = \bigcup_{i=0}^\infty M_F\bigl(\overline{M_C}^i(G)\bigr).
  \]
\end{lemma}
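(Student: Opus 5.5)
We prove the equality by showing both inclusions, separately for vertex sets and for edge sets. Write $G_i := \overline{M_C}^i(G)$ and $G_\infty := \overline{M_C}^\infty(G) = \bigcup_i G_i$. By Proposition~\ref{prop:constructor-expansion}(1) the $G_i$ form an increasing chain.

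The inclusion $\supseteq$ follows from monotonicity. Since $G_i \subseteq G_\infty$ for every $i$, Proposition~\ref{prop:monotonic-extension}(\ref{prop:monotonic-extension-b}) gives $M_F(G_i) \subseteq M_F(G_\infty)$, and taking the union over $i$ yields the claim.

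For $\subseteq$ we use the fact that every vertex and every edge of $M_F(G_\infty)$ involves only finitely many vertices of $G_\infty$, so some finite stage of the chain already contains all of them.

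\begin{itemize}
\item \emph{Vertices.} A vertex of $M_F(G_\infty)$ has the form $f(s_1,\ldots,s_n)$ with $f\in F$ and each $s_j \in V_{G_\infty} = \bigcup_i V_{G_i}$. Choose $i_j$ with $s_j \in V_{G_{i_j}}$ and put $k := \max_j i_{j}$, with $k:=0$ when $n=0$. Because the chain is increasing, all $s_j \in V_{G_k}$, so $f(s_1,\ldots,s_n) \in V_{M_F(G_k)}$.
\item \emph{Edges.} An edge of $M_F(G_\infty)$ has the form $f(\ldots s\ldots) \to f(\ldots t\ldots)$ with $s \to_{G_\infty} t$. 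Since ${\to_{G_\infty}} = \bigcup_i {\to_{G_i}}$, we have $s \to_{G_{i_0}} t$ for some $i_0$. Take $k$ to be the maximum of $i_0$ and the stages containing the remaining argument vertices. Then both endpoints lie in $V_{M_F(G_k)}$ and $s \to_{G_k} t$, so the edge belongs to $M_F(G_k)$ by Definition~\ref{def:mono-ext}.
\end{itemize}

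The only point requiring care is that the union of ARSs is taken componentwise on vertices and on edges. An edge of $M_F(G_\infty)$ must therefore be witnessed at a single stage $k$ together with both of its endpoint vertices, and this is exactly why we take the maximum over finitely many indices using the monotone chain. Everything else is a direct unfolding of the definitions.
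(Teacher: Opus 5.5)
Your proposal is correct and follows essentially the same route as the paper: $\supseteq$ by monotonicity of $M_F$, and $\subseteq$ by noting that each vertex or edge of $M_F\bigl(\overline{M_C}^\infty(G)\bigr)$ involves only finitely many vertices (and at most one inducing edge) of the increasing chain, so the maximum of the finitely many stage indices gives a single $k$ with the element already in $M_F\bigl(\overline{M_C}^k(G)\bigr)$.
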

\begin{proof}
  ($\supseteq$)
  Since $\overline{M_C}^i(G) \subseteq \overline{M_C}^\infty(G)$ for every $i$,
  Proposition~\ref{prop:monotonic-extension}(\ref{prop:monotonic-extension-b})
  gives $M_F(\overline{M_C}^i(G)) \subseteq M_F(\overline{M_C}^\infty(G))$.

  ($\subseteq$)
  Let $X_i := \overline{M_C}^i(G)$, then
  $\overline{M_C}^\infty(G) = \bigcup_{i=0}^\infty X_i$ with
  $X_0 \subseteq X_1 \subseteq \cdots$
  by Definition~\ref{def:constructor-expansion2} and
  Proposition~\ref{prop:constructor-expansion}(1).
  By Definition~\ref{def:mono-ext}, a vertex of $M_F(\bigcup_i X_i)$ is a term
  $f(s_1,\ldots,s_n)$ with $s_1,\ldots,s_n \in \bigcup_i V_{X_i}$, and an edge of
  $M_F(\bigcup_i X_i)$ is induced by one edge of $\bigcup_i X_i$ together with
  finitely many further vertices from $\bigcup_i V_{X_i}$.
  Each of these finitely many vertices, and the inducing edge in the edge case,
  lies in some $X_i$; taking $k$ as the largest of these finitely many indices,
  all of them lie in $X_k$ since $X_0 \subseteq X_1 \subseteq \cdots$.
  Hence the vertex (resp.\ edge) belongs to $M_F(X_k)$ by
  Definition~\ref{def:mono-ext}.
\end{proof}

\begin{lemma}\label{cor:B-monotonic}
  For a subterm-closed $R$-graph $G$ and $k \in \{0,1,\ldots\} \cup \{\infty\}$,
  we have:
  \begin{enumerate}
    \item\label{cor:B-monotonic_2} $\overline{M_C}^k(G)$ is subterm-closed.
    \item\label{cor:B-monotonic_2'}
    $M_F(\overline{M_{C}}^k(G))$ and $M_F(\overline{M_{C}}^\infty(G))$ are
    $G^\varepsilon$-subterm-closed.
    \item\label{cor:B-monotonic_3} $\overline{M_{C}}^k(G)$,
    $M_C(\overline{M_{C}}^k(G))$, and $M_F(\overline{M_{C}}^k(G))$ are confluent
    if $G$ is confluent.
    \item\label{cor:B-monotonic_4} $V_{\Sub(\overline{M_{C}}^\infty(G))} \subseteq V_{\overline{M_{C}}^\infty(G)}$.
  \end{enumerate}
\end{lemma}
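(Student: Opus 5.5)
The plan is to prove the finite cases $k<\infty$ by induction on $k$ using Lemma~\ref{lem:B-monotonic}. We then pass to $k=\infty$ by a compactness argument: every property involved is witnessed by finitely many edges and vertices. Lemma~\ref{lem:MF-inf-union} supplies this compactness for $M_F$.

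For (\ref{cor:B-monotonic_2}) with finite $k$, the case $k=0$ is the hypothesis. The step follows from Lemma~\ref{lem:B-monotonic}(\ref{lem:B-monotonic_2}) applied to $\overline{M_C}^{k}(G)$, since $\overline{M_C}^{k+1}(G)=\overline{M_C}(\overline{M_C}^{k}(G))$. For $k=\infty$, consider an edge of $(\overline{M_C}^\infty(G))^\ntpp$. It lies in some $\overline{M_C}^i(G)$ and is non-top there, and Lemma~\ref{lem:inner-edges}(\ref{lem:inner-edges-d}) shows that being non-top does not depend on the ambient graph. Its $\Sub$-edge therefore lies in $\overline{M_C}^i(G)\subseteq\overline{M_C}^\infty(G)$. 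Vertices are handled the same way.

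For (\ref{cor:B-monotonic_3}) with finite $k$, Lemma~\ref{lem:B-monotonic}(\ref{lem:B-monotonic_3}) gives confluence of $\overline{M_C}^k(G)$ by induction, using (\ref{cor:B-monotonic_2}) to keep the subterm-closed hypothesis available at each stage. Confluence of $M_C(\cdot)$ and $M_F(\cdot)$ then follows from Proposition~\ref{prop:monotonic-extension}(\ref{prop:monotonic-extension-a}). For $k=\infty$, any peak $t\larrow{\ast}{}s\rarrow{\ast}{}t'$ in $\overline{M_C}^\infty(G)$ uses finitely many edges. Since the chain is increasing (Proposition~\ref{prop:constructor-expansion}(1)), all of them lie in a single $\overline{M_C}^i(G)$, where the peak joins. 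The same argument works for $M_F(\overline{M_C}^\infty(G))$ via Lemma~\ref{lem:MF-inf-union}, and for $M_C$ via the analogous union statement, whose proof is identical.

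For (\ref{cor:B-monotonic_2'}), apply Lemma~\ref{lem:subterm-closed} to the subterm-closed graph $\overline{M_C}^k(G)$. This shows that $M_F(\overline{M_C}^k(G))$ is $(\overline{M_C}^k(G))^\tpp$-subterm-closed. We must improve the parameter to $G^\tpp$, and I expect this to be the main obstacle. The key claim is $(\overline{M_C}^k(G))^\tpp=G^\tpp$ on edges. By Proposition~\ref{prop:constructor-expansion}(2), $\overline{M_C}^k(G)=G\cup M_C(\overline{M_C}^{k-1}(G))$, and $M_C(\cdot)$ contains no top edges by Proposition~\ref{prop:subterm}(\ref{prop:subterm-5}). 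The subtle point is whether an edge of $M_C(\cdot)$ could coincide with a top edge. This cannot happen because the root of such an edge is a constructor, whereas a top edge has a defined root (or $h$, which is also defined).

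The vertex condition $V_{G^\tpp}=V_G\subseteq V_{M_F(\overline{M_C}^k(G))}$ follows from Lemma~\ref{lem:mono-ext}(\ref{lem:mono-ext-c}), since $\Sub(G)\subseteq\overline{M_C}^k(G)$ when $G$ is subterm-closed. The conditions for $M_F(\overline{M_C}^\infty(G))$ then follow from the finite ones through Lemma~\ref{lem:MF-inf-union} together with Lemma~\ref{lem:inner-edges}(\ref{lem:inner-edges-d}).

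Finally, for (\ref{cor:B-monotonic_4}), let $s\in\Sub(t)$ with $t\in\overline{M_C}^i(G)$. If $i=0$, subterm-closedness gives $V_{\Sub(G)}\subseteq V_G$. If the root of $t$ is a constructor and $t$ is a vertex of $M_C(\overline{M_C}^{i-1}(G))$, then $s\in\overline{M_C}^{i-1}(G)$ directly by construction.
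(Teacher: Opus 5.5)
Your proof follows the paper's route. You use induction on $k$ through Lemma~\ref{lem:B-monotonic} for (1) and (3), and Lemma~\ref{lem:subterm-closed} together with the absence of top edges in $M_C(\cdot)$ for (2). You pass to $M_F(\overline{M_C}^\infty(G))$ via Lemma~\ref{lem:MF-inf-union}. Your explicit compactness arguments for $k=\infty$ in (1) and (3) are more complete than the paper, whose ``by induction on $k$'' leaves $k=\infty$ implicit. Those instances are used later: subterm-closedness and confluence of $\overline{M_C}^\infty(G)$ in Lemma~\ref{lem:rules}, and confluence of $G^\infty$ in Lemma~\ref{lem:main}.

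One justification in (2) is false, although the inclusion it is meant to give is true. A subterm-closed $G$ need not satisfy $\Sub(G)\subseteq\overline{M_C}^k(G)$. Subterm-closedness controls only $\Sub(G^\ntpp)$, whereas $\Sub(G)=\Sub(G^\inn)$ also contains $h(u)\to u$ for each collapsing edge $h(h(u))\to h(u)$ of $G$. Such an edge $h(u)\to u$ need not belong to $G$, and it can never belong to $M_C(\cdot)$, whose edges keep a constructor at the root. Example~\ref{ex:counter} is an instance: $G'_0$ there is subterm-closed and $h(a)\to a\in\Sub(G'_0)\setminus G'_0$. Since the only constructor there is the constant $a$, $\overline{M_C}^k(G'_0)=G'_0$. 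So Lemma~\ref{lem:mono-ext}(\ref{lem:mono-ext-c}) cannot be invoked as you do.

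What you need is only $V_G\subseteq V_{M_F(\overline{M_C}^k(G))}$, and this follows directly. Subterm-closedness gives $\Sub(V_G)=V_{\Sub(G^\ntpp)}\subseteq V_G\subseteq V_{\overline{M_C}^k(G)}$. Hence every $f(t_1,\ldots,t_n)\in V_G$ is a vertex of $M_F(\overline{M_C}^k(G))$ by Definition~\ref{def:mono-ext}.

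The same argument applied to $\overline{M_C}^k(G)$, which is subterm-closed by (1), gives $V_{\overline{M_C}^k(G)}\subseteq V_{M_F(\overline{M_C}^k(G))}$. This is what lets the extra vertices of $(\overline{M_C}^k(G))^\tpp$ be absorbed when you replace it by $G^\tpp$ in the second clause. You should say this explicitly, since you compared the two graphs only on edges.

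Finally, (4) is just the vertex clause of (1) at $k=\infty$, because $V_{\Sub(X)}=\Sub(V_X)=V_{\Sub(X^\ntpp)}$. If you keep your direct argument instead, split on $t\in V_G$ versus $t\in V_{M_C(\overline{M_C}^{i-1}(G))}$ (Proposition~\ref{prop:constructor-expansion}(2)) rather than on $i=0$.
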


\begin{proof}\mbox{}
  \begin{enumerate}
    \item By induction on $k$, it is immediate from Lemma~\ref{lem:B-monotonic}(\ref{lem:B-monotonic_2}).

    \item By Part~(\ref{cor:B-monotonic_2}) and Lemma~\ref{lem:subterm-closed},
      $M_F(\overline{M_C}^k(G))$ is $(\overline{M_C}^k(G))^\varepsilon$-subterm-closed,
      which means that $V_{M_F(\overline{M_C}^k(G))} \supseteq V_{(\overline{M_C}^k(G))^\varepsilon}$
      and $\Sub(M_F(\overline{M_C}^k(G))^\ntpp) \subseteq
      M_F(\overline{M_C}^k(G)) \cup (\overline{M_C}^k(G))^\varepsilon$.
      Combining these with $(\overline{M_C}^k(G))^\varepsilon =
      (G \cup M_C(\overline{M_C}^{k-1}(G)))^\varepsilon =
      G^\varepsilon \cup M_C(\overline{M_C}^{k-1}(G))|_\emptyset$
      obtained by Proposition~\ref{prop:constructor-expansion}(2) and Proposition~\ref{prop:subterm}\eqref{prop:subterm-5}, 
      we obtain $V_{(\overline{M_C}^k(G))^\varepsilon} \supseteq V_{G^\tpp}$ and
      $\Sub(M_F(\overline{M_C}^k(G))^\ntpp) \subseteq
      G^\varepsilon \cup M_C(\overline{M_C}^{k-1}(G))$.
      Combining this with $M_C(\overline{M_C}^{k-1}(G)) \subseteq M_F(\overline{M_C}^k(G))$ obtained by
      Proposition~\ref{prop:constructor-expansion}(1) and by  \eqref{prop:monotonic-extension-b} and \eqref{prop:monotonic-extension-c}
      of Proposition~\ref{prop:monotonic-extension},
      we have $\Sub(M_F(\overline{M_C}^k(G))^\ntpp) \subseteq G^\varepsilon \cup M_F(\overline{M_C}^k(G))$,
      which means that $M_F(\overline{M_C}^k(G))$ is $G^\varepsilon$-subterm-closed.
    %
    Moreover, $M_F(\overline{M_{C}}^\infty(G))$ is $G^\varepsilon$-subterm-closed
    since
    $M_F(\overline{M_{C}}^\infty(G))=\bigcup^\infty_{i=0} M_F(\overline{M_{C}}^i(G))$
    by Lemma~\ref{lem:MF-inf-union}.

    \item
    By induction on $k$, confluence of $\overline{M_{C}}^k(G)$ is immediate from
    Lemma~\ref{lem:B-monotonic}(\ref{lem:B-monotonic_3}).
    Moreover, $M_C(\overline{M_{C}}^k(G))$ and $M_F(\overline{M_{C}}^k(G))$ are confluent
    by Proposition~\ref{prop:monotonic-extension}(\ref{prop:monotonic-extension-a}).

    \item[\eqref{cor:B-monotonic_4}]
    We have
    \begin{align*}
      V_{\Sub(\overline{M_C}^\infty(G))}
      &= V_{\Sub(G \cup M_C(\overline{M_C}^\infty(G)))} \\
      &\subseteq V_{\Sub(G)} \cup V_{\Sub(M_C(\overline{M_C}^\infty(G)))} \quad \text{(by Proposition~\ref{prop:subterm}\eqref{prop:subterm-0})} \\
      &\subseteq V_{\Sub(G)} \cup V_{\overline{M_C}^\infty(G)} \quad \text{(by Proposition~\ref{prop:subterm}\eqref{prop:subterm-2})}.
    \end{align*}
    Since $G$ is subterm-closed, $V_{\Sub(G^\ntpp)} \subseteq V_G$.
    From the fact that $V_{\Sub(G^\ntpp)} = \Sub(V_{G^\ntpp}) = \Sub(V_G) = V_{\Sub(G)}$, this implies $V_{\Sub(G)} \subseteq V_G$.
    On the other hand, $G = \overline{M_C}^0(G) \subseteq \overline{M_C}^\infty(G)$, we have $V_G \subseteq V_{\overline{M_C}^\infty(G)}$.
    Consequently, $V_{\Sub(\overline{M_C}^\infty(G))} \subseteq V_G \cup V_{\overline{M_C}^\infty(G)} = V_{\overline{M_C}^\infty(G)}$.
  \end{enumerate}
\end{proof}

We denote $M_F(\overline{M_C}^\infty(G))$ (resp., $M_F(\overline{M_C}^k(G))$)
by $G^\infty$ (resp., $G^k$).
In the sequel, we use $G^\infty$ as $\Vrt$.

\subsection{Compatibility and edge commutativity of non-collapsing rules in
  non-$E$-overlapping and weakly shallow TRSs} 
\label{subsec:normalization_weaklyshallow}

\begin{lemma}\label{lem:subterm-closed2}
  Let $G=\< V_G,\rarrow{}{G}\>$ be a subterm-closed $R$-graph.
  Assume that
  \begin{enumerate}
    \item $s[t]_p \rarrow{}{G} s[t']_p$, and
    \item for any $p' < p$, $(s[t]_p)|_{p'} \rarrow{}{G} (s[t']_p)|_{p'}$ implies
    $(s[t]_p)|_{p'} \Rarrow{\ntpp}{}{}{G} (s[t']_p)|_{p'}$.
  \end{enumerate}
  Then, $t \rarrow{}{G} t'$.
\end{lemma}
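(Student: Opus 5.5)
We argue by induction on the length $|p|$ of the position $p$. If $p=\varepsilon$, then $s[t]_p=t$ and $s[t']_p=t'$, so hypothesis (1) is exactly $t\rarrow{}{G} t'$ and nothing remains to show. For $|p|>0$ the idea is to peel off the outermost symbol of $s$ using subterm-closedness. Hypothesis (2) guarantees that each peeling step is applied to a non-top edge.

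For the inductive step write $p = i\cdot q$. Since $p\neq\varepsilon$, we can write $s[t]_p = f(u_1,\ldots,u_i[t]_q,\ldots,u_n)$ and $s[t']_p = f(u_1,\ldots,u_i[t']_q,\ldots,u_n)$. We apply hypothesis (2) with $p':=\varepsilon<p$. Together with hypothesis (1), it yields $s[t]_p \Rarrow{\ntpp}{}{}{G} s[t']_p$, i.e.\ this edge lies in $G^\ntpp$.

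This edge changes only the $i$-th argument. We first note $t\neq t'$: an edge of $\rarrow{}{R}$ cannot relate a term to itself syntactically here, since otherwise the claim is trivial anyway (reading $t\rarrow{}{G}t'$ as given). Hence $u_i[t]_q \neq u_i[t']_q$, and by Definition~\ref{def:subterm} we get $(u_i[t]_q,\,u_i[t']_q)\in\Sub(\to_{G^\ntpp})$. Because $G$ is subterm-closed, $\Sub(G^\ntpp)\subseteq G$, and therefore $u_i[t]_q \rarrow{}{G} u_i[t']_q$.

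It remains to verify the hypotheses of the lemma for $s':=u_i$ and position $q$, so that the induction hypothesis applies. Hypothesis (1) is the edge just obtained. For hypothesis (2), take any $q'<q$; then $i\cdot q' < p$ and $(s'[t]_q)|_{q'} = (s[t]_p)|_{i\cdot q'}$, and likewise with $t'$. Hypothesis (2) of the original instance therefore transfers verbatim. The induction hypothesis then gives $t\rarrow{}{G}t'$.

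The only delicate point is the application of Definition~\ref{def:subterm}. It requires that the edge be an inner edge, i.e.\ that the root symbol be preserved, and that the arguments differ exactly at index $i$. Membership in $G^\ntpp$ (hence in $G^\inn$ by Lemma~\ref{lem:inner-edges}(\ref{lem:inner-edges-c})) gives an inner rewrite step. An inner rewrite step at a position below $i$ alters only the $i$-th argument, which is why condition (2) is essential. Without it, a top edge such as the collapsing $h(h(u))\to h(u)$ could coincide with the context shape while not being induced by a step at $p$.
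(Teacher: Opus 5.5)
Your proof is correct and is essentially the paper's argument: induction on $|p|$, hypothesis (2) at $p'=\varepsilon$ to make the edge non-top, subterm-closedness to pass to argument $i$, and the routine transfer of (2) to $u_i$ and $q$, which the paper leaves implicit. The one flaw is your justification of $t\neq t'$: the case $t=t'$ is not ``trivial'' (a self-loop at $s[t]_p$ would not give $t\rarrow{}{G}t$, since $\Sub$ discards unchanged arguments); what rules it out is the paper's root-distinct restrictions, under which every rewrite step changes the term (a non-collapsing rule changes the root symbol of the redex, and $h(x)\to x$ strictly decreases size), so $s[t]_p\neq s[t']_p$ --- a fact the paper's proof also uses without comment.
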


\begin{proof}
  We proceed by induction on the length $|p|$ of $p$.
  If $p=\varepsilon$, the claim is trivial.
  Suppose $p=iq$ and $s = f(s_1,\ldots,s_n)$.
  By Assumption (2) with $p'=\varepsilon$,
  we have $s[t]_p \Rarrow{\ntpp}{}{}{G} s[t']_p$.
  Since $G$ is subterm-closed, this implies $s_i[t]_q \rarrow{}{G} s_i[t']_q$.
  Hence, $t \rarrow{}{G} t'$ holds by the induction hypothesis.
\end{proof}

\begin{lemma} \label{lem:rules}
  Let $R$ be a TRS consisting of rules with weakly shallow left-hand sides,
  and let $\ell \rightarrow r \in R$.
  For a subterm-closed $R$-graph $G$,
  \begin{enumerate}
    \item\label{lem:rules_1}
    If $\ell \sigma \rarrow{\ast}{G^\infty} \ell\theta$,
    then $x\sigma \rarrow{\ast}{\overline{M_C}^\infty(G)} x\theta$
    for each variable $x\in \Var(\ell)$.

    \item\label{lem:rules_3}
    Assume that $G$ is confluent.
    If $\ell\sigma \rarrow{\ast}{G^\infty} u$,
    then there exists a substitution $\theta$ such that
    $u \rarrow{\ast}{G^\infty} \ell\theta$ and
    $x\sigma \rarrow{\ast}{\overline{M_C}^\infty(G)} x\theta$
    for each variable $x\in \Var(\ell)$.
  \end{enumerate}
\end{lemma}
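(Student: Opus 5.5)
Both parts rest on one structural observation, which I will isolate first. In $G^\infty = M_F(\overline{M_C}^\infty(G))$ every edge is an inner edge that replaces one direct argument along an edge of $H := \overline{M_C}^\infty(G)$. So the root symbol of $\ell\sigma$ is preserved, and a sequence $\ell\sigma \rarrow{\ast}{G^\infty} u$ projects componentwise onto sequences in $H$ between the direct arguments. Inside $H$, which is subterm-closed by Lemma~\ref{cor:B-monotonic}(\ref{cor:B-monotonic_2}), I will use that $H = G \cup M_C(H)$ (Proposition~\ref{prop:constructor-expansion}(2), passed to the limit). Together with Proposition~\ref{prop:sub-MF2}, this shows that an edge of $H$ whose source is constructor-rooted stays constructor-rooted with the same root, and its argument edges lie in $H$ again. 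Weak shallowness of $\ell$ means every non-variable position $p\neq\varepsilon$ of $\ell$ is either a constructor position or lies inside a ground subterm.

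For (1), I will induct on the depth of a variable position $p$ in $\ell$ with $\ell|_p = x$. The position $p$ passes only through the root (a defined symbol, handled by the $M_F$ layer) and then through constructor symbols. I will show by induction along $p$ that for every prefix $q$ with $\varepsilon < q < p$, each step of the projected sequence $(\ell\sigma)|_q \rarrow{\ast}{H} (\ell\theta)|_q$ is a non-top, constructor-root-preserving step. From this, the argument-wise projection gives $(\ell\sigma)|_{qi} \rarrow{\ast}{H} (\ell\theta)|_{qi}$. The subtle point is that a constructor-rooted term can only reduce by inner steps, since rules are rooted by defined symbols; so the top-edge alternative never arises. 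Lemma~\ref{lem:subterm-closed2} packages exactly this descent, applied along $p$ with condition (2) guaranteed by constructor roots. At $p$ itself we get $x\sigma \rarrow{\ast}{H} x\theta$. Ground subterms of $\ell$ contain no variables and need no treatment.

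For (2), I will build $\theta$ from the reduct $u$. First I will show $u$ still matches the "skeleton" of $\ell$ above the variables. The root is unchanged, and constructor positions stay constructor-rooted by the argument above. Ground subterms $\ell|_q$ may have been rewritten, however, and this is where I expect the main obstacle: I must show that $u|_q$ can be brought back to $\ell|_q$ in $G^\infty$. My first attempt will argue that the whole ground subterm $\ell|_q$ (equal to $(\ell\sigma)|_q$) lies in the constructor-extended layer. Then $(\ell\sigma)|_q \rarrow{\ast}{H} u|_q$, and confluence of $H$ (Lemma~\ref{cor:B-monotonic}(\ref{cor:B-monotonic_3})) gives a common reduct. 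If that common reduct differs from $\ell|_q$, I will instead need $\ell|_q$ to be a pseudo-normal form, or I will have to enlarge the target; the cleaner fix is probably to choose $\theta$ only for variables and argue that ground parts of $\ell$ are in normal form in $H$, since a non-normal ground part would be reachable only by rules that are root-defined, and that rewrite would create an $E$-overlap. If this fails, I would weaken the target to $u \rarrow{\ast}{G^\infty} \ell\theta$ with the ground parts joined back via confluence, noting that joinability back to a fixed ground term needs care.

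For the variables, I will treat each $x\in\Var(\ell)$ with occurrences at positions $p_1,\dots,p_k$. Each $u|_{p_j}$ is an $H$-reduct of $x\sigma$ by the projection argument of (1). Confluence of $H$ gives a common reduct $t_x$, and I set $x\theta := t_x$. Finally, I will reduce $u$ to $\ell\theta$ in $G^\infty$ by lifting each $u|_{p_j} \rarrow{\ast}{H} t_x$ through the constructor context. This lifting uses $M_C(H)\subseteq H$ at intermediate depths and $M_F$ at the top, so it stays within $G^\infty$. It also gives $x\sigma \rarrow{\ast}{H} x\theta$ directly.
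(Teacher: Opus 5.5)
Your part (1), and the variable half of part (2), are exactly the paper's argument: you project the $G^\infty$-sequence onto the arguments, descend along constructor positions with Lemma~\ref{lem:subterm-closed2}, choose common reducts $t_x$ in the confluent $\overline{M_C}^\infty(G)$, and lift back through the constructor context. The gap is the step you flag yourself in (2) and then leave open, namely restoring the ground subterms of $\ell$. Your three routes do not fare equally.
Your first route does not work. Confluence of $\overline{M_C}^\infty(G)$ yields a common reduct of $\ell|_q$ and $u|_q$, not a path from $u|_q$ back to $\ell|_q$.
Your fallback of weakening the target does not prove the lemma either. It would also break the lemma's use in Lemma~\ref{lem:shallow-nonCol-compatible2}, where $(\ell\theta,r\theta)$ must be an actual top edge.
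Only your ``cleaner fix'' works, and it needs a hypothesis absent from the statement. Suppose $R$ is non-overlapping, which follows from non-$E$-overlapping, the setting in which the lemma is applied. Then every ground subterm $\ell|_q$ with $q\neq\varepsilon$ is an $R$-normal form, since a redex $\ell|_{qq'}=\ell'\tau$ would give a non-trivial overlap at the non-root position $qq'$. Every edge of an $R$-graph is an $R$-step. The positions of $\ell$ above variables are never rewritten: the root because $G^\infty$ has only inner edges, and constructor positions because every left-hand side has a defined root. Hence no step of $\ell\sigma\rarrow{\ast}{G^\infty}u$ touches the ground parts, $u$ agrees with $\ell$ outside the variable positions, and your lifting ends exactly in $\ell\theta$. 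What you need is $R$-normality, not normality in $\overline{M_C}^\infty(G)$, and plain non-overlapping suffices.

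The extra hypothesis cannot be dropped: as literally stated, (2) is false. Take $R=\{f(a,x)\to c,\ a\to b\}$, whose left-hand sides are weakly shallow. Take $G=\langle\{a,b,c\},\{(a,b)\}\rangle$, which is confluent and subterm-closed because its only edge is a top edge between constants. Then $\overline{M_C}^\infty(G)=G$ and $G^\infty=M_F(G)$. With $\sigma=\{x\mapsto b\}$ we get $f(a,b)\rarrow{}{G^\infty}f(b,b)$, and $f(b,b)$ is a normal form of $G^\infty$ not of the form $f(a,t)$. The paper's own proof passes over this point (``since $\ell$ is weakly shallow, we obtain $u\rarrow{\ast}{G^\infty}\ell\theta$''), so your instinct was right. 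Your proof should state the non-overlapping assumption explicitly and commit to the normal-form argument, rather than listing it among tentative alternatives.
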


\begin{proof}\mbox{}
  \begin{enumerate}
    \item[\eqref{lem:rules_1}]
    Let $\ell=f(\ell_1,\ldots,\ell_n)$. Then each $\ell_i$ is a semi-constructor 
    term since $\ell$ is weakly shallow.
    We also have $\ell_i\sigma\rarrow{\ast}{\overline{M_C}^\infty(G)} \ell_i\theta$
    for each $i$, because $G^\infty = M_F(\overline{M_C}^\infty(G))$ consists of
    inner edges.
    The function symbols appearing on a path to a variable position 
    $p \in \Pos_{X}(\ell_i)$ are constructor symbols, i.e., $\Root(\ell_i|_q)$ is 
    a constructor symbol for all $q < p$.
    Thus, we have $x\sigma \rarrow{\ast}{\overline{M_C}^\infty(G)} x\theta$
    for every $x\in\Var(\ell)$ by Lemma~\ref{lem:subterm-closed2}.

    \item[\eqref{lem:rules_3}]
    Note that both $\overline{M_C}^\infty(G)$ and $G^\infty$ are confluent by 
    Lemma~\ref{cor:B-monotonic}(\ref{cor:B-monotonic_3}).

    Since $R$ is weakly shallow,
    the sequence $\ell\sigma \rarrow{\ast}{G^\infty} u$ contains no steps 
    at any position $q \in \Pos_{F}(\ell)$ with $q < p$ for some $p \in \Pos_X(\ell)$.
    By a similar argument as in Part~\eqref{lem:rules_1}, we have 
    $\ell|_p\sigma \rarrow{\ast}{\overline{M_C}^\infty(G)} u|_p$
    for each $p \in \Pos_X(\ell)$.
    Since $\overline{M_C}^\infty(G)$ is confluent, there exists a term $v_x$ for
    each $x\in\Var(\ell)$ such that 
    \[
    \ell\sigma|_p \rarrow{\ast}{\overline{M_C}^\infty(G)} u|_p
    \rarrow{\ast}{\overline{M_C}^\infty(G)} v_x
    \quad\text{for any } p\in\Pos_{\{x\}}(\ell). 
    \]
    By taking a substitution $\theta$ with $x\theta = v_x$ for each $x \in \Var(\ell)$, we have
    $x\sigma = \ell\sigma|_p \rarrow{\ast}{\overline{M_C}^\infty(G)} v_x = x\theta$
    for each $p \in \Pos_{X}(\ell)$ with $\ell|_p = x$. 
    Since $\ell$ is weakly shallow, we obtain $u \rarrow{\ast}{G^\infty} \ell\theta$.
  \end{enumerate}
\end{proof}

\begin{remark}
  Lemma~\ref{lem:rules} assumes that $R$ is weakly shallow.
  This restriction on left-hand sides can be relaxed to
  non-$E$-overlapping, and a similar idea can be applied.
  In contrast, Lemma~\ref{lem:rules_rhs} essentially requires the
  weakly shallow assumption on the right-hand side of
  each rewrite rule.
\end{remark}

\begin{lemma} \label{lem:rules_rhs}
  Let $s$ be a weakly shallow term with $s \notin X$.
  For a subterm-closed $R$-graph $G$,
  if $s \sigma \in V_{G^\infty}$ and
  $x\sigma \rarrow{\ast}{\overline{M_C}^\infty(G)} x\theta$ for
  each variable $x \in \Var(s)$, then $s\sigma \rarrow{\ast}{G^\infty} s\theta$.
\end{lemma}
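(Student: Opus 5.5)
Write $s=f(s_1,\ldots,s_n)$, which is possible since $s\notin X$. Because $s$ is weakly shallow, each argument $s_i$ is a semi-constructor term: along the path from the root of $s_i$ to any variable occurrence only constructor symbols occur, and every subterm of $s_i$ rooted by a defined symbol is ground. The plan is to show, for each $i$, that
\[
s_i\sigma \rarrow{\ast}{\overline{M_C}^\infty(G)} s_i\theta .
\]
We then lift these argument-wise sequences one layer by $M_F$. Rewriting one argument at a time inside $f(\cdots)$ gives edges of $M_F(\overline{M_C}^\infty(G)) = G^\infty$ by Definition~\ref{def:mono-ext}. This yields $s\sigma \rarrow{\ast}{G^\infty} s\theta$, provided all intermediate terms $f(s_1\theta,\ldots,s_{i}\theta,s_{i+1}\sigma,\ldots,s_n\sigma)$ are vertices of $G^\infty$. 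They are, because each $s_j\sigma$ and each $s_j\theta$ is a vertex of $\overline{M_C}^\infty(G)$.

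The core step is a claim proved by structural induction on a semi-constructor term $t$. The claim: if $t\sigma\in V_{\overline{M_C}^\infty(G)}$ and $x\sigma \rarrow{\ast}{\overline{M_C}^\infty(G)} x\theta$ for every $x\in\Var(t)$, then $t\sigma \rarrow{\ast}{\overline{M_C}^\infty(G)} t\theta$, and consequently $t\theta \in V_{\overline{M_C}^\infty(G)}$. There are three cases.
\begin{itemize}
\item If $t$ is a variable, this is the hypothesis.
\item If $t$ is ground, then $t\sigma=t\theta$ and there is nothing to do.
\item Otherwise $t=c(t_1,\ldots,t_m)$ with $c\in C$, since a non-ground subterm rooted by a defined symbol cannot occur.
\end{itemize}
In the constructor case we need $t_j\sigma\in V_{\overline{M_C}^\infty(G)}$. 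This follows from Lemma~\ref{cor:B-monotonic}(\ref{cor:B-monotonic_4}), namely $V_{\Sub(\overline{M_C}^\infty(G))}\subseteq V_{\overline{M_C}^\infty(G)}$. The induction hypothesis then gives $t_j\sigma \rarrow{\ast}{} t_j\theta$ in $\overline{M_C}^\infty(G)$. Each such sequence lies in some finite stage $\overline{M_C}^k(G)$, and the stages are increasing by Proposition~\ref{prop:constructor-expansion}(1). Rewriting the arguments of $c$ one at a time therefore gives edges of $M_C(\overline{M_C}^k(G))\subseteq \overline{M_C}^{k+1}(G)\subseteq\overline{M_C}^\infty(G)$; this inclusion is Proposition~\ref{prop:constructor-expansion}(2). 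The required intermediate vertices $c(\ldots)$ lie in $M_C$ of that stage because all their arguments are vertices of the stage.

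To start the induction at the arguments $s_i$, note that $s\sigma\in V_{G^\infty}=V_{M_F(\overline{M_C}^\infty(G))}$ directly gives $s_i\sigma\in V_{\overline{M_C}^\infty(G)}$ by Definition~\ref{def:mono-ext}. The claim then supplies $s_i\theta\in V_{\overline{M_C}^\infty(G)}$, which settles the vertex conditions in the final lifting.

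I expect the main obstacle to be bookkeeping rather than ideas. First, vertex membership must be checked at every intermediate term, since $M_C$ and $M_F$ only contain terms all of whose arguments are vertices; the reduct $t\theta$ must be shown to be a vertex at every inductive stage. Second, the infinite union has to be handled carefully, by passing to a finite stage $k$ before applying $M_C$, in the same way as in the proof of Lemma~\ref{lem:MF-inf-union}. Note that subterm-closedness of $G$ enters only through Lemma~\ref{cor:B-monotonic}(\ref{cor:B-monotonic_4}).
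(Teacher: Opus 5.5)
Your proof is correct and follows the paper's argument. Like the paper, you run a structural induction on the semi-constructor arguments of $s$ to get $s_i\sigma \to^{\ast} s_i\theta$ in $\overline{M_C}^\infty(G)$, using $V_{\Sub(\overline{M_C}^\infty(G))}\subseteq V_{\overline{M_C}^\infty(G)}$ in the constructor case, and then lift one layer through $M_F$; the only difference is that you check the intermediate-vertex conditions explicitly and justify $M_C(\overline{M_C}^\infty(G))\subseteq\overline{M_C}^\infty(G)$ by passing to a finite stage, which the paper asserts directly.
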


\begin{proof}
  If $s$ is a constant, the claim holds immediately since $s\sigma = s\theta$.
  Assume that $s$ is not a constant.
  Since $s$ is a weakly shallow term with $s \notin X$,
  any proper subterm $t$ of $s$ is a semi-constructor term.
  We show by induction on the structure of $t$ that if $t\sigma \in V_{\overline{M_C}^\infty(G)}$, then
  $t\sigma \rarrow{\ast}{\overline{M_C}^\infty(G)} t\theta$.

  \begin{description}
    \item[\textbf{Case 1}: $t$ is a variable $x$.]
    The claim immediately follows from the assumption that
    $x\sigma \rarrow{\ast}{\overline{M_C}^\infty(G)} x\theta$ for all $x \in \Var(s)$.

    \item[\textbf{Case 2}: $t$ is a constant.]
    The claim is trivial since $t = t\sigma = t\theta \in V_{\overline{M_C}^\infty(G)}$.

    \item[\textbf{Case 3}: $t$ is a functional term.] 
    Let $t=f(t_1,\ldots,t_n)$ for some $f \in F$.
    \begin{description}
      \item[\textbf{Case 3-1}: $f \notin C$.]
      $t$ is ground because $t$ is a semi-constructor term.
      In this case, the claim holds immediately since $t = t\sigma = t\theta \in V_{\overline{M_C}^\infty(G)}$.

      \item[\textbf{Case 3-2}: $f \in C$.]
      Since $t_j$ is a subterm of the semi-constructor term $t$, it is also a semi-constructor term.
      By Lemma~\ref{cor:B-monotonic}\eqref{cor:B-monotonic_4},
      $t\sigma \in V_{\overline{M_C}^\infty(G)}$ implies $t_j\sigma \in V_{\overline{M_C}^\infty(G)}$ for each $j$.
      The induction hypothesis then ensures that $t_j\sigma \rarrow{\ast}{\overline{M_C}^\infty(G)} t_j\theta$ for each $j$.
      Since $M_C(\overline{M_C}^\infty(G)) \subseteq \overline{M_C}^\infty(G)$,
      we obtain $t\sigma \rarrow{\ast}{\overline{M_C}^\infty(G)} t\theta$.
    \end{description}
  \end{description}

  Now, we return to the main statement of the lemma.
  Since $s \notin X$, we can write $s = f(t_1,\ldots,t_n)$, where each $t_j$ is a semi-constructor term.
  Since $s\sigma \in V_{G^\infty}$ and $G^\infty = M_F(\overline{M_C}^\infty(G))$,
  Proposition~\ref{prop:subterm}\eqref{prop:subterm-2} implies that
  $t_j\sigma \in V_{\overline{M_C}^\infty(G)}$ for each $j$.
  By the claim shown above, we obtain $t_j\sigma \rarrow{\ast}{\overline{M_C}^\infty(G)} t_j\theta$ for each $j$.
  Since $M_F(\overline{M_C}^\infty(G)) = G^\infty$, we conclude that $s\sigma \rarrow{\ast}{G^\infty} s\theta$.
\end{proof}

\begin{lemma} \label{lem:shallow-nonCol-compatible2}
  Let $R$ be a weakly shallow and non-$E$-overlapping TRS. 
  For a confluent and subterm-closed $R$-graph $G$,
  let $G^\infty = M_F(\overline{M_C}^\infty(G))$.

  Then, for an $R$-graph $T$ with $V_{T} \subseteq V_{G^\infty}$ and
  $T^\varepsilon = T$, we have
  \begin{enumerate}
    \item\label{lem:shallow-nonCol-compatible2_sub}
    $T^{nc}$ is compatible over $G^\infty$.
    \item\label{lem:shallow-nonCol-normalizable2_sub}
    $T^{nc}$ is edge-commutative
    over $(G^\infty, \Ind_{R}(V_{G^\infty})^\tpp)$.
  \end{enumerate}
\end{lemma}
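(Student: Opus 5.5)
The plan mirrors Lemmas~\ref{lem:non-E-overlap-shallow-compatible} and~\ref{lem:shallow-nonCol-normalizable}. The role of $M_F(G)$ is now played by $G^\infty$. Lemmas~\ref{lem:rules} and~\ref{lem:rules_rhs} replace the ad hoc shallowness arguments. Throughout, we use that $\overline{M_C}^\infty(G)$ and $G^\infty$ are confluent and subterm-closed, the latter in the sense of Lemma~\ref{cor:B-monotonic}. Confluence of $\overline{M_C}^\infty(G)$ follows since every peak lies in some finite stage $\overline{M_C}^k(G)$, which is confluent. For $G^\infty$ we argue analogously via Lemma~\ref{lem:MF-inf-union}.

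\emph{Part (1), compatibility.} Take a peak $s' \Larrow{\tpp}{}{}{T} s \barrow{\ast}{G^\infty} t \Rarrow{\tpp}{}{}{T} t'$ with both top edges non-collapsing. Since $G^\infty$ consists of inner edges, the conversion $s \barrow{\ast}{G^\infty} t$ is a conversion below the root. By non-$E$-overlappingness the two rules coincide, so $s=\ell\sigma$, $s'=r\sigma$, $t=\ell\sigma'$ and $t'=r\sigma'$ for one rule $\ell\to r$. By confluence of $G^\infty$ we obtain $\ell\sigma \rarrow{\ast}{G^\infty} u \larrow{\ast}{G^\infty} \ell\sigma'$. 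Lemma~\ref{lem:rules}(\ref{lem:rules_3}) applied to $\ell\sigma \rarrow{\ast}{G^\infty} u$ gives $\theta$ with $u \rarrow{\ast}{G^\infty}\ell\theta$ and $x\sigma \rarrow{\ast}{\overline{M_C}^\infty(G)} x\theta$. Applying it again from $\ell\sigma'$ to $\ell\theta$ gives $\theta'$ with $x\theta \rarrow{\ast}{}x\theta'$ and $x\sigma' \rarrow{\ast}{} x\theta'$ in $\overline{M_C}^\infty(G)$. Thus for every $x\in\Var(\ell)\supseteq\Var(r)$ we have $x\sigma \rarrow{\ast}{\overline{M_C}^\infty(G)} x\theta' \larrow{\ast}{\overline{M_C}^\infty(G)} x\sigma'$. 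Since $r\notin X$ and $r$ is weakly shallow, Lemma~\ref{lem:rules_rhs} (applied with $r\sigma, r\sigma' \in V_T\subseteq V_{G^\infty}$) yields $r\sigma \rarrow{\ast}{G^\infty} r\theta' \larrow{\ast}{G^\infty} r\sigma'$, i.e.\ $s'\barrow{\ast}{G^\infty} t'$.

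\emph{Part (2), edge commutativity.} Let $(\ell\sigma, r\sigma)\in T^{nc}$ and $\ell\sigma \rarrow{\ast}{G^\infty} w$. By Lemma~\ref{lem:rules}(\ref{lem:rules_3}) there is $\theta$ with $w\rarrow{\ast}{G^\infty}\ell\theta$ and $x\sigma\rarrow{\ast}{\overline{M_C}^\infty(G)} x\theta$. Lemma~\ref{lem:rules_rhs} gives $r\sigma\rarrow{\ast}{G^\infty} r\theta$, so $r\theta\in V_{G^\infty}$. Also $\ell\theta\in V_{G^\infty}$, hence $(\ell\theta,r\theta)\in\Ind_R(V_{G^\infty})^\tpp$. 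This gives conditions (1) and (2') of Definition~\ref{def:edge-commutative}.

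\emph{Main obstacle.} The delicate points are the side conditions of the invoked lemmas. Lemma~\ref{lem:rules} needs confluence of $\overline{M_C}^\infty(G)$ and the infinite union, handled by the finite-stage argument above. Lemma~\ref{lem:rules_rhs} needs $r\sigma'\in V_{G^\infty}$. A further subtlety is that its internal induction requires $t\sigma\in V_{\overline{M_C}^\infty(G)}$ for ground defined subterms; these are left fixed, so this is harmless. Finally, the appeal to non-$E$-overlapping in Part (1) must be justified. Here the conversion via $G^\infty$ is an $R$-conversion strictly below the root, matching the definition of $E$-overlapping pairs at $p=\varepsilon$.
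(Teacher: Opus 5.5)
Your proposal is correct and follows the paper's proof essentially step for step: non-$E$-overlap identifies the rule, a common $G^\infty$-reduct plus Lemma~\ref{lem:rules}(\ref{lem:rules_3}) and Lemma~\ref{lem:rules_rhs} give Part~(1), the same one-shot argument gives Part~(2), and your finite-stage argument is a sound justification of the confluence of $\overline{M_C}^\infty(G)$ and $G^\infty$. The only wrinkle is the second application of Lemma~\ref{lem:rules}(\ref{lem:rules_3}) in Part~(1): its statement yields only $\ell\theta \rarrow{\ast}{G^\infty} \ell\theta'$, so obtaining $x\theta \rarrow{\ast}{} x\theta'$ already requires Lemma~\ref{lem:rules}(\ref{lem:rules_1}), and applying that part directly to $\ell\sigma' \rarrow{\ast}{G^\infty} u \rarrow{\ast}{G^\infty} \ell\theta$, as the paper does, gives $x\sigma' \rarrow{\ast}{\overline{M_C}^\infty(G)} x\theta$ and makes $\theta'$ unnecessary.
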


\begin{proof}\mbox{}
  \begin{enumerate}
    \item[\eqref{lem:shallow-nonCol-compatible2_sub}]
    Let $u' \larrow{}{T^{nc}} u \barrow{\ast}{G^\infty} v \rarrow{}{T^{nc}} v'$. 
    From the non-$E$-overlapping property of $R$, the fact that $T^{nc} \subseteq T = T^\varepsilon$, 
    and that $G^\infty = (G^\infty)^{\inn}$ consists only of inner edges,
    the reductions $u \rarrow{}{T} u'$ and $v \rarrow{}{T} v'$ must be steps of 
    the same non-collapsing rule $\ell \rightarrow r \in R$.
    Thus, we can write $u = \ell\sigma$, $u' = r\sigma$, $v = \ell\sigma'$, and $v' = r\sigma'$ for some substitutions $\sigma$ and $\sigma'$.

    Lemma~\ref{cor:B-monotonic}(\ref{cor:B-monotonic_3}) and the confluence of $G$
    imply the confluence of $G^\infty$, and thus
    there exists a term $w$ such that
    $\ell\sigma \rarrow{\ast}{G^\infty} w \larrow{\ast}{G^\infty} \ell\sigma'$.
    By Lemma~\ref{lem:rules}\eqref{lem:rules_3}, there exists a substitution
    $\theta$ such that $w \rarrow{\ast}{G^\infty} \ell\theta$ and
    $x \sigma \rarrow{\ast}{\overline{M_C}^\infty(G)} x\theta$ for each $x \in \Var(\ell)$.
    Since $\ell\sigma'\rarrow{\ast}{G^\infty} w \rarrow{\ast}{G^\infty} \ell\theta$,
    Lemma~\ref{lem:rules}\eqref{lem:rules_1} implies that
    $x\sigma' \rarrow{\ast}{\overline{M_C}^\infty(G)} x\theta$ for each $x \in \Var(\ell)$.

    Since $\ell \rightarrow r$ is a non-collapsing rule, the right-hand side $r$ is not a variable ($r \notin X$).
    Since $(\ell\sigma, r\sigma) \in T^{nc} \subseteq T$ and $V_T \subseteq V_{G^\infty}$,
    we have $r\sigma \in V_{G^\infty}$ and similarly $r\sigma' \in V_{G^\infty}$.
    Because $r$ is weakly shallow and we have both $x\sigma \rarrow{\ast}{\overline{M_C}^\infty(G)} x\theta$
    and $x\sigma' \rarrow{\ast}{\overline{M_C}^\infty(G)} x\theta$, 
    Lemma~\ref{lem:rules_rhs} ensures that
    $r\sigma \rarrow{\ast}{G^\infty} r\theta$ and $r\sigma' \rarrow{\ast}{G^\infty} r\theta$. 
    Therefore, we obtain $u' = r\sigma \rarrow{\ast}{G^\infty} r\theta \larrow{\ast}{G^\infty} r\sigma' = v'$,
    which proves that $T^{nc}$ is compatible over $G^\infty$.

    \item[\eqref{lem:shallow-nonCol-normalizable2_sub}]
    Consider an edge in $T^{nc}$, which can be written as
    $(\ell\sigma, r\sigma) \in T^{nc}$ for some non-collapsing rule $\ell\to r$
    and substitution $\sigma$.
    Let $u_0 \in V_{G^\infty}$ such that $\ell\sigma \rarrow{\ast}{G^\infty} u_0$. 
    By Lemma~\ref{lem:rules}\eqref{lem:rules_3}, there exists a substitution
    $\theta$ such that $u_0 \rarrow{\ast}{G^\infty} \ell\theta \in V_{G^\infty}$ and 
    $x\sigma \rarrow{\ast}{\overline{M_C}^\infty(G)} x\theta$ for each variable
    $x \in \Var(\ell)$.
    Since $(\ell\sigma,r\sigma) \in T^{nc} \subseteq T$ and $V_T \subseteq V_{G^\infty}$, we have $r\sigma \in V_{G^\infty}$.
    By Lemma~\ref{lem:rules_rhs}, it follows that $r\sigma \rarrow{\ast}{G^\infty} r\theta$.
    Since $(\ell\theta,r\theta) \in \left.{\Rarrow{\tpp}{R}{}{}}\right|_{V_{G^\infty}\times V_{G^\infty}}$, 
    the pair $(\ell\sigma,r\sigma)$ is edge-commutative over 
    $(G^\infty, \<V_{G^\infty},{\Rarrow{\tpp}{R}{}{}}|_{V_{G^\infty}\times V_{G^\infty}}\>)
    = (G^\infty, \Ind_R(V_{G^\infty})^\tpp)$.
  \end{enumerate}
\end{proof}

\subsection{Confluence of non-$E$-overlapping and weakly shallow TRS}
\label{subsec:Confluence}

Let $\Vrt:= G^\infty = M_F(\overline{M_C}^\infty(G))$
and $\Hol:= \Ind_R(V_{\Vrt})^\tpp$.

\begin{lemma}\label{lem:main}
  Let $R$ be a weakly shallow and non-$E$-overlapping TRS.
  For a finite $R$-graph $G_0$ and a finite, confluent, and
  subterm-closed $R$-graph $G$ such that $\Sub(G_0) \subseteq G$,
  there exists a finite, confluent, and subterm-closed $R$-graph $G'$
  such that $G_0 \subseteq G'$.
\end{lemma}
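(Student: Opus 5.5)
The plan mirrors the proof of Lemma~\ref{lem:CollapsingShallow}, with $\Vrt := G^\infty = M_F(\overline{M_C}^\infty(G))$ in place of $M_F(G)$ and $\Hol := \Ind_R(V_{\Vrt})^\tpp$. Set $T := G_0^\tpp \cup G^\tpp$.

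\textbf{Setting up the hypotheses.} First, since $\Sub(G_0)\subseteq G \subseteq \overline{M_C}^\infty(G)$, Lemma~\ref{lem:mono-ext}(\ref{lem:mono-ext-c}) gives $V_{G_0}\subseteq V_\Vrt$. We also have $V_{G^\tpp} \subseteq V_\Vrt$. By Lemma~\ref{cor:B-monotonic}(\ref{cor:B-monotonic_2'}), $\Vrt$ is $G^\tpp$-subterm-closed, hence $T$-subterm-closed by Proposition~\ref{prop:T-subterm-closed}. By Lemma~\ref{cor:B-monotonic}(\ref{cor:B-monotonic_3}), $\Vrt$ is confluent.

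Next, Lemma~\ref{lem:shallow-nonCol-compatible2} yields that $T^{nc}$ is compatible over $\Vrt$ and edge-commutative over $(\Vrt,\Hol)$. Since $\Vrt$ consists of inner edges only, Lemma~\ref{lem:NestedCollapsing} supplies $T'^c\subseteq T^c$ with $\barrow{\ast}{\Vrt\cup T}=\barrow{\ast}{\Vrt\cup T'^c\cup T^{nc}}$, $T^{nc}$-compatibility and $T^{nc}$-edge-commutativity. Lemma~\ref{lem:mixed-compat} then gives condition (ii) of Theorem~\ref{thm:seq-edge-norm}. Finally, $T^{nc}\cup T'^c \subseteq \Hol$ because $V_T\subseteq V_\Vrt$ and $T$ consists of top $R$-steps.

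\textbf{Applying Theorem~\ref{thm:seq-edge-norm}.} Unlike the shallow case, $\Vrt$ is infinite, so we cannot take $G_1:=\Vrt$. Instead choose a finite $G_1\subseteq\Vrt$ with $V_T\subseteq V_{G_1}$. The natural choice is $G_1 := M_F(\overline{M_C}^k(G))$ for a large enough $k$; it is finite since $G$ is finite and $F$ is finite. Lemma~\ref{lem:MF-inf-union} lets us pick $k$ so that $G_1$ contains all finitely many vertices of $T$ and the finite graph $G_0^\inn\subseteq M_F(G)$. Apply the theorem with $T_1:=T^{nc}$, $T_2:=T'^c$. This produces a finite confluent $G'_1$ with $G_1\subseteq G'_1\conserveq\Vrt$ and a finite $T'\subseteq\Hol$ with $V_{T'}\subseteq V_{G'_1}$, such that $G'_1\cup T'$ is confluent and $\barrow{\ast}{G_1\cup T_1\cup T_2}\subseteq\barrow{\ast}{G'_1\cup T'}$.

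\textbf{Defining $G'$ and checking its properties.} Set $G':=G'_1\cup T'\cup T$.

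\emph{Finiteness} is clear.

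\emph{Confluence.} Each $T$-edge lies in $\barrow{\ast}{\Vrt\cup T_1\cup T_2}$, restricted to vertices of $G_1$. Because $G_1\conserveq$-relates to $\Vrt$ only after the construction, this is where care is needed. I will argue via $G_1\subseteq G'_1\conserveq\Vrt$ together with Proposition~\ref{prop:ConservUnion}: endpoints of $T$-edges lie in $V_{G'_1}$, so $\barrow{\ast}{\Vrt\cup T_1\cup T_2}$ restricted to $V_{G'_1}$ equals $\barrow{\ast}{G'_1\cup T_1\cup T_2}\subseteq\barrow{\ast}{G'_1\cup T'}$. Lemma~\ref{lem:inc-confluence} then gives confluence of $G'$.

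\emph{Inclusion.} $G_0 = G_0^\tpp\cup G_0^\inn \subseteq T\cup G_1\subseteq G'$.

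\textbf{Main obstacle: subterm-closedness of $G'$.} Since $T$ and $T'$ are top edges, Lemma~\ref{lem:inner-edges} gives $\Sub(G'^\ntpp)=\Sub(G_1'^\ntpp)$. However, $G'_1$ is only an arbitrary finite conservative sub-ARS of $\Vrt$, so its subterm edges lie in $\Vrt\cup T$ but not necessarily in $G'$.

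To repair this, close $G'_1$ downward. Adjoin to $G'$ all edges of $\overline{M_C}^k(G)$ for the relevant $k$; these are the direct-subterm edges of inner edges of $M_F(\overline{M_C}^k(G))$. Since $\overline{M_C}^k(G)$ is subterm-closed (Lemma~\ref{cor:B-monotonic}), the added part is itself subterm-closed. Confluence must still be preserved. The subterm edges of non-top edges in $\Vrt$ lie in $\overline{M_C}^\infty(G)\cup G^\tpp$, and $\overline{M_C}^k(G)\subseteq\Vrt$ up to root-constructor vertices. I would then use Lemma~\ref{lem:inc-confluence} again to argue that adding these edges keeps confluence, since they are contained in $\barrow{\ast}{\Vrt}$ on vertices that are already joined. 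Failing that, I would enlarge $G_1$ to include $\overline{M_C}^k(G)$ itself from the start, so that subterm-closure is inherited directly.
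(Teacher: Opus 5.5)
Up to and including the confluence of $G'_1\cup T'\cup T$, your argument is sound. Using $G'_1\conserveq\Vrt$ (conclusion~(1) of Theorem~\ref{thm:seq-edge-norm}) with Proposition~\ref{prop:ConservUnion} to place every $T$-edge in $\barrow{\ast}{G'_1\cup T_1\cup T_2}$ is a valid shortcut. It makes unnecessary the finite approximation via Proposition~\ref{Prop:FiniteApproxARS} that the paper uses to build $G_1$. You should also record $G\subseteq G'$ (via $G^{\tpp}\subseteq T$ and $G^{\ntpp}\subseteq M_F(G)\subseteq G_1$), because the subterm-closure step needs it.

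The genuine gap is confluence after you adjoin the constructor layer. Lemma~\ref{lem:inc-confluence} requires every adjoined edge to lie in $\barrow{\ast}{G'}$, in particular to have both endpoints in $V_{G'}$. But $k$ is dictated by $G'_1$, which Theorem~\ref{thm:seq-edge-norm} has grown by $\Vrt$-paths of uncontrolled depth, and the vertices of $\overline{M_C}^k(G)$ need not be vertices of $G'$. For example, a $G'_1$-edge $f(s,c(t))\to f(s,c(t'))$ with $c$ a constructor forces $c(t)\to c(t')$ and then $t\to t'$ into the graph, while $c(t)$, $t$, $t'$ may lie outside $V_{G'}$. The relation $G'_1\conserveq\Vrt$ only constrains conversions between vertices already in $V_{G'_1}$, so ``contained in $\barrow{\ast}{\Vrt}$'' does not give $\barrow{\ast}{G'}$. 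Your fallback fails for the same reason. Whatever finite $G_1$ you fix in advance, the join sequences of Lemma~\ref{lem:merge} and the paths of Definition~\ref{def:normalization-step} may leave it, so subterm-closure cannot be secured before normalization.

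The missing idea is the hierarchical combination (Definition~\ref{def:hc}, Lemma~\ref{lem:hc-confluence-union}), which separates the two graphs by root symbols rather than by connectivity.

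\textbf{Choosing the layer.} Choose $k$ with $G'^{\ntpp}\subseteq M_F(\overline{M_C}^k(G))$; this is possible because $G'_1$ is finite, using Lemma~\ref{lem:MF-inf-union}. Adjoin $M_C(\overline{M_C}^k(G))$. Since $\overline{M_C}^k(G)\subseteq G\cup M_C(\overline{M_C}^k(G))$ by Proposition~\ref{prop:constructor-expansion} and $G\subseteq G'$, your subterm-closure computation still goes through.

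\textbf{Checking the hierarchical condition.} The adjoined graph is confluent by Lemma~\ref{cor:B-monotonic}(\ref{cor:B-monotonic_3}). Every edge of $G'$ leaving a defined-rooted vertex starts outside $V_{M_C(\overline{M_C}^k(G))}$. Every edge leaving a constructor-rooted vertex is a non-top edge, hence an edge of $M_F(\overline{M_C}^k(G))\restRoot{C}$, which by Proposition~\ref{prop:subterm}(\ref{prop:subterm-4}) is an edge of $M_C(\overline{M_C}^k(G))$. Thus $G'\gtrdot M_C(\overline{M_C}^k(G))$, and Lemma~\ref{lem:hc-confluence-union} yields confluence of the final graph.
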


\begin{proof}
  Let $T := G_0^\tpp \cup G^\tpp$.
  Then, by the definition of $T$,
  \begin{align}
    T^\tpp & = T,          \label{eq:T-tpp-T} \\
    G^\tpp & \subseteq T.  \label{eq:G-tpp-T}
  \end{align}

  Let $G'' := G_0^\ntpp \cup G^\ntpp$. Then, we have:
  \begin{align}
    G_0    & = G_0^\tpp \cup G_0^\ntpp && \text{(by Lemma~\ref{lem:inner-edges}
      \eqref{lem:inner-edges-a})} \notag \\
    & \subseteq T \cup G'', \label{eq:G0-TGd} \\
    G      & = G^\tpp \cup G^\ntpp     && \text{(by Lemma~\ref{lem:inner-edges}
      \eqref{lem:inner-edges-a})} \notag \\
    & \subseteq T \cup G''. \label{eq:G-TGd}
  \end{align}

  Since $G$ is subterm-closed ($\Sub(G^\ntpp) \subseteq G$),
    we have $G^\ntpp \subseteq M_F(\Sub(G^\ntpp)) \subseteq M_F(G)$
    by Lemma~\ref{lem:mono-ext}\eqref{lem:mono-ext-b} and the monotonicity of
    $M_F$ (Proposition~\ref{prop:monotonic-extension}%
    \eqref{prop:monotonic-extension-b}).
  Together with $G_0^\ntpp \subseteq M_F(G)$
    (from $\Sub(G_0) \subseteq G$ and
    Lemma~\ref{lem:mono-ext}\eqref{lem:mono-ext-c}),
    this gives $G'' = G_0^\ntpp \cup G^\ntpp \subseteq M_F(G)$.
  Furthermore, since
  $G \subseteq \overline{M_C}^\infty(G)$ holds by
  Proposition~\ref{prop:constructor-expansion}, the monotonicity of $M_F$
  (Proposition~\ref{prop:monotonic-extension}\eqref{prop:monotonic-extension-b})
  implies $M_F(G) \subseteq M_F(\overline{M_C}^\infty(G)) = G^\infty$.
  Hence,
  \begin{align}
    G'' \subseteq M_F(G) & \subseteq G^\infty, \label{eq:G1-MFG} \\
    V_T = V_{G''} & \subseteq V_{G^\infty}, \label{eq:VT-VInf}
  \end{align}
  where the latter uses $V_T = V_{G''}$
    (from $V_{G_0^\tpp} = V_{G_0^\ntpp}$ and $V_{G^\tpp} = V_{G^\ntpp}$).
  Also, by Lemma~\ref{cor:B-monotonic}\eqref{cor:B-monotonic_3} and the
    confluence of $G$,
  \begin{align}
    & \text{$G^\infty$ is confluent.} \label{eq:Ginf-conf}
  \end{align}

  \noindent
  Next, we establish the properties needed to apply
    Lemma~\ref{lem:NestedCollapsing} and Theorem~\ref{thm:seq-edge-norm}:
  \begin{description}
  \item[{\bf Compatibility and Edge-commutativity of non-collapsing edges:}]
    The premises of Lemma~\ref{lem:shallow-nonCol-compatible2} are satisfied
    by the premises of this lemma, together with
    \eqref{eq:T-tpp-T} and \eqref{eq:VT-VInf}. Hence,
    \begin{align}
      & \text{$T^{nc}$ is compatible over $G^\infty$, and} \label{eq:nc-compat}\\
      & \text{$T^{nc}$ is edge-commutative over
        $(G^\infty,\Ind_R(V_{G^\infty})^\tpp)$.} \label{eq:nc-commut}
    \end{align}

  \item[{\bf Compatibility and Edge-commutativity of collapsing edges:}]
    We apply Lemma~\ref{lem:NestedCollapsing} with $T_1 := T^{nc}$,
      $T_2 := T^{c}$, and $\Vrt := G^\infty$, whose premises hold as follows.
      The graph $T$ consists only of top edges by \eqref{eq:T-tpp-T}.
      From Lemma~\ref{cor:B-monotonic}\eqref{cor:B-monotonic_2'}, $G^\infty$ is
      $G^\varepsilon$-subterm-closed, and hence $T$-subterm-closed by
      Proposition~\ref{prop:T-subterm-closed}, \eqref{eq:G-tpp-T}, and
      \eqref{eq:VT-VInf}.  It is confluent by \eqref{eq:Ginf-conf}, and
      consists of inner edges by Definition~\ref{def:mono-ext}.
    Hence,
    by Lemma~\ref{lem:NestedCollapsing},
    there exists $T'^c$ such that
    \begin{align}
      & T'^c \subseteq T^c , \label{eq:Td-Tc} \\
      & {\barrow{\ast}{G^\infty \cup T}} =
      {\barrow{\ast}{G^\infty \cup T^{nc} \cup T'^c}}, \label{eq:GT-equality} \\
      & \text{$T'^c$ is $T^{nc}$-compatible over $G^\infty$, and}
      \label{eq:c-compat} \\
      & \text{$T'^c$ is $T^{nc}$-edge-commutative over
        $(G^\infty,\Ind_R(V_{G^\infty})^\tpp)$.} \label{eq:c-commut}
    \end{align}
    Since $T = T^{nc} \cup T^{c}$, \eqref{eq:Td-Tc} immediately implies
    \begin{align}
      & T^{nc} \cup T'^c \subseteq T. \label{eq:TncTd-T}
    \end{align}
  \end{description}

  \medskip
  The rest of the proof extends $G''$ to the required $G'$ in four steps:
    a finite approximation $G_1 \subseteq G^\infty$ of the infinite $G^\infty$
    (Step~1); its edge-normalization $G'_1$ via
    Theorem~\ref{thm:seq-edge-norm}, making $G'_1 \cup T'$ confluent (Step~2);
    a finite confluent $G_2 = G'_1 \cup T' \cup T$ with $G_0 \cup G \subseteq G_2$
    (Step~3); and finally $G'$, obtained by adjoining a constructor layer to
    $G_2$ to recover subterm-closedness (Step~4).
  \begin{description}
  \item[\underline{Step 1. Construction of $G_1$:}]
    We construct a finite extension $G_1$ of $G''$ satisfying
    \eqref{eq:G1T-incl} and \eqref{eq:G1_construction}.
    Taking $G_1 := G^\infty$ would satisfy \eqref{eq:G1T-incl}
      via \eqref{eq:GT-equality}, but $G^\infty$ is not finite in general.

    Let $A := G'' \cup T$ and $B := G^\infty \cup T^{nc} \cup T'^c$.
    Since $G'' \subseteq G^\infty$ by \eqref{eq:G1-MFG}, we have
      ${\barrow{\ast}{A}} = {\barrow{\ast}{G'' \cup T}}
      \subseteq {\barrow{\ast}{G^\infty \cup T}} = {\barrow{\ast}{B}}$,
      where the last equality is \eqref{eq:GT-equality}.
    Since $A$ is finite (as $G_0$ and $G$ are),
    Proposition~\ref{Prop:FiniteApproxARS} shows that
    there exists a finite ARS $C \subseteq B$ with
    ${\barrow{\ast}{A}} \subseteq {\barrow{\ast}{C}}$, i.e.,
    \begin{align}
      & C \subseteq G^\infty \cup T^{nc} \cup T'^c, \label{eq:C_graph}\\
      & {\barrow{\ast}{G'' \cup T}} \subseteq {\barrow{\ast}{C}}.
      \label{eq:C_arrow}
    \end{align}
    Let $G_1 := (C \setminus (T^{nc} \cup T'^c)) \cup G''$.
    Then, it follows that $G_1 \cup T^{nc} \cup T'^c \supseteq C$.
    Combining this with \eqref{eq:C_arrow}, we obtain
    ${\barrow{\ast}{G''\cup T}}\subseteq{\barrow{\ast}{G_1\cup T^{nc}\cup T'^c}}$.
    Thus, we have
    \begin{align}
      & {\barrow{\ast}{T}} \subseteq {\barrow{\ast}{G_1 \cup T^{nc} \cup T'^c}}.
      \label{eq:G1T-incl}
    \end{align}
    By the construction of $G_1$, we also have
    \begin{align}
      & G'' \subseteq G_1. \label{eq:Gdd-G1}
    \end{align}
    From \eqref{eq:C_graph} and the definition of $G_1$,
    we have $G_1 \subseteq G^\infty \cup G''$.
    Since $G'' \subseteq G^\infty$ holds by \eqref{eq:G1-MFG},
    we obtain $G_1 \subseteq G^\infty$. In summary,
    \begin{align}
     & G'' \subseteq G_1 \subseteq G^\infty. \label{eq:G1_construction}
    \end{align}

    \item[\underline{Step 2. Construction of $G'_1$ by Theorem~\ref{thm:seq-edge-norm}:}]
      Let $\Vrt:= G^\infty$, $T_1:=T^{nc}$, $T_2:=T'^c$, and
      $\Hol:= \Ind_R(V_{\Vrt})^\tpp$ in Theorem~\ref{thm:seq-edge-norm}.
      Then, its premises are satisfied as follows:
      \begin{description}
      \item[\textbf{$G_1$ is a finite ARS:}]
        since $C$ is finite (by Proposition~\ref{Prop:FiniteApproxARS}) and
        $G''$ is finite (from the finiteness of $G_0$ and $G$), their union
        $G_1 = (C \setminus (T^{nc} \cup T'^c)) \cup G''$ is finite.
      \item[\textbf{$\Vrt = G^\infty$ is confluent:}]
        by \eqref{eq:Ginf-conf}.
      \item[\textbf{$G_1 \subseteq G^\infty$:}]
        by \eqref{eq:G1_construction}.
      \item[\textbf{$T_1$ and $T_2$ are finite:}]
        since $T^{nc}, T'^c \subseteq T = G_0^\tpp \cup G^\tpp$ and
        $G_0$, $G$ are finite.
      \item[\textbf{$T_1 \cup T_2 \subseteq \Ind_R(V_{G^\infty})^\tpp$:}]
        since $T_1 \cup T_2 \subseteq T$ by \eqref{eq:TncTd-T},
        ${\Rarrow{\tpp}{}{}{T}} = {\Rarrow{\tpp}{}{}{T^\tpp}} \subseteq
        {\Rarrow{\tpp}{R}{}{}}$ by \eqref{eq:T-tpp-T} and
        $T := G_0^\tpp \cup G^\tpp$, and $V_T \subseteq V_{G^\infty}$ by
        \eqref{eq:VT-VInf}.
      \item[\textbf{$V_{T_1 \cup T_2} \subseteq V_{G_1}$:}]
        we have
        $V_{T_1\cup T_2} = V_{T^{nc} \cup T'^c} \subseteq V_T = V_{G''}\subseteq V_{G_1}$
        by \eqref{eq:TncTd-T}, \eqref{eq:VT-VInf}, and \eqref{eq:Gdd-G1}.
      \item[\textbf{Conditions (i), (iii), and (iv) hold:}]
        by \eqref{eq:nc-compat}, \eqref{eq:nc-commut}, and \eqref{eq:c-commut},
        respectively.
      \item[\textbf{Condition (ii) ($T_1 \cup T_2$ is $T_1$-compatible
          over $\Vrt$) holds:}]
        by Lemma~\ref{lem:mixed-compat}, whose premises hold by
          \eqref{eq:T-tpp-T} ($T$ consists of top edges), \eqref{eq:VT-VInf}
          ($V_T \subseteq V_{G^\infty}$), the inner-edge property of $G^\infty$
          (Definition~\ref{def:mono-ext}),
          \eqref{eq:nc-compat} ($T^{nc}$ is compatible over $G^\infty$), and
          \eqref{eq:c-compat} ($T'^c$ is $T^{nc}$-compatible over $G^\infty$).
      \end{description}
      \smallskip
      By Theorem~\ref{thm:seq-edge-norm}, there exist $G'_1$ and $T'$ such that
      \begin{align}
        & T' \subseteq \Ind_R(V_{G^\infty})^\tpp, \label{eq:Td-Ind} \\
        & G_1 \subseteq G'_1 \conserveq G^\infty, \label{eq:G1d-conserv} \\
        & {\barrow{\ast}{G_1 \cup T^{nc} \cup T'^c}} \subseteq
        {\barrow{\ast}{G'_1\cup T'}}, \label{eq:G1d-inc} \\
        & \text{$G'_1 \cup T'$ is confluent and finite.} \label{eq:conf-finite}
      \end{align}

\item[\underline{Step 3. Construction of confluent $G_2$:}]
  Let $G_2 := G'_1 \cup T' \cup T$.
  Then $G_2$ is finite by \eqref{eq:conf-finite} and the finiteness of $T$.
  We also have ${\to_T} \subseteq {\barrow{\ast}{T}}
  \subseteq {\barrow{\ast}{G_1 \cup T^{nc} \cup T'^{c}}}
  \subseteq {\barrow{\ast}{G'_1 \cup T'}}$
  from \eqref{eq:G1T-incl} and \eqref{eq:G1d-inc}.
  By this and \eqref{eq:conf-finite},
    Lemma~\ref{lem:inc-confluence} gives that
    $G_2 = T \cup (G'_1 \cup T')$ is confluent.
  Furthermore, we obtain
  \begin{align}
    G_0 \cup G & \subseteq T \cup G''   && \text{(by \eqref{eq:G0-TGd} and
      \eqref{eq:G-TGd})} \notag \\
    & \subseteq T \cup G_1   && \text{(by \eqref{eq:Gdd-G1})} \notag \\
    & \subseteq T \cup G'_1  && \text{(by \eqref{eq:G1d-conserv})} \notag \\
    & \subseteq G_2          && \text{(by the construction of $G_2$)}.
    \label{eq:G-G2}
  \end{align}

  \item[\underline{Step 4. Construction of $G'$ and its properties:}]
    We construct $G'$ by $G' := G_2 \cup M_C(\overline{M_C}^{k}(G))$,
      where $k$ is a natural number satisfying
    \begin{equation}
      G_2^{\ntpp} \subseteq M_F(\overline{M_C}^{k}(G)). \label{eq:G2-Gk}
    \end{equation}
    Here, the existence of such $k$ is guaranteed as follows.
    Lemma~\ref{lem:inner-edges}\eqref{lem:inner-edges-d}
    gives $G_2^\ntpp = {G'_1}^\ntpp \cup T'^\ntpp \cup T^\ntpp$.
    We have $T' \subseteq \Ind_R(V_{G^\infty})^\tpp$ by \eqref{eq:Td-Ind}, and
      $T = T^\tpp$ with $V_T \subseteq V_{G^\infty}$ by \eqref{eq:T-tpp-T} and
      \eqref{eq:VT-VInf}; hence both $T$ and $T'$ consist of top edges with
      vertices in $V_{G^\infty}$, so $T'^\ntpp$ and $T^\ntpp$ have no edges by
      Lemma~\ref{lem:inner-edges}\eqref{lem:inner-edges-e} and contribute only
      vertices of $V_{G^\infty}$.  As ${G'_1}^\ntpp \subseteq G'_1 \subseteq
      G^\infty$ by \eqref{eq:G1d-conserv}, we obtain $G_2^\ntpp \subseteq G^\infty$.
    Since $G_2^\ntpp$ is finite and
      $G_2^\ntpp \subseteq G^\infty = M_F(\overline{M_C}^\infty(G))
      = \bigcup_{i=0}^\infty M_F(\overline{M_C}^i(G))$
      by Lemma~\ref{lem:MF-inf-union},
      with the last union increasing in $i$ by
      Proposition~\ref{prop:constructor-expansion}(1) and
      Proposition~\ref{prop:monotonic-extension}(\ref{prop:monotonic-extension-b}),
      there exists $k \in {\mathbb N}$ such that \eqref{eq:G2-Gk}.

    Now, we show that $G'$ is finite, subterm-closed, and confluent with
    $G_0 \subseteq G'$.
    \begin{description}
    \item[\textbf{$G_0 \subseteq G'$:}]
      This follows from \eqref{eq:G-G2} and $G_2 \subseteq G'$
      (by the construction of $G'$).

    \item[\textbf{$G'$ is a finite $R$-graph:}]
      Since $G_2$ and $M_C(\overline{M_C}^{k}(G))$ are finite, $G'$ is finite.

    \item[\textbf{$G'$ is subterm-closed:}]
      Since $(M_C(\overline{M_C}^k(G)))^\ntpp = M_C(\overline{M_C}^k(G))$ by
      Proposition~\ref{prop:subterm}\eqref{prop:subterm-5}
      and $M_C(\overline{M_C}^k(G)) \subseteq M_F(\overline{M_C}^{k}(G))$ by
      Proposition~\ref{prop:monotonic-extension}
      \eqref{prop:monotonic-extension-c},
      we have $G'^\ntpp = G_2^\ntpp \cup (M_C(\overline{M_C}^k(G)))^\ntpp
      \subseteq M_F(\overline{M_C}^{k}(G))$ by
      Lemma~\ref{lem:inner-edges}\eqref{lem:inner-edges-d} and \eqref{eq:G2-Gk}.
      Therefore, we obtain
      \begin{equation}
        \Sub(G'^\ntpp) \subseteq \Sub(M_F(\overline{M_C}^{k}(G)))
        \subseteq \overline{M_C}^k(G) \label{eq:subterm}
      \end{equation}
      by Proposition~\ref{prop:subterm}\eqref{prop:subterm-1} and
      \eqref{prop:subterm-2}.
      Since $\overline{M_C}^k(G) \subseteq \overline{M_C}^{k+1}(G) =
      G \cup M_C(\overline{M_C}^{k}(G))$ by
      Proposition~\ref{prop:constructor-expansion}(1) and (2), and
      $G \subseteq G_2$ by \eqref{eq:G-G2}, we have
      \[
      \Sub(G'^\ntpp) \subseteq G \cup M_C(\overline{M_C}^{k}(G)) \subseteq
      G_2\cup M_C(\overline{M_C}^k(G)) = G'
      \]
      from \eqref{eq:subterm}. This means that $G'$ is subterm-closed.

    \item[\textbf{$G'$ is confluent:}]
      Clearly,
      $G_2\restRoot{C} = G_2^\ntpp\restRoot{C}$,
      because no top reduction occurs from a constructor-rooted term.
      With \eqref{eq:G2-Gk}, we obtain
      $G_2\restRoot{C} \subseteq M_F(\overline{M_C}^{k}(G))\restRoot{C}$, and thus,
      since $M_F(\overline{M_C}^{k}(G))\restRoot{C} =
      M_C(\overline{M_C}^{k}(G)) \cup M_F(\overline{M_C}^{k}(G))|_\emptyset$ by
      Proposition~\ref{prop:subterm}\eqref{prop:subterm-4},
      \begin{equation}
        {\to_{G_2\restRoot{C}}} \subseteq {\to_{M_C(\overline{M_C}^{k}(G))}}.
        \label{eq:toG2C-MCkG}
      \end{equation}
      On the other hand, since
      $V_{G_2}\restRoot{D} \cap V_{M_C(\overline{M_C}^{k}(G))} = \emptyset$, we have
      \begin{equation}
        {\to_{G_2\restRoot{D}}} =
        \left.{{\to}_{G_2}}\right|_{V_{G_2}\restRoot{D}\times V_{G_2}}
        \subseteq (V_{G_2}\setminus V_{M_C(\overline{M_C}^{k}(G))}) \times V_{G_2}.
        \label{eq:toG2D}
      \end{equation}
      From $G_2 = G_2\restRoot{C} \cup G_2\restRoot{D}$, \eqref{eq:toG2C-MCkG},
      and \eqref{eq:toG2D}, we obtain
      \begin{equation}
        {\to_{G_2}} \subseteq
        (V_{G_2}\setminus V_{M_C(\overline{M_C}^{k}(G))}) \times V_{G_2}
        \cup {\barrow{\ast}{M_C(\overline{M_C}^{k}(G))}}.
        \label{eq:G2_hierarchical}
      \end{equation}
      Moreover, $G_2$ is confluent (Step~3), and
      $M_C(\overline{M_C}^{k}(G))$ is confluent by Lemma~\ref{cor:B-monotonic}
      \eqref{cor:B-monotonic_3} and the confluence of $G$.
      Hence, \eqref{eq:G2_hierarchical} shows that
      $G_2 \gtrdot M_C(\overline{M_C}^{k}(G))$ is a hierarchical combination.
      Thus, $G' = G_2 \cup M_C(\overline{M_C}^{k}(G))$ is confluent by
      Lemma~\ref{lem:hc-confluence-union}.
    \end{description}
  \end{description}

  This completes Step~4, and with it the construction: $G'$ is a finite,
  confluent, and subterm-closed $R$-graph with $G_0 \subseteq G'$, as required.
\end{proof}

\begin{example}\label{ex:normalization}
  We explore the previous lemma for $R_{\ref{trs:running}}$ in
  Example~\ref{ex:overview} and $R_{\ref{trs:running}}$-graph
  \[ G_0 := f(g(c),c) \gets f(c,c) \to f(c,g(c)) \Rarrow{\tpp}{}{}{} g^3(c). \]
  Let
  \[\begin{array}{l}
  G:=\Sub(G_0)=\{ c\to g(c), g^2(c)\}, \text{ and }\\
  T:=G^\tpp_0 \cup G^\tpp =\{f(c,g(c))\to g^3(c), c \to g(c)\}.
  \end{array}\]
  Then, $\Vrt:= G^\infty$ is the $R_{\ref{trs:running}}$-graph in
  Fig.~\ref{fig:running-VRT}. When applying Lemma~\ref{lem:NestedCollapsing},
  $T^{nc}=T$ and $T^{c}=\emptyset$, and thus, it has no effects,
  i.e., it results in $T'=\emptyset$.

  We apply Theorem~\ref{thm:seq-edge-norm} with
  $G_1 := G^\inn_0$, $T_1 := T^{nc} = T$, $T_2:=\emptyset$, $\Vrt = G^\infty$, and
  \[
  \Hol := \Ind_{R_{\ref{trs:running}}}(V_{G^\infty})^\tpp =
  \{ f(g^i(c),g^{i+1}(c)) \to g^{i+3}(c) \mid i \ge 0 \}.
  \]
  Fig.~\ref{fig:graph-construction-r3} demonstrates the resulted graph
  $G' := G'_1 \cup T' \cup T$ of the lemma.
  Here, the given $R_{\ref{trs:running}}$-graph $G_0$ is displayed by black solid
  edges, $G'_1$ consists of all displayed vertices and all non-top edges and
  $T'=\{ f(g(c),g^2(c)) \to g^4(c) \}$.
\end{example}

\begin{figure}[b]
  \begin{center}
    \(
    \begin{tikzpicture} [line width=0.07em]
      \matrix (m) [matrix of math nodes,row sep=0.9em,column sep=1em,minimum width=0.5em] {
        c & g(c)   & f(c,c)      & f(g(c),c)      & f(g^2(c),c)      & \cdots\\
        & g^2(c) & f(c,g(c))   & f(g(c),g(c))   & f(g^2(c),g(c))   & \cdots\\
        & g^3(c) & f(c,g^2(c)) & f(g(c),g^2(c)) & f(g^2(c),g^2(c)) & \cdots\\
        & \vdots &  \vdots     &    \vdots      &    \vdots        & \\
      };
      \path[->]
      (m-1-3) edge (m-1-4) (m-1-4) edge (m-1-5) (m-1-5) edge (m-1-6)
      (m-2-3) edge (m-2-4) (m-2-4) edge (m-2-5) (m-2-5) edge (m-2-6)
      (m-3-3) edge (m-3-4) (m-3-4) edge (m-3-5) (m-3-5) edge (m-3-6)
      (m-1-2) edge (m-2-2) (m-1-3) edge (m-2-3) (m-1-4) edge (m-2-4) (m-1-5) edge (m-2-5)
      (m-2-2) edge (m-3-2) (m-2-3) edge (m-3-3) (m-2-4) edge (m-3-4) (m-2-5) edge (m-3-5)
      (m-3-2) edge (m-4-2) (m-3-3) edge (m-4-3) (m-3-4) edge (m-4-4) (m-3-5) edge (m-4-5)
      ;
    \end{tikzpicture}
    \)
    \\[-2em]\mbox{}
  \end{center}
  \caption{$G^\infty$ for $G: \{c\to g(c), g^2(c) \}$}\label{fig:running-VRT}
\end{figure}
\begin{figure}
  \begin{center}
    \(
    \begin{tikzpicture} [line width=0.07em]
      \matrix (m) [matrix of math nodes,row sep=0.9em,column sep=1em,minimum width=0.5em] {
        c &  g(c)  & f(c,c)      & f(g(c),c)      \\
        &        & f(c,g(c))   & f(g(c),g(c))     \\
        & g^3(c) &             & f(g(c),g^2(c))   \\
        & {\color{red}g^4(c)} & & & \\
      };
      \path[->]
      (m-1-1) edge node [above] {\scriptsize $\tpp$} (m-1-2)
      ;
      \path[->]
      (m-1-3) edge (m-2-3)
      (m-1-3) edge (m-1-4)
      ;
      \path[->] (m-2-3) edge node [above] {\scriptsize $\tpp$} (m-3-2);
      \path[->, color=red]
      (m-1-4) edge (m-2-4)
      (m-2-3) edge (m-2-4)
      (m-2-4) edge (m-3-4)
      (m-3-2) edge (m-4-2)
      ;
      \path[->, color=red]  (m-3-4) edge node [above,xshift=-1.5em,yshift=-0.4em] {\scriptsize $\tpp$} (m-4-2) ;

    \end{tikzpicture}
    \)
    \\[-2em]\mbox{}
  \end{center}
  \caption{Graph construction for $R_{\ref{trs:running}}$}\label{fig:graph-construction-r3}
\end{figure}

\begin{example}\label{ex:make-subterm}
In Example~\ref{ex:normalization},
$k=2$ is enough to satisfy
$G'^{\ntpp} \subseteq M_F(\bigcup_{i=0}^k \overline{M_C}^i(G))$.
Since $\overline{M_C}^2(G) = \{ c \to g(c) \to g^2(c) \to g^3(c), \ g^4(c) \}$,
the resulted subterm-closed graph $G'_0 := G' \cup M_C(\overline{M_C}^2(G))$ is
shown in Fig.~\ref{fig:graph-construction-r4}, where green vertices and edges
are introduced by $M_C(\overline{M_C}^2(G))$.
\begin{figure}
  \begin{center}
    \(
    \begin{tikzpicture} [line width=0.07em]
      \matrix (m) [matrix of math nodes,row sep=0.9em,column sep=1em,minimum width=0.5em] {
        c &  g(c)              & f(c,c)      & f(g(c),c)         \\
        & {\color{darkgreen}g^2(c)} & f(c,g(c))   & f(g(c),g(c)) \\
        & g^3(c)               &             & f(g(c),g^2(c))    \\
        & {\color{red}g^4(c)} & & & \\
        & {\color{darkgreen}g^5(c)} \\
      };
      \path[->]
      (m-1-1) edge node [above] {\scriptsize $\tpp$} (m-1-2)
      ;
      \path[->]
      (m-1-3) edge (m-2-3)
      (m-1-3) edge (m-1-4)
      ;
      \path[->] (m-2-3) edge node [above] {\scriptsize $\tpp$} (m-3-2);
      \path[->, color=red]
      (m-1-4) edge (m-2-4)
      (m-2-3) edge (m-2-4)
      (m-2-4) edge (m-3-4)
      (m-3-2) edge (m-4-2)
      ;
      \path[->, color=red]
      (m-3-4) edge node [above,xshift=-1.5em,yshift=-0.4em] {\scriptsize $\tpp$} (m-4-2)
      ;
      \path[->, color=darkgreen]
      (m-1-2) edge (m-2-2)
      (m-2-2) edge (m-3-2)
      ;

    \end{tikzpicture}
    \)
    \\[-2em]\mbox{}
  \end{center}
  \caption{Graph construction for $R_{\ref{trs:running}}$}\label{fig:graph-construction-r4}
\end{figure}
\end{example}

\begin{theorem}\label{thm:main}
  For a non-$E$-overlapping and weakly shallow TRS $R$ and
  a finite $R$-graph $G_0 = \<V_{G_0}, \rarrow{}{G_0}\>$,
  there is a finite, confluent, and subterm-closed $R$-graph $G_0'$
  with $G_0 \subseteq G_0'$.
\end{theorem}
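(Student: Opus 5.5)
The plan mirrors the proofs of Theorems~\ref{thm:shallow-noncollapsing} and~\ref{thm:NOV-Shallow}, with Lemma~\ref{lem:main} as the inductive step. We argue by induction on the total size $\sum_{v\in V_{G_0}}|v|$ of the vertices of the finite $R$-graph $G_0$.

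\textbf{Base case.} If $V_{G_0}=\emptyset$, take $G_0':=G_0$. It is trivially finite and confluent. It is also subterm-closed, since $\Sub(G_0^\ntpp)$ is the empty graph.

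\textbf{Inductive step.} Consider the direct-subterm graph $\Sub(G_0)$; it is again a finite $R$-graph. Every vertex of $\Sub(G_0)$ is a proper subterm of some vertex of $G_0$, and each vertex $f(t_1,\ldots,t_n)$ of $G_0$ contributes at most the terms $t_1,\ldots,t_n$. Hence
\[
\sum_{s\in\Sub(V_{G_0})}|s| \;\le\; \sum_{v\in V_{G_0}}(|v|-1) \;<\; \sum_{v\in V_{G_0}}|v| ,
\]
where the strict inequality uses $V_{G_0}\neq\emptyset$. The induction hypothesis applied to $\Sub(G_0)$ therefore yields a finite, confluent, and subterm-closed $R$-graph $G$ with $\Sub(G_0)\subseteq G$. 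This is exactly the premise of Lemma~\ref{lem:main}, with $R$ weakly shallow and non-$E$-overlapping. That lemma produces a finite, confluent, and subterm-closed $R$-graph $G'$ with $G_0\subseteq G'$, and we set $G_0':=G'$.

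\textbf{Main obstacle.} There is essentially no obstacle at this level; all the difficulty is inside Lemma~\ref{lem:main}. The one point to check carefully is that the induction measure really decreases and remains well-founded when $G_0$ consists only of constants. In that case $\Sub(G_0)$ has no vertices, so the hypothesis gives the empty graph as $G$. This $G$ is still confluent and subterm-closed and satisfies $\Sub(G_0)\subseteq G$, so Lemma~\ref{lem:main} applies unchanged. Confluence of non-$E$-overlapping weakly shallow TRSs then follows as in Corollary~\ref{cor:shallow-noncollapsing}: embed any divergence into a finite $R$-graph and extend it.
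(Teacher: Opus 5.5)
Your proposal is correct and follows the paper's proof: induction on the total size of the vertices of $G_0$, with the empty graph as base case, and an inductive step that applies Lemma~\ref{lem:main} to the subterm-closed confluent graph that the induction hypothesis supplies for $\Sub(G_0)$. Your explicit check that the measure strictly decreases, and your remark on the all-constants case, spell out details the paper leaves implicit when it refers back to the proof of Theorem~\ref{thm:shallow-noncollapsing}.
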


\begin{proof}
  The proof follows that of Theorem~\ref{thm:shallow-noncollapsing} by replacing
  the use of Lemma~\ref{lem:NoncollapsingShallow} with
  Lemma~\ref{lem:main} and
  adding the subterm-closed assumptions on $G_0$ and $G$.
\end{proof}

\begin{corollary} \label{cor:main}
Non-$E$-overlapping and weakly shallow TRSs are confluent.
\end{corollary}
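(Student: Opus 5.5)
The plan is to derive Corollary~\ref{cor:main} directly from Theorem~\ref{thm:main}, in the same way Corollary~\ref{cor:shallow-noncollapsing} follows from Theorem~\ref{thm:shallow-noncollapsing}. We first reduce to the class of root-distinct TRSs. Replace each collapsing rule $\ell \to x$ by $\ell \to h(x)$ and add $h(x) \to x$ for a fresh $h$. Then split each rule $\ell \to r$ with $\Root(\ell) = \Root(r)$ into $\ell \to g(x_1,\ldots,x_n)$ and $g(x_1,\ldots,x_n) \to r$ for a fresh $g$. If needed, add an unused constant or non-constant symbol so that restriction~(3) holds. The paper already states that these transformations preserve confluence. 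Thus it suffices to show that the transformed TRS $R'$ is again weakly shallow and non-$E$-overlapping, and then to prove confluence of $R'$.

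\textbf{Preservation of the hypotheses.} The new symbols $h$ and $g$ are defined symbols occurring only at the root of the new right-hand sides $h(x)$ and $g(\vec x)$. The new left-hand sides $h(x)$ and $g(\vec x)$ are shallow. The original right-hand side $r$ keeps its old occurrences of defined symbols, so weak shallowness is preserved. For non-$E$-overlapping, the new left-hand sides are rooted by fresh symbols $h$ and $g$, which never occur inside old left-hand sides. Moreover, the new rules $h(x) \to x$ and $g(\vec x) \to r$ are the only rules whose left-hand sides are $h$- or $g$-rooted. Any $E$-overlap in $R'$ would therefore have to be between old left-hand sides, using conversions below the root in $R'$. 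Such conversions project back to $R$-conversions by erasing $h$ and unfolding $g(\vec t)$ to $r[\vec t]$, which yields an $E$-overlap in $R$. This projection step is the one point I expect to need real care. I would handle it with a mapping $\pi$ from $R'$-terms to $R$-terms satisfying $s \to_{R'} t \Rightarrow \pi(s) \to_R^{=} \pi(t)$. I would then check that $\pi$ fixes instances of old left-hand sides modulo their substitutions, and that positions below the root are sent to positions below the root.

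\textbf{Confluence.} Take a peak $t_1 \larrow{\ast}{R} s \rarrow{\ast}{R} t_2$ in the root-distinct TRS $R$. Collect all terms and steps of the two sequences into a finite $R$-graph $G_0$. Theorem~\ref{thm:main} provides a finite confluent $R$-graph $G_0' \supseteq G_0$. Since $t_1 \larrow{\ast}{G_0'} s \rarrow{\ast}{G_0'} t_2$ and $G_0'$ is confluent, $t_1$ and $t_2$ have a common reduct in $G_0'$. Every edge of $G_0'$ is an $R$-step because $G_0'$ is an $R$-graph, so this common reduct is reached by $R$-rewriting. Hence $R$ is confluent, and by the confluence-preserving reductions above, so is the original TRS.
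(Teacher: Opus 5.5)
Your proposal is correct and follows the paper's route. The corollary is obtained exactly as Corollary~\ref{cor:shallow-noncollapsing}: embed a peak into a finite $R$-graph and extend it via Theorem~\ref{thm:main}. Your check that the root-distinct transformation preserves weak shallowness and non-$E$-overlapping fills in a step the paper only asserts implicitly.

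One small fix is needed. Your projection $\pi$ satisfies only $s \rarrow{}{R'} t \Rightarrow \pi(s) \rarrow{\ast}{R} \pi(t)$, not $\to^{=}_R$. Unfolding $g(\vec t)$ to $r[\vec t]$ duplicates a step inside an argument of $g$ whenever the corresponding variable occurs several times in $r$, for example with the split rule $f(x) \to f(c(x,x))$. However, $\rarrow{\ast}{R}$ suffices both for projecting inner conversions between old-rooted terms and for transferring confluence back.
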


\begin{corollary}\label{cor:decidable}
  Non-$\omega$-overlapping and weakly shallow TRSs are confluent.
\end{corollary}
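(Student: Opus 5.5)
The plan is to derive Corollary~\ref{cor:decidable} directly from Corollary~\ref{cor:main}. The only step is to check that every non-$\omega$-overlapping TRS is non-$E$-overlapping, a fact recalled in the introduction with attribution (Oyamaguchi's oral presentation; Hirokawa--Oyamaguchi, personal communication).

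\textbf{Step 1: the core inclusion.} Let $R$ be weakly shallow and non-$\omega$-overlapping. I would show by contradiction that $R$ is non-$E$-overlapping. Suppose there is a non-trivial $E$-overlapping pair. Then there are rules $\ell_1\to r_1$ and $\ell_2\to r_2$, a non-variable position $p$ of $\ell_2$, and substitutions $\theta_1,\theta_2$ with
\[
\ell_1\theta_1 \Barrow{\inn}{R}{\ast}{} \ell_2|_p\theta_2 .
\]
From such a conversion I would build a unifier of $\ell_1$ and $\ell_2|_p$ over infinite (rational) terms. The method is to follow the conversion below the root, solving the equations between subterms at corresponding positions. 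Whenever an inner step rewrites a subterm that lies at or below a variable position, the two sides are identified modulo the convertibility relation. Then I would pass to the solution of the resulting system of equations over rational trees.

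\textbf{Step 2: the main obstacle.} The hard part is this unification argument. A naive argument fails because inner steps may also occur strictly inside the non-variable part of $\ell_2|_p$ or $\ell_1$. To handle this, I would take a shortest such conversion and use minimality: if a step rewrote at a function position of the patterns, it would itself yield a smaller $E$-overlap. This reduces to the case where all inner steps lie at or below variable positions. In that case, the congruence classes of the variable instances can be collapsed, and the resulting equations are solvable by a cyclic (infinite) term, just as $f^\omega$ arises in Example~\ref{ex:strict}. That would contradict non-$\omega$-overlap.

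\textbf{Step 3: fallback and conclusion.} If this route resists a full proof, I would simply invoke the cited result as the paper does in the introduction. Either way, $R$ is then weakly shallow and non-$E$-overlapping, so it is confluent by Corollary~\ref{cor:main}. Since non-$\omega$-overlap is decidable (via unification over infinite terms, \cite{MR84,Ogawa89}) and weak shallowness is syntactic, the hypothesis of the corollary is a decidable sufficient condition for confluence.
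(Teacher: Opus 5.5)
Your reduction to Corollary~\ref{cor:main} is the paper's, but your Step~2 does not work.

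\textbf{The minimality claim in Step~2 is false.} You claim that, in a shortest $E$-overlapping conversion, a step at a function position of the patterns yields a smaller $E$-overlap. Take $R_{\ref{trs:nonCR2}}$ from Example~\ref{ex:nonCR2}. The $E$-overlap $d(a,a)\to d(a,f(a))$ is a single step at position $2$, which is a function position of $d(y,f(y))$. No shorter $E$-overlap exists, because $R_{\ref{trs:nonCR2}}$ is non-overlapping. (The patterns are $\omega$-unifiable via $f^\omega$, so this refutes your reduction, not the inclusion.)

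The underlying problem is that a step at a pattern position need not have its \emph{redex} convertible to the pattern subterm there. Here it is the contractum $f(a)$, not the redex $a$, that instantiates $f(y)$, and only a redex gives an $E$-overlap. Moreover, once two occurrences of a non-linear variable are rewritten differently, the intermediate terms are instances of neither pattern. So ``function position of the patterns'' is not even well defined along the conversion.

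The whole difficulty therefore sits in your last sentence of Step~2. After collapsing congruence classes you must show that a symbol clash forces an $\omega$-overlap of some \emph{other} pair of rules. An example is $x$ identified with both $g(y)$ and $h(y)$, as for $d(x,x)$ and $d(g(y),h(y))$ in Example~\ref{ex:strict}, when other rules make $g(u)$ and $h(u)$ convertible. Your sketch does nothing here. The paper credits the general inclusion only to an oral presentation and a personal communication; it is not a routine unification argument.

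\textbf{Your fallback and the paper's route.} Citing the general claim from the introduction is valid only modulo that unpublished result, and it ignores the hypothesis you are given. The paper uses weak shallowness at exactly this point:
\begin{itemize}
\item a weakly shallow TRS is right weakly-shallow, i.e.\ weak monadic;
\item for weak monadic TRSs, non-$\omega$-overlapping implies non-$E$-overlapping by \cite[Theorem~4.18]{Mitsuhashi12};
\item Corollary~\ref{cor:main} then applies.
\end{itemize}
So the paper uses weak shallowness twice: once to get non-$E$-overlap from a published theorem, and once for confluence. Your proposal uses it only in the final step. Your Step~3 remark on decidability is fine.
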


\begin{proof}
  A weakly shallow TRS is weak monadic, that is, right weakly-shallow.
  Hence a non-$\omega$-overlapping and weakly shallow TRS is
  non-$E$-overlapping by~\cite[Theorem~4.18]{Mitsuhashi12},
  and Corollary~\ref{cor:main} applies.
\end{proof}

Unlike the non-$E$-overlapping property, which is undecidable for weakly
shallow TRSs~\cite{Mitsuhashi12}, the non-$\omega$-overlapping property is
decidable.  Hence Corollary~\ref{cor:decidable} gives a decidable sufficient
condition for confluence.

\section{Conclusion}\label{sec10}
This paper proposed a new method for proving confluence of abstract
reduction systems by introducing the notions of edge compatibility and edge
commutativity.
By applying this technique, we have shown that \emph{non-$E$-overlapping and
weakly shallow TRSs are confluent}.
Since the non-$\omega$-overlapping property is decidable, this also yields
a decidable sufficient condition for confluence
(Corollary~\ref{cor:decidable}).
A future direction of this work is to relax the \emph{weakly shallow}
restriction to an \emph{almost weakly shallow} condition, which allows at most
one occurrence of a defined function symbol in each path from the root to
a variable.

Another direction is to relax it to the \emph{right weakly-shallow}
condition, which requires only the right-hand side of each rule to be weakly
shallow; this is the class of \emph{weak monadic}
TRSs~\cite{Mitsuhashi12}.
The weakly shallow property is used to make non-root reductions between
redexes correspond to reductions at variable positions, and this
correspondence also holds on the left-hand side of each non-$E$-overlapping
rule.
Hence we expect that the right weakly-shallow condition is enough.

\backmatter

\section*{Statements and Declarations}

\begin{itemize}
\item Funding: No funding was received for conducting this study.
\item Data availability: Not applicable. No datasets were generated or
  analysed during the current study.
\end{itemize}

%
%

\bibliography{article}

\setcounter{tocdepth}{2}

\end{document}